 \newcommand{\SUBSTCLO}[1]{}
 \newcommand{\LONGVERSION}[1]{}
 \newcommand{\LONGVERSIONCHECKED}[1]{}
 \newcommand{\SHORTVERSION}[1]{#1}
\newcommand{\highlight}[1]{{\color{green}{#1}}}
\newcommand{\beluga}{\textsc{Beluga}\xspace}
\newcommand{\cocon}{\textsc{Cocon}\xspace}
\newtheorem{definition}{Definition}[section]
\newcommand{\hatctx}[1]{\hat{#1}}
\newcommand\floor[1]{\lfloor#1\rfloor}
\newcommand\ceil[1]{\lceil#1\rceil}
\newcommand{\cbox}[1]{\ceil {#1}}
\newcommand{\unbox}[2]{\floor {#1}_{#2}}
\newcommand{\unboxc}[1]{#1}
\newcommand{\csubclo}[2]{\{#1 / #2\}_{clo}}
\newcommand{\lfss}[2]{[#1 / #2]}
\newcommand{\lfs}[2]{[#1 / \hatctx #2]}
\definecolor{DimGrey}{rgb}{0.8,0.8,0.8}
\lstdefinelanguage{ContextualML}
{
  morekeywords={and, block, case, of, mlam, fn, impossible, let, in, schema,
    some, rec, type, ctype, prop, stratified, inductive, coinductive, LF, if, then,
    else, total, with},
  keepspaces=true,
  sensitive,
  morecomment=[l]{\%},
  morecomment=[n]{\%\{}{\}\%},
  morestring=[b]"
}[keywords,comments,strings]
\slshape\color{DimGrey},
\newcommand{\tlam}{\mathsf{lam}}
\newcommand{\tcopy}{\mathsf{copy}}
\newcommand{\tapp}{\mathsf{app}}
\newcommand{\pcase}[1]{\emph{Case}. #1 \\[0.75em]}
\newcommand{\prf}[1]{#1\\[0.15em]}
\newcommand{\LF}{\mathsf{LF}}
\newcommand{\der}{\vdash}
\newcommand{\sem}{\Vdash}
\newcommand{\semlf}{\Vdash_{\mathsf{LF}}}
\newcommand{\D}{{\mathcal{D}}}
\newcommand{\Ca}{{\mathcal{C}}}
\newcommand{\M}{{\mathcal{M}}}
\newcommand{\IH}{{\mathcal{I}}}
\newcommand{\J}{{\mathcal{J}}}
\newcommand{\JLF}{\J_{\mathsf{LF}}}
\newcommand{\So}{\mathcal{S}}
\newcommand{\Ax}{\mathcal{A}}
\newcommand{\Ru}{\mathcal{R}}
\newcommand{\rappto}{~}
\newcommand{\cappto}{\ll}
\newcommand{\typeof}{\mathsf{typeof}}
\newcommand{\ann}[1]{ \breve{#1}}
\newcommand{\pos}{\mathrm{lookup}~}
\newcommand{\posno}{\mathrm{lookup}}
\newcommand{\trunc}{\mathrm{trunc}}
\newcommand{\clam}{\ensuremath{\mathtt{lam}}~\xspace}
\newcommand{\capp}{\ensuremath{\mathtt{app}}~\xspace}
\newcommand{\cletv}{\ensuremath{\mathtt{letv}}~\xspace}
\newcommand{\bnfas}{\;\mathrel{::=}\;}
\newcommand{\bnfalt}{\, \mid \,}
\newcommand{\wvec}[1]{\overrightarrow{#1}}
\newcommand{\Pityp}[3]{\Pi #1{:}#2.#3}
\newcommand{\const}[1]{\ensuremath{\mathsf{#1}}}
\newcommand{\wwk}[1]{\textbf{wk}}
\newcommand{\wk}[1]{\ensuremath{\mathsf{wk}_{#1}}}
\newcommand{\wkempty}{\wk{(\cdot)}}
\newcommand{\sclo}[2]{\wk{#1} \circ #2}
\newcommand{\id}{\ensuremath{\mathsf{id}}}
\renewcommand{\arrow}{\Rightarrow}
\newcommand{\univ}[1]{\ensuremath{\mathsf{U}_{#1}}}
\newcommand{\lftype}[0]{\mathsf{type}}
\newcommand{\lfkind}[0]{\mathsf{kind}}
\newcommand{\ctx}[0]{\mathsf{ctx}}
\newcommand{\NN}{\mathbb{N}}
\newcommand{\FV}{\mathsf{FV}}
\newcommand{\tmctx}{\ensuremath{\mathsf{ctx}}}
\newcommand{\tm}{\ensuremath{\mathsf{tm}}}
\newcommand{\R}{\mathcal{B}}
\newcommand{\trec}[3]{\mathsf{rec}^{#3}~\ensuremath{{#1}}}
\newcommand{\titer}[3]{\mathsf{rec}^{#3}~\ensuremath{{#1}}}
\newcommand{\mto}{\Rightarrow}
\newcommand{\tmfn}[2]{\mathsf{fn}\; #1 \Rightarrow #2}
\newcommand{\tmrec}[4]{\mathsf{rec}^{#1} (#2 \mid #3 \mid #4)~}
\newcommand{\tmrecctx}[3]{\mathsf{rec}^{#1} (#2 \mid #3)~}
\newcommand{\whnf}{\searrow}
\newcommand{\lfwhnf}{\searrow_{\mathsf{LF}}}
\newcommand{\tightoverset}[2]{%
  \mathop{#2}\limits^{\vbox to -.5ex{\kern-0.95ex\hbox{$#1$}\vss}}}
\newcommand{\neut}{\small\mathsf{wne}~}
\newcommand{\norm}{\small\mathsf{whnf}~}
\theoremstyle{definition}
\newtheorem{example}{Example}[section]
\newtheorem{theorem}{Theorem}[section]
\newtheorem{lemma}{Lemma}[section]
\begin{document}
%
\title{A Type Theory for Defining Logics and Proofs}

\author{
\IEEEauthorblockN{Brigitte Pientka\qquad David Thibodeau}
\IEEEauthorblockA{School of Computer Science\\
McGill University
}
\and
\IEEEauthorblockN{Andreas Abel}
\IEEEauthorblockA{Dept. of Computer Science and Eng. \\
Gothenburg University
}
\and
\IEEEauthorblockN{Francisco Ferreira}
\IEEEauthorblockA{Dept. of Computing\\
Imperial College London
}
\and
\IEEEauthorblockN{Rebecca Zucchini}
\IEEEauthorblockA{ENS Paris Saclay
}
}


%



 \IEEEoverridecommandlockouts
  \IEEEpubid{\makebox[\columnwidth]{978-1-7281-3608-0/19/\$31.00~
  \copyright2019 IEEE \hfill} \hspace{\columnsep}\makebox[\columnwidth]{ }}

\maketitle

\begin{abstract}
We describe a Martin-L{\"o}f-style dependent type theory, called \cocon, that allows us to
mix the intensional function space that is used to represent
higher-order abstract syntax (HOAS) trees with the extensional
function space that describes (recursive) computations. We mediate
between HOAS representations and computations using contextual modal
types.  Our type theory also supports an infinite hierarchy of
universes and hence supports type-level computation thereby providing
metaprogramming and (small-scale) reflection.
Our main contribution is the development of a Kripke-style model for
\cocon that allows us to prove normalization. From the normalization
proof, we derive subject reduction and consistency. Our work lays the
foundation to incorporate the methodology of logical frameworks into
systems such as Agda and bridges the longstanding gap between
these two worlds.
\end{abstract}


%
\IEEEpeerreviewmaketitle

\renewcommand{\cite}[1]{\citep{#1}}

\section{Introduction}

Higher-order abstract syntax (HOAS)  is an elegant and deceptively simple idea of encoding syntax and more generally formal systems given via axioms and inference rules. The basic idea
is to map uniformly binding structures in our object language (OL) to the function space in a meta-language thereby inheriting $\alpha$-renaming and capture-avoiding substitution. In the logical framework LF \citep{Harper93jacm}, for example, we encode a simple OL consisting of functions, function application, and let-expressions using a type \lstinline!tm! as:
\begin{lstlisting}
lam : (tm -> tm) -> tm.
app : tm -> tm -> tm.
letv: tm -> (tm -> tm) -> tm.
\end{lstlisting}

The OL term $(\mathsf{lam}\;x.\mathsf{lam}\;y.\mathsf{let}\;w =
x\;y\;\mathsf{in}\;w\;y)$ is then encoded as
\begin{lstlisting}
lam \x.lam \y.letv (app x y) \w.app w y
\end{lstlisting}
using the LF abstractions to model binding.
OL substitution is modelled through LF application; for instance, the fact that
$((\mathsf{lam}~x.M)~N)$ reduces to $[N/x]M$ in our object language is expressed as
\lstinline[basicstyle=\ttfamily\footnotesize]!(app (lam M) N)!
reducing to
\lstinline[basicstyle=\ttfamily\footnotesize]!(M N)!.
%
%
This approach can offer substantial benefits: programmers do not need to build up the
basic mathematical infrastructure, they can work at a higher-level of abstraction, encodings are more compact, and hence it is easier to mechanize formal systems together with their meta-theory. 

However, this approach relies on the fact that we use an
\emph{intensional} function space that lacks recursion, case analysis,
inductive types, and universes to adequately represent syntax.  In LF,
for example, we use the
dependently-typed lambda calculus as a meta-language to represent
formal systems.
Under this view, intensional LF-style functions represent syntactic binding
structures and functions are transparent. However, we cannot write recursive programs about such syntactic structures \emph{within} LF, as we lack the power of recursion.
In contrast, (recursive) computation relies on the \emph{extensional}
type-theoretic function space.
Under
this view, functions are opaque and programmers cannot compare two
functions for equality nor can they use pattern matching on functions to inspect
their bodies. Functions are treated as a black box.

To understand the fundamental difference between defining HOAS trees in LF vs.
defining HOAS-style trees using inductive types, let us consider an inductive type
\lstinline!D! with one constructor
\lstinline!lam: (D -> D) -> D!. What is the problem with such a
definition in type theory? -- In functional ML-like languages, this is, of course,
possible, and types like \lstinline|D| can be explained using domain theory
\cite{scott:dataTypesAsLattices}.
However, the function argument to the constructor \lstinline!lam! is
opaque and we would not be able to pattern match deeper on the
argument to inspect the shape and structure of the syntax
tree that is described by it. We can only observe it by applying it to
some argument. The resulting encoding also would not be adequate,
i.e.~there are terms of type \lstinline!D! that are in normal form but
do not uniquely correspond to a term in the object language we try to
model. As a consequence, we may need to rule out such ``exotic''
representations \cite{Despeyroux:TLCA95}. But there is a more fundamental problem. In proof
assistants based on type theory such as Coq or Agda, we cannot afford
to work within an inconsistent system and we demand that all programs we write are
terminating. The definition of a constructor \lstinline!lam! as given
previously would be forbidden, as it violates what is known as the
positivity restriction. \LONGVERSION{Were we to allow it, we can easily write
non-terminating programs by pattern matching -- even without making a recursive call.

\vspace{0.2cm}
\hspace{-0.5cm}
\begin{footnotesize}
\begin{tabular}{ll}
\lstinline!apply:D -> (D -> D)!  & \lstinline!omega:D!\\
\lstinline!apply (lam f) = f!    & \lstinline!omega = lam (\x -> apply x x)!\\[0.5em]
\lstinline!Omega : D! & \\
\multicolumn{2}{l}{\lstinline!Omega = apply omega omega!}
\end{tabular}
\end{footnotesize}


Here we simply write two functions: the function \lstinline!apply! unpacks an
object of type \lstinline!D! using pattern matching and the function
\lstinline!omega! creates an object of type \lstinline!D!. Using
\lstinline!apply! and \lstinline!omega! we can now write a
non-terminating program that will continue to reproduce itself.
}

It is worth stressing that although we have extensional
type-theoretic functions, we may still have an intensional type theory
 keeping the definitional equality (and hence type checking)
 decidable. This notion of intensional equality should not be confused
 with the intensional LF-style function space that we attributed to LF and
 contrasted to the extensional function space that exists in type
 theories. 

The above example begs two questions: How can we reason inductively about
LF definitions, if they are seemingly not inductive? Do we have to
simply give up on HOAS definitions to model syntactic structures
within type theory to remain consistent?


Over the past two decades, we have made substantial progress in
bringing the intensional and extensional views closer
together. \citet{Despeyroux97} made the key observation that we can
mediate between the weak LF and the strong computation-level
function space using a box modality. The authors describe a
simply-typed lambda calculus with iteration and case constructs which
preserves the adequacy of HOAS encodings. The
well-known paradoxes are avoided through the use
of a modal box operator which obeys the laws of S4. In addition to
being simply typed, all computation had to be on closed HOAS trees. \citet{Despeyroux99}
sketch an extension to dependent type theory --
however it lacks a normalization proof. %

\beluga \cite{Pientka:IJCAR10,Pientka:CADE15} took another important
step towards writing inductive proofs
about HOAS trees by generalizing the box-modality to a contextual
modal type \cite{Nanevski:ICML05,Pientka:POPL08}. For
example, we characterize the OL term $\mathsf{let}\;w = x\;y\;\mathsf{in}\;w\;y$
as a contextual LF object
$\cbox{x, y \vdash \cletv (\capp x~y)~\lambda w.\capp w~y}$ pairing the LF term
together with its LF context. Its contextual type is
$\cbox{x{:}\tm,y{:}\tm \vdash \tm}$.  Here, $\cbox{\;\;}$ is a generalization of the box modality
described in \citet{Despeyroux97}. In particular, elements of type
$\cbox{x{:}\tm,y{:}\tm \vdash \tm}$ can be described as a set of terms
of type $\tm$ that may contain variables $x$ and $y$.
%
This allows us to adapt standard case distinction and recursion
principles to analyze \emph{contextual HOAS trees}.
This is in contrast to recursion principles on open LF
terms (see for example \citet{Hofmann:LICS99}) that are non-standard.

However, the gap to dependent type theories with recursion
and universes such as Martin-L{\"of} type theory still remains. In particular, \beluga cleanly
separates \emph{representing} syntax from \emph{reasoning} about syntax. The
resulting language is an indexed type system in the tradition of
\citet{Zenger:TCS97} and \citet{Xi99popl} where the index language is
contextual LF. 
This has the
advantage that meta-theoretic proofs
are modular and only hinge on the fact that equality in the index domain is
decidable.  However, this approach also is limited in its expressiveness: there is no  support for type-level computation or higher-ranked polymorphism, and we lack the power to express properties of computations.
This prevents us from fully
exploiting the power of metaprogramming and reflection.


In this paper, we present the Martin-L{\"o}f style dependent type
theory \cocon where we mediate between intensional LF objects and extensional
type-theoretic computations using contextual types.
As in \beluga, we can write recursive programs about contextual LF
objects. However, in contrast to \beluga, we also allow computations
to be embedded  within LF objects. For example, if a program $t$
promises to compute a value of type $\cbox{x{:}\tm,y{:}\tm \vdash \tm}$, then
we can embed $t$ directly into an LF object writing
$\clam \lambda x. \clam \lambda y. \capp \unbox{t}{}~x$, unboxing $t$.
If helpful, one might think of boxing ($\cbox{~}$) as
quoting syntax and unboxing ($\unbox{~}{}$) as unquoting computation and embedding its
value within the syntax tree.

Allowing computations within LF objects might seem like a small
change syntactically, but it has far reaching consequences. To
establish consistency of the type theory, we cannot consider
normalization of LF separately from normalization of computations
anymore, as it is done in \citet{Pientka:TLCA15} and
\citet{JacobRao:stratified2018}.
Moreover, \cocon is a
predicative type theory and supports an infinite hierarchy of
universes. This allows us to write type-level computation, i.e. we can
compute types whose shape depends on a given value. Such recursively
defined types are sometimes called large eliminations
\cite{Werner:1992}. Due to the presence of type-level computations,
dependencies cannot be erased from the model. As a consequence, the
simpler proof technique of \citet{Harper03tocl}, which considers
approximate shape of types and has been used to prove completeness of
the equivalence algorithm for LF's type theory, cannot be used in our
setting. Instead, we follow recent work by \citet{Abel:LMCS12} and
\citet{Abel:POPL18} in defining a Kripke-style semantic model for
computations that is defined recursively on its semantic type.
Our model highlights the intensional character of the LF
function space and the extensional character of computations.
Our main contribution is the design of the Kripke-style model for the
dependent type theory \cocon that allows us to establish
normalization. From the normalization proof, we derive type
uniqueness, subject reduction, and consistency. 

We believe \cocon lays the foundation to incorporate the methodology of logical frameworks into systems such as Agda \cite{Norell:phd07} or Coq \cite{bertot/casteran:2004}.
 This finally allows us to combine the world of type theory and logical frameworks inheriting the best of both worlds.


\section{Motivation}
To motivate why we want to
combine the power of LF with a full dependent type theory, we sketch
here the translation of the simply typed lambda calculus (STLC) into
cartesian closed categories (CCC) using our framework.
%
To begin, we encode simple types in LF using the type family
\lstinline!obj!.
\begin{lstlisting}
obj   : type.
one   : obj.
cross : obj -> obj -> obj.
arrow : obj -> obj -> obj.
\end{lstlisting}

We then encode STLC using the indexed type family \lstinline!tm! to
only capture well-typed terms. As before we use the intrinsic LF function space to encode STLC using HOAS.
\begin{lstlisting}
tm    : obj -> type.
tUnit : tm one.
tPair : tm A -> tm B -> tm (cross A B).
tFst  : tm (cross A B) -> tm A.
tSnd  : tm (cross A B) -> tm B.
tLam  : (tm A -> tm B) -> tm (arrow A B).
tApp  : tm (arrow A B) -> tm A -> tm B.
\end{lstlisting}

As is common practice in implementations of LF, we treat free
variables \lstinline!A! and \lstinline!B! as implicitly $\Pi$-quantified at
the outside; they can typically be reconstructed.
Our goal is to translate between well-typed terms in STLC and
morphisms and also state some of the equivalence theorems. Morphisms
are relations between objects. The standard morphisms in CCC can be
encoded directly where we define the composition of morphisms
using \lstinline!@! as an infix operation for better readability.
\begin{lstlisting}
mor  : obj -> obj -> type.
id   : mor A A.
@    : mor B C -> mor A B -> mor A C.
drop : mor A one.
fst  : mor (cross A B) A.
snd  : mor (cross A B) B.
pair : mor A B -> mor A C -> mor A (cross B C).
app  : mor (cross (arrow B C) B) C.
cur  : mor (cross A B) C -> mor A (arrow B C).
\end{lstlisting}

To translate well-typed terms in STLC, we need to traverse terms under
binders. Following \beluga, we introduce a context schema,
\lstinline!ctx!, that classifies contexts containing declarations of
type \lstinline!tm A! for some object \lstinline!A! (see page
\pageref{p:lfcontext} and also \citet{Pientka:PPDP08}).
%
%
 Before we can 
interpret STLC into CCC, we must describe how to interpret a
context as an object in CCC. This is what the function
\lstinline!ictx! does. It has type
\lstinline!($\gamma$ : ctx) => [ |- obj]!. Here we write
\lstinline!=>! for the extensional function space in contrast to
\lstinline!->! which we use for the intensional LF function space.
 For better readability,
we write our function using pattern matching, although the core type
theory we present subsequently uses recursors.
\begin{lstlisting}
rec ictx : ($\gamma$ : ctx) => [ |- obj] =
fn $\;$$\cdot$$\,\,\,$                  = [ |- one]
 | $\gamma$, x:tm (u<A>u with $\cdot$) = [ |- cross u<ictx \gamma>u u<A>u];
\end{lstlisting}

The function \lstinline!ictx! takes as input a context \lstinline!\gamma!
which we analyze via pattern matching. Intuitively, $\gamma$ is
built like lists and we can pattern match on $\gamma$
considering the empty context, written as \lstinline!$\cdot$!, and the
context that contains at least one declaration
\lstinline!x:tm (u<A>u with $\cdot$)!.
Both $\gamma$ and $A$ are pattern variables; they are bound on the
computation level. This is in contrast to LF variables that occur
inside a box and are bound by LF lambda-abstraction or by the LF
context associated with an LF object.
As \lstinline!A! denotes a
closed object and does not depend on \lstinline!\gamma!, we unbox it
together with the weakening substitution (written as $\cdot$) which
moves \lstinline!A! from the
empty LF context to the LF context $\gamma$. In general, we write
\lstinline!u<t>u with $\sigma$! for the unboxing of a computation-level term $t$ together
with an LF substitution $\sigma$ (see also page \pageref{p:unbox}).
We omit the \lstinline!with! keyword and the LF
substitution associated with unboxing, if it is the identity.
%


The function \lstinline!ictx! returns a closed object which we indicate
by \lstinline![ |- obj]!. Note that we do not simply return an LF
object of type \lstinline!obj!, as we mediate between LF objects
and computations using box and unbox.

The ideas so far follow
closely \beluga, a programming environment
that supports writing recursive programs about LF
specifications.
(However, in contrast to \beluga, we inline the recursive call
using unbox, written as \lstinline!u<ictx \gamma>u!,
as opposed to require a let-style binding.)

The real power of having a Martin-L{\"o}f style type theory, where we can embed
computations within contextual types, becomes apparent when we define the
interpretation of STLC into CCC.
The type of the interpretation function \lstinline!itm! concisely
specifies that it translates a well-typed lambda term \lstinline!m!,
that has type \lstinline!A! in the context \lstinline!\gamma!, to a
morphism from \lstinline!ictx \gamma! to \lstinline!A!. Here we rely
on the function \lstinline!ictx! that translates a context
\lstinline!\gamma! to an object. Adopting Agda's approach, we use curly
braces to indicate implicit arguments and round braces for explicit
arguments.
In \beluga{}
we would not
be able to refer to the function   \lstinline!ictx! inside the type
declaration of \lstinline!itm!, as \beluga makes a clear distinction
between contextual LF types (and LF objects) and functions about
them.
\begin{lstlisting}
rec itm : {$\gamma\;\,$: ctx} => {A : [ |- obj]} =>
         (m : [$\gamma$ |- tm (u<A>u with $\cdot$)]) =>
         $\;$[ |- mor u<ictx \gamma>u u<A>u] =
fn (p : [$\gamma$ $\vdash_\#$ tm (u<A>u with $\cdot$)]) = ivar $\gamma$ p
 | [$\gamma$ |- tUnit]               $\;\,$= [ |- drop]
 | [$\gamma$ |- tFst u<e>u]             $\,$= [ |- fst @ u<itm e>u]
 | [$\gamma$ |- tSnd u<e>u]             $\,$= [ |- snd @ u<itm e>u]
 | [$\gamma$ |- tPair u<e1>u u<e2>u]      $\;\,$=
     [ |- pair u<itm e1>u u<itm e2>u]
 | [$\gamma$ |- tLam \x.u<e>u]          $\,$= [ |- cur u<itm e>u]
 | [$\gamma$ |- tApp u<e1>u u<e2>u]       $\;\,$=
     [ |- app @ pair u<itm e1>u u<itm e2>u];
\end{lstlisting}

We implement the interpretation of STLC as morphisms by
pattern matching on \lstinline!m! considering all the constructors to
build lambda terms plus the variable case, i.e.
when we have a variable from \lstinline!\gamma!.
In the latter case, we use the pattern variable
\lstinline!p! with contextual type
\lstinline! [$\gamma$ $\vdash_\#$ tm (u<A>u with $\cdot$)]!
 which can only be instantiated with a variable from
\lstinline!\gamma!.
 We omit here the implementation of
 \lstinline!ivar! for lack of space. It simply looks up a variable
 in the LF context $\gamma$ and builds the corresponding projection.
The most interesting case is
\lstinline![$\gamma$ |- tLam \x.u<e>u]!, where \lstinline!e! has type
\lstinline![$\gamma$, x:tm u<B>u |- tm u<C>u]!. The recursive call
\lstinline!itm e! returns a morphism from
\lstinline!u<ictx ($\gamma$, x:tm u<B>u)>u! to \lstinline!u<C>u! which
matches
what is expected by \lstinline!cur!, since
\lstinline!u<ictx ($\gamma$, x:tm u<B>u)>u! evaluates to
\lstinline!(cross u<ictx $\;\gamma$>u u<B>u)!.

%
%
%
%
%
%

Next we translate a morphism to a STL term. Given a morphism between \lstinline!A! and
\lstinline!B!, we build a term of type \lstinline!B! with one variable
of type \lstinline!A!. As our types are closed, we again employ the
weakening substitution whenever we refer to \lstinline!B! in a
non-empty context.
\begin{lstlisting}
rec imorph : {A : [ |- obj]} => {B : [ |- obj]} =>
            (m : [ |- mor u<A>u u<B>u]) =>
            $\;$[x:tm u<A>u |- tm (u<B>u with $\cdot$)] =
fn $\;$[ |- id]          = [x:tm _ |- x]
 | [ |- drop]        = [x:tm _ |- tUnit]
 | [ |- fst]         = [x:tm _ |- tFst x]
 | [ |- snd]         = [x:tm _ |- tSnd x]
 | [ |- pair u<f>u u<g>u]$\;\;$=
     [x:tm _ |- tPair u<imorph f>u u<imorph g>u]
 | [ |- cur u<f>u]     $\;$=
     [x:tm _ |- tLam \y.(u<imorph f>u$\;$with tPair x y)]
 | [ |- u<f>u @ u<g>u]    =
     [x:tm _ |- u<imorph f>u with u<imorph g>u]
 | [ |- app]        $\;\;$=
     [x:tm _ |- tApp (tFst x) (tSnd x)];
\end{lstlisting}

The translation is mostly straightforward. The most interesting cases
are the case for currying and composition.
In the former, given \lstinline!cur u<f>u!
of type \lstinline![ |- mor u<A>u (arrow u<B>u u<C>u)]!,
we recursively translate \lstinline!f:[ |- mor (cross  u<A>u  u<B>u) C]!.
It yields a STL term of
type \lstinline![x:tm (cross u<A>u u<B>u)|- tm _]!.
We now need to replace the LF variable \lstinline!x! that occurs in the result of the recursive call \lstinline!imorph f! with
\lstinline!tPair x y! to build a STL term
\lstinline![x:tm u<A>u |- tm (arrow u<B>u u<C>u)]!. We hence unbox
the result of the recursive call with the
substitution \lstinline!tPair x y!. This is written as
\lstinline!u<imorph f>u with tPair x y!.
Here we do not write the domain of
the substitution explicitly, however the type of \lstinline!imorph f!
tells us that the result of translating \lstinline!f! contains one LF
variable. In general, we write LF substitutions as lists
whose domain is determined by the contextual object we unbox.

To translate a morphism \lstinline![ |- u<f>u @ u<g>u]!,
we recursively translate \lstinline!f! and \lstinline!g! where
\lstinline!imorph f! returns
a STL term of type  \mbox{\lstinline![x:tm u<B>u |- tm u<C>u]!} and
\lstinline!imorph g! returns
a STL term of type  \lstinline![x:tm u<A>u |- tm u<B>u]!. To produce the desired
STL term of type  \lstinline![x:tm u<A>u |- tm u<C>u]!, we replace the
LF variable \lstinline!x! in the translation of \lstinline!f! with the
result of the translation of \lstinline!g!. This is simply done by
\mbox{\lstinline![x:tm u<A>u |- u<imorph f>u with u<imorph g>u]!}. We note that
\lstinline!imorph g! is unboxed with the identity substitution
and hence the LF variable that occurs in the result of
\lstinline!u<imorph g>u! is implicitly renamed and bound by \lstinline!x!.


Finally, we sketch the equivalence between STLC and CCC to illustrate what new possibilities \cocon opens up. We do not
show the concrete implementation, since this would go beyond this
paper.

Assuming that we have defined convertibility of lambda-terms
(\lstinline!conv!) and equality (\lstinline!~!) between
morphism, we can now state the equivalence between STLC and CCC
succinctly.
%
%
%
%
%
%
\begin{lstlisting}
rec stlc2ccc : {\gamma : ctx} => {A : [ |- obj]} =>
      {M : [\gamma |- tm u<A>u with $\cdot$]} =>
      {N : [\gamma |- tm u<A>u with $\cdot$]} =>
      (e : [\gamma |- conv u<M>u u<N>u]) =>
      [ |- u<itm M>u ~ u<itm N>u ]

rec ccc2tm : {A : [ |- obj]} => {B : [ |- obj]} =>
      {f : [ |- mor u<A>u u<B>u]} =>
      {g : [ |- mor u<A>u u<B>u]} =>
      (m : [ |- u<f>u ~ u<g>u]) =>
      [x:tm A |- conv u<imorph f>u u<imorph g>u]

\end{lstlisting}

We hope this example provides a glimpse of what \cocon has to
offer. In the rest of the paper, we develop the dependent type
theory for \cocon that supports both defining HOAS trees using the
intensional function space of LF and defining (type-level)
computations using the extensional function space.

\section{A Type Theory for Defining Logics and Proofs }\label{sec:cocon}
\cocon combines the logical framework LF with a full dependent type
theory that supports recursion over HOAS objects and universes. We
split \cocon's grammar into different syntactic categories (see
Fig. \ref{fig:grammar}).
\LONGVERSION{The LF layer describes LF
  objects, LF types, LF contexts; the computation layer consists of
  terms and types that describe recursive computation and
  universes. We mediate the interaction between LF objects and
  computations via a (contextual) box modality: we embed contextual LF
  objects into computations, by pairing an LF object with its LF
  contexts and we embed computations within LF objects by unboxing the
  result of a computation. }
%
\begin{figure}[htb]
\begin{center}
  \begin{small}
\[
\begin{array}{p{3cm}@{~}l@{~}r@{~}l}
LF kinds           & K           & \bnfas & \lftype \bnfalt \Pityp x A K  \\
LF types           & A, B        & \bnfas & \const{a}~M_1 \ldots M_n \bnfalt \Pityp x A B\\
LF terms           & M, N       & \bnfas & \lambda x.M \mid M\,N \mid
 x \mid \const{c} \mid \unbox t \sigma \\
LF contexts        & \Psi, \Phi & \bnfas & \cdot \bnfalt \psi \bnfalt \Psi, x{:}A\\
LF context (erased) & \hatctx{\Psi},\hatctx{\Phi} & \bnfas & \cdot \bnfalt \psi \bnfalt\hatctx{\Psi}, x\\
LF substitutions   & \sigma   & \bnfas & \cdot \bnfalt \wk{\hatctx\Psi} \bnfalt \sigma, M \SUBSTCLO{\bnfalt \sclo {\hatctx\Psi} {\unbox t \sigma}} \\
LF signature       & \Sigma   & \bnfas & \cdot \mid \Sigma,
                                         \const{a}{:}K \mid \Sigma, \const{c}{:}A
\\[0.25em]
\hline
\\[-0.75em]
Contextual types & T & \bnfas &
                     \Psi \vdash A \bnfalt      
                     \Psi \vdash_\# A           
   \SUBSTCLO{\bnfalt \Psi \vdash \Phi \mid \Psi \vdash_\# \Phi}
\\
Contextual objects & C & \bnfas &
                      \hatctx{\Psi} \vdash M    
    \SUBSTCLO{\bnfalt \hatctx{\Psi} \vdash \sigma }
\\[0.25em]
\hline
\\[-0.75em]

Sorts     & u          & \bnfas & \univ k \\
Domain of discourse & \ann\tau & \bnfas & \tau \bnfalt \ctx \\
Types and & \tau, \IH, & \bnfas &  u \bnfalt \cbox T
                             \bnfalt (y :\ann{\tau}_1) \arrow \tau_2 
\\
Terms & t, s &  \bnfalt &  y \bnfalt  \cbox C \bnfalt \titer{\vec\R}{}{\IH} \rappto\Psi~\vec{t} \\
      &      &  \bnfalt & \tmfn y t \bnfalt  t_1~t_2

\\
Branches  & {\R} & \bnfas & \Gamma \mto t
\\
Contexts & \Gamma & \bnfas & \cdot \bnfalt \Gamma, y:\ann\tau 
\end{array}
\]
  \end{small}
\end{center}
\vspace{-0.15cm}
  \caption{Syntax of \cocon}
  \label{fig:grammar}
\end{figure}

\subsection{Syntax}
\paragraph{Logical framework LF with embedded computations}
As in LF, we allow dependent kinds and types; LF terms can be defined by
LF variables, constants, LF applications, and LF
lambda-abstractions. In addition, we allow a computation $t$ to be
embedded into LF terms using a closure $\unbox t {\sigma}$.
Once computation of $t$ produces a contextual object $M$ in an LF
context $\Psi$, we can embed the result by applying the substitution
$\sigma$ to $M$, moving $M$ from the LF context $\Psi$ to the current
context $\Phi$. In the source level syntax that we previously used in
the code examples, this was written as
 \lstinline!u<t>u with $\sigma$!.
\label{p:unbox}

We distinguish between computations that characterize a general LF
\emph{term} $M$ of type $A$ in a context $\Psi$, using the contextual type
$\Psi \vdash A$, and computations that are guaranteed to return a
\emph{variable} in a context $\Psi$ of type $A$, using the contextual type
$\Psi \vdash_\# A$. This distinction is exploited in the definition of
a recursor for contextual objects of type $\cbox{\Psi \vdash A}$
to characterize the base case where we consider an LF variable of LF
type $A$. \SHORTVERSION{For simplicity and lack of space, we focus on $\Psi \vdash A$ in the subsequent development. Intuitively, $\Psi \vdash_\# A$ is a special case restricted to variables from $\Psi$ inhabiting $A$.}

\LONGVERSION{For simplicity, we fix here the LF signature to include the type $\tm$ and the LF constants $\tlam$ and $\tapp$. This allows us to for example define recursors on $\tm$-objects directly.}


\paragraph{LF contexts}\label{p:lfcontext}
LF contexts are either empty or are built by extending a context with a declaration $x{:}A$.
We may also use a (context) variable $\psi$ that stands for a context prefix and must be declared on the computation-level. In particular, we can write functions where we abstract over (context) variables. Consequently, we can pass LF contexts as arguments to functions. We classify LF contexts via schemata -- for this paper, we pre-define the schema $\tmctx$.
Such context schemata are similar to adding base types to computation-level types.
We often do not need to carry the full LF context with the type annotations, but it suffices to simply consider the erased LF context. Erased LF contexts are simply lists of variables possibly with a context variable at the head.
We sometimes abuse notation and write $\hatctx{\Psi}$ for the result of erasing type information from an LF context $\Psi$.

\LONGVERSION{We can easily support recursion on LF contexts, but we do not allow computations that return a new LF context. This simplifies the design. Recall that the head of a context denotes a possibly empty sequence of declarations. This prefix should be abstract and opaque to any LF term or LF type that is considered within this context. In other words, an LF term $M$ (or LF type $A$) should be meaningful without requiring any specific knowledge about the prefix of declarations. Second, it would be difficult to enforce well-scoping and $\alpha$-renaming. To illustrate, consider the following LF term $\tapp~x~y$ in the LF context $x{:}\tm, y{:}\tm$. If we were to allow type checking to exploit equivalence relations on LF contexts that take into account computations on LF contexts, we can argue that since $x{:}\tm, y{:}\tm$ is equivalent to $\unboxc{\tcopy~\cbox{x{:}\tm, y{:}\tm}}$, $\tapp~x~y$ should also be meaningful in the latter LF context. However, now the LF variables $x$ and $y$ are free in $\tapp~x~y$.}




\paragraph{LF substitutions}

LF substitutions allow us to move between LF contexts.
The \emph{compound substitution} $\sigma,M$ extends
substitution $\sigma$ with domain $\hat\Psi$
to a substitution with domain $\hat\Psi,x$, where $M$ replaces $x$.
However, following \citet{Nanevski:ICML05},
we do not store the domain (like $\hat\Psi$) in the substitution,
it will be supplied when applying the substitution to a term
(see Section~\ref{sec:lfsubst}).
The \emph{empty substitution} $\cdot$ provides a mapping
from an empty LF context to \emph{any} LF context $\Psi$,
including a context variable $\psi$,
hence, has weakening built in.
The \emph{weakening substitution},
written as $\wk{\hatctx\Psi}$, describes the
weakening of the domain $\Psi$ to $\Psi, \wvec{x{:}A}$.
We simply write $\id$ when $|\wvec{x{:}A}| = 0$.
Unless $\hatctx\Psi$ is a context variable $\psi$, weakening
$\wk{\hat\Psi}$ is a redex
where $\wkempty$ reduces to the empty substitution
and $\wk{\hat\Psi,x}$ reduces to the compound substitution
$\wk{\hat\Psi},x$ (see also figures \ref{fig:lfeq} and \ref{fig:lfwhnfred}).
Note, however, that $\wkempty$ only describes weakening of the empty context
to a \emph{concrete} context ${\cdot}, \wvec{x{:}A}$
and, thus, does not subsume the empty substitution.




%
%

From a de Bruijn perspective, the weakening substitution $\wkempty$
which maps the empty context to
${\cdot, x_n{:}A_n, \ldots, x_1{:}A_1}$ can be viewed as a shift by
$n$. Further, like in the de Bruijn world,
$\wk {(\cdot,x_n{:}A_n, \ldots, x_1{:}A_1)}$ can be expanded and is
equivalent to the substitution $\cdot, x_n, \ldots, x_1$.
While our theory lends itself to an implementation with de Bruijn indices, we formulate our type theory using a named representation of variables. This not only simplifies our subsequent definitions of substitutions, but also leaves open how variables are realized in an implementation.


\SUBSTCLO{
Finally, we allow computations to be embedded into substitutions; such computations are useful, if want to state abstractly properties about substitutions, for example if we want to relate a context $\psi:{\tmctx}$ to another context $\phi:{\tmctx}$. Such properties naturally arise meta-theoretic proofs.
 We embed computations into substitutions using LF closures, written as $\unbox t \sigma$ following \cite{Cave:LFMTP13}. Intuitively, these closures describe when applying an LF substitution gets stuck. When we compose LF substitutions, we may get stuck when we compose $\sigma$ with a computation $t$ that promised to produce eventually an LF substitution. Hence we form an LF closure $\unbox{t}\sigma$. Further, we incorporate weakening. If we compose a weakening substitution with a computation $t$, we may be again stuck until we can proceed and know the result of $t$. These two observations lead to defining suspended substitutions as $\sclo {\hatctx\Psi} {\unbox t \sigma}$. The development follows closely ideas by \citet{Cave:LFMTP13}.

In addition to general contextual types $\Psi \vdash \Phi$ that describe a (contextual) LF substitution, we can also support the restricted form of contextual substitution types $\Psi \vdash_\# \Phi$ that are only inhabited by weakening substitutions. First-class support for weakening substitutions has proven useful in encoding formal systems using contextual LF (see for example \cite{Cave:LFMTP15}). Such renamings are crucial in many proofs such as for example normalization proofs using logical predicates \cite{Cave:LFMTP13,Cave:LFMTP15,Thibodeau:Howe16,POPLMarkReloaded}. For the subsequent development, we will concentrate on LF substitutions of contextual types $\Psi \vdash \Phi$.
}

\paragraph{Contextual objects and types}
We mediate between LF and computations using contextual types. Here,
we concentrate on contextual LF terms that have type $\Psi \vdash
A$. However, others  may be added \cite{Cave:LFMTP13}. \SUBSTCLO{Similarly, we can also add  general contextual LF substitutions that have type $\Psi \vdash \Phi$ and contextual weakening substitutions that have type $\Psi \vdash_\# \Phi$. }


\paragraph{Computations and their types}
Computations are formed by extensional functions, written as $\tmfn y t$,
applications, written as $t_1~t_2$,
boxed contextual objects, written as $\cbox C$, and
the recursor, written
as $\titer{\vec\R}{}{\IH} \rappto\Psi~\vec{t}$, where
$\vec t = t_n\ldots t_0$. We annotate the
recursor with the typing invariant $\IH$. We may either recurse over
$\Psi$ directly or we recurse over the values computed by the term
$t_0$. The LF context $\Psi$ describes the local LF world in which the
value computed by $t_0$ makes sense. The arguments $t_n \dots t_1$
describe in general the implicit arguments $t_0$ might depend on.
Finally, $\vec\R$ describes the different branches that we can take
depending on the value computed by $t_0$. A covering set of branches can be
generated generically following \citet{Pientka:TLCA15}.
In this paper, we will subsequently work with a recursor for the LF type $\tm$ 
which we encountered in the introduction, together with two LF
constants 
$\tlam : \Pityp y {(\Pityp x \tm \tm)}\tm$ and 
$\tapp : \Pityp x \tm {\Pityp y \tm \tm}$ to keep the development compact.


\LONGVERSION{For illustration, we also include other branches to construct also recursors over contextual objects of type $\cbox{\Psi \vdash_\# \tm}$, i.e. variables of type $\tm$ in the LF context $\Psi$. In this case, we only consider two branches where $\Psi = \Psi', x{:}\tm$: in the first branch $({\psi \mto t_x})$ we pattern match against $x$ and $\psi$ will be instantiated with $\Psi'$; in the second branch, $({\psi, y, f_y \mto t_y})$, the LF variable we are looking for is not $x$, but is somewhere in $\Psi'$. In this case, $y$ denotes intuitively an LF variable that is not $x$ and has type $\cbox{\psi \vdash_\# \tm}$ and we will instantiate $\psi$ with $\Psi'$; $f_y$ is the recursive call on the smaller LF context $\Psi'$ and $t_y$ is the body of the branch. }
\LONGVERSION{We also include branches for recursing over LF substitution which have either the contextual type $\cbox{\Psi \vdash \Phi}$ or $\cbox{\Psi \vdash_\# \Phi}$. Here we consider two cases: either the LF substitution is empty then we choose the first branch, or it is of the shape $\sigma, m$ and $f_\sigma$ denotes the recursive call on the smaller LF substitution $\sigma$. }

Computation-level types consist of boxed contextual types, written as
$\cbox T$, and dependent types, written as $ (y:\ann{\tau}_1) \arrow
\tau_2$. We overload the dependent function space and allow as domain
of discourse both computation-level types and the schema $\tmctx$ of
LF context. We use $\tmfn y t$ to introduce functions of both kinds. We also
overload function application $t\;s$ to eliminate dependent types $(y :
\tau_1) \arrow \tau_2$ and $(y : \tmctx) \arrow \tau_2$, although in
the latter case $s$ stands for an LF context. We separate LF contexts
from contextual objects, as we do not allow functions that return an
LF context. 

\cocon has an infinite hierarchy of predicative universes, written as
$\univ k$ where $k \in \NN$.  The universes are not
cumulative. Adopting PTS-style notation, we can define \cocon and its
universes using sorts $u \in \So = \{ \univ i \mid i \in \NN \}$,
axioms $\Ax = \{(\univ i,\, \univ {i+1}) \mid i \in \NN \}$, and rules
$\Ru = \{ (\univ i,\, \univ j,\, \univ {\mathsf{max}(i,j)}) \mid i,j \in \NN \}$.

\LONGVERSION{
\begin{example}
We return the position of an LF variable in an LF context by writing a function $\mathsf{pos}$ that has type $\IH = (\psi : \tmctx) \arrow (x : \cbox{\psi \vdash_\# \tm}) \arrow \mathsf{nat}$.

\[
  \begin{array}{r@{~}lcl}
\mathsf{fn}~\psi \Rightarrow \mathsf{fn}~x \Rightarrow \mathsf{rec}^\IH (& \psi & \mto & 0 \\
\mid & \psi, y, f_y & \mto & 1 + f_y)~\psi~x
  \end{array}
\]

\end{example}

}

\subsection{LF Substitution Operation}
\label{sec:lfsubst}

Our type theory distinguishes between LF variables and computation
variables and we define substitution for both.
We define LF substitutions uniformly using a simultaneous substitution
operation written as $\lfs \sigma \Psi M$\LONGVERSION{ (and similarly
  $\lfs \sigma \Psi A$ and $\lfs \sigma \Psi K$) }.  As an LF
substitution $\sigma$ is simply a list of terms, we need to supply its
domain $\hat\Psi$ to look up the instantiation for an LF variable $x$ in
$\sigma$.

\LONGVERSION{Let's consider first a few examples to get a better intuition. Let's look at a few examples to get a better intuition.
%

\paragraph{Examples 1} Consider the LF term $\tapp~\unbox{\cbox{x,y \vdash \tapp~x~y}}{\wk{x, y}}~w$. This LF term is obviously well-typed in the (normal) LF context $x{:}\tm, y{:}\tm, w{:}\tm$ and applying the substitution $\wk{x,y}$ to $\tapp~x~y$ is meaningful as $\wk{x,y}$ expands to $\cdot, x, y$. When we apply $\cdot, x, y$ to unbox $\cbox{x,y \vdash \tapp~x~y}$, we resurrect the domain and apply $[\cdot, x, y ~/~ \cdot, x, y](\tapp~x~y)$.

\paragraph{Examples 2} What about considering the $\alpha$-equivalent term $\unbox{\cbox{x',y' \vdash \tapp~x'~y'}}{\wk{x,y}}$? --
Again we observe that $\wk{x,y}$ expands to $\cdot, x, y$; when we apply $\cdot, x, y$ to unbox $\cbox{x',y' \vdash \tapp~x'~y'}$, we resurrect the domain and apply $[\cdot,x,y ~/~ \cdot, x', y'](\tapp~x'~y')$ effectively renaming $x'$ and $y'$ to $x$ and $y$ respectively.

}
%
%

%
\begin{small}
  \[
  \begin{array}{l@{~}c@{~}l@{~~~}l}
\lfs \sigma {\Psi} ( \lambda x. M  ) & = &   \lambda x. M' & {\mbox{where~}}\lfs {\sigma,x}{\Psi, x}(M) = M'\quad\\
& & & {\mbox{provided\ that\ }x \notin \FV(\sigma) ~\mbox{and}~ x \not\in \hatctx{\Psi}}
    \\[0.25em]
 \lfs \sigma \Psi (M~N) & = & M'~N' & {\mbox{where~}}{\lfs \sigma \Psi (M) = M'}\\
& & & \mbox{~~~and}~\lfs \sigma \Psi (N)= N'
    \\[0.25em]
\lfs \sigma \Psi (\unbox{ t} {\sigma'})& = & \unbox t {\sigma''} & {\mbox{where~}}
\lfs{\sigma}{\Psi}(\sigma') = \sigma''
    \\[0.25em]
\lfs \sigma \Psi  (x)  & = & M & {\mbox{where~}}\pos x~\lfs \sigma \Psi = M
    \\[0.25em]
 \lfs \sigma \Psi c     & = & \multicolumn{2}{l}{c}  \\[1em]
\lfs \sigma \Psi (\cdot)  & = & \multicolumn{2}{l}{\cdot}  \\[0.25em]
\lfs \sigma \Psi (\wk{\hatctx\Phi})  & = & \sigma' & {\mbox{where~}}\trunc_\Phi ~(\sigma / \hatctx\Psi) = \sigma'\\[0.25em]
\SUBSTCLO{\lfs \sigma \Psi (\sclo {\hatctx\Phi} {\unbox{t}{\sigma'}}) & = & \sclo {\hatctx\Phi} {\unbox{t}{\sigma''}} & {\mbox{where~}}\lfs \sigma \Psi (\sigma') = \sigma''  \\[0.25em]}
\lfs \sigma \Psi (\sigma', M) & = & \sigma'', M' & ~\mbox{where~}
\lfs \sigma \Psi (\sigma') = \sigma''\\
& & & \mbox{~~~~and}~\lfs \sigma \Psi (M)=M'
 \end{array}
  \]
\end{small}

 \LONGVERSION{
\[
\begin{array}{l@{~}c@{~}l@{~}l}
\pos x~\lfs {\sigma, M}{\Psi, x} & = & M & \\[0.25em]
\pos x~\lfs {\sigma, M}{\Psi, y} & = & \pos x~ \lfs \sigma \Psi & \\[0.25em]
\pos x~\lfs {\wk{\hatctx\Psi}}{\Psi} & = & x &\mathrm{where}~x\in\hatctx\Psi \\
\SUBSTCLO{
  \pos x~\lfs {\sclo{\hatctx\Psi} {\unbox t\sigma}} \Psi & = &
    \lfs{\sigma}{\Phi}(M) &
    \mathrm{where}~t = \cbox{\hatctx\Phi \vdash \sigma'}
 \mathrm{and}~
    \trunc_\Psi ~ (\sigma' / \hatctx\Psi) = \sigma''
\\
 & & & ~\mathrm{and}~
    \pos x~\lfs{\sigma''}{\Psi} = M
\\[0.25em]
\pos x~\lfs {\sclo{\hatctx\Psi} {\unbox t \sigma}} \Psi & = & \unbox{\cbox{\hatctx\Psi',x \vdash x}}{\sclo {\hatctx\Psi',x} {\unbox t \sigma}} &
    \mathrm{where}~t \neq {\cbox{\hatctx\Phi \vdash \sigma'}}
  ~\mathrm{and}~\hatctx\Psi = \hatctx\Psi',x, \vec y
\\
}
\pos x~\lfs \sigma \Psi & = & \mbox{fails otherwise}
 \end{array}
  \]
}
Let us comment on a few cases.
\LONGVERSION{
When pushing the substitution through an application $M~N$, we simply
apply it to $M$ and $N$ respectively. When pushing the LF substitution
through a $\lambda$-abstraction, we extend it.
When applying $\sigma$ to an LF variable $x$, we retrieve the
corresponding instantiation from $\sigma$ using the auxiliary function
$\posno$ which works as expected.
}
When applying the LF substitution $\sigma$ to the LF closure
$\unbox{t}{\sigma'}$ we leave $t$ untouched, since $t$ cannot contain
any free LF variables and compose $\sigma$ and $\sigma'$. Composition
of LF substitutions is straightforward.
%
%
%
\SUBSTCLO{
However, we need to be careful in handling substitution closures
$\sclo{\hatctx\Psi} \unbox{t}\sigma$, as we need to guarantee that we
make progress, if we can. For example, if $t =  \cbox{\hatctx\Phi
  \vdash \sigma'}$, then we want continue to look up the variable $x$
in $\sigma'$. To ensure we work with the same domain $\hatctx\Psi$, we
first truncate $\sigma'$ dropping instantiations that do not play a
role. Finally, we apply $\sigma$ to the term that the variable $x$ is
mapped to. If $t$ cannot be further unfolded, then we create a
closure. If $r \neq  \unbox {\cbox{\hatctx\Phi \vdash \sigma'}}
{\sigma}$, we create an LF closure on the LF term level. Intuitively,
we proceed, once the LF closure is known. \SHORTVERSION{The full
  definition can be found in the accompanying long version and is
  omitted due to lack of space (see \citet{cocon:arxiv19}).}
}
%
%
%
When we apply $\sigma$ to $\wk{\hatctx\Phi}$, we truncate $\sigma$ and
only keep those entries corresponding to the LF context $\Phi$. Recall
that $\wk{\hatctx\Phi}$ provides a weakening substitution from a
context $\Phi$ to another context $\Psi = (\Phi, \wvec{x{:}A})$.
\SHORTVERSION{%
Intuitively, truncation throws away the entries of $\sigma$ corresponding to the $\vec x$;
for the formal definition, please consult the long version \citep{cocon:arxiv19}.
}%
\LONGVERSION{%
The simultaneous substitution $\sigma$ provides
mappings for all the variables in $\hatctx{\Phi}, \vec{x}$. The result
of $[\sigma / \hatctx \Phi, \vec{x}]\wk{\hatctx\Phi}$ then should only
provide mappings for all the variables in $\Phi$. We use the operation
$\trunc$ to remove irrelevant instantiations.
The definition of truncation is straightforward.
\[
\begin{array}{l@{~}c@{~}l@{~}l}
\trunc_\Psi ~ (\sigma / \hatctx \Psi) & = & \sigma\\
\trunc_\Psi ~ (\sigma, M / \hatctx \Phi, x) & = & \trunc_\Psi ~(\sigma / \hatctx \Phi)\\
\trunc_\Psi ~ (\wk{(\hatctx\Psi, \vec{x})} / \hatctx \Psi, \vec{x}) & = & \wk{\hatctx\Psi} \\
\SUBSTCLO{\trunc_\Psi ~(\sclo {\hatctx\Psi, \vec{x}} {\unbox t \sigma}  / \hatctx \Psi, \vec{x}) & = & \sclo {\hatctx\Psi} {\unbox t \sigma}\\}
\trunc_\Psi ~(\dot / \cdot) & = & \mbox{fails}~\Psi\neq \cdot
\end{array}
\]
}

\subsection{Computation-level Substitution Operation}

The computation-level substitution operation $\{t/x\}t'$ traverses the computation $t'$ and replaces any free occurrence of the computation-level variable $x$ in $t'$ with $t$\LONGVERSION{\/(see Fig.~\ref{fig:csub})}. The interesting case is $\{t/x\}\cbox{C}$. Here we push the substitution into $C$ and we will further apply it to objects in the LF layer. When we encounter a closure such as $\unbox{t''}{\sigma}$, we continue to push it inside $\sigma$ and also into $t''$.
%
When substituting an LF context $\Psi$ for the variable $\psi$ in a
context $\Phi$, we rename the declarations present in $\Phi$. This is
a convention; it would equally work to rename the variable
declarations in $\Psi$. For example, in
$\{(x{:}\tm, y{:}\tm) / \psi\}(\hatctx \psi, x \vdash \tlam~\lambda y. \tapp~x~y~)$,
we rename the variable $x$ in $(\hatctx\psi, x)$ and replace $\psi$ with
$(x{:}\tm, y{:}\tm)$ in
$(\hatctx\psi, w \vdash \tlam~\lambda y. \tapp~w~y)$. This results in
$x, y, w \vdash \tlam~\lambda y. \tapp~w~y$. When type checking this
term we will eventually also $\alpha$-rename the $\lambda$-bound LF
variable $y$.

Last, we define simultaneous computation-level substitution using the
judgment \fbox{$\Gamma'\vdash \theta : \Gamma$}. For simplicity, we
overload the typing judgment, just writing
$\Gamma \vdash t : \ann\tau$, although when $\ann\tau = \tmctx$, then
$t$ stands for an LF context.

  \[
  \begin{array}{c}
    \infer{\Gamma' \vdash \cdot : \cdot}{\vdash \Gamma'}
    \quad
    \infer{\Gamma'\vdash\theta, t / x :\Gamma, x: \ann\tau}{\Gamma'\vdash \theta : \Gamma & \Gamma' \vdash t : \{\theta \}\ann\tau }
  \end{array}
  \]

We distinguish between a substitution $\theta$ that provides
instantiations for variables declared in the
computation context $\Gamma$, and a renaming substitution $\rho$ which maps variables  in the computation context $\Gamma$ to the same variables in the context $\Gamma'$ where $\Gamma'
= \Gamma, \wvec{x{:}\ann\tau}$ and $\Gamma' \vdash \rho : \Gamma$. We
write $\Gamma' \leq_\rho \Gamma$ for the latter.  We note that the weakening and substitution properties for simultaneous substitutions also hold for renamings.

\LONGVERSION{
\begin{figure}
  \centering
\[
  \begin{array}{l@{~=~}l@{~\qquad}l}
\multicolumn{3}{l}{\mbox{Computation-level substitution for terms}}\\[0.5em]
   \{t/y\}(\tmfn x t') & \tmfn x \{t/y\}t'    & \mathrm{provided~} x \not\in \FV(t) \\[0.1em]
   \{t/y\}(t_1~t_2)    & \{t/y\}t_1 ~ \{t/y\}t_2 & \\[0.1em]
   \{t/y\}(\titer{\R}{}{\IH} \rappto \Psi~t) & \titer {\{t/y\}b_1 \mid \{t/y\}b_k}{}{\IH} \rappto \{t/y\}\Psi~\{t/y\}t & \mathrm{where~} \R = b_1 \mid \ldots \mid b_n \\[0.1em]
   \{t/y\}\cbox C & \cbox {\{t/y\} (C)} & \\[0.1em]
   \{t/y\} (y) & t & \\[0.1em]
   \{t/y\} (x) & x & \mathrm{where~} x \not\eq y\\[0.1em]
   \{t/y\} (\cdot) & \cdot & \\[0.1em]
   \{t/y\} (\Psi, x:A) & \{t/y\}\Psi, x:\{t/y\}A & \mathrm{provided~} (\hatctx{\Psi},x) \not\in \FV(t)\\[0.5em]
\multicolumn{3}{l}{\mbox{Computation-level substitution for branches}}\\[0.5em]
   \{t/y\} (\vec x \mto t') & (\vec x \mto \{t/y\}t') &
   \mathrm{provided~} \vec x \not \in \FV(t)
\\[0.5em]
\multicolumn{3}{l}{\mbox{Computation-level substitution for contextual objects}}\\[0.5em]
   \{t/y\}(\hatctx{\Psi} \vdash M) & \{t/y\}\hatctx\Psi \vdash \{t/y\}M & \mathrm{provided~} \hatctx\Psi\not\in \FV(t) \\[0.1em]
   \{t/y\}(\hatctx{\Psi} \vdash \sigma) & \{t/y\}\hatctx\Psi \vdash \{t/y\}\sigma & \mathrm{provided~} \hatctx\Psi\not\in \FV(t) \\[0.1em]
\multicolumn{3}{l}{\mbox{Computation-level substitution for LF objects}}\\[0.5em]
   \{t/y\}(\lambda x.M) & \lambda x. \{t/y\}M & \\[0.1em]
   \{t/y\}(M~N) & \{t/y\}M ~ \{t/y\}N & \\[0.1em]
   \{t/y\}(\unbox {t'}{\sigma})   & \unbox {\{t/y\}t'}{\{t/y\}\sigma} \\[0.1em]
   \{t/y\}(\const c) & \const c & \\[0.1em]
   \{t/y\}(x)        & x & \\[0.1em]
\multicolumn{3}{l}{\mbox{Computation-level substitution for LF substitutions}}\\[0.5em]
   \{t/y\}(\cdot) & \cdot & \\[0.1em]
   \{t/y\}(\sigma, M) & \{t/y\}\sigma,~\{t/y\}M & \\[0.1em]
   \{t/y\}(\wk{\hatctx\Psi})   & \wk{\{t/y\}({\hatctx\Psi})} & \\[0.1em]
\SUBSTCLO{   \{t/y\}(\sclo {\hatctx\Psi} r) & \sclo {\{t/y\}({\hatctx\Psi})} {\csubclo t y {~}r}}
  \end{array}
\]
  \caption{Computation-level Substitution}\label{fig:csub}
\end{figure}
}

\LONGVERSION{
\begin{figure}[htb]
  \centering
\[
\begin{array}{c}
\multicolumn{1}{p{13.5cm}}{
\fbox{$\Gamma ; \Phi \vdash A : \lftype$}~\mbox{and}~
\fbox{$\Gamma ; \Phi \vdash K : \lfkind$}~~LF type $A$ is well-kinded
  and LF kind $K$ is well-formed}
\\[1em]
\infer{\Gamma; \Psi \vdash \const a: K}{\Gamma \vdash \Psi : \ctx & \const a{:}K \in \Sigma}
\quad
\infer{\Gamma ; \Psi \vdash P~M : [M/x]K}
{\Gamma ; \Psi \vdash P : \Pityp x A K & \Gamma; \Psi \vdash M : A}
\quad
\infer{\Gamma ; \Psi \vdash \Pityp x A B : \lftype}
{\Gamma ; \Psi \vdash A : \lftype & \Gamma ; \Psi, x{:}A \vdash B : \lftype}
\\[1em]
\infer{\Gamma ; \Psi \vdash A : K}
{\Gamma ; \Psi \vdash A : K'  & \Gamma ; \Psi \vdash K' \equiv K : \lfkind}
\quad
\infer{\Gamma ; \Psi \vdash \lftype : \lfkind}{\Gamma \vdash \Psi : \ctx}
\quad
\infer{\Gamma ; \Psi \vdash \Pityp x A K: \lfkind}
{\Gamma ; \Psi \vdash A : \lftype & \Gamma ; \Psi, x{:}A \vdash K : \lfkind}
\end{array}
\]
\caption{Kinding Rules for LF Types}\label{fig:lfkinding}
\end{figure}
}

\begin{figure}[htb]
  \centering\small
\[
\begin{array}{c}
\LONGVERSION{ \multicolumn{1}{p{13cm}}{
 \fbox{$\Gamma ; \Psi \vdash_\# M : A$}~
 ~~LF term $M$ of LF type $A$ in the LF context $\Psi$ and context $\Gamma$ describes a variable}
 \\[0.3em]
\infer{\Gamma ; \Psi\vdash_\# M :  A}
  {\Gamma ; \Psi \vdash M \equiv x : A & \Psi(x) = A
}
\qquad
\infer{\Gamma ; \Psi \vdash_\# M : A}
      {\Gamma ; \Psi \vdash_\# M : B &
       \Gamma ; \Psi \vdash B \equiv A : \lftype }
\\[0.5em]
\infer{\Gamma ; \Psi \vdash_\# M : A}
{\Gamma ; \Psi \vdash M \equiv \unbox{t}{\sigma} : \lfs {\sigma}\Phi (A) &
 \Gamma \vdash t : [\Phi \vdash_\# A] &
 \Gamma; \Psi \vdash_\# \sigma : \Phi }
\\[0.5em]
}
\multicolumn{1}{p{8cm}}{
\fbox{$\Gamma ; \Psi \vdash M : A$}~
~~LF term $M$ has LF type $A$ \newline\mbox{\hspace{2.15cm}in} the LF context $\Psi$ and context $\Gamma$\newline}
\\
\infer{\Gamma ; \Psi \vdash x : A}{
\Gamma \vdash \Psi : \ctx&
x{:}A \in \Psi}
\quad
\infer{\Gamma ; \Psi \vdash \const{c} : A}
      {\Gamma\vdash \Psi : \ctx & \const{c}{:}A \in \Sigma}
\\[0.5em]
\infer{\Gamma ; \Psi \vdash M~N : [N/x]B}
      {\Gamma ; \Psi \vdash M : \Pityp x A B &
       \Gamma ; \Psi \vdash N : A}
~~
 \infer{\Gamma ; \Psi \vdash \lambda x.M : \Pityp x A B}
         {\Gamma ; \Psi, x{:}A \vdash M : B}
\\[0.5em]
\infer{\Gamma ; \Psi \vdash \unbox t {\sigma} : \lfs \sigma \Phi A}
         {\Gamma \vdash t : [\Phi \vdash A]  ~\mbox{or}~\Gamma \vdash t : [\Phi \vdash_\# A] &
          \Gamma; \Psi \vdash \sigma : \Phi}
\\[0.5em]
\qquad
\infer
      {\Gamma ; \Psi \vdash M : A}
      {\Gamma ; \Psi \vdash M : B &
       \Gamma ; \Psi \vdash B \equiv A : \lftype }
%
\\[1em]
\multicolumn{1}{p{8cm}}{
\fbox{$\Gamma ; \Phi \vdash \sigma : \Psi$}~~LF substitution $\sigma$
  provides a mapping \newline\mbox{\hspace{2.15cm}from} the LF context $\Psi$ to $\Phi$\newline}
\\[0.3em]
\infer{\Gamma ;\Psi, \wvec{x{:}A} \vdash \wk{\hatctx\Psi} : \Psi}
{\Gamma\vdash \Psi, \wvec{x{:}A} : \ctx }
\quad
 \infer{\Gamma ; \Phi \vdash \cdot : \cdot}{\Gamma \vdash \Phi :\ctx }
\\[0.5em]
\infer{\Gamma ; \Phi \vdash \sigma, M : \Psi, x{:}A}
      {\Gamma ; \Phi \vdash \sigma : \Psi &
       \Gamma ; \Phi \vdash M : \lfs \sigma \Psi A}
\SUBSTCLO{
\\[0.3em]
\infer{\Gamma ; \Phi \vdash \sclo {\hatctx\Psi} {\unbox{t}{\sigma}} : \Psi}
{\Gamma \vdash t : \cbox{\Psi' \vdash \Psi, \wvec{x{:}A}}
 & \Gamma ; \Phi \vdash \sigma : \Psi'
&  \Gamma\vdash \Psi, \wvec{x{:}A} : \ctx }
\\[0.3em]}
\LONGVERSION{\\[1em]
 \multicolumn{1}{p{13cm}}{
 \fbox{$\Gamma ; \Psi \vdash_\# \sigma : \Phi$}~
 ~~LF substitution $\sigma$ from LF context $\Phi$ to the LF context $\Psi$ is a weakening subst.}
 \\[0.5em]
\infer{\Gamma ; \Psi, \wvec{x{:}A} \vdash_\# \sigma : \Psi}
  {\Gamma ; \Psi, \wvec{x{:}A} \vdash \sigma \equiv \wk{\hatctx\Psi} : \Psi
}
\\[0.3em]
\SUBSTCLO{\infer{\Gamma ; \Psi  \vdash_\# \sigma : \Phi}
{\Gamma ; \Psi \vdash \sigma \equiv \sclo {\hatctx\Phi} {\unbox{t}{\wk{\Phi,\vec{x}}}} : \Phi &
 \Gamma \vdash t : \cbox{\Psi \vdash_\# \Phi, \wvec{x{:}A}} &
 \Gamma ; \Psi \vdash \wk{\hatctx\Psi} : \Phi,\wvec{x{:}A}}
}}
\end{array}
\]
\caption{Typing Rules for LF Terms and LF Substitutions}\label{fig:lftyping}
\end{figure}

\subsection{LF Typing}

We concentrate here on the typing rules for LF terms, LF substitutions and
LF contexts (see Fig.~\ref{fig:lftyping})%
\SHORTVERSION{ and skip the rules for LF types and kinds.}
\LONGVERSION{. The rules for LF types and kinds are straightforward (see Fig.~\ref{fig:lfkinding}).}
All of the typing rules have access to an LF signature $\Sigma$
which we omit to keep the presentation compact.
Typing of variables $x$, constants $\const c$, application $M\,N$ and abstraction $\lambda x.M$ is as usual.
\LONGVERSION{
In typing rules for LF abstractions $\lambda x.M$ we simply extend the LF context and check the body $M$. When we encounter an LF variable, we look up its type in the LF context.}
The conversion rule is important and subtle. We only allow conversion of types -- conversion of the LF context is not necessary, as we do not allow computations to return an LF context.
%
Importantly, given a computation $t$ that has type $\cbox {\Psi \vdash A}$ or $\cbox{\Psi \vdash_\# A}$, we can embed it into the current LF context $\Phi$ by forming the closure $\unbox t {\sigma}$ where $\sigma$ provides a mapping for the variables in $\Psi$. This formulation generalizes previous work which only allowed \emph{variables} declared in $\Gamma$ to be embedded in LF terms.
Previous work enforced a strict separation between computations and LF terms.
%
%
%

The typing rules for LF substitutions are as expected.
\SUBSTCLO{The rule for suspended composition of substitution $\sclo {\hatctx\Phi}{\unbox{t}{\sigma}}$ with domain $\Psi$  requires some care.  Here we ensure that $t$ computes an LF substitution that maps LF variables from $\Psi, \wvec{x{:}A}$ to the LF context $\Psi'$. Hence, we allow $t$ to provide a ``bigger'' substitution than required; these additional instantiations will be truncated when we know what LF substitution the term $t$ computes. The ``stuck'' LF substitution $\sigma$ must map LF variables from $\Psi'$ to the final target LF context $\Phi$. As for LF terms, we distinguish between general LF substitutions and LF renamings (or weakenings) that guarantee that we only map LF variables to LF variables.
}

\SHORTVERSION{The typing rules for LF contexts simply analyze the
  structure of an LF context. When we reach the head, we either
  encounter an empty LF context or a context variable $y$ which must
  be declared in the computation-level context $\Gamma$. The rules can
  be found in the long version. }
\LONGVERSION{Last, we consider the typing rules for LF contexts (see Fig.~\ref{fig:lfctxtyping}). They simply analyze the structure of an LF context. When we reach the head we either encounter a empty LF context or an context variable $y$ which must be declared in the computation-level context $\Gamma$.

\begin{figure}[htb]
  \centering\small
\[
\begin{array}{c}
\multicolumn{1}{l}{
\fbox{$\Gamma \vdash \Psi : \ctx$}~\mbox{LF context $\Psi$ is a well-formed}} \\[0.3em]
\infer{\Gamma \vdash \cdot : \ctx}{\vdash \Gamma}
\quad
\infer{\Gamma \vdash \unboxc{y} : \ctx}{\Gamma(y) = \tmctx & \vdash \Gamma}
\quad
\infer{\Gamma \vdash \Psi, x{:}A : \ctx}
{\Gamma \vdash \Psi : \ctx & \Gamma ; \Psi \vdash A : \lftype}
\end{array}
\]
\caption{Typing Rules for LF Contexts}\label{fig:lfctxtyping}
\end{figure}
}





\subsection{Definitional LF Equality}
For LF terms, equality is $\beta\eta$. In addition, we can reduce $\unbox{\Psi \vdash M}{\sigma}$ by simply applying $\sigma$ to $M$. We omit the transitive closure rules as well as congruence rules, as they are straightforward.

For LF substitutions, we take into account that weakening substitutions are not unique.
 For example, the substitution $\wk\cdot$ may stand for a mapping from the empty context to another LF context; so does the empty  substitution $\cdot$.
Similarly, $\wk{x_1, \ldots x_n}$ is equivalent to the substitution $\wkempty, x_1, \ldots, x_n$.


\begin{figure}[h]
  \centering\small
  \[
  \begin{array}{c}
    \multicolumn{1}{p{8.5cm}}{\fbox{$\Gamma ; \Psi \vdash M \equiv N : A$}\quad LF term $M$ is definitionally equal\newline\mbox{\hspace{3.1cm}to} LF term $N$ at LF type $A$\newline}\\
    \infer{\Gamma ; \Psi \vdash M \equiv \lambda x.M~x : \Pityp x A B}{\Gamma ; \Psi \vdash M : \Pityp x A B}
\\[0.75em]
    \infer{\Gamma ; \Psi \vdash (\lambda x.M_1)~M_2 \equiv [M_2/x]M_1 : [M_2/x]B}
          {\Gamma ; \Psi, x{:}A \vdash M_1 : B & \Gamma ; \Psi \vdash M_2 : A}
    \\[0.75em]
\infer
  {\Gamma ; \Psi \vdash \unbox{\cbox{\hatctx{\Phi} \vdash N}}{\sigma} \equiv \lfs \sigma\Phi N :\lfs \sigma\Phi A}
          {\Gamma ; \Phi \vdash N : A & \Gamma ; \Psi \vdash \sigma : \Phi}
\\[1em]
\multicolumn{1}{p{8.5cm}}{\fbox{$\Gamma ; \Psi \vdash \sigma \equiv \sigma' : \Phi$}~\quad LF substitution $\sigma$ is definitionally equal\newline\mbox{\hspace{3.1cm}to} LF substitution $\sigma'$\newline}\\[-0.75em]
 \infer{\Gamma ; \Psi \vdash \wkempty \equiv \cdot : \cdot}{
 \Gamma \vdash \Psi : \ctx }
 \quad
 \infer{\Gamma ; \Phi, x{:}A, \wvec{y{:}B} \vdash \wk{\hatctx\Phi,x} \equiv (\wk{\Phi}, x) : (\Phi, x{:}A)}
 {
 \Gamma \vdash \Phi, x{:}A, \wvec{y{:}B} : \ctx
 }
\LONGVERSION{\\[1em]
\infer{\Gamma ; \Psi \vdash \sigma, M \equiv \sigma', N : \Phi, x{:}A}
{\Gamma ; \Psi \vdash \sigma \equiv \sigma' : \Phi &
 \Gamma ; \Psi \vdash M \equiv N : \lfs \sigma \Phi A }
}
\end{array}
\]
\caption{Reduction and Expansion for LF Terms and LF Substitutions}
\label{fig:lfeq}
\end{figure}

\subsection{Contextual LF Typing and Definitional Equivalence}

\LONGVERSION{Typing and equivalence of contextual objects (Fig.~\ref{fig:ctxtyping})is standard.}
%
We lift typing and definitional equality on LF terms to
contextual objects. For example, two contextual objects $\hatctx{\Psi} \vdash M$ and $\hatctx{\Psi}\vdash N$ are equivalent at LF type $\cbox{\Psi \vdash A}$, if $M$ and $N$ are equivalent in $\Psi$.

\LONGVERSION{
  \begin{center}
    \begin{small}
\[
  \begin{array}{c}
    \multicolumn{1}{p{8.5cm}}{\fbox{$\Gamma \vdash C : T$}~ Contextual
  Objects $C$ has Contextual Type $T$ in $\Gamma$}
\\[0.5em]
\infer{\Gamma \vdash (\hatctx{\Psi} \vdash M) : (\Psi \vdash A)}
{
  \Gamma ; \Psi \vdash M : A}
  \end{array}
\]
    \end{small}
  \end{center}
}

\subsection{Computation Typing}
We describe well-typed computations in Fig.~\ref{fig:comptyping} using the typing judgment $\Gamma \vdash t : \tau$.
Computations only have access to computation-level variables declared in the context $\Gamma$. We use the judgment $\vdash \Gamma$ to describe well-formed contexts where every declaration $x{:}\ann\tau$ in $\Gamma$ is well-formed.
\LONGVERSION{
\begin{small}
\[
  \begin{array}{l@{~~~}c}
{\mbox{Well-formed context:}~\fbox{$\vdash \Gamma$}}\\[-1.5em]
& \infer{\vdash \cdot}{} \qquad
\infer{\vdash \Gamma, x{:}\ann\tau}{\vdash \Gamma & \Gamma \vdash \ann\tau : u}
  \end{array}
\]
\end{small}
}

\begin{figure}[htb]
  \centering\small
\[
\begin{array}{c}
\multicolumn{1}{l}{\fbox{$\Gamma \vdash t : \tau$}
  ~\mbox{and}~\fbox{$\Gamma \vdash \tau : u$} \mbox{ Typing and kinding judg. for comp.}~~}
\\[0.75em]
\infer{\Gamma \vdash y : \ann\tau}{y:\ann\tau \in \Gamma&\vdash\Gamma}
\quad
\infer[(u_1, u_2) \in \Ax]{\Gamma \vdash u_1: u_2}{\vdash \Gamma}
\\[0.5em]
\infer[(u_1,~u_2,~u_3) \in \Ru]{\Gamma \vdash (y:\ann\tau_1) \arrow \tau_2 : u_3}
      {\Gamma \vdash \ann\tau_1 : u_1 &
       \Gamma, y{:}\ann\tau_1 \vdash \tau_2 : u_2}
\\[0.75em]
\infer{\Gamma \vdash \cbox{T} : u}{\Gamma \vdash T}
\quad
\infer{\Gamma \vdash t~s : \{s/y\}\tau_2}
{\Gamma \vdash t : (y:\ann\tau_1) \arrow \tau_2 &
 \Gamma \vdash s : \ann\tau_1}
\\[0.75em]
\infer{\Gamma \vdash \tmfn y t : (y:\ann\tau_1) \arrow \tau_2}
        {\Gamma, y:\ann\tau_1 \vdash t : \tau_2 & \Gamma \vdash (y:\ann\tau_1) \arrow \tau_2 : u}
\\[0.75em]
\infer{\Gamma \vdash \cbox C : \cbox T}{\Gamma \vdash C : T}
\quad
\infer{\Gamma \vdash t : \tau}
{\Gamma \vdash t : \tau' & \Gamma \vdash \tau' \equiv \tau : u}
\LONGVERSION{\\[1em]
\multicolumn{1}{l}{\mbox{Schema checking of LF context}~~\fbox{$\Gamma \vdash \Psi : \tmctx$}}
\\
\multicolumn{1}{l}{\mbox{;Well-formedness of schema}~\fbox{$\Gamma \vdash \tmctx : u$}}\\[0.75em]
\infer{\Gamma\vdash \tmctx : u}{\vdash \Gamma}
\quad
\infer{\Gamma \vdash \cdot : \tmctx}{\vdash\Gamma}
\\[0.75em]
\infer{\Gamma \vdash \Psi, x{:}A : \tmctx}
{\Gamma \vdash \Psi : \tmctx & \Gamma ; \Psi \vdash A : \lftype & \Gamma ; \Psi \vdash A \equiv \tm : \lftype}
}
\end{array}
\]
  \caption{Typing Rules for Computations (Without Recursor)}
  \label{fig:comptyping}
\end{figure}

To avoid duplication of typing rules, we overload the typing judgment and write $\ann\tau$ instead of $\tau$, if the same judgment is used to check that a given LF context is of schema $\tmctx$. For example, to ensure that $(y : \ann\tau_1) \arrow \tau_2$ has kind $u_3$, we check that $\ann\tau_1$ is well-kinded. For compactness, we abuse notation writing $\Gamma \vdash \tmctx : u$ although the schema $\tmctx$ is not a proper type whose elements can be computed.
In the typing rules for computation-level (extensional) functions, the
input to the function which we also call domain of discourse may
either be of type $\tau_1$ or $\tmctx$. To eliminate a term $t$ of
type $(y : \tau_1) \arrow \tau_2$, we check that $s$ is of type
$\tau_1$ and then return $\{s/y\}\tau_2$ as the type of $t~s$.  To
eliminate a term of type $(y : \tmctx) \arrow \tau$, we overload
application simply writing $t~s$, although $s$ stands for an LF
context and check that $s$ is of schema $\tmctx$. This distinction
between the domains of discourse is important, as we only allow LF
contexts to be built either by a context variable or an LF type
declaration, but do not compute an LF context recursively.
We can embed contextual object $C$ into computations by boxing it and transitioning to the typing rules for LF. We eliminate contextual types using a recursor, see Fig.~\ref{fig:comptypingrec}.
Here, we define an iterator over $t$ of type
$\cbox{\Psi \vdash \tm}$ to keep the exposition compact.
For a deeper discussion on how to generate recursors for contextual objects of type $\Psi \vdash A$ and LF contexts, we refer the reader to \citet{Pientka:TLCA15}.

In general, the output type of the recursor may depend on the argument
we are recursing over. We hence annotate the recursor itself with an
invariant $\IH$. Here, the recursor over $\tm$ is annotated with $\IH = (\psi : \tmctx) \arrow (y:\cbox{\psi \vdash \tm}) \arrow \tau$. To check that the recursor $\titer{\R}{}\IH~\Psi~t$ has type $\{\Psi/\psi, t/y\}\tau$, we check that each of the three branches has the specified type $\IH$. In the base case, we may assume in addition to $\psi:{\tmctx}$ that we have a variable $p:\cbox{\unboxc{\psi} \vdash_\#\tm}$ and check that the body has the appropriate type.
%
%
If we encounter a contextual LF object built with the LF constant $\tapp$, then we choose the branch $b_\tapp$. We assume $\psi{:}\tmctx$, $m{:}\cbox{\unboxc{\psi}\vdash \tm}$, $n{:}\cbox{\unboxc{\psi}\vdash \tm}$, as well as $f_n$ and $f_m$ which stand for the recursive calls on $m$ and $n$ respectively. We then check that the body $t_\tapp$ is well-typed.
If we encounter an LF object built with the LF constant $\tlam$, then we choose the branch $b_\tlam$. We assume $\psi{:}\tmctx$ and $m{:}\cbox{\psi, x{:}\tm \vdash \tm}$ together with the recursive call $f_m$ on $m$ in the extended LF context $\psi, x{:}\tm$. We then check that the body  $t_\tlam$ is well-typed.

 \LONGVERSION{
To check that a given LF context $\Psi$ is of a particular defined schema, we check that every declaration $x{:}A$ in is an instance of the declared schema. We omit here the discussion and refer the reader to \cite{Pientka:PPDP08}.
}

\begin{figure}[htb]
  \centering\small
\[
\begin{array}{c}
\LONGVERSION{
\multicolumn{1}{l}{\mbox{Recursor over LF parameters}~\IH = (\psi : \tmctx) \arrow (q:\cbox{\psi \vdash_\# \tm}) \arrow \tau }\\[0.5em]
\infer[]
{\Gamma \vdash \tmrecctx {\IH} {\psi \mto b_e}{\psi, q, f_q \mto b_c} \rappto \Psi~t: \{\Psi/\psi,~t/y\}\tau}
{
  \begin{array}{lll}
\Gamma \vdash t :  \cbox{\Psi \vdash_\# \tm} & \Gamma \vdash \IH : u & \\
\Gamma \vdash (\psi \mto b_e) : \IH &
\Gamma \vdash (\psi, q, f_q \mto b_c) : \IH
  \end{array} }
\\[1em]
\multicolumn{1}{p{8.5cm}}{\mbox{Branches where} \quad$\IH = (\psi : \tmctx) \arrow (y:\cbox{\psi \vdash_\# \tm}) \arrow \tau$ }
\\[1em]
\infer{\Gamma \vdash (\psi \mto b_e) : \IH}
{\Gamma, \psi:\tmctx \vdash  b_e : \{{(\psi,x{:}\tm)}/\psi,~\cbox{\psi, x \vdash x}/p \}\tau}
\\[0.5em]
\infer{\Gamma \vdash (\psi, q, f_q \mto b_c) : \IH}
{ \Gamma, \psi:\tmctx, q:\cbox{\psi \vdash_\# \tm},  f_q: \{q/p\}\tau  \vdash  b_c :
  \{(\psi,x{:}\tm)/\psi,~\{\cbox{\psi, x \vdash  \unbox{q}{\wk{\psi}}}/p \}\tau }
\\[1em]
}
\multicolumn{1}{l}{\mbox{Recursor over LF terms}~\IH = (\psi : \tmctx) \arrow (y:\cbox{\psi \vdash \tm}) \arrow \tau }\\[0.5em]
\infer
{\Gamma \vdash \tmrec {\IH} {b_v} {b_{\mathsf{app}}} {b_{\tlam}} \rappto \Psi~t : \{{\Psi}/\psi,~t/y\}\tau}
{
  \begin{array}{lll}
\Gamma \vdash t :  \cbox{\Psi \vdash \tm}  & \Gamma \vdash \IH : u  & \\
\Gamma \vdash b_v : \IH & \Gamma \vdash b_{\mathsf{app}} : \IH & \Gamma \vdash b_{\mathsf{lam}} : \IH
  \end{array}
}
\\[1em]
\multicolumn{1}{p{8.5cm}}{\mbox{Branches where} \quad$\IH = (\psi : \tmctx) \arrow (y:\cbox{\psi \vdash \tm}) \arrow \tau$ }
\\[1em]
\inferrule*
{ \Gamma, \psi:\tmctx, p:\cbox{~\unboxc{\psi} \vdash_\# \tm}  \vdash  t_v : \{p / y\}\tau}
{\Gamma \vdash ({\psi,p \mto t_v}) : \IH }
\\[0.5em]
\infer
{\Gamma \vdash (\psi, m, n, f_n, f_m \mto t_{\mathsf{app}}) : \IH}
{\begin{array}{l@{~}c@{~}l}
 \Gamma, \psi:\tmctx, & & \\
         m{:}\cbox{\unboxc{\psi} \vdash \tm}, n{:}\cbox{\unboxc{\psi} \vdash \tm}& & \\
          f_m{:} \{m/y\}\tau, f_n{:} \{n/y\}\tau  & \vdash &  t_{\mathsf{app}} : \{\cbox{\unboxc{\psi} \vdash \tapp\unbox{m}{}~\unbox{n}{}}/y\}\tau
  \end{array}
}
\\[0.5em]
\infer{\Gamma \vdash \psi, m, f_m \mto t_{\tlam} : \IH}
{ \begin{array}{l@{~}c@{~}l}
\Gamma, \phi:\tmctx, & & \\   m{:}\cbox{\phi, x{:}\tm \vdash \tm}, & & \\
         f_m{:}\{(\phi, x{:}\tm)/\psi, m /y \} \tau          & \vdash & t_{\tlam} : \{\phi/\psi, \cbox{\unboxc{\phi} \vdash \tlam~\lambda x.\unbox{m}{}} / y\}\tau
 \end{array}
 }
\end{array}
\]
  \caption{Typing Rules for Recursors}
  \label{fig:comptypingrec}
\end{figure}

\subsection{Definitional Equality for Computations}

Concerning definitional equality for computations
(Fig.~\ref{fig:etype}), we concentrate on the reduction rules.
We omit the transitive closure and congruence rules, as they are as expected.


\begin{figure*}[h]
  \centering
\small
  \[
  \begin{array}{c}
          \infer{\Gamma  \vdash (\tmfn y t)~s \equiv \{s/y\}t : \{s/y\}\tau_2}
                {\Gamma \vdash \tmfn y t : ( y{:}\ann\tau_1) \arrow \tau_2 & \Gamma \vdash s : \ann\tau_1
                }
\qquad
\infer{\Gamma \vdash t \equiv \cbox{\hatctx \Psi \vdash \unbox{t}{\wk{\hatctx\Psi}}} : \cbox{\Psi \vdash A}}{
        \Gamma \vdash t : \cbox{\Psi \vdash A}}
                \\[1em]
\multicolumn{1}{l}{
\mbox{let}~{\R} =  ({\psi,p \mto t_p} \mid {\psi,m,n,f_m, f_n \mto t_{\mathsf{app}}} \mid {\psi, m, f_m \mto t_{\tlam}})
~\mbox{and}~\IH = (\psi : \tmctx) \arrow (y : \cbox{\psi \vdash \tm}) \arrow \tau}
\\[0.75em]
\infer{\Gamma \vdash \titer{\R}{\tm}{\IH}\rappto~\Psi~\cbox{\hatctx{\Psi} \vdash \tlam~\lambda x.M} \equiv \{\theta\}t_{\tlam}
                  :  \{\Psi/\psi, \cbox{\hatctx{\Psi} \vdash \tlam~\lambda x.M}/y\}\tau }
      {\Gamma \vdash \Psi : \tmctx \qquad \Gamma; \Psi, x{:}\tm \vdash M : \tm
       {\qquad\Gamma \vdash \IH : u}
      }
 \\[0.5em]\mbox{where}~
 \theta   =  \Psi/\psi,~
             \cbox{\hatctx{\Psi}, x \vdash M}/m,~
             \titer{\R}{\tm}{\IH}~\rappto {(\Psi,x{:}\tm)}~{\cbox{\hatctx{\Psi}, x \vdash M}}/f
\\[1em]
\infer{\Gamma \vdash \titer{\R}{\tm}{\IH}~\rappto {\Psi} \cbox{\hatctx{\Psi} \vdash \tapp~M~N} \equiv \{\theta\} t_{\tapp}
:  \{\Psi/\psi, \cbox{\hatctx{\Psi} \vdash \tapp~M~N}/y\}\tau }
{
\Gamma  \vdash \Psi : \tmctx  \qquad
\Gamma; \Psi   \vdash  M : \tm  \qquad
\Gamma; \Psi   \vdash  N : \tm \qquad  {\Gamma \vdash \IH : u}
}
\\[0.5em]
\mbox{where}~
\theta  =  \Psi/\psi,~\cbox{\hatctx{\Psi} \vdash M}/m,~
           \cbox{\hatctx{\Psi} \vdash N}/n,~
           \titer{\R}{\tm}{\IH}~\rappto {\Psi}~\cbox{\hatctx{\Psi} \vdash M}/f_m,~
           \titer{\R}{\tm}{\IH}~\rappto {\Psi}~\cbox{\hatctx{\Psi} \vdash N}/f_n

\\[1em]
\infer{\Gamma \vdash \trec{\R}{\tm}{\IH}~\rappto {\Psi}~{\cbox{\hatctx \Psi \vdash x} }
          \equiv \{\Psi/\psi,~\cbox{\hatctx{\Psi} \vdash x}/p\} t_p :  \{\Psi/\psi, \cbox{\Psi \vdash x}/y\}\tau }
      {x{:}\tm \in \Psi  & \Gamma \vdash \Psi : \tmctx
             {\qquad\Gamma \vdash \IH : u}}
      \\[0.75em]
  \end{array}
  \]
  \caption{Definitional Equality for Computations}
  \label{fig:etype}
\end{figure*}

We consider two computations to be equal if they evaluate to the same result. We propagate values through computations and types relying on the computation-level substitution operation. When we apply a term $s$ to a computation $\tmfn y t$, we $\beta$-reduce and replace $y$ in the body $t$ with $s$. We unfold the recursor depending on the value passed. If it is $\cbox{\hatctx{\Psi} \vdash \tlam~ \lambda x.M}$, then we choose the branch $t_{\tlam}$. If the value is $\cbox{\hatctx{\Psi} \vdash \tapp~ M~N}$, we continue with the branch $t_{\tapp}$. If it is ${\cbox{\hatctx{\Psi} \vdash x} }$, i.e. the variable case, we continue with $t_v$. Note that if $\Psi$ is empty, then the case for variables is unreachable, since there is no LF variable of type $\tm$ in the empty LF context and hence the contextual type $\cbox{\cdot \vdash_\# \tm}$ is empty.

We also include the expansion of a computation $t$ at type $\cbox{\Psi \vdash A}$; it is equivalent to unboxing $t$ with the identity substitution and subsequently boxing it, i.e. $t$ is equivalent to $\cbox{\hatctx \Psi \vdash \unbox{t}{\wk{\hatctx{\Psi}}}}$ .


\section{Elementary Properties}
For the LF level, we can establish well-formedness of LF context, LF substitution and weakening properties. In addition, we have LF context conversion and equality conversion for LF types. As usual, we can also prove directly functionality and injectivity of Pi-types for the LF level.
%
\LONGVERSION{
\begin{lemma}[LF Context Conversion]\label{lm:lfctxconv}
Assume $\Gamma \vdash \Psi, x{:}A : \ctx$ and $\Gamma ; \Psi \vdash B : \lftype$.\\\mbox{\quad}
If $\Gamma ; \Psi, x{:}A \vdash \JLF$ and $\Gamma ; \Psi \vdash A \equiv B : \lftype$
then $\Gamma ; \Psi, x{:}B \vdash \JLF$.
\end{lemma}
\begin{proof}
Proof using LF Substitution Lemma. 
\LONGVERSIONCHECKED{
\\[0.5em]
\prf{$\Gamma \vdash \Psi, x{:}A : \ctx$ \hfill by assumption}
\prf{$\Gamma \vdash \Psi : \ctx$ \hfill by inversion}
\prf{$\Gamma \vdash \Psi, x{:}B : \ctx$ \hfill by rule}
\prf{$\Gamma ; \Psi, x{:}B \vdash \wk{\hatctx{\Psi}}: \Psi$ \hfill by rule}
\prf{$\Gamma ; \Psi, x{:}B \vdash x : B$ \hfill by rule}
\prf{$\Gamma ; \Psi \vdash B \equiv A : \lftype$ \hfill by symmetry}
\prf{$\Gamma ; \Psi, x{:}B \vdash x : A$ \hfill by rule}
\prf{$\Gamma ; \Psi, x{:}B \vdash x :\lfs{\wk{\hatctx\Psi}}{\Psi} A$ \hfill as $\lfs{\wk{\hatctx\Psi}}{\Psi} A = A$}
\prf{$\Gamma ; \Psi, x{:}B \vdash \wk{\hatctx{\Psi}}, x : \Psi, x{:}A$ \hfill by rule}
\prf{$\Gamma ; \Psi, x{:}B \vdash \J$ \hfill by Lemma \ref{lm:lfsubst} and $\lfs{\wk{\hatctx\Psi}, x}{\Psi, x} \J = \J$}

}
\end{proof}
}

\begin{lemma}[Functionality of LF Typing]\label{lm:func-lftyping}
Let $\Gamma ; \Psi \vdash \sigma_1 : \Phi$ and $\Gamma ; \Psi \vdash \sigma_2 : \Phi$, and
 $\Gamma ; \Psi \vdash \sigma_1 \equiv \sigma_2 : \Phi$.
\begin{enumerate}
\LONGVERSION{ \item If $\Phi = \Phi_i, x_i{:}A, \wvec{y{:}A}$ and $\Gamma ; \Phi \vdash x_i : A$
      then $\Gamma ; \Psi \vdash \lfs{\sigma_1}\Phi (x_i) \equiv \lfs{\sigma_2}\Phi  (x_i) :
     \lfs{\sigma_1}\Phi (A)$.}
 \item If $\Gamma ; \Phi \vdash \sigma : \Phi'$
 then $\Gamma ; \Psi \vdash \lfs{\sigma_1}{\Phi}\sigma \equiv \lfs{\sigma_2}{\Phi}\sigma : \Phi'$.

\item If $\Gamma ; \Phi \vdash M\,{:}\,A$ then\\
      \mbox{$\Gamma ; \Psi \vdash \lfs{\sigma_1}{\Phi}M \equiv \lfs{\sigma_2}{\Phi}M : \lfs{\sigma_1}{\Phi}A$}.
\LONGVERSION{\item If $\Gamma ; \Phi \vdash A : \lftype$ then
      $\Gamma ; \Psi \vdash \lfs{\sigma_1}{\Phi}A \equiv \lfs{\sigma_2}{\Phi}A : \lftype$.}
\end{enumerate}
\end{lemma}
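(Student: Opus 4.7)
The plan is to prove all three parts simultaneously by mutual induction on the typing derivation of the object into which we are substituting (the substitution $\sigma$, the term $M$, and the type $A$). Given the hypothesis $\Gamma;\Psi \vdash \sigma_1 \equiv \sigma_2 : \Phi$, the congruence and propagation rules for definitional equality at the LF level let us push the equivalence through each syntactic constructor.

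For part (1), case analysis on $\Gamma;\Phi \vdash \sigma : \Phi'$. The empty and the weakening cases are immediate from the definition of the substitution operation together with the truncation lemma, which commutes through equivalent substitutions. For the compound case $\sigma = \sigma', N$, we appeal to the IH on $\sigma'$ and to part (2) on $N$, then conclude by the congruence rule for compound substitutions. The closure case (if included) uses the IH on the embedded $\sigma'$ and the congruence rule for LF closures.

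For part (2), case analysis on $\Gamma;\Phi \vdash M : A$. The application and abstraction cases follow immediately from the IH and the congruence rules for application and $\lambda$, using LF context conversion (Lemma \ref{lm:lfctxconv}, in the $\lambda$-case) to move under the binder. The closure case $\unbox{t}{\sigma'}$ reduces to part (1) on $\sigma'$ together with congruence for LF closures. The conversion rule is handled by transitivity. The key case is the LF variable: here we must show that the lookup $\pos x~\lfs{\sigma_1}{\Phi}$ yields an LF term definitionally equal to $\pos x~\lfs{\sigma_2}{\Phi}$ at type $\lfs{\sigma_1}{\Phi}A$. This is handled by an auxiliary lemma that lookup is functional modulo equivalence, proved by induction on $\sigma_1 \equiv \sigma_2$ together with the reduction rules in Fig.~\ref{fig:lfeq} that expand weakening $\wk{\hatctx\Psi,x}$ into $\wk{\hatctx\Psi}, x$. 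Part (3) for types follows the same recipe as part (2), using the IHs on the subterms and the LF context conversion lemma to handle $\Pi$-types.

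The main obstacle is the variable case in part (2) in the presence of the weakening-vs-extension equation $\wk{\hatctx\Psi,x} \equiv \wk{\hatctx\Psi}, x$: two equivalent substitutions need not have the same outermost constructor, so one cannot simply invert $\sigma_1 \equiv \sigma_2$ structurally. The auxiliary lookup-functionality lemma must therefore be proved by first establishing that the reduction rules on substitutions (weakening expansion and $\eta$) are sound with respect to lookup, and then doing induction on the derivation of $\sigma_1 \equiv \sigma_2$, interleaving appeals to the IH of the main lemma on any subterms that appear inside a compound substitution. Once this is in place, every other case is a direct congruence argument, and weakening together with the LF substitution lemma justify the equalities $\lfs{\sigma_1}{\Phi} A = \lfs{\sigma_2}{\Phi} A$ up to definitional equality at the level of types needed to retype the intermediate judgments.
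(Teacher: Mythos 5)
Your proposal is correct and follows essentially the same route as the paper: an outer induction on the typing derivation of the object being substituted into, with the variable/lookup case isolated and discharged by an inner induction on the derivation of $\sigma_1 \equiv \sigma_2$ (which is exactly the paper's part concerning variables), precisely because the weakening-expansion equation means equivalent substitutions need not share an outermost constructor. The only difference is cosmetic: you package the lookup argument as a separate auxiliary lemma, whereas the paper states it as an additional clause of the lemma itself.
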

\begin{proof}
By induction on $\Gamma ; \Phi \vdash M : A$ (resp. $\Gamma ; \Phi \vdash \sigma : \Phi'$\LONGVERSION{\/and  $\Gamma ; \Phi \vdash A : \lftype$}) followed by another inner induction on $\Gamma ; \Psi \vdash \sigma_1 \equiv \sigma_2 : \Phi$ to prove (1).
\LONGVERSIONCHECKED{
\\[1em]
We concentrate first on the variable case (1).

\pcase{$\ianc
 {\Gamma \vdash \Phi_0, x_0{:}A_0, \wvec{y{:}B} : \ctx}
 {\Gamma ; \Phi_0, x_0{:}A_0, \wvec{y{:}B} \vdash \wk{\hatctx\Phi,x} \equiv \wk{\Phi_0}, x_0: \Phi_0, x_0{:}A_0}{}
$}
\prf{Let $x_i \in \hatctx\Phi_0$ and $\Phi_0 = \bullet, x_n, \ldots, x_1$ where $\bullet$ stands for either the empty context or a variable.
Then $\pos {x_i} \lfs{\wk{\bullet, x_n \ldots, x_1}}{x_n \ldots, x_1} = x_i$}
\\
 \prf{\emph{Subcase.} $x_i = x_0$}
 \prf{$\pos {x_i}~\lfs {\wk{\hatctx\Phi,x_0}}{\Phi_0, x_0} = x_0$ \hfill since $x_i{:}A_i \in (\Phi_0, x_0:A_0)$}
\prf{$\pos {x_i}~\lfs { \wk{\Phi_0}, x_0}{\Phi_0, x_0} = x_0$ \hfill by $\pos{}{}$}
\\
\prf{\emph{Subcase.} $x \not= x_0$ and $x_i \in x_n, \ldots, x_1$}
\prf{$\pos {x_i}~\lfs {\wk{\hatctx\Phi,x}}{\Phi_0, x_0} = x_i$ \hfill since $x_i{:}A_i \in (\Phi_0, x_0:A_0)$}
\prf{$\pos {x_i}~\lfs { \wk{\Phi_0}, x_0}{\Phi_0, x_0} = \pos
  {x_i}~\lfs{\wk{\Phi_0}}{\Phi_0} = x_i$ \\ \indent \hfill since  $\pos {x_i}  [\wk{\bullet, x_n \ldots, x_1}/{\bullet, x_n \ldots, x_1}] = x_i$}
\prf{$\Gamma ; \Phi_0,  x_0{:}A_0, \wvec{y{:}B} \vdash x_i \equiv x_i  : A_i$ \hfill
using $A_i = [\wk{\bullet, x_n, \ldots, x_{i-1}} / \bullet, x_n, \ldots, x_{i-1}]A_i$}
\\[1em]
\SUBSTCLO{
\pcase{$\ianc
{\Gamma ; \Phi' \vdash \sigma' : \Phi, \wvec{y{:}A} \qquad \Gamma ; \Psi \vdash \sigma : \Phi' }
{\Gamma ; \Psi \vdash \sclo {\hatctx\Phi} {\unbox{\cbox{\hatctx {\Phi'} \vdash \sigma'}}{\sigma}}  \equiv   \lfs\sigma {\Phi'}~(\lfss {\sigma'}{\hatctx{\Phi},\vec x}  ~\wk\Phi) : \Phi}{}
$
}
\prf{$\pos x~(\sclo {\hatctx\Phi} {\unbox{\cbox{\hatctx {\Phi'}  \vdash \sigma'}}{\sigma}})  =  \lfs{\sigma}{\Phi'}(M) $ where $\trunc_\Psi ~ (\sigma' / \hatctx\Psi) = \sigma''~\mathrm{and}~
    \pos x~\lfs{\sigma''}{\Psi} = M$ }
\prf{$\Gamma ; \Phi' \vdash M : \lfs{\sigma''}{\Psi}(A)$ \hfill by LF Variable Lookup (Lemma \ref{lm:lflookup})}
\prf{$\Gamma ; \Psi \vdash \lfs{\sigma}{\Phi'}(M) : \lfs{\sigma}{\Phi'}(\lfs{\sigma''}{\Psi}(A))$ \hfill by LF subst. lemma}
\prf{$\Gamma ; \Psi \vdash \lfs{\sigma}{\Phi'}(M) : \lfs{\sclo {\hatctx\Phi} {\unbox{\cbox{\hatctx {\Phi'} \vdash \sigma'}}{\sigma}}  }{\Phi} A$ \hfill by LF subst. prop.}
\prf{$\pos~x~ (\lfss {\sigma'}{\hatctx{\Phi},\vec x}  ~\wk\Phi / \hatctx\Phi) = \pos~x~(\sigma''/\hatctx\Phi) = M$ where $\sigma'' =\trunc_\Psi ~ (\sigma' / \hatctx\Phi)$}
\prf{$\Gamma ; \Psi \vdash \lfs{\sclo {\hatctx\Phi} {\unbox{\cbox{\hatctx {\Phi'} \vdash \sigma'}}{\sigma}}}{\Phi}(x)  \equiv   \lfs\sigma {\Phi'}~(\lfs {\lfss {\sigma'}{\hatctx{\Phi},\vec x}  ~\wk\Phi}{\Phi}(x)) : \lfs{\sclo {\hatctx\Phi} {\unbox{\cbox{\hatctx {\Phi'} \vdash \sigma'}}{\sigma}}  }{\Phi} A$}
\\[1em]
\pcase{$\ianc
{r = \unbox{t}{\sigma} \qquad t \neq \cbox{\Phi' \vdash \sigma'}\qquad \Gamma ; \Phi \vdash \sigma : \Phi' \quad \Gamma \vdash t : \cbox{\Phi' \vdash \Psi, x_0{:}A_0,\wvec{x{:}A}}}
{\Gamma ; \Phi \vdash \sclo {\hatctx\Psi,x_0} r \equiv   \sclo{\hatctx\Psi} r, \unbox{\cbox{\hatctx\Psi,x_0 \vdash x_0}} {\sclo {\hatctx\Psi,x_0} r} : \Psi,x_0{:}A_0}{}
$}
\prf{$\pos~x_i~\lfs{\sclo {\hatctx\Psi,x_0} r }{\Psi} = \unbox{\cbox{\hatctx\Psi_i,x_i \vdash x_i}}{\sclo {\hatctx\Psi_i,x_i} r}$ where $\hatctx\Psi = \hatctx\Psi_i,x_i,\ldots,x_1$}
\prf{$\pos~x_i~ \lfs{\sclo{\hatctx\Psi} r, \unbox{\cbox{\hatctx\Psi,x_0 \vdash x_0}} {\sclo {\hatctx\Psi,x_0} r}}{\Psi,x_0}$}
\\
\prf{\emph{Subcase.}$x_i = x_0$}
\prf{$\pos~x_i~\lfs{\sclo {\hatctx\Psi,x_0} r }{\Psi} = \unbox{\cbox{\hatctx\Psi,x_0 \vdash x_0}}{\sclo {\hatctx\Psi,x_0} r}$}
\prf{$\pos~x_i~ \lfs{\sclo{\hatctx\Psi} r, \unbox{\cbox{\hatctx\Psi,x_0 \vdash x_0}} {\sclo {\hatctx\Psi,x_0} r}}{\Psi,x_0} = \unbox{\cbox{\hatctx\Psi,x_0 \vdash x_0}} {\sclo {\hatctx\Psi,x_0} r}$}
\prf{$\Gamma ; \Phi \vdash \lfs{ \sclo {\hatctx\Psi,x_0} r}{\Psi_0,x_0}(x)
     \equiv \lfs{\sclo{\hatctx\Psi} r, \unbox{\cbox{\hatctx\Psi,x_0 \vdash x_0}} {\sclo {\hatctx\Psi,x_0} r}}{\Psi_0, x_0} (x) : \lfs{ \sclo {\hatctx\Psi,x_0} r}{\Psi_0, x_0}(A)$}
\\
\prf{\emph{Subcase.}$x_i \neq x_0$}
\prf{$\pos~x_i~\lfs{\sclo {\hatctx\Psi,x_0} r }{\Psi} = \unbox{\cbox{\hatctx\Psi_i,x_i \vdash x_i}}{\sclo {\hatctx\Psi_i,x_i} r}$ where $\hatctx\Psi = \hatctx\Psi_i,x_i,\ldots,x_1$}
\prf{$\pos~x_i~ \lfs{\sclo{\hatctx\Psi} r, \unbox{\cbox{\hatctx\Psi,x_0 \vdash x_0}} {\sclo {\hatctx\Psi,x_0} r}}{\Psi,x_0} = \pos x_i~\lfs{\sclo{\hatctx\Psi} r}{\Psi} = \unbox{\cbox{\hatctx\Psi_i,x_i \vdash x_i}}{\sclo {\hatctx\Psi_i,x_i} r}
$}
}
\\[1em]
\pcase{$\ianc
{\Gamma ; \Psi \vdash \sigma = \sigma' : \Phi \quad
 \Gamma ; \Psi \vdash M \equiv N : \lfs \sigma \Phi A }
{\Gamma ; \Psi \vdash \sigma, M \equiv \sigma', N : \Phi, y{:}A}{}$
}
\prf{\emph{Subcase.} $x = y$}
\prf{$\pos x~\lfs{\sigma, M}{\Phi, x{:}A} = M$ \hfill by def. of $\pos{}{}$}
\prf{$\pos x~\lfs{\sigma', N}{\Phi, x{:}A} = N$ \hfill by def. of $\pos{}{}$}
\prf{$\Gamma ; \Psi \vdash M \equiv N : \lfs \sigma \Phi A$ \hfill by premise}
\prf{$\Gamma ; \Psi \vdash M \equiv N : \lfs {\sigma, M}{\Phi,x} A$ \hfill since $ \lfs {\sigma, M}{\Phi,x} A = \lfs \sigma \Phi A$ }
\\
\pcase{$\ianc{\Gamma ; \Psi \vdash \sigma' \equiv \sigma : \Phi}
             {\Gamma ; \Psi \vdash \sigma \equiv \sigma' : \Phi}{}$}
\prf{$\Gamma ; \Psi \vdash \lfs{\sigma'}{\Phi}(x) \equiv \lfs{\sigma}{\Phi}(x) : \lfs{\sigma'}{\Phi}A$ \hfill by IH}
\prf{$\Gamma ; \Psi \vdash \lfs{\sigma'}{\Phi}A \equiv \lfs{\sigma}{\Phi} A$ \hfill by IH}
\prf{$\Gamma ; \Psi \vdash \lfs{\sigma}\Phi (x_i) \equiv \lfs{\sigma'}\Phi  (x_i) :
     \lfs{\sigma'}\Phi (A)$ \hfill by type conversion}
\\[1em]
%
}
\end{proof}

\LONGVERSION{
\begin{lemma}[Equality Inversion]\label{lm:lfeqinv}
If $\Gamma ; \Psi \vdash A \equiv \Pi x{:}B_1.B_2 : \lftype$
or $\Gamma ; \Psi \vdash \Pi x{:}B_1.B_2 \equiv A : \lftype$
then $A = \Pi x{:}A_1.A_2$ for some $A_1$ and $A_2$
and $\Gamma ; \Psi \vdash A_1 \equiv B_1 : \lftype$
and $\Gamma ; \Psi, x{:}A_1 \vdash A_2 \equiv B_2 : \lftype$.
\end{lemma}
\begin{proof}
By induction on the definitional equality derivation.
\end{proof}
}
\begin{lemma}[Injectivity of LF Pi-Types]\label{lm:lfpi-inj} $\;$\\
If \mbox{$\Gamma ; \Psi \vdash \Pi x{:}A.B \equiv \Pi x{:}A'.B' : \lftype$} \\ then
\mbox{$\Gamma ; \Psi \vdash A \equiv A' : \lftype$} and
$\Gamma ; \Psi, x{:}A \vdash B \equiv B' : \lftype$.
\end{lemma}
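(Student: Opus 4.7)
The plan is to derive this as an immediate consequence of the preceding Equality Inversion Lemma. Given the hypothesis $\Gamma ; \Psi \vdash \Pi x{:}A.B \equiv \Pi x{:}A'.B' : \lftype$, I apply equality inversion taking the unconstrained side to be $\Pi x{:}A.B$ and instantiating $B_1 := A'$, $B_2 := B'$. The lemma then guarantees that $\Pi x{:}A.B$ must itself be syntactically a Pi-type $\Pi x{:}A_1.A_2$ with $\Gamma ; \Psi \vdash A_1 \equiv A' : \lftype$ and $\Gamma ; \Psi, x{:}A_1 \vdash A_2 \equiv B' : \lftype$. Matching outermost constructors forces $A_1 = A$ and $A_2 = B$, yielding the two required equalities; in particular the second equality already lives in the context $\Psi, x{:}A$ demanded by the statement.

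Most of the work is therefore concentrated in the Equality Inversion Lemma itself, which I would prove by induction on the derivation of definitional equality. The congruence rule for $\Pi$ (and reflexivity, viewed as a degenerate congruence) yields the conclusion directly. For symmetry I would just flip the induction hypothesis, and for transitivity I would apply the induction hypothesis twice, obtaining an intermediate Pi-type $\Pi x{:}A_0.B_0$; composing the two inner equalities requires LF Context Conversion (Lemma~\ref{lm:lfctxconv}) to realign $\Psi, x{:}A_0$ with $\Psi, x{:}A_1$, after which transitivity of type equality closes the case.

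The main obstacle is checking that no other equality rule can produce a $\Pi$-type at the head from a non-$\Pi$ (or vice versa). The $\beta$ and $\eta$ rules in Fig.~\ref{fig:lfeq} act on LF terms, and the closure-unfolding rule $\unbox{\cbox{\hatctx\Phi \vdash N}}{\sigma} \equiv \lfs\sigma\Phi N$ is a term-level rule as well; since LF types are generated only by constant-headed applications $\const a~M_1 \ldots M_n$ and $\Pi x{:}A.B$, these rules can rewrite subterms inside a type but never alter the outermost type constructor. Once this head-preservation observation is in place, the induction in the inversion lemma goes through uneventfully and the present lemma follows as the one-line corollary sketched above.
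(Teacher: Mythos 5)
Your proposal matches the paper's proof: the paper also derives this lemma as an immediate corollary of the Equality Inversion Lemma (which it proves by induction on the definitional equality derivation), exactly as you do. Your additional detail on the transitivity case (realigning contexts via LF Context Conversion) and the head-preservation observation for the term-level rules are the right ingredients for that induction, which the paper leaves implicit.
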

\begin{proof}
By equality inversion\LONGVERSION{~(Lemma \ref{lm:lfeqinv})}.
\end{proof}


For the computation level, we also know that computation context
$\Gamma$ is well-formed; in addition, weakening and substitution
properties hold. However, proving functionality of typing and
injectivity of Pi-types on the computation-level must be postponed.

%


\section{Weak Head Reduction}\label{sec:whred}
The operational semantics of \cocon uses weak head reduction and mirrors declarative equality. It proceeds lazily.
We characterize weak head normal forms (whnf) for both, (contextual) LF and
computations\LONGVERSION{~(\LONGVERSION{Fig.~\ref{fig:lfwhnf} and} Fig.~\ref{fig:whnf})}. They are mutually defined.

\begin{definition}[Whnf of LF]\quad
  \begin{itemize}
  \item An LF term $M$ is in whnf, $\norm M$, iff
$M = \lambda x.N$,  or $M$ is neutral, i.e. $\neut M$, or
$M =  \unbox t \sigma$ and $t$ is neutral (i.e. $\neut t$).

\item An LF term $M$ is neutral, $\neut M$, iff $M$ is of the form $h~M_1\ldots M_n$ where $h$ is either an LF variable or a constant $\const{c}$.
  \end{itemize}
\end{definition}

LF substitutions of the form $\sigma,M$, $\wk{\psi}$ or $\cdot$ are in whnf.
LF types are also always considered to be in whnf, as computation may only produce a contextual LF term, but not a contextual LF type.
Last, (erased) LF contexts are in whnf, as we do not allow computations to return an LF context.

Computation-level expressions are in whnf, if they do not trigger any further computation-level reductions.

\begin{definition}[Whnf of Computations]
\quad
  \begin{itemize}
  \item A term $t$ is in whnf, $\norm t$, if
$t$ is a $(\tmfn y s)$ or $(y : \tau_1) \arrow \tau_2$ or $u$,
$t$ is $\cbox{C}$ or $\cbox{T}$, or $t$ is neutral.

\item A term $t$ is neutral,  $\neut t$, if
$t$ is a variable, $t = s_1~s_2$ where $\neut s_1$, $t = (\trec{\R}{\tm}{\IH}~\rappto \Psi~s)$ where either $\neut s$ or $s = \cbox{\hatctx{\Psi}\vdash \unbox t \sigma}$ and $\neut t$.
  \end{itemize}
\end{definition}

We consider boxed objects $\cbox{C}$ and boxed types $\cbox{T}$ in
whnf, as the contextual object $C$ will be further reduced when we use
them and have to unbox them. The remaining definition of whnf
characterizes terms that do not trigger any further reductions.
We note that weakening preserves whnfs.

We now define weak head reductions
(Fig.~\ref{fig:lfwhnfred} and  Fig.~\ref{fig:whnfred}). If an LF term
is not already in $\mathsf{whnf}$, we have two cases: either we
encounter an LF application $M~N$ and we may need to beta-reduce or
we find an embedded computation $\unbox t \sigma$. If $t$ is neutral, then we are done; otherwise $t$ reduces to a contextual object $\cbox{\hatctx{\Psi} \vdash M}$, and we continue to reduce $[\sigma / \hatctx \Psi]M$.

\begin{figure}[htb]
  \centering\small
  \[
    \begin{array}{c}
\multicolumn{1}{l}{\fbox{$M \lfwhnf N$}:~~\mbox{LF term $M$ weak head
      reduces to $N$ s.t. $\norm N$}}
\\[1em]
\infer{M\;N \lfwhnf R}
{M \lfwhnf \lambda x.M' & [N/x]M' \lfwhnf R}
\qquad
\infer{M~N \lfwhnf R \;N}{
M \lfwhnf R & \neut R}
 \\[0.5em]
\infer{M \lfwhnf M}
{\norm M}
~~
\infer{\unbox t{\sigma} \lfwhnf N}
      {t \whnf \cbox{\hatctx{\Psi} \vdash M}~~
       \lfs {\sigma}{\Psi} M \lfwhnf N}
~~
\infer{\unbox t{\sigma} \lfwhnf \unbox{n}{\sigma}}
      {t \whnf n ~~ \neut {n}}
%
\\[0.5em]
 \multicolumn{1}{l}{\fbox{$\sigma \lfwhnf \sigma'$}:~\mbox{LF subst. $\sigma$ weak head reduces to $\sigma'$ s.t. $\norm \sigma'$}}
 \\[1em]
  \infer{\sigma \lfwhnf \sigma}{\norm \sigma}
\quad
\infer{\wk\cdot \lfwhnf \cdot}
      { }
\SUBSTCLO{\quad
\infer{\sclo {\cdot} {\unbox{t}{\sigma} } \lfwhnf \cdot}{ }
}
\qquad
\infer{\wk{(\hatctx{\Psi},x)} \lfwhnf \wk{\hatctx\Psi}, x}
      {}
\SUBSTCLO{\\[0.5em]
\infer{(\sclo {\psi} {\unbox{t}{\sigma}}) \lfwhnf \sigma''}{
t \whnf \cbox{\hatctx \Psi' \vdash \sigma_1} &
[\sigma / \hatctx {\Psi'}]\sigma_1 \lfwhnf \sigma' &
\trunc_\psi ~ \sigma' = \sigma''
}
\\[0.5em]
\infer{\sclo {\hatctx\Psi,x}{\unbox t \sigma} \lfwhnf \sclo{\hatctx\Psi} \unbox{t}\sigma , \unbox{\cbox{\hatctx\Psi,x \vdash x}}{\sclo {\hatctx\Psi,x} {\unbox t \sigma}}}{}
\qquad
\infer{\sclo {\psi} {\unbox{t}{\sigma} } \lfwhnf \sclo {\psi} {\unbox{n}{\sigma} }}{
t \whnf n & \neut n }
}
    \end{array}
  \]
  \caption{Weak Head Reductions for LF Terms and LF Substitutions}
  \label{fig:lfwhnfred}
\end{figure}

If a computation-level term $t$ is not already in $\norm$, we
have either an application $t_1~t_2$ or a recursor. For an application $t_1~t_2$, we reduce $t_1$. If it reduces to a function, we continue to
beta-reduce, otherwise, we build a neutral application.
For the recursor $\titer{\vec\R}{\tm}{\IH}~\Psi~t$, either
$t$ reduces to a neutral term, then we cannot proceed; or, $t$
reduces to $\cbox{\hatctx{\Psi} \vdash M}$, and then we proceed to
further reduce $M$. If the result is $\unbox {t'}{\sigma}$, where $t'$
is neutral, then we cannot proceed; if the result is $N$ where $N$ is
neutral, then we choose the appropriate branch in $\R$
using the judgment ${\R} \cappto (\Psi)~(\hatctx{\Psi} \vdash N) \whnf v$.
 We note that weak head reduction for LF and computation is deterministic and stable under weakening and LF substitutions.

\begin{figure}[htb]
  \centering\small
  \[
    \begin{array}{c}
\multicolumn{1}{l}{\fbox{$t \whnf r$}:~~\mbox{Term $t$ weak head reduces to $r$ s.t. $\norm r$}}\\[1em]
\infer{t \whnf t}{\norm t}
\quad
 \infer{t_1\;t_2 \whnf v}
          {t_1 \whnf \tmfn y t & \{t_2/y\}t \whnf v}
\quad
 \infer{t_1\;t_2 \whnf w~t_2}
      {t_1 \whnf w & \neut w  }
\\[0.5em]
\infer{\trec{\R}{\tm}{\IH}~\rappto \Psi~t \whnf \trec{\R}{\tm}{\IH}~\rappto \Psi~s}
{
 t \whnf s & \neut{s} }
\\[0.5em]
\infer{\trec{\R}{\tm}{\IH}~\rappto \Psi~t \whnf \titer{\R}{}{\IH}
             \rappto \Psi~(\hatctx{\Psi} \vdash \unbox {t'} \sigma)}
{
 t \whnf \cbox{\hatctx \Psi \vdash M} &
 M   \lfwhnf \unbox {t'} \sigma \quad \neut t'
}
\\[0.5em]
\infer{\titer{\R}{}{\IH}~\rappto \Psi~t \whnf v}
{
 t \whnf \cbox{\hatctx \Psi \vdash M} &
 M   \lfwhnf N \quad \neut N \quad
 {\R} \cappto (\Psi)~(\hatctx{\Psi} \vdash N) \whnf v
}
\\[0.5em]
\multicolumn{1}{l}{\mbox{let}~
\R =  ({\psi,p \mto t_v} \mid {\psi,m,n,f_m, f_n \mto t_{\mathsf{app}}}
      \mid {\psi, m, f_m \mto t_{\tlam}})}
\\[0.5em]
\infer{{\R} \cappto  (\Psi)~(\hatctx{\Psi} \vdash \tapp~M~N)  \whnf  v}
 {
  \begin{array}{@{}l@{}l@{}l@{}l@{}}
  \{ & \Psi/\psi,~
      \cbox{\hatctx{\Psi} \vdash M}/m,~\cbox{\hatctx{\Psi} \vdash N}/n,~~&& \\
     & \trec{\R}{\tm}\IH~\rappto {\Psi}~{\cbox{\hatctx{\Psi} \vdash M}}/f_m,~
      \trec{\R}{\tm}\IH~\rappto {\Psi}~\cbox{\hatctx{\Psi} \vdash N}/f_n~\} & t_{\tapp} &\whnf v
      \end{array}
}
\\[0.5em]
\infer[]
{{\R} \cappto (\Psi)~(\hatctx{\Psi} \vdash \tlam~\lambda x.M)  \whnf  v}
 {
      \begin{array}{@{}l@{}l@{}l@{}l@{}}
  \{ & \Psi/\psi,~~\cbox{\hatctx{\Psi}, x \vdash M}/m,~~&&\\
     &   ~\trec{\R}{\tm}{\IH}\rappto {(\Psi,x{:}\tm)}~{\cbox{\hatctx{\Psi}, x \vdash M}}/f_m  \}& t_{\tlam} \whnf v
      \end{array}
}
\\[1em]
 \infer{{\R} \cappto  (\Psi)~(\hatctx{\Psi} \vdash x) \whnf  v}
       { \{\Psi/\psi,~\cbox{\hatctx{\Psi} \vdash x}/p\} t_v \whnf v}
  \end{array}
  \]
  \caption{Weak Head Reductions for Computations}
  \label{fig:whnfred}
\end{figure}

\LONGVERSION{
\begin{lemma}[Determinacy of Whnf Reduction]\label{lem:detwhnf}\quad
  \begin{enumerate}
  \item If $M \lfwhnf N_1$ and $M \lfwhnf N_2$ then $N_1 = N_2$.
  \item If $\sigma \lfwhnf \sigma_1$ and $\sigma \lfwhnf \sigma_2$ then $\sigma_1 = \sigma_2$.
  \item \label{it:comp-detwhnf} If $t \whnf t_1$ and $t \whnf t_2$ then $t_1 = t_2$.
  \end{enumerate}
\end{lemma}
\begin{proof}
By inspection of the rules.
\end{proof}
}
To ease the technical development,  we introduce notational abbreviations for well-typed whnfs in Fig. \ref{fig:wtwhnf}.

\begin{figure}
  \centering
\[
  \begin{array}{c}
\infer{\Gamma ; \Psi \vdash M \lfwhnf N : A}
{
\begin{array}{@{}ll@{}}
\Gamma ; \Psi \vdash M : A       & \\
\Gamma ; \Psi \vdash N : A       & \Gamma ; \Psi \vdash M \equiv N : A \quad M \lfwhnf N
 \end{array}
}
\\[0.5em]
\infer{\Gamma ; \Psi \vdash \sigma_1 \lfwhnf \sigma_2 : \Phi}
{
\begin{array}{@{}ll@{}}
\Gamma ; \Psi \vdash \sigma_1 : \Phi       & \\
\Gamma ; \Psi \vdash \sigma_2 : \Phi       & \Gamma ; \Psi \vdash \sigma_1 \equiv \sigma_2 : \Phi \quad \sigma_1 \lfwhnf \sigma_2
 \end{array}
}
\\[0.5em]
\infer{\Gamma \vdash t \whnf t' : \tau}
{
\Gamma \vdash t : \tau       \quad
\Gamma \vdash t' : \tau      \quad
\Gamma \vdash t \equiv t' : \tau
\quad t \whnf t'
}
  \end{array}
\]
  \caption{Well-Typed Whnf}
  \label{fig:wtwhnf}
\end{figure}

\LONGVERSION{
\begin{lemma}[Weak Head Reductions Preserved Under Weakening]\label{lem:weakwhnf}\quad
  \begin{enumerate}
  \item If\/ $\Gamma ; \Psi \vdash M \lfwhnf N: A$ and $\Gamma' \leq_\rho \Gamma$
    then $\Gamma' ; \{\rho\}\Psi \vdash \{\rho\}M \lfwhnf \{\rho\}N : \{\rho\}A$.
  \item If\/ $\Gamma ; \Psi \vdash \sigma \lfwhnf \sigma': \Phi$ and $\Gamma' \leq_\rho \Gamma$
    then $\Gamma' ; \{\rho\}\Psi \vdash \{\rho\}\sigma \lfwhnf \{\rho\}\sigma' : \{\rho\}\Phi$.
  \item \label{it:sweakcomp} If\/ $\Gamma \vdash t \whnf t': \tau$ and $\Gamma' \leq_\rho \Gamma$
    then $\Gamma' \vdash \{\rho\}t \whnf \{\rho\}t' : \{\rho\}\tau$.

  \end{enumerate}

\end{lemma}
\begin{proof}
By mutual induction on the first derivation using the computation-level substitution lemma \ref{lm:compsubst}, as renaming $\Gamma' \leq_\rho \Gamma$ are a special case of computation-level substitutions.
\end{proof}

 \begin{lemma}[LF Weak Head Reduction Is Stable Under LF Substitutions]\label{lm:lfwhnfsub}
 Let $\Gamma ; \Psi \vdash \sigma : \Phi$.
 \begin{enumerate}
 \item If $\Gamma ; \Phi \vdash M \lfwhnf \unbox {t_1} {\sigma_1} : A$
       then $\Gamma ; \Psi \vdash \lfs \sigma \Phi M \lfwhnf \unbox{t_1}{\lfs \sigma \Phi {\sigma_1}} : \lfs \sigma \Phi A$.
 \item If $\Gamma ; \Phi \vdash M \lfwhnf \lambda x.N : \Pi x{:}A.B$
       then $\Gamma ; \Psi \vdash \lfs \sigma \Phi M \lfwhnf \lfs \sigma \Phi (\lambda x.N) : \lfs \sigma \Phi (\Pi x{:}A.B)$.
 \item If $\Gamma ; \Phi \vdash M \lfwhnf x : A$ and $\Gamma ; \Psi \vdash \sigma(x) \lfwhnf N : \lfs\sigma\Phi A$
       then $\Gamma ; \Psi \vdash \lfs \sigma \Phi M \lfwhnf N : \lfs \sigma \Phi A$.
 \item If $\Gamma ; \Phi \vdash M \lfwhnf \tapp~M_1~M_2 : \tm$
       then $\Gamma ; \Psi \vdash \lfs\sigma\Phi M \lfwhnf \lfs\sigma\Phi (\tapp~M_1~M_2) : \tm$.
 \item If $\Gamma ; \Phi \vdash M \lfwhnf \tlam~M_1 : \tm$
       then $\Gamma ; \Psi \vdash \lfs\sigma\Phi M \lfwhnf \lfs\sigma\Phi (\tlam~M_1) : \tm$.
 \item If $\Gamma ; \Phi \vdash \sigma_1 \lfwhnf \sigma_2 : \Phi'$
       then $\Gamma ; \Psi \vdash \lfs\sigma\Phi\sigma_1 \lfwhnf \lfs\sigma\Phi\sigma_2 : \Phi'$.
 \end{enumerate}
\end{lemma}
\begin{proof}
By induction on $M \lfwhnf M'$ relation that is part of the well-typed weak head reduction using Lemma \ref{lm:lfctxwf} and \ref{lm:lfsubst}.
\LONGVERSIONCHECKED{\\[1em]
For (1): \fbox{If $\Gamma ; \Phi \vdash M \lfwhnf \unbox {t_1} {\sigma_1} : A$
       then $\Gamma ; \Psi \vdash \lfs \sigma \Phi M \lfwhnf \unbox{t_1}{\lfs \sigma \Phi {\sigma_1}} : \lfs \sigma \Phi A$.}
\\[1em]
\prf{\emph{Case} $M = \unbox{t_0}{\sigma_1}$ and $t_0 \whnf t_1$ and $\neut t_1$}
\prf{$\Gamma ; \Phi \vdash M : A$ \hfill by assumption}
\prf{$\Gamma ; \Phi \vdash \unbox{t_1}{\sigma_1} : A$ \hfill by assumption}
\prf{$\Gamma ; \Psi \vdash \lfs\sigma\Phi M : \lfs\sigma\Phi A$ \hfill by LF subst. lemma \ref{lm:lfsubst}}
\prf{$\Gamma ; \Psi \vdash  \lfs\sigma\Phi (\unbox{t_1}{\sigma_1}) :  \lfs\sigma\Phi A$ \hfill by LF subst. lemma \ref{lm:lfsubst}}
\prf{$ \lfs\sigma\Phi M \lfwhnf \unbox{t_1}{\lfs\sigma\Phi {\sigma_1}}$ \hfill since $\neut t_1$ and $t_0 \whnf t_1$ and LF subst. prop.}
\prf{$\Gamma ; \Phi \vdash \lfs\sigma\Phi M \lfwhnf \unbox{t_1}{\lfs\sigma\Phi {\sigma_1}} : \lfs\sigma\Phi A$ \hfill by well-typed whnf (Def \ref{def:typedwhnf})}
\\[0.25em]
\prf{\emph{Case} $M = \unbox{t_0}{\sigma_0}$ and $t_0 \whnf \cbox{\hatctx\Phi' \vdash M'}$, $\lfs{\sigma_0}{\Phi'}M' \lfwhnf \unbox{t_1}{\sigma_1}$}
\prf{$\Gamma ; \Phi \vdash M : A$ \hfill by assumption}
\prf{$\Gamma ; \Phi \vdash \unbox{t_1}{\sigma_1} : A$ \hfill by assumption}
\prf{$\Gamma ; \Psi \vdash \lfs\sigma\Phi M : \lfs\sigma\Phi A$ \hfill by LF subst. lemma \ref{lm:lfsubst}}
\prf{$\Gamma ; \Psi \vdash  \lfs\sigma\Phi (\unbox{t_1}{\sigma_1}) :  \lfs\sigma\Phi A$ \hfill by LF subst. lemma \ref{lm:lfsubst}}
\prf{$\lfs{\lfs\sigma\Phi{\sigma_0}}{\Phi'}M' \lfwhnf \unbox{t_1}{\lfs\sigma\Phi {\sigma_1}}$ \hfill by IH (and subst. prop)}
\prf{$\unbox{t_0}{\lfs\sigma\Phi \sigma_0} \lfwhnf \unbox{t_1}{\lfs\sigma\Phi{\sigma_1}}$ \hfill by whnf}
\prf{$\Gamma ; \Psi \vdash \lfs\sigma\Phi (\unbox{t_0}{\sigma_0}) \lfwhnf \unbox{t_1}{\lfs\sigma\Phi{\sigma_1}} : \lfs\sigma\Phi A$ \hfill by subst. prop. and  well-typed whnf (Def \ref{def:typedwhnf})}
\\[0.25em]
\prf{\emph{Case} $M = \unbox{t_1}{\sigma_1}$ and $\neut t_1$}
\prf{$\Gamma ; \Phi \vdash M : A$ \hfill by assumption}
\prf{$\Gamma ; \Phi \vdash \unbox{t_1}{\sigma_1} : A$ \hfill by assumption}
\prf{$\Gamma ; \Psi \vdash \lfs\sigma\Phi M : \lfs\sigma\Phi A$ \hfill by LF subst. lemma \ref{lm:lfsubst}}
\prf{$\Gamma ; \Psi \vdash  \lfs\sigma\Phi (\unbox{t_1}{\sigma_1}) :  \lfs\sigma\Phi A$ \hfill by LF subst. lemma \ref{lm:lfsubst}}
\prf{$\lfs\sigma\Phi(\unbox{t_1}{\sigma_1}) = \unbox{t_1}{\lfs\sigma\Phi{\sigma_1}}$ \hfill by subst. def.}
\prf{$\neut \unbox{t_1}{\lfs\sigma\Phi{\sigma_1}}$ \hfill since $\neut t_1$}
\prf{$\lfs\sigma\Phi M \lfwhnf \unbox{t_1}{\lfs\sigma\Phi {\sigma_1}}$ \hfill by whnf}
\prf{$\Gamma ; \Psi \vdash \lfs\sigma\Phi M \lfwhnf \unbox{t_1}{\lfs\sigma\Phi {\sigma_1}} : \lfs\sigma\Phi A$ \hfill by well-typed whnf (Def \ref{def:typedwhnf})}
\\[-0.75em]
\prf{\emph{Case} $M = M_1~M_2$ and $M \lfwhnf \unbox{t_1}{\sigma_1}$}
\prf{$\Gamma ; \Phi \vdash M : A$ \hfill by assumption}
\prf{$\Gamma ; \Phi \vdash \unbox{t_1}{\sigma_1} : A$ \hfill by assumption}
\prf{$\Gamma ; \Psi \vdash \lfs\sigma\Phi M : \lfs\sigma\Phi A$ \hfill by LF subst. lemma \ref{lm:lfsubst}}
\prf{$\Gamma ; \Psi \vdash  \lfs\sigma\Phi (\unbox{t_1}{\sigma_1}) :  \lfs\sigma\Phi A$ \hfill by LF subst. lemma \ref{lm:lfsubst}}
\prf{$M_1 \lfwhnf \lambda x.M'$ and $[M_2/x]M' \lfwhnf \unbox{t_1}{\sigma_1}$ \hfill by inversion}
\prf{$\lfs\sigma\Phi {[M_2/x]M'} \lfwhnf \lfs\sigma\Phi (\unbox{t_1}{\sigma_1})$ \hfill by IH}
\prf{$\lfs{\sigma, \lfs\sigma\Phi M_2}{\Phi, x}M' \lfwhnf \lfs\sigma\Phi(\unbox{t_1}{\sigma_1})$ \hfill by subst. def}
\prf{$\lfs\sigma\Phi M_1 \lfwhnf \lfs\sigma\Phi (\lambda x.M')$ \hfill by IH}
\prf{$\lfs\sigma\Phi{M_1} \lfwhnf \lambda x. \lfs{\sigma, x}{\Phi,x}M'$ \hfill by subst. def.}
\prf{$[\lfs\sigma\Phi M_2/x](\sigma, x) = \sigma, \lfs\sigma\Phi M_2$ \hfill by subst. def.}
\prf{$\lfs\sigma\Phi M_1~\lfs\sigma\Phi M_2 \lfwhnf \lfs\sigma\Phi (\unbox{t_1}{\sigma_1})$ \hfill by whnf rules}
\prf{$\lfs\sigma\Phi(M_1~M_2) \lfwhnf \unbox{t_1}{\lfs\sigma\Phi \sigma_1}$ \hfill by subst. def.}
}
\LONGVERSIONCHECKED{\\[1em]
For (3): \fbox{If $\Gamma ; \Phi \vdash M \lfwhnf x : A$ and $\Gamma ; \Psi \vdash \sigma(x) \lfwhnf N : \lfs\sigma\Phi A$
       then $\Gamma ; \Psi \vdash \lfs \sigma \Phi M \lfwhnf N : \lfs \sigma \Phi A$.}
\\[1em]
\\[0.5em]
\prf{\emph{Case} $M = x$ where $x \in \Phi$ and $\norm M$}
\prf{$\Gamma ; \Phi \vdash x : A$ and $x:A \in \Phi$ \hfill since $\Gamma ; \Phi \vdash M \lfwhnf x : A$}
\prf{$\Gamma ; \Phi \vdash M : A$ \hfill since $\Gamma ; \Phi \vdash M \lfwhnf x : A$}
\prf{$\Gamma ; \Psi \vdash \lfs\sigma\Phi M : \lfs\sigma\Phi A$ \hfill LF subst. lemma \ref{lm:lfsubst}}
\prf{$\Gamma ; \Psi \vdash N : \lfs\sigma\Phi A$ \hfill since $\Gamma ; \Psi \vdash \sigma : \Phi$ and $x:A \in \Phi$}
\prf{$\lfs\sigma\Phi M = \lfs\sigma\Phi x = \sigma(x)$ \hfill by subst. def.}
\prf{$\lfs\sigma\Phi M \lfwhnf N$ \hfill since $\sigma(x) \lfwhnf N$}
\\[-0.75em]
\prf{\emph{Case} $M = M_1~M_2$ and $M \lfwhnf x$}
\prf{$\Gamma ; \Phi \vdash x : A$ and $x:A \in \Phi$ \hfill since $\Gamma ; \Phi \vdash M \lfwhnf x : A$}
\prf{$\Gamma ; \Phi \vdash M : A$ \hfill since $\Gamma ; \Phi \vdash M \lfwhnf x : A$}
\prf{$\Gamma ; \Psi \vdash \lfs\sigma\Phi M : \lfs\sigma\Phi A$ \hfill LF subst. lemma \ref{lm:lfsubst}}
\prf{$M_1 \lfwhnf \lambda x.M'$ and $[M_2/x]M' \lfwhnf x$ \hfill by inversion}
\prf{$\lfs\sigma\Phi [M_2/x]M' \lfwhnf N$ \hfill by IH using $\sigma(x) \lfwhnf N$}
\prf{$\lfs{\sigma,\lfs{\sigma}{\Phi}M_2}{\Phi, x}M' \lfwhnf N$ \hfill by subst. def.}
\prf{$\lfs\sigma\Phi M_1 \lfwhnf \lambda x.[\sigma, x]M'$ \hfill by IH and LF subst. prop.}
\prf{$\lfs\sigma\Phi M \lfwhnf N$ \hfill by whnf rules}
\prf{$\Gamma ; \Psi \vdash \lfs\sigma\Phi M \lfwhnf N : \lfs\sigma \Phi A$ \hfill by  well-typed whnf (Def \ref{def:typedwhnf})}
\\[0.25em]
\prf{\emph{Case} $M = \unbox{t_1}{\sigma_1}$ and $M \lfwhnf x$}
\prf{$\Gamma ; \Phi \vdash x : A$ and $x:A \in \Phi$ \hfill since $\Gamma ; \Phi \vdash M \lfwhnf x : A$}
\prf{$\Gamma ; \Phi \vdash M : A$ \hfill since $\Gamma ; \Phi \vdash M \lfwhnf x : A$}
\prf{$\Gamma ; \Psi \vdash \lfs\sigma\Phi M : \lfs\sigma\Phi A$ \hfill LF subst. lemma \ref{lm:lfsubst}}
\prf{$t_1 \whnf \cbox{\hatctx\Phi' \vdash M'}$ \hfill since $\Gamma ; \Phi \vdash M \lfwhnf x : \tm$}
\prf{$\sigma_1 \lfwhnf \sigma_2$ and $\lfs{\sigma_2}{\Phi'}M' \lfwhnf x$ \hfill since $\Gamma ; \Phi \vdash M \lfwhnf x : \tm$}\\[-0.85em]
\prf{$\lfs\sigma\Phi \sigma_1 \lfwhnf \lfs\sigma\Phi \sigma_2$ \hfill by IH}\\[-0.85em]
\prf{$\lfs\sigma\Phi (\lfs{\sigma_2}{\Phi'}M' ) \lfwhnf N$ \hfill by IH}\\[-0.85em]
\prf{$\lfs{\lfs\sigma \Phi \sigma_2}{\Phi'} M'\lfwhnf N$ \hfill by subst. prop.}\\[-0.85em]
\prf{$\unbox{t_1}{\lfs\sigma \Phi \sigma_1} \lfwhnf N$ \hfill by whnf rules}\\[-0.85em]
\prf{$\Gamma ; \Psi \vdash \unbox{t_1}{\lfs\sigma \Phi \sigma_1} \lfwhnf N : \tm$ \hfill by  well-typed whnf (Def \ref{def:typedwhnf})}
}
\LONGVERSIONCHECKED{
\\[1em]
For (6): \fbox{If $\Gamma ; \Phi \vdash \sigma_1 \lfwhnf \sigma_2 : \Phi'$
       then $\Gamma ; \Psi \vdash \lfs\sigma\Phi\sigma_1 \lfwhnf \lfs\sigma\Phi\sigma_2 : \Phi'$.
}
\\[1em]
\pcase{$\norm \sigma_1$ and $\sigma_2 = \sigma_1$}
\prf{$\norm \lfs\sigma\Phi \sigma_1$ \hfill by def. of $\norm$}
\prf{$\lfs\sigma\Phi \sigma_1 \lfwhnf \lfs\sigma\Phi \sigma_1$ \hfill by whnf}
\prf{$\Gamma ; \Phi \vdash \sigma_1 : \Phi'$ and $\Gamma ; \Phi \vdash \sigma_2 : \Phi'$ \hfill by assumption}
\prf{$\Gamma ; \Psi \vdash \lfs\sigma\Phi\sigma_1 : \Phi$ and $\Gamma ; \Psi \vdash \lfs\sigma\Phi\sigma_2 : \Phi$     \hfill by LF subst. lemma}
\prf{$\Gamma ; \Psi \vdash \lfs\sigma\Phi\sigma_1 \lfwhnf \lfs\sigma\Phi\sigma_1 : \Phi'$  \hfill by well-typed whnf (Def \ref{def:typedwhnf})}
\\
\pcase{$\sigma_1 = \wk\cdot$}
\prf{$\Gamma ; \Phi \vdash \wk\cdot : \cdot$ \hfill by assumption}
\prf{$\Gamma \vdash \Phi : \ctx$ and $\Gamma \vdash \Psi : \ctx$ \hfill by well-formedness of LF context (Lemma \ref{lm:lfctxwf})}
\prf{$\Gamma ; \Phi \vdash \cdot : \cdot$ \hfill by typing rule}
\prf{$\cdot = \lfs\sigma\Phi{\wk\cdot} = \trunc_\cdot (\sigma / \hatctx\Phi)$ \hfill by subst. def.}
\prf{$\Gamma ; \Psi \vdash \cdot : \cdot$ \hfill by typing rule}
\prf{$\norm \cdot$ \hfill by whnf}
\prf{$\cdot \lfwhnf \cdot$ \hfill by whnf}
\prf{$\lfs\sigma\Phi{\wk\cdot} \lfwhnf \lfs\sigma\Phi\cdot$ \hfill since $\cdot = \lfs\sigma\Phi{\wk\cdot}$ and $\lfs\sigma\Phi\cdot = \cdot$}
\prf{$\Gamma ; \Psi \vdash \lfs\sigma\Phi \wk\cdot \lfwhnf \lfs\sigma\Phi\cdot : \cdot$ \hfill  by well-typed whnf (Def \ref{def:typedwhnf})}
\\[1em]
\pcase{$\sigma_1 = \wk{\hatctx\Phi',x}$ where $\Phi = \Phi', x{:}A, \wvec{x{:}A}$ }
\prf{$\sigma_2 = \wk{\hatctx\Phi'}, x$ \hfill by assumption}
\prf{$\Gamma ; \Phi \vdash \wk{\hatctx\Phi',x} : \Phi', x{:}A$ and $\Gamma ; \Phi \vdash \wk{\hatctx\Phi'} : \Phi'$ \hfill by assumption and typing}
\prf{$\Gamma ; \Psi \vdash \sigma : \Phi', x{:}A, \wvec{x{:}A}$ \hfill by assumption\\}
\prf{\emph{Sub-case} : $\sigma = (\sigma', M, \vec{M})$}
\prf{$\Gamma ; \Psi \vdash (\sigma', M, \vec{M}) :  \Phi', x{:}A, \wvec{x{:}A}$ \hfill where $\sigma = (\sigma', M, \vec{M})$}
\prf{$\Gamma ; \Psi \vdash \sigma' : \Phi'$ and $\Gamma ; \Psi \vdash M : \lfs{\sigma'}{\Phi'}A$ \hfill by inversion}
\prf{$ \lfs\sigma\Phi (\wk{\hatctx\Phi',x}) = \trunc_{\hatctx\Phi',x} (\sigma / \Phi) = \sigma', M$ \hfill by definition}
\prf{$\lfs\sigma\Phi (\wk{\hatctx\Phi'}) = \trunc_{\hatctx\Phi'}(\sigma / \Phi) = \sigma'$ \hfill by definition}
\prf{$\Gamma ; \Psi \vdash \sigma', M \lfwhnf \sigma', M : \Phi', x{:}A$ \hfill since $\neut (\sigma', M)$}
\prf{$\Gamma ; \Psi \vdash \lfs\sigma\Phi (\wk{\hatctx\Phi',x}) \lfwhnf \lfs\sigma\Phi (\wk{\hatctx{\Phi'}}, x) : \Phi', x{:}A$ \hfill  by well-typed whnf (Def \ref{def:typedwhnf})\\}
\prf{\emph{Sub-case} : $\sigma = \wk{\hatctx\Phi', x, \vec x}$}
\prf{$ \lfs\sigma\Phi (\wk{\hatctx\Phi',x}) = \trunc_{\hatctx\Phi',x} (\sigma / \Phi) = \trunc_{\hatctx\Phi', x}(\wk{\hatctx\Phi', x, \vec x} / \hatctx\Phi',x,\vec x) = \wk{\hatctx\Phi',x}$ \hfill by definition}
\prf{$\lfs\sigma\Phi (\wk{\hatctx\Phi'},x) = \trunc_{\hatctx\Phi'} (\sigma / \Phi) = \wk{\hatctx\Phi'}$ \hfill by definition}
\prf{$\lfs\sigma\Phi (x) = x$ \hfill since $\sigma = \wk{\hatctx\Phi', x, \vec x}$}
\prf{$\Gamma ; \Phi', x{:}A, \wvec{x:A} \vdash \wk{\hatctx\Phi',x} \lfwhnf \wk{\hatctx\Phi'}, x : \Phi', x{:}A$ \hfill by $\lfwhnf$ rule and  by well-typed whnf (Def \ref{def:typedwhnf})}
\SUBSTCLO{\\[1em]
\prf{\emph{Sub-case} : $\sigma = \sclo {\hatctx\Phi',x,\vec x} \unbox{t}{\sigma'}$  }
\prf{$\Gamma ; \Psi \vdash \sclo {\hatctx\Phi',x,\vec x} \unbox{t}{\sigma'} : \Phi', x{:}A, \wvec{x{:}A}$ \hfill by assumption}
\prf{$ \lfs\sigma\Phi \wk{\hatctx\Phi',x} = \trunc_{\hatctx\Phi',x} (\sigma / \Phi) = \sclo {\hatctx\Phi',x}{\unbox{t}{\sigma'}}$ \hfill by definition}
\prf{$\lfs\sigma\Phi \wk{\hatctx\Phi'} = \trunc_{\hatctx\Phi'} (\sigma / \Phi) = \sclo {\hatctx\Phi'}{\unbox{t}{\sigma'}}$ \hfill by definition}
\prf{$\lfs\sigma\Phi (x) = \unbox{\cbox{\hatctx\Phi \vdash x}}{\sigma}$ \hfill by def.}
\prf{$\Gamma ; \Phi', x{:}A, \wvec{x{:}A} \vdash \sclo {\hatctx\Phi',x}{\unbox{t}{\sigma'}} \lfwhnf \sclo {\hatctx\Phi'}{\unbox{t}{\sigma'}},~\unbox{\cbox{\hatctx\Phi \vdash x}}{\sigma} : \Phi', x{:}A$ \hfill by rules $\lfwhnf$}
\prf{$\Gamma ; \Phi', x{:}A, \wvec{x:A} \vdash \lfs\sigma\Phi (\wk{\hatctx\Phi',x}) \lfwhnf \lfs\sigma\Phi (\wk{\hatctx{\Phi'}}, x) : \Phi', x{:}A$ \hfill by well-typed whnf (Def \ref{def:typedwhnf}) }
}
\SUBSTCLO{
\\[1em]
\pcase{$\sigma_1 = (\sclo {\psi} {\unbox{t}{\xi}})$ and $t \whnf \cbox{\psi,\vec{w} \vdash \xi_1}$ and\\
$\lfs {\xi}{\psi, \vec{w}} \xi_1 \lfwhnf \xi'$ and
$\lfs{\xi'}{\psi,\vec{w}}(\wk\psi) = \trunc_\psi ~ \xi' = \sigma_2$}
\prf{$\Gamma ; \lfs\sigma\Phi \lfs {\xi}{\psi, \vec{w}} \xi_1 \lfwhnf \lfs\sigma\Phi\xi'$ \hfill by IH}
\prf{$\Gamma ; \Phi', x{:}A \wvec{x:A} \vdash \lfs{\lfs\sigma\Phi \xi}{\psi,\vec{w}} \xi_1 \lfwhnf \lfs\sigma\Phi\xi'$ \hfill by subst. prop.}
\prf{$\Gamma ; \Phi', x{:}A, \wvec{x:A} \vdash (\sclo {\psi} {\unbox{t}{\lfs\sigma\Phi \xi}})\lfwhnf \lfs{\lfs\sigma\Phi\xi'}{\psi,\vec{w}}(\wk\psi) $ \hfill by $\lfwhnf$ and \\
\mbox{$\quad$}\hfill well-typed whnf (Def \ref{def:typedwhnf})}
\prf{$\Gamma ; \Phi', x{:}A, \wvec{x:A} \vdash \lfs\sigma\Phi (\sclo {\psi} {\unbox{t}{\xi}}) \lfwhnf \lfs\sigma\Phi (\sigma_2) : \Phi', x{:}A$ \hfill by subst. prop.
}
\\[1em]
\pcase{$\sigma_1 = \sclo {\phi} {\unbox t \xi} $ and $t \whnf t'$ and $\neut t'$ and $\sigma_2 = \sclo{\phi}\unbox{t'}{\xi}$}
\prf{$\Gamma ; \Psi \vdash \sclo {\phi}{\unbox t {\lfs\sigma\Phi \xi}} \lfwhnf \sclo{\phi}{\unbox{t'}{\lfs\sigma\Phi \xi}}$\hfill since $\neut t'$\\
\mbox{$\quad$}\hfill and  by well-typed whnf (Def \ref{def:typedwhnf})}
\prf{$\Gamma ; \Psi \vdash \lfs \sigma\Phi ( \sclo {\phi} {\unbox{t}{\xi} }) \lfwhnf \lfs\sigma\Phi (\sclo{\phi}{\unbox{t'}{\xi}})$ \hfill by subst. prop.}
}
\\[1em]
}
\LONGVERSIONCHECKED{
\fbox{For (4): If $\Gamma ; \Phi \vdash M \lfwhnf \tapp~M_1~M_2 : \tm$
       then $\Gamma ; \Psi \vdash \lfs\sigma\Phi M \lfwhnf \lfs\sigma\Phi (\tapp~M_1~M_2) : \tm$.}
\\[1em]
\pcase{$\ianc
{\norm (\tapp~M_1~M_2)}
{\tapp~M_1~M_2 \lfwhnf \tapp~M_1~M_2}{}
$
}
\prf{$\Gamma ; \Phi \vdash \tapp~M_1~M_2 : \tm$ \hfill by assumption}
\prf{$\Gamma ; \Psi \vdash \lfs\sigma\Phi (\tapp~M_1~M_2) : \tm$ \hfill by LF subst. lemma}
\prf{$\Gamma ; \Psi \vdash \tapp~\lfs\sigma\Phi(M_1)~\lfs\sigma\Phi(M_2) : \tm$ \hfill subst. prop}
\prf{$\norm (\tapp~\lfs\sigma\Phi(M_1)~\lfs\sigma\Phi(M_2))$ \hfill by whnf def.}
\prf{$\Gamma ; \Psi \vdash \lfs\sigma\Phi (\tapp~M_1~M_2) \lfwhnf \lfs\sigma\Phi (\tapp~M_1~M_2) : \tm$ \hfill by subst. prop., $\lfwhnf$ rule, and Def. \ref{def:typedwhnf}}
\\[1em]
\pcase{$\ianc
{M \lfwhnf \lambda x.M' \quad [N/x]M' \lfwhnf \tapp~M_1~M_2 }
{M\;N \lfwhnf \tapp~M_1~M_2}{}
$
}
\prf{$\Gamma ; \Phi \vdash \tapp~M_1~M_2 : \tm$ \hfill by assumption}
\prf{$\Gamma ; \Psi \vdash  \lfs\sigma\Phi (\tapp~M_1~M_2) : \tm$ \hfill by LF subst. lemma}
\prf{$\Gamma ; \Phi \vdash M~N : \tm$ \hfill by assumption}
\prf{$\Gamma ; \Psi \vdash \lfs\sigma\Phi (M~N) : \tm$ \hfill by LF subst. lemma}
\prf{$\Gamma ; \Psi \vdash \lfs\sigma\Phi M \lfwhnf \lfs\sigma\Phi (\lambda x.M') : \tm$ \hfill by IH}
\prf{$\Gamma ; \Psi \vdash \lfs\sigma\Phi([N/x]M') \lfwhnf \lfs\sigma\Phi (\tapp~M_1~M_2) : \tm$ \hfill by IH}
\prf{$\Gamma ; \Psi \vdash \lfs\sigma\Phi (M~N) \lfwhnf \lfs\sigma\Phi (\tapp~M_1~M_2) : \tm$ \hfill
with $\lfs\sigma\Phi([N/x]M') = \lfs{\sigma, \lfs\sigma\Phi N}{\Phi, x{:}\tm}$ }
\\[1em]
\pcase{$\ianc
      {t \whnf \cbox{\hatctx{\Phi'} \vdash M}\quad
       \lfs {\sigma'}{\Phi'} M \lfwhnf \tapp~M_1~M_2 }
{\unbox t{\sigma'} \lfwhnf \tapp~M_1~M_2 }{}
$
}
\prf{$\Gamma ; \Psi \vdash \lfs{\sigma}{\Phi}\lfs {\sigma'}{\Phi'} M \lfwhnf  \lfs\sigma\Phi (\tapp~M_1~M_2) : \tm$ \hfill by IH}
\prf{$\Gamma ; \Psi \vdash \unbox t{\lfs{\sigma}{\Phi}\sigma'} \lfwhnf \lfs{\sigma}{\Phi}(\tapp~M_1~M_2) : \tm$ \hfill by  subst. prop., $\lfwhnf$ rule, and Def. \ref{def:typedwhnf}}

The remaining cases for (2) and (5) are similar.
}
 \end{proof}
}





\section{Kripke-style Logical Relation}\label{sec:logrel}
We construct a Kripke-logical relation that is defined on well-typed terms to prove weak head
normalization. 
Our semantic definitions for computations follow closely
\citet{Abel:LMCS12} to accommodate type-level computation.

\begin{figure}[h]
  \centering\small
  \[
    \begin{array}{c}
\infer{\Gamma ; \Psi \Vdash M = N : \tm}
{
\begin{array}{@{}lll@{}}
\Gamma ; \Psi \vdash M \lfwhnf \unbox {t_1} {\sigma_1} : \tm   & \typeof (\Gamma  \vdash t_1) = \cbox{\Phi_1 \vdash \tm} &
\\
\Gamma ; \Psi \vdash N \lfwhnf \unbox {t_2} {\sigma_2} : \tm & \typeof (\Gamma  \vdash t_2) = \cbox{\Phi_2 \vdash \tm} & \\
\multicolumn{3}{@{}c@{}}{
\Gamma  \vdash t_1 \equiv t_2 : \cbox{\Phi_1 \vdash \tm} \quad \Gamma ; \Psi \Vdash \sigma_1 = \sigma_2 : \Phi_1 \quad \Gamma \vdash \Phi_1 \equiv \Phi_2 : \ctx }
\end{array}
}
\\[0.75em]
\infer {\Gamma ; \Psi \Vdash M = N : \tm}
{
      \begin{array}{@{}ll@{}}
\Gamma ; \Psi \vdash M \lfwhnf \tlam~M' : \tm & \\
\Gamma ; \Psi \vdash N \lfwhnf \tlam~N' : \tm & \Gamma ; \Psi, x{:}\tm \Vdash M'~x = N'~x: \tm
      \end{array}
}
\\[0.75em]
\infer {\Gamma ; \Psi \Vdash M = N : \tm}
{
      \begin{array}{@{}ll@{}}
\Gamma ; \Psi \vdash M \lfwhnf \tapp~{M_1}~{M_2} : \tm & \Gamma ; \Psi \Vdash M_1 = N_1 : \tm \\
\Gamma ; \Psi \vdash N \lfwhnf \tapp~{N_1}~{N_2} : \tm & \Gamma ; \Psi \Vdash M_2 = N_2 : \tm
      \end{array}
}
\\[0.75em]
\infer {\Gamma ; \Psi \Vdash M = N : \tm}
{ \Gamma ; \Psi \vdash M \lfwhnf x : \tm  &
  \Gamma ; \Psi \vdash N \lfwhnf x : \tm }
  \end{array}
\]
\hrulefill
  \caption{Semantic Equality for LF Terms: \fbox{$\Gamma ; \Psi \Vdash M = N : A$}}
  \label{fig:LFsem}
\end{figure}

We start by defining semantic equality for LF terms of type $\tm$ (Fig.~\ref{fig:LFsem}), as these are the terms the recursor eliminates and it illustrates the fact that we are working with syntax trees.
To define semantic equality for LF terms $M$ and $N$, we consider different cases depending on their whnf: 1) if they reduce to $\tapp~M_1~M_2$ and $\tapp~N_1~N_2$ respectively, then $M_i$ must be semantically equal to $N_i$; 2) if they reduce to $\tlam~M'$ and $\tlam~N'$ respectively, then the bodies of $M'$ and $N'$ must be equal. To compare their bodies, we apply both $M'$ and $N'$ to an LF variable $x$ and consider $M'~x$ and $N'~x$ in the extended LF context $\Psi, x{:}\tm$. This has the effect of opening up the body and replacing the bound LF variable with a fresh one. This highlights the difference between the intensional LF function space and the extensional nature of the computation-level functions. In the former, we can concentrate on LF variables and continue to analyze the LF function body; in the latter, we consider all possible inputs, not just variables; 3) if the LF terms $M$ and $N$ may reduce to the same LF variable in $\Psi$, then they are obviously also semantically equal; 4) last, if $M$ and $N$ reduce to $\unbox {t_i} {\sigma_i}$ respectively. In this case $t_i$ is neutral and we only need to semantically compare the LF substitutions $\sigma_i$ and check whether the terms $t_i$ are definitional equal. However, what type should we choose? -- As the computation $t_i$ is neutral, we can infer a unique type $\cbox{\Phi \vdash \tm}$ which we can use. 
 This is defined as follows:

\begin{small}
\[
  \begin{array}{c}
 \multicolumn{1}{l}{\mbox{Type inference for neutral computations $t$}: \typeof (\Gamma \vdash t) = \tau}\\[1em]
 \infer{\typeof (\Gamma \vdash t~s) = \{s/y\}\tau_2}
 {\typeof (\Gamma \vdash t ) = \tau & \tau \whnf (y{:}\tau_1) \arrow \tau_2 &
  \Gamma \vdash s : \tau_1}
\\[1em]
 \infer{\typeof (\Gamma \vdash x) = \tau}{x{:}\tau \in \Gamma}
\quad
\infer{\typeof (\Gamma \vdash \titer{\R}{}{\IH}~\Psi~t) = \{\Psi/\psi, t/y\}\tau}
{\IH = (\psi : \tmctx) \arrow (y : \cbox{\psi \vdash \tm}) \arrow \tau}
  \end{array}
\]
\end{small}

\LONGVERSION{
  \begin{lemma}\label{lm:typeof}
If $\Gamma \vdash t : \tau$ and $\neut t$ then $\typeof (\Gamma \vdash t) = \tau'$ and $\Gamma \vdash \tau \equiv \tau' : u$.
  \end{lemma}
  \begin{proof}
By induction on $\neut t$.
  \end{proof}
}

\LONGVERSION{
\begin{lemma}[Weakening of Type Inference for Neutral Computations]\label{lem:wktypinf}
If $\typeof (\Gamma \vdash t) = \tau$ and $\Gamma' \leq_\rho \Gamma$ then $\typeof (\Gamma' \vdash \{\rho\}t) = \tau'$ s.t. $\tau' = \{\rho\}\tau$.
\end{lemma}
\begin{proof}
By induction on    $\typeof (\Gamma \vdash t) = \tau$ using Lemma \ref{lem:weakwhnf} (\ref{it:sweakcomp}).
\end{proof}
}

Semantic equality for LF substitutions is also defined by considering different whnfs (Fig. \ref{fig:LFsemctx}). As we only work with well-typed LF objects, there is only one inhabitant for an empty context. Moreover, given an LF substitution with domain $\Phi, x{:}A$, we can weak head reduce the LF substitutions $\sigma$ and $\sigma'$ and continue to recursively compare them. An LF substitution with domain $\psi$, a context variable, reduces to  $\wk\psi$.

\begin{figure}[h]
\small
  \centering
  \[
    \begin{array}{c}
\SUBSTCLO{
\infer{\Gamma ; \Psi \Vdash \sigma = \sigma' : \phi}
{
      \begin{array}{@{}lll@{}}
\Gamma ; \Psi \vdash \sigma \lfwhnf (\sclo \phi {\unbox{t_1}{\sigma_1}})  : \phi
   & \typeof (\Gamma \vdash t_1) = \cbox{\Phi_1 \vdash \Phi'_1}  & \Gamma \vdash \Phi'_1 \equiv (\phi, \wvec{x{:}A}) : \ctx
\\
\Gamma ; \Psi \vdash \sigma' \lfwhnf (\sclo \phi {\unbox{t_2}{\sigma_2}}) : \phi
    & \typeof (\Gamma \vdash t_2) = \cbox{\Phi_2 \vdash \Phi'_2} & \Gamma \vdash \Phi'_2 \equiv (\phi, \wvec{x{:}A}) : \ctx
\\
\Gamma \vdash t_1 \equiv t_2 : \cbox{\Phi_1 \vdash \Phi'_1}
    & \Gamma ; \Psi \Vdash \sigma_1 = \sigma_2 : \Phi_1
    & \Gamma \vdash \Phi_1 \equiv \Phi_2 : \ctx
\end{array}
}
\\[0.5em]
}
\raisebox{1ex}{
\ianc{
\Gamma ; \Psi \vdash \sigma \lfwhnf \cdot : \cdot \quad
\Gamma ; \Psi \vdash \sigma' \lfwhnf \cdot : \cdot }
{\Gamma ; \Psi \Vdash \sigma = \sigma' : \cdot}{}}
 \\[0.5em]
 \ianc{
\Gamma ; \psi, \wvec{x{:}A} \vdash \sigma \lfwhnf \wk{\psi} : \psi \quad
\Gamma ; \psi, \wvec{x{:}A} \vdash \sigma' \lfwhnf \wk{\psi} : \psi
}{\Gamma ; \psi, \wvec{x{:}A} \Vdash \sigma = \sigma' : \psi}{}
\\[1em]
 \infer{\Gamma ; \Psi \Vdash \sigma = \sigma' : \Phi, x{:}A}
  {\begin{array}{@{}ll@{}}
\Gamma ; \Psi \vdash \sigma \lfwhnf \sigma_1, M : \Phi, x{:}A  & \Gamma ; \Psi \Vdash \sigma_1 = \sigma_2 : \Phi \\
\Gamma ; \Psi \vdash \sigma' \lfwhnf \sigma_2, N : \Phi, x{:}A & \Gamma ; \Psi  \Vdash M = N : \lfs{\sigma_1}{\Phi} A
\end{array}
}
  \end{array}
\]
  \caption{Semantic Equality for LF Substitutions: \fbox{$\Gamma ; \Psi \Vdash \sigma = \sigma' : \Phi$}}
  \label{fig:LFsemctx}
\end{figure}

\LONGVERSION{
We can then lift the semantic equality definition to contextual terms: \fbox{$\Gamma \semlf C = C' : T$}.

\[
\infer{\Gamma \semlf (\hatctx{\Psi} \vdash M) = (\hatctx{\Psi} \vdash N): (\Psi \vdash A)}
{\Gamma ; \Psi \Vdash M = N : A}
\]
}

\LONGVERSION{
\begin{figure}
\hrulefill
  \centering\small
  \[
    \begin{array}{c}
\infer{\Gamma \semlf (\hatctx{\Psi} \vdash M) = (\hatctx{\Psi} \vdash N): (\Psi \vdash A)}
{\Gamma ; \Psi \Vdash M = N : A}
\LONGVERSION{\quad
\infer{\Gamma \semlf (\hatctx{\Psi} \vdash M) = (\hatctx{\Psi} \vdash N) : (\Psi \vdash_\# A)}
{ \Gamma ; \Psi \Vdash_\# M = N : A}}
\SUBSTCLO{\\[1em]
\infer{\Gamma \semlf (\hatctx{\Psi} \vdash \sigma) = (\hatctx{\Psi} \vdash \sigma'): (\Psi \vdash \Phi)}
{\Gamma ; \Psi \Vdash \sigma = \sigma' : \Phi}
\qquad
\infer{\Gamma \semlf (\hatctx{\Psi} \vdash \sigma) = (\hatctx{\Psi} \vdash \sigma) : (\Psi \vdash_\# \Phi)}
{ \Gamma ; \Psi \Vdash_\# \sigma = \sigma' : \Phi}}
  \end{array}
\]
\hrulefill
  \caption{Semantic Typing for Contextual LF Terms}
  \label{fig:LFsem2}
\end{figure}
}

Defining semantic kinding and semantic equality is intricate, as they
depend on each other and we need to ensure our definitions are
well-founded. Following \citet{Abel:POPL18}, we first define semantic
kinding, i.e. $\Gamma \Vdash \ann\tau : u$ (Fig. \ref{fig:semkind})
which technically falls into two parts: $\Gamma \Vdash \tau : u$ and
$\Gamma \Vdash \tmctx : u$ where the latter is simply notation, as
$\tmctx$ is not a computation-level type. Function types
$(y:\ann\tau_1) \arrow \tau_2$ are semantically well-kinded if
$\ann\tau_1$ is semantically well-kinded in any extension
$\Gamma'$ of $\Gamma$ and $\{s/y\}\tau_2$ is well-kinded
for any term $s$ that has semantic type $\ann\tau_1$. In our definition, we
make the renaming $\rho$ that allows us to move from
$\Gamma$ to $\Gamma'$ explicit. The definition of semantic kinding is
inductively defined on $\ann\tau$.


\begin{figure}[h]
\vspace{1ex}\small
\[
  \begin{array}{c}
\infer{\Gamma \Vdash \tau : u}
  { \Gamma \der \tau \whnf \cbox{T} : u
  & \Gamma \vdash T \equiv T }
\qquad
\infer{\Gamma \Vdash \tau : u}
  {\Gamma \der \tau \whnf u' : u & u' < u}
\\[0.5em]
\infer{\Gamma \Vdash \tau : u}
{\Gamma \vdash \tau \whnf x~\vec{t} : u \qquad \neut (x~\vec{t})}
\qquad
\infer{\Gamma \Vdash \tmctx: u}
  {\vdash \Gamma}
\\[0.5em]
\infer{\Gamma \Vdash \tau : u}
  {
    \begin{array}{@{}c@{}}
     \Gamma \der \tau \whnf (y:\ann\tau_1) \arrow \tau_2 : u
     \hfill~\quad
     \forall \Gamma' \leq_\rho \Gamma.\ \Gamma' \Vdash \{\rho\}\ann\tau_1 : u_1
   \\
   \forall \Gamma' \leq_\rho \Gamma.\ \Gamma' \Vdash s = s~ : \{\rho\}\ann\tau_1
    \Longrightarrow \Gamma' \Vdash \{\rho, s/y\} \tau_2 : u
    \end{array}
}\\
\multicolumn{1}{c}{\mbox{where}~(u_1,~u_2,~u) \in \Ru}
  \end{array}
\]
\caption{Semantic Kinding for Types \fbox{$\Gamma \sem \ann\tau : u$} (inductive)}
\label{fig:semkind}
\end{figure}

Semantic kinding (Fig.~\ref{fig:semkind}) is used as a measure to define the semantic typing for computations. In particular, we define $\Gamma \Vdash \ann\tau = \ann\tau' : u$ and $\Gamma \Vdash t = t' : \ann\tau$ recursively on the semantic kinding of $\ann\tau$. i.e. $\Gamma \Vdash \ann\tau : u$.  For better readability, we simply write for example $\Gamma \Vdash t = t': \cbox{T}$ instead of
$\Gamma \Vdash t = t': \tau$ where $\tau \whnf \cbox{T}$, and $\Gamma
\vdash T \equiv T$ in proofs. The extensional character of function
types is apparent in the semantic equality for terms at function type.
Semantic equality at type $\cbox{\Psi \vdash A}$ falls back to
semantic equality on LF terms at type $A$ in the LF context $\Psi$.
\LONGVERSION{
We note that to prove reflexivity for types, we would need to strengthen our semantic kinding definition with the additional premise:
$\forall \Gamma' \leq_\rho \Gamma.\ \Gamma' \Vdash s = s' : \{\rho\}\ann\tau_1
   \Longrightarrow \Gamma' \Vdash \{\rho, s/y\} \tau_2 = \{\rho, s'/y\} \tau_2 : u_2$. This is possible, but since semantic reflexivity for types is not needed, we keep the semantic kinding definition more compact.
}

\begin{figure*}
  \centering\small
\[
  \begin{array}{c}
\multicolumn{1}{l}{\mbox{Semantic equality for types: \fbox{$\Gamma \Vdash \ann\tau = \ann\tau' :u$} defined by recursion on $\Gamma \Vdash \tau : u$}}\\[0.5em]
\multicolumn{1}{l}{\mbox{Semantic equality for terms: \fbox{$\Gamma
    \Vdash t = t' : \ann\tau$} defined by recursion on $\Gamma \Vdash \ann\tau : u$}}
\\[0.75em]
\infer{\Gamma \Vdash \tmctx = \tmctx : u}{}
\quad
\inferrule*{\Gamma \vdash \tau' \whnf u' : u}
           {\Gamma \Vdash u' = \tau' : u}
\quad
\inferrule*{\Gamma \vdash \tau' \whnf \cbox{T'} : u~ \and
            \Gamma \vdash T \equiv T'}
           {\Gamma \Vdash \cbox T = \tau' : u }
\quad
\infer{\Gamma \Vdash x~\vec{t} = \tau' : u}
{\Gamma \vdash \tau' \whnf x~\vec{s} : u~ &
\Gamma \vdash x~\vec{t} \equiv x~\vec{s} : u}
\\[0.75em]
\infer[(u_1, u_2, u) \in \Ru]
           {\Gamma \Vdash (y:\ann\tau_1) \arrow \tau_2 = \tau' : u}
{
 \begin{array}{@{}l@{}}
\Gamma \vdash \tau' \whnf (y' :\ann\tau_1' )\arrow \tau_2' : u \qquad\hfill
 \forall \Gamma' \leq_\rho \Gamma.~ \Gamma'\Vdash \{\rho\}\ann\tau_1 = \{\rho\}\ann\tau_1' : u_1\\
 \forall \Gamma'\leq_\rho \Gamma.~ \Gamma' \Vdash s = s' : \{\rho\}\ann\tau_1 \Longrightarrow \Gamma' \Vdash \{\rho, s/y\} \tau_2 = \{\rho,s'/y'\}\tau_2' : u_2
    \end{array}
}
\\[0.75em]
\infer{\Gamma \Vdash \Psi = \Psi' : \tmctx }
      {\Gamma \vdash \Psi \equiv \Psi' : \tmctx  }
\qquad
\infer{\Gamma \Vdash t = t' : \cbox {\Psi \vdash A} }
      {\Gamma \vdash t \whnf w : \cbox{\Psi \vdash A} &
\Gamma \vdash t' \whnf w' : \cbox{\Psi \vdash A} &
\Gamma ; \Psi \Vdash \unbox{w}{\id} = \unbox{w'}{\id} : A}
\\[0.75em]
\infer{\Gamma \Vdash t = t' : x~\vec{s}}
{\Gamma \vdash t \whnf n :  x~ \vec{s} &
 \Gamma \vdash t' \whnf n' :  x~\vec{s} & ~\neut n,n'~&
\Gamma \vdash n \equiv n' : x~\vec{s}
}
\\[0.75em]
\infer{\Gamma \Vdash t = t' : (y:\ann\tau_1) \arrow \tau_2 }
{\Gamma \vdash t \whnf w : (y:\ann\tau_1) \arrow \tau_2 &
\Gamma \vdash t'\whnf w': (y:\ann\tau_1) \arrow \tau_2 &
\forall \Gamma' \leq_\rho \Gamma.
~\Gamma' \Vdash s = s' : \{\rho\}\ann\tau_1 \Longrightarrow \Gamma' \Vdash \{\rho\}w~s = \{\rho\}w'~s' : \{\rho, s/y\}\tau_2
}
\end{array}
\]
  \caption{Semantic Equality for Computations}
  \label{fig:semcomp}
\end{figure*}

\section{Semantic Properties }\label{sec:semprop}
\subsection{Semantic Properties of LF}

If an LF term is semantically well-typed, then it is also syntactically well-typed. Furthermore, our definition of semantic LF typing is stable under renaming and semantic LF equality is preserved under LF substitution and allows for context conversion.




\LONGVERSION{
 \begin{lemma}[Well-Formedness of Semantic LF Typing] \quad \label{lm:semlfwf}
   \begin{enumerate}
   \item If $\Gamma ; \Psi \Vdash M = N : A$ then
         $\Gamma ; \Psi \vdash M : A$ and $\Gamma ; \Psi \vdash N : A$
         and $\Gamma \vdash M \equiv N : A$.
   \item If $\Gamma ; \Psi \Vdash \sigma_1 = \sigma_2 : \Phi$ then
         $\Gamma ; \Psi \vdash \sigma_1 : \Phi$ and $\Gamma ; \Psi \vdash \sigma_2 : \Phi$
         and $\Gamma ; \Psi \vdash \sigma_1 \equiv \sigma_2 : \Phi$.
   \end{enumerate}
 \end{lemma}
 \begin{proof}
By induction on the semantic definition. In each case, we refer the Def.~\ref{def:typedwhnf}. To illustrate, consider the case where
$\Gamma ; \Psi \vdash M \lfwhnf \lambda x.M' : \Pi x{:}A.B$ and
$\Gamma ; \Psi \vdash N \lfwhnf \lambda x.N' : \Pi x{:}A.B$, we also know that
$\Gamma ; \Psi \vdash M \equiv \lambda x.M' : \Pi x{:}A.B$ and
$\Gamma ; \Psi \vdash N \equiv \lambda x.N' : \Pi x{:}A.B$ by Def.~\ref{def:typedwhnf}.

Further, we have that $\Gamma ; \Psi, x{:}A \Vdash M' = N' : B$. By
IH, we get that $\Gamma ; \Psi, x{:}A \vdash M' \equiv N' : B$ By
dec. equivalence rules, we have $\Gamma ; \Psi \vdash \lambda x.M'
\equiv \lambda x.N' : \Pi x{:}A.B$. Therefore, by symmetry and
transitivity of $\equiv$, we have $\Gamma ; \Psi \vdash M \equiv N :
\Pi x{:}A.B$. The typing invariants are left implicit.
\LONGVERSIONCHECKED{
\\
 We show the expanded proofs below concentrating on showing
  $\equiv$ and leaving the tracking of typing invariants implicit.
  \\[1em]
\SUBSTCLO{
  \pcase {$\infer{\Gamma ; \Psi \Vdash \sigma = \sigma' : \Phi}
    {\begin{array}{@{}lll@{}}
        \Gamma ; \Psi \vdash \sigma \lfwhnf (\sclo \phi {\unbox{t_1}{\sigma_1}})  : \phi
        & \typeof (\Gamma \vdash t_1) = \cbox{\Phi_1 \vdash \Phi'_1}
        & \Gamma \vdash \Phi'_1 \equiv (\phi, \wvec{x{:}A}) : \ctx \\
        \Gamma ; \Psi \vdash \sigma' \lfwhnf (\sclo \phi {\unbox{t_2}{\sigma_2}}) : \phi
        & \typeof (\Gamma \vdash t_2) = \cbox{\Phi_2 \vdash \Phi'_2}
        & \Gamma \vdash \Phi'_2 \equiv (\phi, \wvec{x{:}A}) : \ctx \\
        \Gamma \vdash t_1 \equiv t_2 : \cbox{\Phi_1 \vdash \Phi'_1}
        & \Gamma ; \Psi \Vdash \sigma_1 = \sigma_2 : \Phi_1
        & \Gamma \vdash \Phi_1 \equiv \Phi_2 : \ctx
      \end{array}}$}
  \prf{$\Gamma ; \Psi \vdash \sigma \equiv (\sclo \phi {\unbox{t_1}{\sigma_1'}}) : \phi$
    and $\Gamma ; \Psi \vdash \sigma' \equiv (\sclo \phi {\unbox{t_2}{\sigma_2'}}) : \phi$
    \hfill by Def.~\ref{def:typedwhnf}}
  \prf{$\Gamma ; \Psi \vdash \sigma_1' \equiv \sigma_2' : \Phi_1$
    \hfill by induction hypothesis}
  \prf{$\Gamma ; \Psi \vdash (\sclo \phi {\unbox{t_1}{\sigma_1'}})
    \equiv (\sclo \phi {\unbox{t_2}{\sigma_2'}}) : \phi$
    \hfill by dec. equivalence rules}
  \prf{$\Gamma ; \Psi \vdash \sigma \equiv \sigma': \phi$
    \hfill by symmetry, transitivity, conversion of $\equiv$}
  \\
}
  \pcase {$\infer{\Gamma ; \Psi \Vdash \sigma = \sigma' : \cdot}
    {\Gamma ; \Psi \vdash \sigma \lfwhnf \cdot : \cdot  &
      \Gamma ; \Psi \vdash \sigma' \lfwhnf \cdot : \cdot }$}
  \prf{$\Gamma ; \Psi \vdash \sigma_1 \equiv \cdot : \cdot$
    and $\Gamma ; \Psi \vdash \sigma_2 \equiv \cdot : \cdot$
    \hfill by Def.~\ref{def:typedwhnf}}
  \prf{$\Gamma ; \Psi \vdash \sigma_1 \equiv \sigma_2 : \cdot$ \hfill   by symmetry and transitivity of $\equiv$}
  \\
  \pcase {$ \infer{\Gamma ; \psi, \wvec{x{:}A} \Vdash \sigma = \sigma' : \psi}{
      \Gamma ; \psi, \wvec{x{:}A} \vdash \sigma \lfwhnf \wk{\psi} : \psi &
      \Gamma ; \psi, \wvec{x{:}A} \vdash \sigma' \lfwhnf \wk{\psi} : \psi
    }$}
  \prf{$\Gamma ; \psi, \wvec{x{:}A} \vdash \sigma_1 \equiv \wk{\psi} : \psi$ and
    $\Gamma ; \psi, \wvec{x{:}A} \vdash \sigma_2 \equiv \wk{\psi} : \psi$
    \hfill by Def.~\ref{def:typedwhnf}}
  \prf{$\Gamma ; \psi \wvec{x{:}A} \vdash \sigma_1 \equiv \sigma_2 : \psi$
    \hfill by symmetry and transitivity of $\equiv$}
  \\
  \pcase{$ \infer{\Gamma ; \Psi \Vdash \sigma_1 = \sigma_2 : \Phi, x{:}A}
    {\begin{array}{@{}ll@{}}
       \Gamma ; \Psi \vdash \sigma_1 \lfwhnf \sigma'_1, M : \Phi, x{:}A  & \\
       \Gamma ; \Psi \vdash \sigma_2 \lfwhnf \sigma'_2, N : \Phi, x{:}A &
       \Gamma ; \Psi \Vdash \sigma'_1 = \sigma'_2 : \Phi
       \qquad \Gamma ; \Psi  \Vdash M = N : \lfs{\sigma'_1}{\Phi} A
      \end{array}}$}
  \prf{$\Gamma ; \Psi \vdash \sigma_1 \equiv \sigma_1', M : \Phi, x{:}A$
    and $\Gamma ; \Psi \vdash \sigma_2 \equiv \sigma_2', N : \Phi, x{:}A$
    \hfill by Def.~\ref{def:typedwhnf}}
  \prf{$\Gamma ; \Psi \vdash \sigma_1' \equiv \sigma_2' : \Phi$
    and $\Gamma ; \Psi  \vdash M \equiv N : \lfs{\sigma_1'}{\Phi} A$
    \hfill by induction hypothesis}
  \prf{$\Gamma ; \Psi \vdash \sigma_1', M \equiv \sigma_2', N : \Phi, x{:}A$
    \hfill by dec. equivalence rules}
  \prf{$\Gamma ; \Psi \vdash \sigma_1 \equiv \sigma_2 : \Phi, x{:}A$
    \hfill by symmetry and transitivity of $\equiv$}
  \\
  \pcase{$\infer{\Gamma ; \Psi \Vdash M = N : \tm}
    {\begin{array}{@{}lll@{}}
        \Gamma ; \Psi \vdash M \lfwhnf \unbox {t_1} {\sigma_1} : \tm
        & \typeof (\Gamma  \vdash t_1) = \cbox{\Phi_1 \vdash \tm} &
        \Gamma \vdash \Phi_1 \equiv \Phi_2 : \ctx \\
        \Gamma ; \Psi \vdash N \lfwhnf \unbox {t_2} {\sigma_2} : \tm
        & \typeof (\Gamma  \vdash t_2) = \cbox{\Phi_2 \vdash \tm} &
        \Gamma  \vdash t_1 \equiv t_2 : \cbox{\Phi_1 \vdash \tm}
        \quad \Gamma ; \Psi \Vdash \sigma_1 = \sigma_2 : \Phi_1
      \end{array}}$}
  \prf{$\Gamma ; \Psi \vdash M \equiv \unbox {t_1} {\sigma_1} : \tm $
    and $\Gamma ; \Psi \vdash N \equiv \unbox {t_2} {\sigma_2} : \tm$
    \hfill by Def.~\ref{def:typedwhnf}}
  \prf{$\Gamma ; \Psi \vdash \sigma_1 \equiv \sigma_2 : \Phi_1$ \hfill by induction hypothesis}
  \prf{$\Gamma ; \Psi \vdash \unbox {t_1} {\sigma_1} \equiv \unbox {t_2} {\sigma_2} : \tm$
    \hfill by dec. equivalence rules}
  \prf{$\Gamma ; \Psi \vdash M \equiv N : \tm$ \hfill by symmetry and transitivity  of $\equiv$}
  \\
  \pcase{$\infer {\Gamma ; \Psi \Vdash M = N : \tm}
    {\begin{array}{@{}ll@{}}
        \Gamma ; \Psi \vdash M \lfwhnf \tlam~M' : \tm & \\
        \Gamma ; \Psi \vdash N \lfwhnf \tlam~N' : \tm
        & \Gamma ; \Psi \Vdash M' = N': \Pi x{:}\tm.\tm
      \end{array}}$}
  \prf{$\Gamma ; \Psi \vdash M \equiv \tlam~M' : \tm$
    and $\Gamma ; \Psi \vdash N \equiv \tlam~N' : \tm$
    \hfill by Def.~\ref{def:typedwhnf}}
  \prf{$\Gamma ; \Psi \vdash M' \equiv N': \Pi x{:}\tm.\tm$
    \hfill by induction hypothesis}
  \prf{$\Gamma ; \Psi \vdash \tlam~M' \equiv \tlam~N' : \tm$
    \hfill by dec. equivalence rules}
  \prf{$\Gamma ; \Psi \vdash M \equiv N : \tm$ \hfill by symmetry and transitivity  of $\equiv$}
  \\
  \pcase{$\infer {\Gamma ; \Psi \Vdash M = N : \tm}
    {\begin{array}{@{}ll@{}}
        \Gamma ; \Psi \vdash M \lfwhnf \tapp~{M_1}~{M_2} : \tm & \\
        \Gamma ; \Psi \vdash N \lfwhnf \tapp~{N_1}~{N_2} : \tm &
        \Gamma ; \Psi \Vdash M_1 = N_1 : \tm \quad \Gamma ; \Psi \Vdash M_2 = N_2 : \tm
      \end{array}}$}
  \prf{$\Gamma ; \Psi \vdash M \equiv \tapp~{M_1}~{M_2} : \tm$
    and $\Gamma ; \Psi \vdash N \equiv \tapp~{N_1}~{N_2} : \tm$
    \hfill by Def.~\ref{def:typedwhnf}}
  \prf{$\Gamma ; \Psi \vdash M_1 \equiv N_1 : \tm$
    and $\Gamma ; \Psi \vdash M_2 \equiv N_2 : \tm$
    \hfill by induction hypothesis}
  \prf{$\Gamma ; \Psi \vdash \tapp~{M_1}~{M_2} \equiv \tapp~{N_1}~{N_2}$
    \hfill by dec. equivalence rules}
  \prf{$\Gamma ; \Psi \vdash M \equiv N : \tm$ \hfill by symmetry and transitivity  of $\equiv$}
  \\
  \pcase{$\infer {\Gamma ; \Psi \Vdash M = N : \tm}
    { \Gamma ; \Psi \vdash M \lfwhnf x : \tm  &
      \Gamma ; \Psi \vdash N \lfwhnf x : \tm }$}
  \prf{$\Gamma ; \Psi \vdash M \equiv x : \tm$
    and $\Gamma ; \Psi \vdash N \equiv x : \tm$
    \hfill by Def.~\ref{def:typedwhnf}}
  \prf{$\Gamma ; \Psi \vdash M \equiv N : \tm$ \hfill by symmetry and transitivity  of $\equiv$}}
 \end{proof}

\begin{lemma}[Semantic Weakening for LF]\label{lem:semweak}\quad
\begin{enumerate}
  \item \label{it:sweaksemlf} If\/ $\Gamma ; \Psi \Vdash M = N: A$ and $\Gamma' \leq_\rho \Gamma$
     then $\Gamma' ; \{\rho\}\Psi \Vdash \{\rho\}M = \{\rho\}N: \{\rho\}A$.
  \item \label{it:sweaklfsub} If\/ $\Gamma ; \Psi \Vdash \sigma = \sigma': \Phi$ and $\Gamma' \leq_\rho \Gamma$
     then $\Gamma' ; \{\rho\}\Psi \Vdash \{\rho\}\sigma = \{\rho\}\sigma': \{\rho\}\Phi$.
  \end{enumerate}
\end{lemma}
\begin{proof}
By induction on the first derivation.
\LONGVERSIONCHECKED{
\\[1em]
\pcase{$\ianc
{\begin{array}{@{}lll@{}}
\Gamma ; \Psi \vdash M \lfwhnf \unbox {t_1} {\sigma_1} : \tm   & \typeof (\Gamma  \vdash t_1) = \cbox{\Phi_1 \vdash \tm} &
\Gamma \vdash \Phi_1 \equiv \Phi_2 : \ctx\\
\Gamma ; \Psi \vdash N \lfwhnf \unbox {t_2} {\sigma_2} : \tm & \typeof (\Gamma  \vdash t_2) = \cbox{\Phi_2 \vdash \tm} &
\Gamma ; \Psi \Vdash \sigma_1 = \sigma_2 : \Phi_1 \quad
\Gamma  \vdash t_1 \equiv t_2 : \cbox{\Phi_1 \vdash \tm}
\end{array}}
{\Gamma \Vdash M = N : \tm}{}$}
    \prf{$\Gamma'; \{\rho\}\Psi \vdash \{\rho\}M \lfwhnf \{\rho\}(\unbox {t_1} {\sigma_1}) : \tm$ and
         $\Gamma'; \{\rho\}\Psi \vdash \{\rho\}N \lfwhnf \{\rho\}(\unbox {t_2} {\sigma_1}) : \tm$ \hfill by Lemma \ref{lem:weakwhnf}}
    \prf{$\Gamma'; \{\rho\}\Psi \vdash \{\rho\}M \lfwhnf \unbox {\{\rho\}t_1} {\{\rho\}\sigma_1} : \tm$ and
         $\Gamma'; \{\rho\}\Psi \vdash \{\rho\}N \lfwhnf \unbox {\{\rho\}t_2} {\{\rho\}\sigma_2} : \tm$ \hfill by subst. def.}
    \prf{$\Gamma'; \{\rho\}\Psi \Vdash \{\rho\}\sigma_1 = \{\rho\}\sigma_2: \{\rho\}\Phi_1$ \hfill by IH}
    \prf{$\Gamma' \vdash \{\rho\}\Phi_1 \equiv \{\rho\}\Phi_2 : \ctx$ \hfill by Lemma \ref{lem:weakcomp}}
    \prf{$\Gamma' \vdash \{\rho\}t_1 \equiv \{\rho\}t_2 : \{\rho\}\cbox{\Phi_1 \vdash \tm}$ \hfill by Lemma \ref{lem:weakcomp}}
    \prf{$\Gamma' \vdash \{\rho\}t = \{\rho\}t': \cbox{\{\rho\}\Phi_1 \vdash \tm}$ \hfill by substitution def.}
    \prf{$\typeof (\Gamma'  \vdash \{\rho\}t_1) = \cbox{\{\rho\}\Phi_1 \vdash \tm}$ \hfill by Lemma \ref{lem:wktypinf}}
    \prf{$\typeof (\Gamma'  \vdash \{\rho\}t_2) = \cbox{\{\rho\}\Phi_2 \vdash \tm}$ \hfill by Lemma \ref{lem:wktypinf} }
    \prf{$\Gamma' ; \{\rho\}\Psi \Vdash \{\rho\}M = \{\rho\}N: \{\rho\}\tm$ \hfill by rule and substitution def.}
\\
}
\end{proof}

}

\begin{lemma}[Backwards Closure for LF Terms]\label{lem:lfbclosed}\quad\\
If $\Gamma; \Psi \Vdash  Q = N: A$ (or $\Gamma; \Psi \Vdash N = Q : A$)\\
and $\Gamma ; \Psi \vdash M \lfwhnf Q : A$
then $\Gamma \sem M = N: A$

\end{lemma}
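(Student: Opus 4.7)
The plan is to prove the lemma by induction on the semantic equality derivation $\Gamma ; \Psi \sem Q = N : A$. The crucial observation is that the hypothesis $\Gamma ; \Psi \vdash M \lfwhnf Q : A$ (which, unfolding the well-typed whnf definition, gives that $Q$ is syntactically in whnf) means that $Q$ cannot make further non-trivial weak head reduction steps. In each case of the semantic equality definition, the derivation exposes a more specific whnf form $Q^\star$ (an $\unbox{t}{\sigma}$, a $\tlam\,M'$, a $\tapp\,M_1\,M_2$, or a variable $x$) together with a reduction $\Gamma ; \Psi \vdash Q \lfwhnf Q^\star : A$, plus premises about $N$ and semantic equalities on subterms.

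For each case, I apply the rule $\infer{Q \lfwhnf Q}{\norm Q}$ together with determinacy of weak head reduction (Lemma \ref{lem:detwhnf}) to conclude $Q = Q^\star$. Hence $\Gamma ; \Psi \vdash M \lfwhnf Q^\star : A$ holds immediately: the typing invariants needed by Def \ref{def:typedwhnf} follow from $\Gamma ; \Psi \vdash M \lfwhnf Q : A$, and the definitional equality $\Gamma ; \Psi \vdash M \equiv Q^\star : A$ comes by symmetry and transitivity from $\Gamma ; \Psi \vdash M \equiv Q : A$ and $\Gamma ; \Psi \vdash Q \equiv Q^\star : A$, both read off the corresponding well-typed whnf premises. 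Reassembling the goal $\Gamma ; \Psi \sem M = N : A$ uses the very same inference rule as the hypothesis for $Q = N$: I keep every premise concerning $N$ and every inductive premise on subterms untouched, and simply substitute the derived reduction for $M$.

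For the symmetric variant where the hypothesis is $\Gamma ; \Psi \sem N = Q : A$, I first prove a short symmetry lemma for semantic LF equality (straightforward mutual induction, since all defining clauses on LF terms and LF substitutions are symmetric in their two term arguments, and the base variable case is trivially symmetric). Applying symmetry reduces the second form of the lemma to the first, and the same argument above concludes.

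The main obstacle is really just uniform bookkeeping: tracking the well-typedness and definitional equality invariants through each of the four clauses of $\Vdash$ on LF terms, together with the parallel reasoning needed if one extends the statement to LF substitutions. The heavy lifting, however, is entirely done by determinacy of $\lfwhnf$ (Lemma \ref{lem:detwhnf}) and the reflexivity-on-whnfs rule — no nontrivial properties of semantic equality itself are required beyond well-formedness (Lemma \ref{lm:semlfwf}).
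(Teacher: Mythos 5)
Your proposal is correct and follows essentially the same route as the paper: a case analysis on the clauses of $\Gamma;\Psi \Vdash Q = N : A$, using the fact that $Q$ is already in whnf (so the reduction premise forces $Q$ to \emph{be} the exposed form $Q^\star$) and then re-deriving the same clause with $\Gamma;\Psi\vdash M \lfwhnf Q^\star : A$ in place of $Q$'s premise. The only cosmetic differences are that the paper phrases this as a case analysis rather than an induction (no inductive hypotheses are actually invoked, as you note yourself) and handles the symmetric variant directly by the same case analysis rather than via a separate symmetry lemma; both are fine.
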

\begin{proof}
By case analysis on $\Gamma; \Psi \Vdash  Q = N: A$ and the fact that
$Q$ is in $\norm$. 
\LONGVERSIONCHECKED{
$\quad$\\[1em]
\pcase{$\ianc
{
\begin{array}{@{}lll@{}}
\Gamma ; \Psi \vdash Q \lfwhnf \unbox {t_1} {\sigma_1} : \tm   & \typeof (\Gamma  \vdash t_1) = \cbox{\Phi_1 \vdash \tm} &
\Gamma \vdash \Phi_1 \equiv \Phi_2 : \ctx \\
\Gamma ; \Psi \vdash N \lfwhnf \unbox {t_2} {\sigma_2} : \tm & \typeof (\Gamma  \vdash t_2) = \cbox{\Phi_2 \vdash \tm} &
\Gamma  \vdash t_1 \equiv t_2 : \cbox{\Phi_1 \vdash \tm} \quad \Gamma ; \Psi \Vdash \sigma_1 = \sigma_2 : \Phi_1
\end{array}
}
{\Gamma ; \Psi \Vdash Q = N : \tm}{}$}
\prf{$\Gamma ; \Psi \vdash M \lfwhnf Q : A$ \hfill by assumption}
\prf{$\norm Q$ \hfill by invariant of $\lfwhnf$}
\prf{$Q = \unbox {t_1} {\sigma_1}$ \hfill since $\norm Q$}
\prf{$\Gamma ; \Psi \Vdash M = N : \tm$ \hfill using $\M \lfwhnf \unbox{t_1}{\sigma_1}$ and sem. def.}
\\[1em]
\pcase{$\ianc
{
      \begin{array}{@{}ll@{}}
\Gamma ; \Psi \vdash Q \lfwhnf \tlam~M' : \tm & \\
\Gamma ; \Psi \vdash N \lfwhnf \tlam~N' : \tm & \Gamma ; \Psi \Vdash M' = N': \Pi x{:}\tm.\tm
      \end{array}
}{\Gamma ; \Psi \Vdash Q = N : \tm}{}
$}
\prf{$\Gamma ; \Psi \vdash M \lfwhnf Q : A$ \hfill by assumption}
\prf{$\norm Q$ \hfill by invariant of $\lfwhnf$}
\prf{$Q = \tlam~M' $ \hfill by $\Gamma ; \Psi \vdash Q \lfwhnf \tlam~M' : \tm $ using $\norm Q$}
\prf{$\Gamma ; \Psi \Vdash M = N : \tm$ \hfill using $\Gamma ; \Psi \vdash M \lfwhnf \tlam~M' : \tm$}
}
\end{proof}

\LONGVERSION{
 \begin{lemma}[Semantic LF Equality Is Preserved Under LF Substitution] \label{lem:semlfeqsub}\quad
   \begin{enumerate}
   \item If $\Gamma ; \Psi \sem \sigma = \sigma' : \Phi$
and $\Gamma ; \Phi \sem M = N : A$
then $\Gamma ; \Psi \sem \lfs\sigma\Phi M = \lfs{\sigma'}\Phi N : \lfs\sigma\Phi A$.
   \item If $\Gamma ; \Psi \sem \sigma = \sigma' : \Phi$
and $\Gamma ; \Phi \sem \sigma_1 = \sigma_2 : \Phi'$
then $\Gamma ; \Psi \sem \lfs\sigma\Phi \sigma_1 = \lfs{\sigma'}\Phi \sigma_2 : \Phi'$.
   \end{enumerate}
 \end{lemma}
 \begin{proof}
Proof by mutual induction on $\Gamma ; \Phi \sem M = N : A$ and $\Gamma ; \Phi \sem \sigma = \sigma' : \Phi$ using the fact that weak head reduction is preserved under substitution (Lemma \ref{lm:lfwhnfsub}).
\LONGVERSIONCHECKED{
\\[0.5em]
 (1) \fbox{If $\Gamma ; \Psi \sem \sigma = \sigma' : \Phi$
and $\Gamma ; \Phi \sem M = N : A$
then $\Gamma ; \Psi \sem \lfs\sigma\Phi M = \lfs{\sigma'}\Phi N :\lfs\sigma\Phi A$.}
\\[0.5em]
\pcase{
$\ianc
{
\begin{array}{@{}lll@{}}
\Gamma ; \Phi \vdash M \lfwhnf \unbox {t_1} {\sigma_1} : \tm   & \typeof (\Gamma  \vdash t_1) = \cbox{\Phi_1 \vdash \tm} &
\Gamma \vdash \Phi_1 = \Phi_2 : \ctx\\
\Gamma ; \Phi \vdash N \lfwhnf \unbox {t_2} {\sigma_2} : \tm & \typeof (\Gamma  \vdash t_2) = \cbox{\Phi_2 \vdash \tm} &
\Gamma ; \Phi \Vdash \sigma_1 = \sigma_2 : \Phi_1 \quad
\Gamma \vdash t_1 \equiv t_2 : \cbox{\Phi_1 \vdash \tm}
\end{array}
}
{\Gamma ; \Phi \Vdash M = N : \tm}{}$
}
\prf{$\Gamma ; \Psi \vdash \sigma : \Phi$ and $\Gamma ; \Psi \vdash \sigma' : \Phi$ \hfill by well-formedness of semantic equ. (Lemma \ref{lm:semlfwf})}
\prf{$\Gamma ; \Psi \vdash \lfs\sigma\Phi M \lfwhnf \lfs\sigma\Phi (\unbox{t_1}{\sigma_1}) : \tm$ \hfill by Lemma \ref{lm:lfwhnfsub} }
\prf{$\Gamma ; \Psi \vdash \lfs{\sigma'}{\Phi} N \lfwhnf \lfs{\sigma'}{\Phi}(\unbox{t_2}{\sigma_2}) : \tm$ \hfill by Lemma \ref{lm:lfwhnfsub}}
\prf{$\Gamma ; \Phi \Vdash \lfs\sigma\Phi\sigma_1 = \lfs{\sigma'}\Phi \sigma_2 : \Phi_1$ \hfill by IH}
\prf{$\Gamma ; \Phi \Vdash \lfs{\sigma}\Phi M = \lfs{\sigma'}\Phi N : \tm$ \hfill by well-typed }
\\
\pcase{
$\ianc
{ \Gamma ; \Phi \vdash M \lfwhnf x : \tm  \quad
  \Gamma ; \Phi \vdash N \lfwhnf x : \tm }
 {\Gamma ; \Phi \Vdash M = N : \tm}{}$
}
\prf{$\Gamma ; \Psi \Vdash \sigma(x) = \sigma'(x) : \tm$ \hfill by $\Gamma ; \Phi \Vdash \sigma = \sigma' : \Psi$}
\prf{$\Gamma ; \Psi \vdash \sigma(x) \lfwhnf M' : \tm$ and $\Gamma ; \Phi \vdash \sigma'(x) \lfwhnf N' : \tm$
 \hfill by $\Gamma ; \Psi \Vdash \sigma(x) = \sigma'(x) : \tm$
}
\prf{$\Gamma ; \Psi \Vdash M' = N' : \tm$ \hfill since both $\norm M'$ and $\norm N'$}
\prf{$\Gamma ; \Psi \vdash \lfs\sigma\Phi M \lfwhnf M' : \tm$    \hfill by Lemma \ref{lm:lfwhnfsub}
}
\prf{$\Gamma ; \Psi \vdash \lfs\sigma\Phi N \lfwhnf N' : \tm$    \hfill by Lemma \ref{lm:lfwhnfsub}}
\prf{$\Gamma ; \Phi \Vdash \lfs\sigma\Phi M = \lfs{\sigma'}\Phi N : \tm$ \hfill Backwards Closure (Lemma \ref{lem:lfbclosed})}
\\
Other cases are similar.
}
\LONGVERSIONCHECKED{
\\[1em]
(2) \fbox{If $\Gamma ; \Psi \sem \sigma = \sigma' : \Phi$
and $\Gamma ; \Phi \sem \sigma_1 = \sigma_2 : \Phi'$
then $\Gamma ; \Psi \sem \lfs\sigma\Phi \sigma_1 = \lfs{\sigma'}\Phi \sigma_2 : \Phi'$.}\\[1em]
Proof by induction on $\Gamma ; \Phi \sem \sigma = \sigma' : \Phi$ using the fact that weak head reduction is preserved under substitution (Lemma \ref{lm:lfwhnfsub}).
\\[1em]
\SUBSTCLO{
 \pcase{$\ianc {
       \begin{array}{@{}lll@{}}
 \Gamma ; \Phi \vdash \sigma_1 \lfwhnf (\sclo \phi {\unbox{t_1}{\sigma'_1}})  : \phi
    & \typeof (\Gamma \vdash t_1) = \cbox{\Phi_1 \vdash \Phi'_1}  & \Gamma \vdash \Phi'_1 \equiv (\phi, \wvec{x{:}A}) : \ctx
 \\
 \Gamma ; \Phi \vdash \sigma_2 \lfwhnf (\sclo \phi {\unbox{t_2}{\sigma'_2}}) : \phi
     & \typeof (\Gamma \vdash t_2) = \cbox{\Phi_2 \vdash \Phi'_2} & \Gamma \vdash \Phi'_2 \equiv (\phi, \wvec{x{:}A}) : \ctx
 \\
 \Gamma \vdash t_1 \equiv t_2 : \cbox{\Phi_1 \vdash \Phi'_1}
     & \Gamma ; \Phi \Vdash \sigma'_1 = \sigma'_2 : \Phi_1
     & \Gamma \vdash \Phi_1 \equiv \Phi_2 : \ctx
 \end{array}
 }
 {\Gamma ; \Phi \Vdash \sigma_1 = \sigma_2 : \phi}{}$}
 \prf{$\Gamma ; \Psi \vdash \lfs{\sigma}{\Phi}(\sigma_1) \lfwhnf \lfs{\sigma}{\Phi} (\sclo \phi {\unbox{t_1}{\sigma'_1}})  : \phi$  \hfill by Lemma \ref{lm:lfwhnfsub}}
 \prf{$\Gamma ; \Psi \vdash \lfs{\sigma'}{\Phi}(\sigma_2) \lfwhnf \lfs{\sigma'}{\Phi} (\sclo \phi {\unbox{t_2}{\sigma'_2}})  : \phi$   \hfill by Lemma \ref{lm:lfwhnfsub}}
 \prf{$\lfs{\sigma}{\Phi} (\sclo \phi {\unbox{t_1}{\sigma'_1}}) = \sclo \phi {\unbox{t_1}{\lfs{\sigma}{\phi}{\sigma'_1}}}$ \hfill by LF subst. def.}
 \prf{$\lfs{\sigma'}{\Phi} (\sclo \phi {\unbox{t_2}{\sigma'_2}}) = \sclo \phi {\unbox{t_2}{\lfs{\sigma'}{\phi}{\sigma'_2}}}$ \hfill by LF subst. def.}
 \prf{$\Gamma ; \Psi \Vdash \lfs{\sigma}{\Phi}\sigma'_1 = \lfs{\sigma'}{\Phi}\sigma'_2: \Phi_1$ \hfill by IH}
 \prf{$\Gamma ; \Psi \Vdash \lfs{\sigma}{\Phi}\sigma_1 = \lfs{\sigma'}{\Phi}\sigma_2:\phi$ \hfill by sem. equ. def.}
 \\[1em]
}
 \pcase{$
 \ianc {
 \Gamma ; \Phi \vdash \sigma_1 \lfwhnf \cdot : \cdot  \quad
 \Gamma ; \Phi \vdash \sigma_2 \lfwhnf \cdot : \cdot }
 {\Gamma ; \Phi \Vdash \sigma_1 = \sigma_2 : \cdot}{}
 $}
\prf{$\Gamma ; \Psi \vdash \lfs{\sigma}{\Phi} \sigma_1 \lfwhnf \lfs{\sigma}{\Phi} \cdot : \cdot$ \hfill by Lemma \ref{lm:lfwhnfsub}}
\prf{$\Gamma ; \Psi \vdash \lfs{\sigma}{\Phi} \sigma_2 \lfwhnf \lfs{\sigma}{\Phi} \cdot : \cdot$ \hfill by Lemma \ref{lm:lfwhnfsub}}
\prf{$\lfs{\sigma}\Phi \cdot = \cdot$ \hfill by LF subst. def.}
\prf{$\Gamma \Psi \Vdash \lfs{\sigma}{\Phi}\sigma_1 = \lfs{\sigma}{\Phi} \sigma_2 : \cdot$ \hfill by sem. eq. def.}
 \\[1em]
 \pcase{$\ianc
 {
 \Gamma ; \phi, \wvec{x{:}A} \vdash \sigma_1 \lfwhnf \wk{\phi} : \phi \qquad
 \Gamma ; \phi, \wvec{x{:}A} \vdash \sigma_2 \lfwhnf \wk{\phi} : \phi
 }
 {\Gamma ; \phi, \wvec{x{:}A} \Vdash \sigma_1 = \sigma_2 : \phi}{}
 $}
\prf{$\Gamma ; \Psi \vdash [\sigma / \phi, \vec x] \sigma_1 \lfwhnf [\sigma / \phi, \vec x] \wk{\phi} : \phi$ \hfill by Lemma \ref{lm:lfwhnfsub}}
\prf{$[\sigma / \phi, \vec x] \wk{\phi} = \trunc_{\phi} (\sigma / \phi, \vec x) = \sigma'_1$ where $\Gamma ; \Psi \vdash \sigma'_1 : \phi$ \hfill by LF subst. def.}
\prf{$\Gamma ; \Psi \vdash [\sigma' / \phi, \vec x] \sigma_2 \lfwhnf [\sigma' / \phi, \vec x] \wk{\phi} : \phi$ \hfill by Lemma \ref{lm:lfwhnfsub}}
\prf{$[\sigma' / \phi, \vec x] \wk{\phi} = \trunc_{\phi} (\sigma' / \phi, \vec x) = \sigma'_2$ where $\Gamma ; \Psi \vdash \sigma'_2 : \phi$ \hfill by LF subst. def.}
\prf{$\Gamma ; \Psi \Vdash \sigma'_1 = \sigma'_2 : \phi$ \hfill since $\Gamma ; \Psi \Vdash \sigma = \sigma' : \phi, \vec x$ }
 \\[1em]
 \pcase{$
 \ianc {\begin{array}{@{}ll@{}}
 \Gamma ; \Phi \vdash \sigma_1 \lfwhnf \sigma'_1, M : \Phi', x{:}A  & \\
 \Gamma ; \Phi \vdash \sigma_2 \lfwhnf \sigma'_2, N : \Phi', x{:}A &
 \Gamma ; \Phi \Vdash \sigma'_1 = \sigma'_2 : \Phi' \qquad \Gamma ; \Psi  \Vdash M = N : \lfs{\sigma'_1}{\Phi'} A
 \end{array}
 }
 {\Gamma ; \Phi \Vdash \sigma_1 = \sigma_2 : \Phi', x{:}A}{}
 $
 }
\prf{$\Gamma ; \Psi \vdash \lfs{\sigma}{\Phi}\sigma_1 \lfwhnf \lfs{\sigma}\Phi {(\sigma'_1,M)} :  \Phi', x{:}A$ \hfill by Lemma \ref{lm:lfwhnfsub}}
\prf{$\Gamma ; \Psi \vdash \lfs{\sigma'}{\Phi}\sigma_2 \lfwhnf \lfs{\sigma'}\Phi {(\sigma'_2,N)} :  \Phi', x{:}A$ \hfill by Lemma \ref{lm:lfwhnfsub}}
\prf{$\lfs{\sigma'}\Phi {(\sigma'_2,N)} = \lfs{\sigma'}\Phi \sigma'_2,~\lfs{\sigma'}\Phi N$ \hfill by LF subst. def.}
\prf{$\lfs{\sigma}\Phi {(\sigma'_1,M)} = \lfs{\sigma}\Phi \sigma'_1,~\lfs{\sigma}\Phi M$ \hfill by LF subst. def.}
\prf{$\Gamma ; \Psi \Vdash \lfs{\sigma}\Phi \sigma'_1 = \lfs{\sigma'}\Phi \sigma'_2 :  \Phi'$ \hfill by IH}
\prf{$\Gamma ; \Psi \Vdash \lfs{\sigma}\Phi M = \lfs{\sigma'}\Phi N : \lfs{\sigma}{\Phi}(\lfs{\sigma'_1}{\Phi'}A)$ \hfill by IH}
\prf{$\Gamma ; \Psi \lfs\sigma\Phi \sigma_1 = \lfs{\sigma'}\Phi \sigma_2 : \Phi', x{:}A$ \hfill by sem. equ. def.}
\\[1em]
}
 \end{proof}
}
\LONGVERSION{
 \begin{lemma}[Semantic Weakening Substitution Exist]\label{lm:semlfwk}\quad \\
If $\Gamma ; \Psi, \wvec{x{:}A} \vdash \wk{\hatctx\Psi} : \Psi$
then $\Gamma ; \Psi, \wvec{x{:}A} \Vdash \wk{\hatctx\Psi} = \wk{\hatctx\Psi} : \Psi$.
 \end{lemma}
 \begin{proof}
By induction on the LF context $\Psi$.
\LONGVERSIONCHECKED{
\pcase{$\Psi = \cdot$.}
\prf{$\Gamma ; \cdot, \wvec{x{:}A} \vdash \wk\cdot : \cdot$ \hfill by  assumption}
\prf{$\Gamma ; \cdot, \wvec{x{:}A} \vdash \wk\cdot \whnf \cdot : \cdot$ \hfill by $\whnf$ rule and typing}
\prf{$\Gamma ; \cdot, \wvec{x{:}A} \Vdash \wk\cdot = \wk\cdot : \cdot$ \hfill by semantic. def.}

\pcase{$\Psi = \Psi', y{:}B$}
\prf{$\Gamma; \Psi', y{:}B, \wvec{x{:}A} \vdash\wk{\hatctx\Psi',y} : \Psi', y{:}B$ \hfill by assumption}
\prf{$\Gamma ; \Psi', y{:}B, \wvec{x{:}A} \vdash \wk{\hatctx\Psi'} : \Psi'$ \hfill by typing }
\prf{$\Gamma ; \Psi', y{:}B, \wvec{x{:}A} \Vdash \wk{\hatctx\Psi'} = \wk{\hatctx\Psi'} : \Psi'$ \hfill by IH}
\prf{$\Gamma ; \Psi', y{:}B, \wvec{x{:}A} \Vdash y = y : B$ \hfill by semantic eq. for LF terms, the fact that $\norm x$, and $B = \lfs{\wk{\hatctx\Psi'}}{\Psi'} B$}
\prf{$\Gamma ; \Psi', y{:}B, \wvec{x{:}A} \vdash \wk{\hatctx\Psi',y} \whnf \wk{\hatctx\Psi'}, y : \Psi', y{:}B$ \hfill by $\whnf$ and typing rules}
\prf{$\Gamma ; \Psi', y{:}B, \wvec{x{:}A} \vdash \wk{\hatctx\Psi',y} = \wk{\hatctx\Psi',y} : \Psi', y{:}B$ \hfill by sem. eq. for LF substitutions}

\pcase{$\Psi = \psi$}
\prf{$\Gamma ; \psi, \wvec{x{:}A} \vdash \wk\psi : \psi$ \hfill by assumption}
\prf{$\Gamma ; \psi, \wvec{x{:}A} \vdash \wk\psi : \psi$ \hfill by $\whnf$ and typing and the fact that $\norm \wk\psi$}
\prf{$\Gamma ; \psi, \wvec{x{:}A} \vdash \wk\psi = \wk\psi: \psi$  \hfill by sem. eq. for LF subst.}
}
 \end{proof}
}
\LONGVERSION{
\begin{lemma}[Semantic LF Context Conversion]\label{lm:semlfctxconv}
\quad
\begin{enumerate}
\item If $\Gamma ; \Psi, x{:}A_1 \Vdash M = N : B$ and $\Gamma ; \Psi \vdash A_1 \equiv A_2 : \lftype$
then $\Gamma ; \Psi, x{:}A_2 \Vdash M = N : B$

\item If $\Gamma ; \Psi, x{:}A_1 \Vdash \sigma = \sigma' : \Phi$ and $\Gamma ; \Psi \vdash A_1 \equiv A_2 : \lftype$
then $\Gamma ; \Psi, x{:}A_2 \Vdash \sigma = \sigma' : \Phi$.

\end{enumerate}
\end{lemma}
\begin{proof} The idea is to use  $\Gamma ; \Psi \vdash A_1 \equiv A_2 : \lftype$  and build LF weakening substitutions
$\Gamma ; \Psi,  x{:}A_2, y{:}A_1 \Vdash \wk{\hatctx\Psi}, y = \wk{\hatctx\Psi}, y : \Psi, x{:}A_1$
and
$\Gamma ; \Psi,  x{:}A_2 \Vdash \wk{\hatctx\Psi}, x, x = \wk{\hatctx\Psi}, x, x : \Psi, x{:}A_2, y{:}A_1$.
Using semantic LF subst. (Lemma \ref{lem:semlfeqsub}), we can then
move $\Gamma ; \Psi, x{:}A_1 \Vdash M = N : B$ to the new LF context $\Psi, x{:}A_2$.
~
\LONGVERSIONCHECKED{\\[0.5em]
 (1): \fbox{ If $\Gamma ; \Psi, x{:}A_1 \Vdash M = N : B$ and $\Gamma ; \Psi \vdash A_1 \equiv A_2 : \lftype$
then $\Gamma ; \Psi, x{:}A_2 \Vdash M = N : B$}
\\[1em]
\prf{$\Gamma \vdash \Psi : \ctx$ \hfill by Well-Formedness of Sem. LF
  Equ. (Lemma \ref{lm:semlfwf}) \\
\mbox{$\quad$}\hfill and Well-formedness of LF context (Lemma \ref{lm:lfctxwf}) }
 \prf{$\Gamma \vdash \Psi, x{:}A_2 : \ctx$ \hfill by context well-formedness rules}
 \prf{$\Gamma ; \Psi \vdash A_2 \equiv A_1 : \lftype$ \hfill by symmetry}
 \prf{$\Gamma ; \Psi, x{:}A_2 \vdash A_2 \equiv A_1 : \lftype$ \hfill by LF weakening  }
 \prf{$\Gamma ; \Psi, x{:}A_2 \vdash x : A_2$ \hfill by typing rule using $\Gamma \vdash \Psi, x{:}A_2 : \ctx$}
 \prf{$\Gamma ; \Psi, x{:}A_2 \vdash x : A_1$ \hfill conversion using $\Gamma ; \Psi, x{:}A_2 \vdash A_2 \equiv A_1 : \lftype$}
 \prf{$\Gamma ; \Psi, x{:}A_2 \vdash x : \lfs{\wk{\hatctx\Psi}}{\Psi} A_1$ \hfill as $A_1 = \lfs{\wk{\hatctx\Psi}}{\Psi} A_1$ }
 \prf{$\Gamma ; \Psi, x{:}A_2 \vdash \wk{\hatctx{\Psi}}, x, x : \Psi, x{:}A_2, y{:}A_1$ \hfill by typing rules for LF substitution}
 \prf{$\Gamma ; \Psi, x{:}A_2, y{:}A_1 \vdash \wk{\hatctx\Psi}, y : \Psi, x{:}A_1$ \hfill by typing rule for LF substitution}
\prf{$\Gamma ; \Psi, x{:}A_2, y{:}A_1 \vdash \wk{\hatctx\Psi} : \Psi$ \hfill by typing}
\prf{$\Gamma ; \Psi, x{:}A_2, y{:}A_1 \Vdash \wk{\hatctx\Psi} = \wk{\hatctx\Psi} : \Psi$ \hfill by Lemma \ref{lm:semlfwk}}
\prf{$\Gamma ; \Psi,  x{:}A_2, y{:}A_1 \Vdash \wk{\hatctx\Psi}, y = \wk{\hatctx\Psi}, y : \Psi, x{:}A_1$
   \hfill by sem. equ. for LF subst. \\\mbox{\hspace{1cm}}\hfill using the fact that $\norm y$}
\prf{$\Gamma ; \Psi, x{:}A_2 \Vdash \wk{\hatctx\Psi} = \wk{\hatctx\Psi} : \Psi$ \hfill by Lemma \ref{lm:semlfwk}}
\prf{$\Gamma ; \Psi,  x{:}A_2 \Vdash \wk{\hatctx\Psi}, x, x = \wk{\hatctx\Psi}, x, x : \Psi, x{:}A_2, y{:}A_1$
   \hfill by sem. equ. for LF subst. \\ \mbox{\hspace{1cm}}\hfill using the fact that $\norm x$}
 \prf{$\Gamma ; \Psi, x{:}A_2 \Vdash \lfs{\wk{\hatctx{\Psi}}, x, x}  {\Psi, x, y}M' = \lfs{\wk{\hatctx{\Psi}}, x, x}{\Psi, x, y}N' :
   \lfs{\wk{\hatctx{\Psi}}, x, x}{\Psi, x, y} B$\\
\mbox{$\quad$}\hfill where $M' = \lfs{\wk{\hatctx\Psi}, y}{\Psi, x}M$
                    and $N' = \lfs{\wk{\hatctx\Psi}, y}{\Psi, x}N$  by semantic LF subst. (Lemma \ref{lem:semlfeqsub} twice)}
 \prf{$\lfs{\wk{\hatctx{\Psi}}, x, x}{\Psi, x, y}(\wk{\hatctx\Psi}, y) = \wk{\hatctx\Psi}, x$ \hfill by subst. def.}
 \prf{$\Gamma ; \Psi, x{:}A_2 \vdash \lfs{\wk{\hatctx\Psi}, x}{\Psi, x} M = \lfs{\wk{\hatctx\Psi}, x}{\Psi, x} N : \lfs{\wk{\hatctx\Psi}, x}{\Psi, x} B$ \hfill by previous lines}
 \prf{$\Gamma ; \Psi, x{:}A_2 \Vdash M = N : B$ \hfill using the fact that $\lfs{\wk{\hatctx\Psi}, x}{\Psi, x} M = M$, etc.}
\\[1em]
We prove (2): \fbox{If $\Gamma ; \Psi, x{:}A_1 \Vdash \sigma = \sigma' : \Phi$ and $\Gamma ; \Psi \vdash A_1 \equiv A_2 : \lftype$
  then $\Gamma ; \Psi, x{:}A_2 \Vdash \sigma = \sigma' : \Phi$.}
\\[1em]
\prf{$\Gamma ; \Psi,  x{:}A_2, y{:}A_1 \Vdash \wk{\hatctx\Psi}, y = \wk{\hatctx\Psi}, y : \Psi, x{:}A_1$
   \hfill constructed as for case (1)}
\prf{$\Gamma ; \Psi,  x{:}A_2 \Vdash \wk{\hatctx\Psi}, x, x = \wk{\hatctx\Psi}, x, x : \Psi, x{:}A_2, y{:}A_1$
   \hfill constructed as for case (1)}
 \prf{$\Gamma ; \Psi, x{:}A_2 \Vdash \lfs{\wk{\hatctx{\Psi}}, x, x}  {\Psi, x, y}\sigma_1 = \lfs{\wk{\hatctx{\Psi}}, x, x}{\Psi, x, y}\sigma_2 : \Phi$ \\
\mbox{$\quad$}\hfill where $\sigma_1 = \lfs{\wk{\hatctx\Psi}, y}{\Psi, x} \sigma$
                    and $\sigma_2 = \lfs{\wk{\hatctx\Psi}, y}{\Psi, x}\sigma'$  by semantic LF subst. (Lemma \ref{lem:semlfeqsub} twice)}
 \prf{$\lfs{\wk{\hatctx{\Psi}}, x, x}{\Psi, x, y}(\wk{\hatctx\Psi}, y) = \wk{\hatctx\Psi}, x$ \hfill by subst. def.}
 \prf{$\Gamma ; \Psi, x{:}A_2 \vdash \lfs{\wk{\hatctx\Psi}, x}{\Psi, x} \sigma = \lfs{\wk{\hatctx\Psi}, x}{\Psi, x} \sigma' : \Phi$ \hfill by previous lines}
 \prf{$\Gamma ; \Psi, x{:}A_2 \Vdash \sigma = \sigma' : \Phi$ \hfill
   using the fact that $\lfs{\wk{\hatctx\Psi}, x}{\Psi, x} \sigma =
   \sigma$, etc.}
}
\end{proof}
}

Our semantic definitions are reflexive, symmetric, and transitive. Further they are stable under type conversions. We state the lemma below only for terms, but it must in fact be proven mutually with the corresponding property for LF substitutions. We first establish these properties for LF and subsequently for computations.  Establishing these properties is tricky and intricate. All proofs can be found in the long version.

\begin{lemma}[Reflexivity, Symmetry, Transitivity, and Conversion of Semantic Equality for LF]\label{lem:semsymlf}
Let $\Psi \Vdash M_1 = M_2 : A$. Then:
\begin{enumerate}
\item \label{it:reflclf} 
$\Gamma ; \Psi \Vdash M_1 = M_1 : A$.
\item \label{it:symclf}
  $\Gamma ; \Psi \Vdash M_2 = M_1 : A$.
\item \label{it:transclf} 
  If 
  $\Gamma ; \Psi \Vdash M_2 = M_3 : A$
    then $\Gamma ; \Psi \Vdash M_1 = M_3 : A$.
\item \label{it:convclf} 
    If $\Gamma ; \Psi \vdash A \equiv A' : \lftype$
     then $\Gamma ; \Psi \Vdash M_1 = M_2 : A'$.
\end{enumerate}

\LONGVERSION{B.~For LF Substitutions:
\begin{enumerate}
\item \label{it:reflsub}
 $\Gamma ; \Psi \Vdash \sigma = \sigma: \Phi$.
\item \label{it:symsub} 
      If $\Gamma ; \Psi \Vdash \sigma = \sigma' : \Phi$ then $\Gamma ; \Psi \Vdash \sigma' = \sigma : \Phi$.
\item \label{it:transsub} 
    If $\Gamma ; \Psi \Vdash \sigma_1 = \sigma_2 : \Phi$ and $\Gamma ; \Psi \Vdash \sigma_2 = \sigma_3 : \Phi$
    then $\Gamma ; \Psi \Vdash \sigma_1 = \sigma_3 : \Phi$.
\item \label{it:convsub} (Conversion:)
    If\/ $\Gamma \vdash \Phi \equiv \Phi' : \ctx$ and $\Gamma ; \Psi \Vdash \sigma = \sigma' : \Phi$
    then $\Gamma \Vdash \sigma = \sigma' : \Phi'$.
\end{enumerate}}
\end{lemma}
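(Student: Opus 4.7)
The plan is to prove items (1)--(4) for terms simultaneously with their analogues for LF substitutions by mutual induction on the given derivations of $\Gamma;\Psi \Vdash M_1 = M_2 : A$ and $\Gamma;\Psi \Vdash \sigma_1 = \sigma_2 : \Phi$, whose structure follows case analysis on the weak head normal forms (Figs.~\ref{fig:LFsem}, \ref{fig:LFsemctx}). All four items lean on three previously-established ingredients: Determinacy of Whnf Reduction (Lemma \ref{lem:detwhnf}), Well-Formedness of Semantic LF Equality (Lemma \ref{lm:semlfwf}) to extract the underlying declarative $\equiv$-equalities, and the corresponding reflexivity/symmetry/transitivity/conversion properties of the declarative LF judgment (together with LF Context Conversion, Lemma \ref{lm:lfctxconv}).

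For reflexivity (1), I traverse the cases on $\Gamma;\Psi \Vdash M_1 = M_2 : A$. In each case the semantic definition is symmetric in its two operands, so I can re-use the whnf of $M_1$ on both sides of the reflexive instance. In the neutral closure case $M_1 \lfwhnf \unbox{t_1}{\sigma_1}$, I appeal to reflexivity of declarative $\equiv$ on $t_1$ and $\Phi_1$, and to the substitution-side IH for $\sigma_1 = \sigma_1$; in the $\tapp/\tlam$ cases I just invoke the IH on the subcomponents; the variable case is immediate. For symmetry (2), I do the same traversal, swap the two operands, and use symmetry of $\equiv$ (both on the neutral computation $t_i$ and on the contexts $\Phi_i$) together with the IH to discharge the side-conditions.

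For transitivity (3), I induct on the first derivation and \emph{invert} the second. The crucial ingredient is determinacy of $\lfwhnf$: since $M_2$ has a unique whnf, the rule forming $\Gamma;\Psi \Vdash M_2 = M_3 : A$ must be the same as the rule used to form $\Gamma;\Psi \Vdash M_1 = M_2 : A$, and the matched subterms (arguments of $\tapp$, body of $\tlam$, substitution inside $\unbox{t}{\sigma}$) must coincide syntactically. This aligns the IH calls at the subcomponents; transitivity of declarative $\equiv$ then discharges the syntactic side-conditions, and for the substitution case the IH on substitutions is invoked recursively. For conversion (4), I observe that the semantic definition is driven by the whnf of $M_1,M_2$, not by the specific syntactic form of $A$; since $A \equiv A'$ at $\lftype$, the same whnf-based case splits and component-wise IH invocations re-establish the relation at $A'$, with any LF-typing invariants discharged by declarative type conversion and Lemma \ref{lm:lfpi-inj} where $\Pi$-shapes are encountered.

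The hard part will be transitivity, specifically the bookkeeping around the neutral closure case: propagating $\Gamma \vdash t_1 \equiv t_2 : \cbox{\Phi_1 \vdash \tm}$ from the first derivation and $\Gamma \vdash t_2 \equiv t_3 : \cbox{\Phi'_2 \vdash \tm}$ from the second, where $\Phi_1$ and $\Phi'_2$ coincide only up to $\equiv : \ctx$, and concluding $\Gamma \vdash t_1 \equiv t_3$ at a single contextual type. This requires declarative context conversion on the computation level plus a compatibility step routed through $\typeof$, and the mutual nature of the induction (terms calling the substitution clause and vice versa) must be verified to be well-founded, which holds because every recursive call descends strictly into a subderivation produced by the whnf-directed definition.
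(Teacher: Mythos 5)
Your overall strategy coincides with the paper's: symmetry and conversion are proved by induction on the semantic derivation (mutually with the substitution judgment), transitivity by a lexicographic induction that inverts the second derivation via determinacy of weak head reduction (Lemma \ref{lem:detwhnf}), with well-formedness of semantic equality (Lemma \ref{lm:semlfwf}) extracting the declarative $\equiv$-facts and injectivity of $\Pi$-types plus semantic LF context conversion handling the conversion clause. Proving reflexivity directly by traversal, rather than deriving it from symmetry and transitivity as the paper does, is a harmless deviation.

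There is, however, one concrete missing ingredient that the paper identifies as crucial: Functionality of LF Typing (Lemma \ref{lm:func-lftyping}). The place it is needed is not the neutral-closure case you single out as the hard part, but the compound-substitution case $\Gamma ; \Psi \Vdash \sigma = \sigma' : \Phi, x{:}A$, where $\sigma \lfwhnf \sigma_1, M$ and $\sigma' \lfwhnf \sigma_2, N$. The first derivation gives $\Gamma ; \Psi \Vdash M = N : \lfs{\sigma_1}{\Phi}A$ and, for transitivity, the second gives $\Gamma ; \Psi \Vdash N = Q : \lfs{\sigma_2}{\Phi}A$; these live at two \emph{a priori different} LF types, so neither determinacy of $\lfwhnf$ nor transitivity of declarative $\equiv$ lets you chain them. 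One must first obtain $\Gamma ; \Psi \vdash \sigma_1 \equiv \sigma_2 : \Phi$ from well-formedness, apply functionality to conclude $\Gamma ; \Psi \vdash \lfs{\sigma_1}{\Phi}A \equiv \lfs{\sigma_2}{\Phi}A : \lftype$, and only then invoke the conversion clause (\ref{it:convclf}) of the induction hypothesis to realign the types before the transitivity IH for terms applies. The same realignment is already needed for symmetry of substitutions, since the conclusion $\Gamma ; \Psi \Vdash \sigma' = \sigma : \Phi, x{:}A$ demands $N = M$ at $\lfs{\sigma_2}{\Phi}A$ rather than at $\lfs{\sigma_1}{\Phi}A$. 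Your sketch (``swap the operands and use symmetry of $\equiv$ together with the IH'') passes over this silently; without the functionality lemma the substitution cases do not close.
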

\begin{proof}
Reflexivity follows directly from symmetry and transitivity. For LF terms (and LF substitutions), we prove symmetry and conversion by induction on the derivation $\Gamma ; \Psi \Vdash M = N : A$ and $\Gamma; \Psi \Vdash \sigma = \sigma' : \Phi$ respectively. For transitivity, we use lexicographic induction.
%
The proofs relies on symmetry of declarative equality ($\equiv$),
determinacy of weak head reductions, and crucially relies on
well-formedness of semantic equality and functionality of LF typing
(Lemma~\ref{lm:func-lftyping}).
\LONGVERSIONCHECKED{\\[0.5em]
\fbox{
Transitivity: If $\Gamma ; \Psi \Vdash \sigma_1 = \sigma_2 : \Phi$ and $\Gamma ; \Psi \Vdash \sigma_2 = \sigma_3 : \Phi$
    then $\Gamma ; \Psi \Vdash \sigma_1 = \sigma_3 : \Phi$.   }
\\[1em]
By lexicographic induction on the first two derivations;
\\
\pcase{$\ianc{\Gamma ; \Psi \vdash \sigma_1 \lfwhnf \cdot : \cdot  \quad
              \Gamma ; \Psi \vdash \sigma_2 \lfwhnf \cdot : \cdot }
             {\Gamma ; \Psi \Vdash \sigma_1 = \sigma_2 : \cdot}{}$
}
\\
\prf{$\Gamma ; \Psi \Vdash \sigma_2 = \sigma_3 : \cdot$ \hfill by assumption}
\prf{$\Gamma ; \Psi \vdash \sigma_2 \lfwhnf \cdot : \cdot $ \hfill by inversion and determinacy (Lemma \ref{lem:detwhnf})}
\prf{$\Gamma ; \Psi \vdash \sigma_3 \lfwhnf \cdot : \cdot$ \hfill by inversion}
\prf{$\Gamma ; \Psi \Vdash \sigma_1 = \sigma_3 : \cdot$
\hfill using $\Gamma ; \Psi \vdash \sigma_1 \lfwhnf \cdot : \cdot$
        and $\Gamma ; \Psi \vdash \sigma_3 \lfwhnf \cdot : \cdot$}

\pcase{$\ianc{\Gamma ; \Psi, \wvec{x{:}A} \vdash \sigma_1 \lfwhnf \wk{\hatctx\Psi} : \Psi \quad
              \Gamma ; \Psi, \wvec{x{:}A} \vdash \sigma_2 \lfwhnf \wk{\hatctx\Psi} : \Psi}
             {\Gamma ; \Psi \Vdash \sigma_1 = \sigma_2 : \Psi}{}$
}
\prf{$\Gamma ; \Psi \vdash \sigma_2 \lfwhnf \cdot : \cdot $ \hfill by assumption}
\prf{$ \Gamma ; \Psi, \wvec{x{:}A} \vdash \sigma_2 \lfwhnf \wk{\hatctx\Psi} : \Psi$ \hfill  by inversion and determinacy (Lemma \ref{lem:detwhnf})}
\prf{$ \Gamma ; \Psi, \wvec{x{:}A} \vdash \sigma_3 \lfwhnf \wk{\hatctx\Psi} : \Psi$ \hfill by inversion}
\prf{$\Gamma \Vdash \sigma_1 = \sigma_3 : \cdot$
\hfill using $\Gamma ; \Psi, \wvec{x{:}A} \vdash \sigma_1 \lfwhnf \wk{\hatctx\Psi} : \Psi$
         and $\Gamma ; \Psi, \wvec{x{:}A} \vdash \sigma_3 \lfwhnf \wk{\hatctx\Psi} : \Psi$}

\pcase{$\ianc
  {\begin{array}{@{}ll@{}}
\Gamma ; \Psi \vdash \sigma_1 \lfwhnf \sigma'_1, M : \Phi, x{:}A  & \\
\Gamma ; \Psi \vdash \sigma_2 \lfwhnf \sigma'_2, N : \Phi, x{:}A &
\Gamma ; \Psi \Vdash \sigma'_1 = \sigma'_2 : \Phi \qquad \Gamma ; \Psi  \Vdash M = N : \lfs{\sigma'_1}{\Phi} A
\end{array}
}
{\Gamma ; \Psi \Vdash \sigma_1 = \sigma_2 : \Phi, x{:}A}{}$}
\prf{$\Gamma ; \Psi \Vdash \sigma_2 = \sigma_3 : \Phi, x{:}A$ \hfill by assumption}
\prf{$\Gamma ; \Psi \vdash \sigma_2 \lfwhnf \sigma'_2, N : \Phi, x{:}A$ \hfill by inversion and determinacy (Lemma \ref{lem:detwhnf})}
\prf{$\Gamma ; \Psi \vdash \sigma_3 \lfwhnf \sigma'_3, Q : \Phi, x{:}A$ \hfill by inversion}
\prf{$\Gamma ; \Psi  \Vdash N = Q : \lfs{\sigma_2}{\Phi} A$ \hfill inversion }
\prf{$\Gamma ; \Psi \Vdash \sigma'_2 = \sigma'_3 : \Phi, x{:}A$ \hfill by inversion}
\prf{$\Gamma ; \Psi \Vdash \sigma'_1 = \sigma'_3 : \Phi, x{:}A$ \hfill by IH}
\prf{$\Gamma ; \Psi \vdash \sigma'_2, N: \Phi, x{:}A$ \hfill by def. of well-typed whnf}
\prf{$\Gamma \vdash \Phi, x{:}A : \ctx$ \hfill by well-formedness of LF typing}
\prf{$\Gamma ; \Phi \vdash A : \lftype$ \hfill by well-formedness of LF contexts}
\prf{$\Gamma ; \Psi \vdash \sigma_1 \equiv \sigma_2 : \Phi$ \hfill by well-formedness of semantic equality (Lemma \ref{lm:semlfwf}) }
\prf{$\Gamma ; \Psi \vdash \lfs{\sigma_1}{\Phi} A \equiv \lfs{\sigma_2}{\Phi} A : \lftype$ \hfill by functionality of LF typing (Lemma \ref{lm:func-lftyping})}
\prf{$\Gamma ; \Psi \Vdash N = Q : \lfs{\sigma_1}{\Phi}A$ \hfill by IH (Conversion \ref{it:convclf})}
\prf{$\Gamma ; \Psi \Vdash M \equiv Q \lfs{\sigma_1}{\Phi} A$ \hfill by IH}
\prf{$\Gamma ; \Psi \Vdash \sigma_1 = \sigma_3 : : \Phi, x{:}A$ \hfill by sem. def. }

\SUBSTCLO{
\pcase{$\ianc {
      \begin{array}{@{}lll@{}}
      \Gamma ; \Psi \vdash \sigma_1 \lfwhnf (\sclo {\hatctx{\Phi}} {\unbox{t_1}{\sigma'_1})}  : \Phi
   & \typeof (\Gamma \vdash t_1) = \cbox{\Phi_1 \vdash \Phi'_1}  & \Gamma \vdash \Phi'_1 \equiv (\Phi, \wvec{x{:}A}) : \ctx
\\
\Gamma ; \Psi \vdash \sigma_2 \lfwhnf (\sclo {\hatctx\Phi} {\unbox{t_2}{\sigma'_2}}) : \Phi
    & \typeof (\Gamma \vdash t_2) = \cbox{\Phi_2 \vdash \Phi'_2} & \Gamma \vdash \Phi'_2 \equiv (\Phi, \wvec{x{:}A}) : \ctx
\\
\Gamma \vdash t_1 \equiv t_2 : \cbox{\Phi_1 \vdash \Phi'_1}
    & \Gamma ; \Psi \Vdash \sigma'_1 = \sigma'_2 : \Phi_1
    & \Gamma \vdash \Phi_1 \equiv \Phi_2 : \ctx
\end{array}
}
{\Gamma ; \Psi \Vdash \sigma_1 = \sigma_2 : \Phi}
{}$
}
\prf{$\Gamma ; \Psi \vdash \sigma_2 = \sigma_3 : \Phi$ \hfill by assumption}
\prf{$\Gamma ; \Psi \vdash \sigma_2 \lfwhnf (\sclo {\hatctx{\Phi}} {\unbox{t_2}{\sigma'_2})}  : \Phi$  \hfill by inversion and determinacy (Lemma \ref{lem:detwhnf})}
\prf{$\Gamma ; \Psi \vdash \sigma_3 \lfwhnf (\sclo {\hatctx{\Phi}} {\unbox{t_3}{\sigma'_3})}  : \Phi$  \hfill by inversion}
\prf{$\typeof (\Gamma \vdash t_2) = \cbox{\Phi_2 \vdash \Phi'_2}$ \hfill by inversion and uniqueness of $\typeof$}
\prf{$\typeof (\Gamma \vdash t_3) = \cbox{\Phi_3 \vdash \Phi'_3}$ \hfill by inversion}
\prf{$\Gamma ; \Psi \Vdash \sigma'_2 = \sigma'_3 : \Phi_2  $ \hfill by inversion}
\prf{$\Gamma \vdash t_2 \equiv t_3 : \cbox{\Phi_2 \vdash\Phi'_2} $ \hfill by inversion}
\prf{$\Gamma \vdash \Phi_2 \equiv \Phi_3 : \ctx $ \hfill by inversion}
\prf{$\Gamma \vdash \Phi'_2 \equiv (\Phi, \wvec{x{:}A'}) : \ctx $ \hfill by inversion}
\prf{$\Gamma \vdash \Phi'_3 \equiv (\Phi, \wvec{x{:}A'}) : \ctx $ \hfill by inversion}
\prf{$\Gamma \vdash (\Phi, \wvec{x{:}A})  \equiv (\Phi, \wvec{x{:}A'}) : \ctx$ \hfill by transitivity and symmetry}
\prf{$\Gamma \vdash \Phi'_3 \equiv (\Phi, \wvec{x{:}A}) :\ctx$ \hfill by transitivity and symmetry}
\prf{$\Gamma \vdash \Phi_1 \equiv \Phi_3 : \ctx$ \hfill by transitivity}
\prf{$\Gamma ; \Psi \Vdash \sigma'_2 = \sigma'_3 : \Phi_1$ \hfill by IH (\ref{it:convsub}) using $\Gamma \vdash \Phi_1 \equiv \Phi_2 : ctx$}
\prf{$\Gamma ; \Psi \Vdash \sigma'_1 = \sigma'_3 : \Phi_1$ \hfill by IH (\ref{it:transsub})}
\prf{$\Gamma \vdash t_1 \equiv t_3 : \cbox{\Phi_2 \vdash\Phi'_2} $ \hfill by transitivity}
\prf{$\Gamma ; \Psi \Vdash \sigma_1 = \sigma_3 : \Phi$ \hfill by def. of sem. typing}
}
\fbox{
Symmetry: If $\Gamma ; \Psi \Vdash \sigma_1 = \sigma_2 : \Phi$ then $\Gamma ; \Psi \Vdash \sigma_2 = \sigma_1 : \Phi$}
\\[1em]
By induction on semantic equivalence relation where we consider any $\sigma'$ smaller than $\sigma$ if $\sigma \lfwhnf \sigma'$; the proof is mostly straightforward exploiting symmetry of decl. equivalence ($\equiv$), but also relies again on  well-formedness of semantic equality (Lemma \ref{lm:semlfwf}) and  functionality of LF typing (Lemma \ref{lm:func-lftyping}) for the case where $\sigma_i \lfwhnf \sigma'_i, M_i$. We show the interesting case.
\\[1em]
\pcase{$\ianc
  {\begin{array}{@{}ll@{}}
\Gamma ; \Psi \vdash \sigma_1 \lfwhnf \sigma'_1, M : \Phi, x{:}A  & \\
\Gamma ; \Psi \vdash \sigma_2 \lfwhnf \sigma'_2, N : \Phi, x{:}A &
\Gamma ; \Psi \Vdash \sigma'_1 = \sigma'_2 : \Phi \qquad \Gamma ; \Psi  \Vdash M = N : \lfs{\sigma'_1}{\Phi} A
\end{array}
}
{\Gamma ; \Psi \Vdash \sigma_1 = \sigma_2 : \Phi, x{:}A}{}$}
\\[1em]
\prf{$\Gamma ; \Psi \Vdash \sigma'_2 = \sigma'_1 : \Phi$ \hfill by IH}
\prf{$\Gamma \vdash \Phi, x{:}A : \ctx$ \hfill by well-formedness of typing }
\prf{$\Gamma ; \Phi \vdash A : \lftype$ \hfill by well-formedness of LF contexts}
\prf{$\Gamma ; \Psi \vdash \sigma'_1 \equiv \sigma'_2 : \Phi$ \hfill by well-formedness of semantic equality (Lemma \ref{lm:semlfwf}) }
\prf{$\Gamma ; \Psi \vdash \lfs{\sigma_1}{\Phi} A \equiv \lfs{\sigma_2}{\Phi} A : \lftype$ \hfill by functionality of LF typing (Lemma \ref{lm:func-lftyping})}
\prf{$\Gamma ; \Psi \Vdash M = N : \lfs{\sigma_2}{\Phi}A$ \hfill by IH (Conversion \ref{it:convclf})}
\prf{$\Gamma ; \Psi \Vdash N = M : \lfs{\sigma_2}{\Phi}A$ \hfill by IH }
\prf{$\Gamma ; \Psi \Vdash \sigma_2 = \sigma_1 : \Phi, x{:}A$ \hfill by def. of semantic equivalence}

\vspace{1cm}
We now consider some cases for establishing symmetry and transitivity for semantic equality of LF terms.
\\[1em]
\pcase{$\ianc{
\begin{array}{@{}lll@{}}
\Gamma ; \Psi \vdash M_1 \lfwhnf \unbox {t_1} {\sigma_1} : \tm & \typeof(\Gamma \vdash t_1) = \cbox{\Phi_1 \vdash \tm}
& \Gamma \vdash \Phi_1 \equiv \Phi_2 : \ctx \\
\Gamma ; \Psi \vdash M_2 \lfwhnf \unbox {t_2} {\sigma_2} : \tm &
\typeof(\Gamma \vdash t_2) = \cbox{\Phi_2 \vdash \tm} &
\Gamma \vdash t_1 \equiv t_2 : \cbox{\Phi_1 \vdash \tm} \quad
\Gamma \Vdash \sigma'_1 = \sigma'_2 : \Phi_1
\end{array} }
            {\Gamma ; \Psi \Vdash M_1 = M_2 : \tm}{}$}
Symmetry for LF Terms.\\
\prf{$\Gamma ; \Psi \Vdash \sigma'_2 = \sigma'_1 : \Phi_1$ \hfill by IH}
\prf{$\Gamma ; \Psi \Vdash \sigma'_2 = \sigma'_1 : \Phi_2$ \hfill by IH (Conversion (\ref{it:convsub}))}
\prf{$\Gamma \vdash \cbox{\Phi_1 \vdash \tm} \equiv \cbox{\Phi_2 \vdash \tm} : u$ \hfill since $\Gamma \vdash \Phi_1 \equiv \Phi_2 : ctx$}
\prf{$\Gamma \vdash t_1 \equiv t_2 : \cbox{\Phi_2 \vdash \tm}$ \hfill by conversion  using $\Gamma \vdash \cbox{\Phi_1 \vdash \tm} \equiv \cbox{\Phi_2 \vdash \tm} : u$}
\prf{$\Gamma \vdash t_2 \equiv t_1 : \cbox{\Phi_2 \vdash \tm}$ \hfill by transitivity of $\equiv$}
\prf{$\Gamma ; \Psi \Vdash N = M : A$ \hfill by sem. def.}
\\
Transitivity for LF Terms.\\
\prf{$\Gamma \Vdash M_2 = M_3 : \tm$ \hfill by assumption}
\prf{$\Gamma \vdash M_2 \lfwhnf \unbox{t_2}{\sigma_2} : \tm$ \hfill by inversion and  determinacy (Lemma \ref{lem:detwhnf})}
\prf{$\Gamma \vdash M_3 \lfwhnf \unbox{t_3}{\sigma_3}$  \hfill by inversion}
\prf{$\typeof (\Gamma \vdash t_2) = \cbox{\Phi_2 \vdash \tm}$ \hfill by inversion and uniqueness of $\typeof$}
\prf{$\typeof (\Gamma \vdash t_3) = \cbox{\Phi_3 \vdash \tm}$ \hfill by inversion}
\prf{$\Gamma \vdash \Phi_2 \equiv \Phi_3 : \ctx$ \hfill by inversion}
\prf{$\Gamma \vdash \Phi_1 \equiv \Phi_3 : \ctx$ \hfill by transitivity ($\equiv$)}
\prf{$\Gamma \Vdash \sigma_2 = \sigma_3 : \Phi_2$ \hfill by inversion}
\prf{$\Gamma \Vdash \sigma_2 = \sigma_3 : \Phi_1$ \hfill by IH (Conversion \ref{it:convsub} using $\Gamma \Vdash \Phi_2 = \Phi_1 : \ctx$)}
\prf{$\Gamma \Vdash \sigma_1 = \sigma_3 : \Phi_1$ \hfill by IH}
\prf{$\Gamma \vdash t_2 \equiv t_3 : \cbox{\Phi_2 \vdash \tm}$ \hfill by inversion on $\Gamma ; \Psi \Vdash M_2 = M_3 : \tm$}
\prf{$\Gamma \vdash (\Phi_1 \vdash \tm) \equiv (\Phi_2 \vdash \tm) : u$ \hfill since $\Gamma \vdash \Phi_1 \equiv \Phi_2 : ctx$}
\prf{$\Gamma \vdash t_2 \equiv t_3 : \cbox{\Phi_1 \vdash \tm}$ \hfill by type conversion using $\Gamma \vdash (\Phi_1 \vdash \tm) \equiv (\Phi_2 \vdash \tm) : u$)}
\prf{$\Gamma \Vdash t_1 \equiv t_3 : \cbox{\Phi_1 \vdash \tm}$ \hfill by Transitivity of $\equiv$}
\prf{$\Gamma \Vdash M_1 = M_3 : \tm$}
}
\LONGVERSIONCHECKED{
\\[1em]
We concentrate here on proving the conversion properties:
\\[1em]
\fbox{(Conversion:) If\/ $\Gamma ; \Psi \vdash A \equiv A' : \lftype$
     and $\Gamma ; \Psi \Vdash M = N : A$
     then $\Gamma ; \Psi \Vdash M = N : A'$.}
\\[1em]
\pcase{
$\ianc
{ \Gamma ; \Psi \vdash M \lfwhnf \lambda x.M' : \Pi x{:}A.B \quad
  \Gamma ; \Psi \vdash N \lfwhnf \lambda x.N' : \Pi x{:}A.B \quad
 \Gamma ; \Psi, x{:}A \Vdash M' = N': B
}
{\Gamma ; \Psi \Vdash M = N: \Pi x{:}A. B}{}$
}
\prf{$\Gamma ; \Psi \vdash \Pi x{:}A.B \equiv \Pi x{:}A'.B' : \lftype$ \hfill by assumption}
\prf{$\Gamma ; \Psi, x{:}A \vdash B \equiv B' : \lftype$ and $\Gamma ; \Psi \vdash A \equiv A' : \lftype$ \hfill by injectivity of $\Pi$-types (Lemma \ref{lm:lfpi-inj})}
\prf{$\Gamma; \Psi, x{:}A \Vdash M' = N' : B'$ \hfill by IH}
\prf{$\Gamma ; \Psi, x{:}A' \Vdash M' = N' : B'$ \hfill by Semantic LF context conversion (Lemma \ref{lm:semlfctxconv})}
\prf{$\Gamma ; \Psi \Vdash M = N : \Pi x{:}A'.B'$ \hfill by sem. def.}
\\[1em]
Other cases are trivial since they are at type $\tm$.
}
\end{proof}

\subsection{Semantic Properties of Computations}
If a term is semantically well-typed, then it is also syntactically well-typed. Furthermore, our definition of semantic typing is stable under weakening.
\LONGVERSION{
 \begin{lemma}[Well-Formedness of Semantic Typing] \quad \label{lm:semwf}
    If $\Gamma \Vdash t = t' : \ann\tau$ then
         $\Gamma \vdash t : \ann\tau$ and $\Gamma \vdash t' : \ann\tau$ and $\Gamma \vdash t \equiv t' : \ann\tau$.
\end{lemma}
\begin{proof}
By induction on the induction on $\Gamma \Vdash \ann\tau : u$. . In each case, we refer the Def.~\ref{def:typedwhnf}.
\end{proof}

\begin{lemma}[Semantic Weakening for Computations]\label{lem:compsemweak}\quad
\begin{enumerate}
  \item \label{it:weaksemtau} If $\Gamma \Vdash \ann\tau :u$ and $\Gamma' \leq_\rho \Gamma$ then $\Gamma' \Vdash \{\rho\}\ann\tau : u$.
  \item If\/ $\Gamma \Vdash \ann\tau = \ann\tau' : u$ and $\Gamma' \leq_\rho \Gamma$
    then $\Gamma' \Vdash \{\rho\}\ann\tau = \{\rho\}\ann\tau' : u$.
  \item  If\/ $\Gamma \Vdash t = t' : \ann\tau$ and $\Gamma' \leq_\rho \Gamma$
    then $\Gamma' \Vdash \{\rho\}t = \{\rho\} t : \{\rho\}\ann\tau$.
  \end{enumerate}
\end{lemma}
\begin{proof}
By induction on $\Gamma \Vdash \ann\tau : u$.
\LONGVERSIONCHECKED{
\\[0.25em]
We note that the theorem is trivial for $\ann\tau = \tmctx$. Hence we concentrate on proving it where $\ann\tau = \tau$ (i.e. it is a proper type).
\\[0.25em]
For better and easier readability we simply write for example
 $\tau = (y:\ann\tau_1) \arrow \tau_2$  instead of
\\
$\Gamma \Vdash \tau : u$ where
\begin{enumerate}
\item $\Gamma \vdash \tau \whnf (x:\ann\tau_1) \arrow \tau_2$
\item $(\forall \Gamma' \leq_\rho \Gamma. \Gamma' \Vdash \{\rho\}\ann\tau_1 : u_1)$
\item $\forall \Gamma'\leq_\rho \Gamma.~ \Gamma' \Vdash s = s' :\{\rho\}\ann\tau_1
    \Longrightarrow \Gamma' \Vdash \{\rho, s/x\} \tau_2 = \{\rho,s'/x\}\tau_2 : u_2$ and
\item $(u_1, u_2, u_3) \in \Ru$.
\end{enumerate}

\noindent
Weakening of semantic typing $\Gamma \Vdash \tau : u$: \\
By case analysis on $\Gamma \Vdash \tau : u$.
\\[0.5em]
    \pcase{$\tau  = (x: \ann\tau_1) \arrow \tau_2 $}
    %
    %
    \prf{$\Gamma' \vdash \{\rho\}\tau \whnf \{\rho\}((y: \ann\tau_1) \arrow \tau_2) : u_3$ \hfill by Lemma \ref{lem:weakwhnf} using $\Gamma \Vdash \tau : u$}
    \prf{$\Gamma' \vdash \{\rho\}\tau \whnf (x:\{\rho\}\ann\tau_1) \arrow \{\rho,x/x\}\tau_2: u_3$ \hfill by substitution def.}
\\
    \prf{Suppose that $\Gamma_1 \leq_{\rho_1} \Gamma'$}
    \prf{$\Gamma' \leq_\rho \Gamma$ \hfill by assumption}
    \prf{$\Gamma_1 \leq_{\{\rho_1\}\rho} \Gamma$ \hfill by composition of substitution}
    \prf{$\Gamma_1 \Vdash \{\{\rho_1\}\rho\}\ann\tau_1 : u_1$  \hfill by $\Gamma \Vdash \tau : u$}
    \prf{$\Gamma_1 \Vdash \{\rho_1\}\{\rho\}\ann\tau_1  : u_1$ \hfill by composition of substitution}
\\
    \prf{Suppose that $\Gamma_1 \leq_{\rho_1} \Gamma'$ and $\Gamma_1 \Vdash s = s' : \{\rho_1\}(\{\rho\}\ann\tau_1)$}
    \prf{$\Gamma_1 \leq_{\{\rho_1\}\rho} \Gamma$ \hfill by composition of substitution}
    \prf{$\Gamma_1 \Vdash s = s' : \{\{\rho_1\}\rho\}\ann\tau_1$\hfill by composition of substitution}
    \prf{$\Gamma_1 \Vdash \{\{\rho_1\}\rho, s/x\} \tau_2 = \{\{\rho_1\}\rho, s'/x\}\tau_2: u_2$ \hfill by $\Gamma \Vdash \tau : u$}
    \prf{$\Gamma_1 \Vdash \{\rho_1, s/x\}(\{\rho,x/x\} \tau_2)= \{\rho_1, s'/x\}(\{\rho,x/x\} \tau_2) : u_2$ \hfill by composition of substitution}
\\
    \prf{$\Gamma' \Vdash \{\rho\}\tau : u_3$ \hfill by abstraction, since $\Gamma_1, \rho_1$ where arbitrary}
\\
    \pcase{$\tau = \cbox{T}$}
    \prf{$\Gamma' \der\{\rho\}\tau \whnf \{\rho\}\cbox{T}:u$ \hfill by Lemma \ref{lem:weakwhnf}  using $\Gamma \Vdash \tau : u$}
    \prf{$\Gamma' \der\{\rho\}\tau \whnf \cbox{\{\rho\}T}:u$ \hfill by substitution def.}
    \prf{$\Gamma \semlf T = T : u$ \hfill by assumption}
    \prf{$\Gamma \vdash T \equiv T : u$\hfill by inversion}
    \prf{$\Gamma' \vdash \{\rho\}T \equiv \{\rho\}T : u$\hfill by substitution lemma}
    \prf{$\Gamma' \Vdash \{\rho\}\tau : u$ \hfill by def.}
\\
    %
     \pcase{$\Gamma \Vdash t = t' : \cbox {\Psi \vdash A}$}
     \prf{$\Gamma \vdash t \whnf w   : \cbox {\Psi \vdash A}$ \hfill by assumption}
     \prf{$\Gamma' \vdash \{\rho\}t \whnf \{\rho\}w   : \{\rho\}\cbox {\Psi \vdash A}$ \hfill by \ref{lem:weakwhnf}}
     \prf{$\Gamma \vdash t' \whnf w'   : \cbox {\Psi \vdash A}$ \hfill by assumption}
     \prf{$\Gamma' \vdash \{\rho\}t' \whnf \{\rho\}w' :  \{\rho\}\cbox{\Psi \vdash A}$ \hfill by \ref{lem:weakwhnf}}
     \prf{$\Gamma ; \Psi \Vdash \unbox w \id = \unbox{w'}{\id} : A$ \hfill by assumption}
     \prf{$\Gamma' ; \{\rho\}\Psi \Vdash \{\rho\}\unbox{w}{\id} = \{\rho\}\unbox{w'}{\id} : \{\rho\} A$ \hfill by IH}
     \prf{$\Gamma' \Vdash \{\rho\}t = \{\rho\}t' : \{\rho\}\cbox{\Psi \vdash A}$ \hfill by sem. def. using subst. properties}
\\
    \pcase{$\Gamma \Vdash t = t' : (y: \ann\tau_1) \arrow \tau_2$}
    \prf{$\Gamma \vdash t \whnf w : (y: \ann\tau_1) \arrow \tau_2$ \hfill by assumption}
    \prf{$\Gamma' \vdash \{\rho\}t \whnf \{\rho\}w :\{\rho\}((y: \ann\tau_1) \arrow \tau_2)$ \hfill by Lemma \ref{lem:weakwhnf}}
    \prf{$\Gamma \vdash t'\whnf w': (y: \ann\tau_1) \arrow \tau_2$ \hfill by assumption}
    \prf{$\Gamma' \vdash \{\rho\}t' \whnf \{\rho\}w' :\{\rho\}((y: \ann\tau_1) \arrow \tau_2)$ \hfill by Lemma \ref{lem:weakwhnf}}
    \prf{Suppose that $\Gamma_1 \leq_{\rho_1} \Gamma'$ and $\Gamma_1 \vdash t_1 : \{\rho_1\}(\{\rho\}\ann\tau_1)$ and $\Gamma_1 \vdash t_2 : \{\rho_1\}(\{\rho\}\ann\tau_1)$}
    \prf{$\Gamma_1\leq_{\{\rho_1\}\rho}\Gamma'$ \hfill by definition}
    \prf{$\Gamma_1 \vdash t_1 : \{\{\rho_1\}\{\rho\}\ann\tau_1$\hfill by composition of substitution}
    \prf{$\Gamma_1 \vdash t_2 :\{ \{\rho_1\}\rho\}\ann\tau_1$ \hfill by composition of substitution}
    \prf{$\Gamma_1 \Vdash t_1 = t_2 : \{\{\rho_1\}\rho\}\ann\tau_1 \Longrightarrow \Gamma_1 \Vdash \{\{\rho_1\}\rho\}w~t_1 = \{\{\rho_1\}\rho\}w'~t_2 : \{\{\rho_1\}\rho, t_1/x\}\tau_2$ \hfill by assumption }
    \prf{$\Gamma_1 \Vdash t_1 = t_2 : \{\rho_1\}(\{\rho\}\ann\tau_1) \Longrightarrow \Gamma_1 \Vdash(\{\rho_1\}(\{\rho\}w))~t_1 = (\{\rho_1\}\{\rho\}w')~t_2 : \{\rho_1, t_1/x\}(\{\rho\}\tau_2)$ \hfill by composition of substitution}
    \prf{$\Gamma' \Vdash \{\rho\}t = \{\rho\}t' : \{\rho\}((y: \ann\tau_1) \arrow \tau_2)$ \hfill by def.}
}
\end{proof}

}%
%
%
Our semantic equality definition is symmetric and transitive. It is also reflexive -- however, note that we prove a weaker reflexivity statement which says that if $t_1$ is semantically equivalent to another term $t_2$ then it is also equivalent to itself. This suffices for our proofs. We also note that our semantic equality takes into account extensionality for terms at function types and contextual types; this is in fact baked into our semantic equality definition.

\begin{lemma}[Symmetry, Transitivity, and Conversion of Semantic Equality]\label{lem:semsym}
Let $\Gamma \Vdash \ann\tau : u$ and $\Gamma \Vdash \ann\tau' : u$ and $\Gamma \sem \ann\tau = \ann\tau' : u$ and $\Gamma \Vdash t_1 = t_2 : \ann\tau$.  Then:
\begin{enumerate}
\item\label{it:refl} (Reflexivity) $\Gamma \sem t_1 = t_1 : \ann\tau$.
\item\label{it:sym} (Symmetry) $\Gamma \Vdash t_2 = t_1 : \ann\tau$.
\item\label{it:trans} (Transitivity) If\/ $\Gamma \Vdash t_2 = t_3 : \ann\tau$ then
  $\Gamma \Vdash t_1 = t_3 : \ann\tau$.
\item\label{it:conv} (Conversion:) $\Gamma \Vdash t_1 = t_2 : \ann\tau'$.
\end{enumerate}
\end{lemma}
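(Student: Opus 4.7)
The plan is to proceed by induction on the semantic kinding derivation $\Gamma \Vdash \ann\tau : u$, proving (2), (3), and (4) mutually within each case, and then deriving (1) from them. Reflexivity follows from symmetry and transitivity by composition: from $\Gamma \Vdash t_1 = t_2 : \ann\tau$, symmetry yields $\Gamma \Vdash t_2 = t_1 : \ann\tau$, and transitivity of the two derivations yields $\Gamma \Vdash t_1 = t_1 : \ann\tau$. So it suffices to treat (2)--(4).

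For the base cases, where $\ann\tau$ is $\tmctx$, a universe $u'$, a neutral $x~\vec{t}$, or a boxed contextual type $\cbox{T}$, the proof is direct. I would unfold the definition of semantic equality at the appropriate type, use determinacy of weak head reduction to align the whnfs produced by the hypotheses, and then apply symmetry and transitivity of declarative equality together with well-formedness of semantic equality. The contextual case $\cbox{\Psi \vdash A}$ delegates to the already-established Lemma \ref{lem:semsymlf} for semantic equality on LF terms. Conversion (4) in each base case is handled by inverting $\Gamma \Vdash \ann\tau = \ann\tau' : u$ to recover the whnf of $\ann\tau'$ and then reassembling using type conversion on the declarative side.

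The intricate case is the function type $(y:\ann\tau_1) \arrow \tau_2$. For symmetry, given $\Gamma \Vdash t_1 = t_2$ with $t_1 \whnf w_1$ and $t_2 \whnf w_2$, I pick $\Gamma' \leq_\rho \Gamma$ and $\Gamma' \Vdash s = s' : \{\rho\}\ann\tau_1$ and must produce $\Gamma' \Vdash \{\rho\}w_2~s = \{\rho\}w_1~s' : \{\rho,s/y\}\tau_2$. By IH symmetry at $\ann\tau_1$ I flip to $\Gamma' \Vdash s' = s$ and apply the hypothesis to obtain the analogous equality but at type $\{\rho,s'/y\}\tau_2$. To migrate the type from $\{\rho,s'/y\}\tau_2$ to $\{\rho,s/y\}\tau_2$, I would extract pointwise semantic type equalities from the hypothesis $\Gamma \Vdash \ann\tau = \ann\tau' : u$: instantiating its third clause with $s=s$, $s'=s'$ (using IH reflexivity at $\ann\tau_1$) and $s=s'$ respectively, and then composing the three resulting equalities via IH symmetry and transitivity at the smaller kind $u_2$, yields $\Gamma' \Vdash \{\rho,s/y\}\tau_2 = \{\rho,s'/y\}\tau_2 : u_2$. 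Then IH conversion followed by IH symmetry at $\tau_2$ closes the case. Transitivity at a function type is handled analogously by threading the hypothesis at a third related argument and re-using the same type-equality construction; conversion (4) reads off the whnf of $\ann\tau'$ from the type-equality assumption and composes argument- and result-type equalities via the IHs.

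The main obstacle is precisely the function-type case of symmetry and transitivity. As the paper remarks immediately after defining semantic kinding, the kinding judgement is deliberately weakened so that reflexivity of types is not built in; consequently one cannot appeal to a functionality-of-types lemma and must instead synthesize each pointwise equality $\Gamma' \Vdash \{\rho,s/y\}\tau_2 = \{\rho,s'/y\}\tau_2$ from the hypothesis $\Gamma \Vdash \ann\tau = \ann\tau' : u$ by instantiating it at several related pairs and combining them via the inductive hypotheses at the strictly smaller kind $u_2$. This is the reason all four properties must be proven simultaneously, and why the lemma bundles the type-equality hypothesis alongside the term-equality hypothesis.
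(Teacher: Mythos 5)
Your overall strategy --- a mutual induction proving symmetry, transitivity and conversion together, reflexivity derived by composing symmetry with transitivity, and the function-type case closed by synthesizing the pointwise equality $\Gamma' \Vdash \{\rho,s/y\}\tau_2 = \{\rho,s'/y\}\tau_2 : u_2$ from several instantiations of the hypothesis $\Gamma \Vdash \ann\tau = \ann\tau' : u$ rather than from a (nonexistent) functionality-of-types lemma --- is exactly the paper's. But there are two connected gaps in the setup.

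First, the induction measure. You induct on the kinding derivation $\Gamma \Vdash \ann\tau : u$ alone and classify the universe case ($\tau \whnf u'$ with $u' < u$) as a base case discharged by symmetry and transitivity of declarative equality. That step fails: terms inhabiting a universe type are themselves types, and semantic equality of terms at $u'$ is semantic \emph{type} equality at the smaller universe, not declarative equality of aligned whnfs. To flip or compose it you need the symmetry and transitivity properties for types at universe $u'$ --- i.e., the whole lemma one universe down. Since the kinding rule for the universe case has no premises, a structural induction on the kinding derivation provides no induction hypothesis there. The paper therefore uses a \emph{lexicographic} induction on the pair $(u,\ \Gamma\Vdash\tau:u)$: in the universe case one appeals to the entire package at the strictly smaller universe $u'$, and in the function case, where the universe may stay the same, one appeals to the structurally smaller kinding premises.

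Second, and relatedly, the mutual statement must include symmetry and transitivity of semantic equality \emph{for types} ($\Gamma \Vdash \ann\tau = \ann\tau' : u$), not only for terms. Your function-type argument already uses them: composing the type equalities obtained by instantiating the hypothesis at the pairs $(s,s)$, $(s',s')$ and $(s,s')$ requires flipping and chaining judgments of the form $\Gamma' \Vdash \{\rho,s/y\}\tau_2 = \{\rho,s'/y\}\tau_2' : u_2$, but these properties are not among the items (2)--(4) you set out to prove, so the appeal to ``IH symmetry and transitivity at the smaller kind $u_2$'' has no referent as stated. Once the type-level properties are added to the mutual induction and the measure is made lexicographic in the universe, your argument for the function case goes through essentially as the paper's does.
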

\begin{proof}
  Reflexivity follows directly from symmetry and transitivity.
  We prove symmetry and transitivity for terms using a lexicographic induction on $u$ and
  $\Gamma \Vdash \tau : u$; we appeal to the induction hypothesis and
  use the corresponding properties on types if
  the universe is smaller; if the universe stays the same,
  then we may appeal to the property for terms if $\Gamma \Vdash \tau
  : u$ is smaller; to prove conversion and symmetry for types, we may also appeal to the induction hypothesis if $\Gamma \Vdash \tau' : u$ is smaller.
\LONGVERSIONCHECKED{
  \\[1em]
\pcase{\fbox{\fbox{$\ann\tau = \tmctx$}}}
Symmetry for Terms (Prop. \ref{it:sym}): To Show: \fbox{$\Gamma \Vdash t_2 = t_1 : \tmctx$} \\
\prf{$t_2$ and $t_1$ stand for LF context $\Psi_2$ and $\Psi_1$ respectively}
\prf{$\Gamma \vdash \Psi_1 \equiv \Psi_2 : \tmctx$ \hfill by def. $\Gamma \Vdash t_1 = t_2 : \tau$}
\prf{$\Gamma \vdash \Psi_2 \equiv \Psi_1 : \tmctx$ \hfill by symmetry of $\equiv$}
\prf{$\Gamma \Vdash t_2 = t_2 : \tmctx$ \hfill by sem. equ. def.}
\\[0.5em]
Transitivity for Terms (Prop. \ref{it:trans}): To Show:
\fbox{If\/ $\Gamma \Vdash t_2 = t_3 : \tmctx $ then  $\Gamma \Vdash t_1 = t_3 : \tmctx$.}\\
\prf{$t_1$, $t_2$, and $t_3$ stand for LF context $\Psi_1$, $\Psi_2$, and $\Psi_3$ respectively}
\prf{$\Gamma \vdash \Psi_1 \equiv \Psi_2 : \tmctx$ \hfill by $\Gamma \Vdash t_1 = t_2 : \tmctx$}
\prf{$\Gamma \vdash \Psi_2 \equiv \Psi_3 : \tmctx$ \hfill by $\Gamma \Vdash t_2 = t_3 : \tmctx$}
\prf{$\Gamma \vdash \Psi_1 \equiv \Psi_3 : \tmctx$ \hfill by transitivity of $\equiv$}
\prf{$\Gamma \Vdash t_1 = t_3 : \tmctx$ \hfill by semantic equ. def.}
%
\\[0.5em]
Other cases are trivial.
\\[1.5em]
%
%
\pcase{\fbox{\fbox{$\tau = \cbox T$,~i.e. $\Gamma \vdash \tau \whnf \cbox T : u$
  and $\Gamma \Vdash_\LF T = T $ where $T = \Psi \vdash A$}}}
Symmetry for Terms (Prop. \ref{it:sym}): To Show: \fbox{$\Gamma \Vdash t_2 = t_1 : \cbox T$.}\\
\prf{$\Gamma \vdash t_1 \whnf w_1 : \tau$ \hfill by definition of  $\Gamma \Vdash t_1 = t_2 : \tau$}
\prf{$\Gamma \vdash t_2 \whnf w_2 : \tau$ \hfill by definition of  $\Gamma \Vdash t_1 = t_2 : \tau$}
\prf{\emph{Sub-Case}: $\Gamma \vdash t_1 \whnf \cbox{C} : \cbox{T}$
  and $\Gamma \vdash t_2 \whnf \cbox{C'} : \cbox T$
  and $\Gamma \Vdash_{\LF} C = C' : T$}
\prf{Consider $C = (\hatctx{\Psi} \vdash M)$ and $C' = (\hatctx{\Psi} \vdash N)$ and $T = \Psi \vdash A$ (proof is the same for case $C = (\hatctx{\Psi} \vdash \sigma)$)}
\prf{$\Gamma ; \Psi \Vdash M = N : A$ \hfill by $\Gamma \Vdash_\LF C = C' : T$}
\prf{$\Gamma ; \Psi \Vdash N = M : A$ \hfill by Lemma \ref{lem:semsymlf} (\ref{it:symclf})}
\prf{$\Gamma \Vdash_\LF C' = C : T$ \hfill by sem. def.}
\prf{$\Gamma \Vdash t_2 = t_1 : \cbox T$ \hfill by semantic equ. def.}
\\[-0.75em]
\prf{\emph{Sub-Case}: $\neut w_1, w_2$ and $\Gamma \vdash w_1 \equiv w_2 : \cbox{T}$}
\prf{$\Gamma \vdash w_2 \equiv w_1 : \cbox{T}$ \hfill by symmetry of $\equiv$}
\prf{$\Gamma \Vdash t_2 = t_1 : \cbox T$ \hfill by semantic equ. def.}
\\
Transitivity for Terms (Prop.\ref{it:trans}):
To Show \fbox{If\/ $\Gamma \Vdash t_2 = t_3 : \cbox T $ then  $\Gamma \Vdash  t_1 = t_3 : \cbox T$.}
 \\
\prf{$\Gamma \vdash t_1 \whnf w_1 : \tau$ \hfill by definition of  $\Gamma \Vdash t_1 = t_2 : \tau$}
\prf{$\Gamma \vdash t_2 \whnf w_2 : \tau$ \hfill by definition of  $\Gamma \Vdash t_1 = t_2 : \tau$}
\prf{$\Gamma \vdash t_2 \whnf w_2' : \tau$ \hfill by definition of  $\Gamma \Vdash t_2 = t_3 : \tau$}
\prf{$\Gamma \vdash t_3 \whnf w_3 : \tau$ \hfill by definition of  $\Gamma \Vdash t_2 = t_3 : \tau$}
\prf{$w_2 = w_2'$ \hfill by Lemma~\ref{lem:detwhnf} (\ref{it:comp-detwhnf})}
\prf{$\Gamma ; \Psi \Vdash \unbox{w_1}{\id} = \unbox{w_2}{\id} : A$ \hfill by def.  $\Gamma \Vdash t_1 = t_2 : \tau$}
\prf{$\Gamma ; \Psi \Vdash \unbox{w_2}{\id} = \unbox{w_3}{\id} : A$ \hfill by def.  $\Gamma \Vdash t_2 = t_3 : \tau$}
\prf{$\Gamma ; \Psi \Vdash \unbox{w_1}{\id} = \unbox{w_3}{\id} : A$ \hfill by Lemma~\ref{lem:semsymlf} (\ref{it:transclf})}
\prf{$\Gamma \Vdash t_1 = t_3 : \cbox{\Psi \vdash A}$ \hfill by sem. equ. def.}
\\
Symmetry for Types (Prop.\ref{it:symtyp}): To Show: \fbox{$\Gamma \sem \tau' =  \cbox T : u$ where $T = \Psi \vdash A$}\\
\prf{$\Gamma \vdash \tau' \whnf \cbox{T'} : u$ and $\Gamma \vdash T \equiv T'$
  \hfill by $\Gamma \Vdash \tau = \tau' : u$}
\prf{$\Gamma \vdash T' \equiv T$ \hfill by symmetry for LF equ.}
\prf{$\Gamma \Vdash \tau' = \tau : u$ \hfill by semantic equ. def.}
\\
Transitivity for Types (Prop.\ref{it:transtyp}): To Show:
\fbox{ If\/ $\Gamma \Vdash \tau' = \tau'' : u$ and $\Gamma \Vdash \tau'' : u$ then
$\Gamma \Vdash \cbox T = \tau'' : u$.}\\
\prf{$\Gamma \vdash \tau' \whnf \cbox{T'} : u$ and $\Gamma \vdash T \equiv T'$
  \hfill by $\Gamma \Vdash \tau = \tau' : u$}
\prf{$\Gamma \vdash \tau'' \whnf \cbox{T''} : u$ and $\Gamma \vdash T' \equiv T''$
  \hfill by $\Gamma \Vdash \tau' = \tau'' : u$}
\prf{$\Gamma \vdash T \equiv T''$ \hfill by transitivity for LF equ.}
\prf{$\Gamma \Vdash \tau = \tau'' : u$ \hfill by semantic equ. def.}
\\
Conversion for Terms (Prop.\ref{it:conv}): To Show:
\fbox{$\Gamma \Vdash t_1 = t_2 : \tau'$.} \\ 
\prf{$\Gamma \vdash t_1 \whnf w_1 : \cbox T$ \hfill by definition of  $\Gamma \Vdash t_1 = t_2 : \tau$}
\prf{$\Gamma \vdash t_2 \whnf w_2 : \cbox T$ \hfill by definition of  $\Gamma \Vdash t_1 = t_2 : \tau$}
\prf{$\Gamma \vdash \tau' \whnf \cbox{T'} : u$ and $\Gamma \vdash T \equiv T'$ \hfill by def. of $\Gamma \Vdash \tau = \tau' : u$}
\prf{$\Gamma \vdash \cbox{T} \equiv \cbox{T'} : u$ \hfill by decl. equ. def.}
\prf{$\Gamma \vdash \tau \equiv \cbox{T} : u$ \hfill by $\Gamma \vdash
  \tau \whnf \cbox{T}$ (since $\whnf$ rules are a subset of $\equiv$)}
\prf{$\Gamma \vdash \tau' \equiv \cbox{T'} : u$ \hfill by $\Gamma \vdash \tau' \whnf \cbox{T'}$ (since $\whnf$ rules are a subset of $\equiv$)}
\prf{$\Gamma \vdash \tau \equiv \tau' : u$ \hfill by transitivity and
  symmetry of decl. equality ($\equiv$)}
\prf{$\Gamma \vdash t_i : \tau'$ and $\Gamma \vdash w_i : \tau'$ for $i = 1,2$
  \hfill by typing rules using $\Gamma \vdash t_i : \cbox T$}
\prf{$\Gamma \vdash t_1 \whnf w_1 : \tau' $ and $\Gamma \vdash t_2  \whnf w_2 : \tau' $
  \hfill by Def.~\ref{def:typedwhnf}}
\prf{$\Gamma \Vdash t_1 = t_2 : \tau'$ \hfill by semantic equ. def.}
\\[0.5em]
\pcase{\fbox{\fbox{$\tau = (y : \ann\tau_1) \arrow \tau_2$ i.e.\ $\Gamma \vdash \tau \whnf (y : \ann\tau_1) \arrow \tau_2 : u$}}}
Symmetry for Terms (Prop. \ref{it:sym}):  To Show:
\fbox{$\Gamma \Vdash t_2 = t_1 : (y : \ann\tau_1) \arrow \tau_2$ }\\
\prf{$\Gamma \vdash t_1 \whnf w_1 : \tau$ \hfill by definition of  $\Gamma \Vdash t_1 = t_2 : \tau$}
\prf{$\Gamma \vdash t_2 \whnf w_2 : \tau$ \hfill by definition of  $\Gamma \Vdash t_1 = t_2 : \tau$}
\prf{Assume $\Gamma' \leq_\rho \Gamma$ and $\Gamma' \Vdash s_2 = s_1 : \{\rho\}\ann\tau_1$}
\prf{\smallerderiv{$\Gamma' \Vdash \{\rho\}\ann\tau_1 : u_1$} \hfill by $\Gamma \Vdash \tau : u$}
\prf{$\Gamma' \sem s_1 = s_2 : \{\rho\}\ann\tau_1$
  \hfill by induction hypothesis (Prop. \ref{it:sym}), symmetry}
\prf{$\Gamma' \sem \{\rho\}w_1~s_1 = \{\rho\}w_2~s_2 : \{\rho,s_1/y\}\tau_2$
  \hfill by definition of  $\Gamma \Vdash t_1 = t_2 : \tau$}
\prf{\smallerderiv{$\Gamma' \sem \{\rho,s_1/y\}\tau_2 : u_2$} \hfill by assumption $\Gamma \sem \tau : u$ }
\prf{$\Gamma' \sem \{\rho\}w_2~s_2 = \{\rho\}w_1~s_1 : \{\rho,s_1/y\}\tau_2$
  \hfill by induction hypothesis (Prop. \ref{it:sym}), symmetry}
\prf{$\Gamma \vdash \tau' \whnf (y : \ann\tau_1') \arrow \tau_2' : u$
  \hfill by definition of $\Gamma \Vdash \tau = \tau' : u$}
\prf{$\Gamma' \Vdash \{\rho,s_1/y\}\tau_2 = \{\rho,s_2/y\}\tau_2' : u_2$
  \hfill by definition of $\Gamma \Vdash \tau = \tau' : u$}
\prf{$\Gamma' \Vdash s_2 = s_2 : \{\rho\}\ann\tau_1$
  \hfill by induction hypothesis (Prop. \ref{it:refl}), reflexivity}
\prf{$\Gamma' \Vdash \{\rho,s_2/y\}\tau_2 = \{\rho,s_2/y\}\tau_2' : u_2$
  \hfill by definition of $\Gamma \Vdash \tau = \tau' : u$}
\prf{$\Gamma' \Vdash \{\rho,s_2/y\}\tau_2' = \{\rho,s_2/y\}\tau_2 : u_2$
  \hfill by induction hypothesis (Prop. \ref{it:symtyp}), symmetry}
\prf{$\Gamma' \Vdash \{\rho,s_1/y\}\tau_2 = \{\rho,s_2/y\}\tau_2 : u_2$
  \hfill by induction hypothesis (Prop. \ref{it:transtyp}), transitivity}
\prf{$\Gamma' \sem \{\rho\}w_2~s_2 = \{\rho\}w_1~s_1 : \{\rho,s_2/y\}\tau_2$
  \hfill by induction hypothesis (Prop. \ref{it:conv}), conversion}
\prf{$\Gamma \Vdash t_2 = t_1 : \tau$ \hfill since $\Gamma',\rho,s_2,s_1$ were arbitrary}
\\
Transitivity for Terms (Prop. \ref{it:trans}):
\fbox{ If\/ $\Gamma \Vdash t_2 = t_3 : (y : \ann\tau_1) \arrow \tau_2$ then
  $\Gamma \Vdash t_1 = t_3 : (y : \ann\tau_1) \arrow \tau_2$.}\\
\prf{$\Gamma \vdash t_1 \whnf w_1 : \tau$ \hfill by definition of  $\Gamma \Vdash t_1 = t_2 : \tau$}
\prf{$\Gamma \vdash t_2 \whnf w_2 : \tau$ \hfill by definition of  $\Gamma \Vdash t_1 = t_2 : \tau$}
\prf{$\Gamma \vdash t_2 \whnf w_2' : \tau$ \hfill by definition of  $\Gamma \Vdash t_2 = t_3 : \tau$}
\prf{$\Gamma \vdash t_3 \whnf w_3 : \tau$ \hfill by definition of  $\Gamma \Vdash t_2 = t_3 : \tau$}
\prf{$w_2 = w_2'$ \hfill by determinacy of weak head evaluation (Lemma \ref{lem:detwhnf})}
\prf{Assume $\Gamma' \leq_\rho \Gamma$ and $\Gamma' \Vdash s_1 = s_3 : \{\rho\}\ann\tau_1$}
\prf{\smallerderiv{$\Gamma' \Vdash \{\rho\}\ann\tau_1 : u_1$} \hfill by $\Gamma \Vdash \tau : u$}
\prf{$\Gamma' \sem s_1 = s_1 : \{\rho\}\ann\tau_1$
  \hfill by induction hypothesis (Prop. \ref{it:refl}), reflexivity}
%
\prf{\smallerderiv{$\Gamma' \sem \{\rho,s_1/y\}\tau_2 : u_2$} \hfill  by assumption $\Gamma \sem \tau : u$ }
\prf{$\Gamma' \sem \{\rho\}w_1~s_1 = \{\rho\}w_2~s_1 : \{\rho,s_1/y\}\tau_2$
  \hfill by assumption $\Gamma \sem t_1 = t_2 : \tau$}
\prf{$\Gamma' \sem \{\rho\}w_2~s_1 = \{\rho\}w_3~s_3 : \{\rho,s_1/y\}\tau_2$
  \hfill by assumption $\Gamma \sem t_2 = t_3 : \tau$}
\prf{$\Gamma' \sem \{\rho\}w_1~s_1 = \{\rho\}w_3~s_3 : \{\rho,s_1/y\}\tau_2$
  \hfill by induction hypothesis (Prop. \ref{it:trans}), transitivity}
\prf{$\Gamma \Vdash t_1 = t_3 : \tau$  \hfill since $\Gamma',\rho,s_1,s_3$ were arbitrary}
\\
Symmetry for Types (Prop. \ref{it:symtyp}):\fbox{$\Gamma \sem \tau' = \tau : u$}\\
\prf{$\Gamma \sem (y : \ann\tau_1) \arrow \tau_2  = \tau' : u$ \hfill by assumption}
\prf{$\Gamma \vdash \tau' \whnf (y : \ann\tau_1') \arrow \tau_2'$
  \hfill by definition of $\Gamma \sem \tau = \tau' : u$}
\prf{Assume $\Gamma' \leq_\rho \Gamma$ and $\Gamma' \Vdash s' = s : \{\rho\}\ann\tau_1'$.}
\prf{\smallerderiv{$\Gamma' \Vdash \{\rho\}\ann\tau_1 : u_1$} \hfill by $\Gamma \Vdash \tau : u$}
\prf{$\Gamma' \Vdash \{\rho\}\ann\tau_1 = \{\rho\}\ann\tau_1' : u_1$ \hfill by $\Gamma \Vdash \tau = \tau' : u$}
\prf{$\Gamma' \Vdash \{\rho\}\ann\tau_1' = \{\rho\}\ann\tau_1 : u_1$ \hfill by induction hypothesis (Prop. \ref{it:symtyp})}
\prf{\highlight{$\Gamma' \vdash \{\rho\}\ann\tau_1' : u_1$} \hfill by $\Gamma \Vdash \tau' : u$}
\prf{$\Gamma' \Vdash s' = s : \{\rho\}\ann\tau_1$ \hfill by induction hypothesis (Prop. \ref{it:conv}), conversion}
\prf{$\Gamma' \Vdash s = s' : \{\rho\}\ann\tau_1$ \hfill by induction hypothesis (Prop. \ref{it:sym}), symmetry for terms}
\prf{$\Gamma' \sem \{\rho, s/y\}\tau_2 = \{\rho, s'/y\}\tau_2' : u_2$  \hfill by $\Gamma \sem \tau = \tau' : u$}
\prf{$\Gamma' \sem s = s : \{\rho\}\ann\tau_1$
  \hfill by induction hypothesis (Prop. \ref{it:refl}), reflexivity}
\prf{\smallerderiv{$\Gamma' \Vdash \{\rho, s/y\}\tau_2 : u_2$} \hfill by $\Gamma \Vdash \tau : u$}
\prf{$\Gamma' \sem \{\rho, s'/y\}\tau_2' =  \{\rho, s/y\}\tau_2 : u$ \hfill by induction hypothesis (Prop. \ref{it:symtyp}), symmetry for types}
%
%
\prf{$\Gamma \sem \tau' = \tau : u$ \hfill since $\Gamma', \rho, s, s'$ were arbitrary}
\\
Transitivity for Types (Prop. \ref{it:transtyp}):
\fbox{ If\/ $\Gamma \Vdash \tau' = \tau'' : u$ and $\Gamma \Vdash \tau'' : u$ then
$\Gamma \Vdash \tau = \tau'' : u$.}\\
\prf{$\Gamma \sem (y : \ann\tau_1') \arrow \tau_1 = \tau' : u$ \hfill by assumption}
\prf{$\Gamma \vdash \tau' \whnf (y:\ann\tau_2') \arrow \tau_2 : u$ \hfill by definition of $\Gamma \sem \tau = \tau':u$
and determinacy of reduction}
\prf{$\Gamma \vdash \tau'' \whnf (y:\ann\tau_3') \arrow \tau_3 : u$ \hfill by definition of $\Gamma \sem \tau' = \tau'' : u$ }
\prf{Assume $\Gamma' \leq_\rho \Gamma$ and $\Gamma' \sem s_1 = s_3 : \{\rho\}\ann\tau_1'$}
\prf{\smallerderiv{$\Gamma' \Vdash \{\rho\}\ann\tau_1' : u_1$} \hfill by $\Gamma \Vdash \tau : u$}
\prf{$\Gamma' \sem s_1 = s_1 : \{\rho\}\ann\tau_1'$
  \hfill  by induction hypothesis (Prop. \ref{it:refl}), reflexivity}
\prf{$\Gamma' \sem \{\rho,s_1/y\}\tau_1 = \{\rho, s_1/y\}\tau_2 : u_2$
   \hfill by definition of $\Gamma \sem \tau = \tau' :u$}
\prf{$\Gamma' \sem \{\rho\}\ann\tau_1' = \{\rho\}\ann\tau_2' : u_1$
   \hfill by definition of $\Gamma \sem \tau = \tau' :u$}
\prf{$\Gamma' \sem s_1 = s_1 : \{\rho\}\ann\tau_2'$ \hfill
      \hfill by induction hypothesis (Prop. \ref{it:conv}), type conversion}
\prf{$\Gamma' \sem \{\rho,s_1/y\}\tau_2 = \{\rho, s_1/y\}\tau_3 : u_2$
   \hfill by definition of $\Gamma \sem \tau' = \tau'' : u$}
\prf{\smallerderiv{$\Gamma' \Vdash \{\rho, s_1/y\}\tau_1 : u_2$} \hfill by $\Gamma \Vdash \tau : u$ }
\prf{$\Gamma' \sem \{\rho,s_1/y\}\tau_1 = \{\rho, s_1/y\}\tau_3 : u_2$
   \hfill by induction hypothesis (Prop. \ref{it:transtyp}), transitivity for types}
\prf{$\Gamma' \sem \{\rho\}\ann\tau_2' = \{\rho\}\ann\tau_3' : u_1$
   \hfill by definition of $\Gamma \sem \tau' = \tau'' :u$}
\prf{$\Gamma' \sem \{\rho\}\ann\tau_1' = \{\rho\}\ann\tau_3' : u_1$
   \hfill by induction hypothesis (Prop. \ref{it:transtyp}), transitivity for types}
\prf{$\Gamma' \sem s_1 = s_3 : \{\rho\}\tau_3'$
    \hfill by  induction hypothesis (\ref{it:conv}), type conversion}
\prf{$\Gamma' \sem \{\rho, s_1/y\}\tau_3 = \{\rho, s_3/y\}\tau_3 : u_2$
    \hfill by $\Gamma \sem \tau'' : u$}
\prf{$\Gamma \sem \{\rho, s_1/y\}\tau_1 = \{\rho, s_3/y\}\tau_3 : u_2$
   \hfill by induction hypothesis (Prop. \ref{it:transtyp}), transitivity for types}
\prf{$\Gamma \sem \tau = \tau'' : u$ \hfill since $\Gamma', \rho, s_1, s_3$ were arbitrary}
\\
Conversion (Prop. \ref{it:conv}). \fbox{$\Gamma \Vdash t_1 = t_2 : \tau'$.}
\\
\prf{$\Gamma \der \tau \equiv \tau' : u$  \hfill by Well-formedness Lemma~\ref{lm:semwf}}
\prf{$\Gamma \vdash t_1 \whnf w_1 : \tau$ \hfill by definition of  $\Gamma \Vdash t_1 = t_2 : \tau$}
\prf{$\Gamma \vdash t_1 \whnf w_1 : \tau'$ \hfill by the conversion rule}
\prf{$\Gamma \vdash t_2 \whnf w_2 : \tau'$ \hfill ditto}
\prf{$\Gamma \vdash \tau' \whnf (y : \ann\tau_1') \arrow \tau_2' : u$ \hfill by definition of  $\Gamma \Vdash \tau = \tau' : u$}
\prf{Assume $\Gamma' \leq_\rho \Gamma$ and $\Gamma' \Vdash s_1 = s_2 : \{\rho\}\ann\tau_1'$}
\prf{$\Gamma' \Vdash \{\rho\}\ann\tau_1 =\{\rho\}\ann\tau_1' : u_1$
   \hfill by definition of $\Gamma \sem \tau = \tau' : u_1$}
\prf{\smallerderiv{$\Gamma' \Vdash \{\rho\}\ann\tau_1 : u_1$} \hfill by $\Gamma \Vdash \tau : u$}
 \prf{$\Gamma' \Vdash \{\rho\}\ann\tau_1' =\{\rho\}\ann\tau_1 : u_1$
    \hfill by induction hypothesis (Prop. \ref{it:symtyp}), symmetry}
\prf{\highlight{$\Gamma' \Vdash s_1 = s_2 : \{\rho\}\ann\tau_1$
  \hfill by induction hypothesis (Prop. \ref{it:conv}) on $\Gamma' \sem \{\rho\}\ann\tau_1' : u_1$, conversion}}
\prf{$\Gamma' \sem \{\rho\}w_1~s_1 = \{\rho\}w_2~s_2 : \{\rho,s_1/y\}\tau_2$
  \hfill by assumption $\Gamma \sem t_1 = t_2 : \tau$}
\prf{$\Gamma' \sem s_1 = s_1 : \{\rho\}\ann\tau_1$
  \hfill by induction hypothesis (\ref{it:refl}), reflexivity}
\prf{\smallerderiv{$\Gamma' \sem \{\rho,s_1/y\}\tau_2 : u_2$} \hfill  by definition of $\Gamma \sem \tau : u$}
\prf{$\Gamma' \sem \{\rho,s_1/y\}\tau_2 = \{\rho,s_1/y\}\tau_2'$
  \hfill by definition of $\Gamma \sem \tau = \tau' : u$}
\prf{$\Gamma' \sem \{\rho\}w_1~s_1 = \{\rho\}w_2~s_2 : \{\rho,s_1/y\}\tau_2'$
  \hfill by induction hypothesis (Prop. \ref{it:conv}), conversion}
\prf{$\Gamma \Vdash t_1 = t_2 : \tau'$  \hfill since $\Gamma',\rho,s_1,s_2$ were arbitrary}
%
\\
\pcase{\fbox{\fbox{$\tau = u'$,~i.e. $\Gamma \Vdash \tau : u$ where
$\Gamma \vdash \tau \whnf u' : u$ and $u' < u$}}}
Symmetry for Terms (Prop. \ref{it:sym}): To Show:
\fbox{$\Gamma \Vdash t_2 = t_1 : u'$}\\
\prf{$\Gamma \Vdash t_2 = t_1 : u'$ \hfill by IH using Symmetry for Types (Prop. \ref{it:symtyp}) (since $u' < u$)}
%
\\[1em]
Transitivity for Terms (Prop. \ref{it:trans}): To Show:
\fbox{If\/ $\Gamma \Vdash t_2 = t_3 : u'$ then  $\Gamma \Vdash t_1 = t_3 : u'$.}
\\[1em]
\prf{$\Gamma \vdash t_1 = t_3:u'$ \hfill by IH using Transitivity for Types (Prop. \ref{it:transtyp}) (since $u' < u$)}
\\[1em]
Symmetry for Types (Prop. \ref{it:symtyp}): To Show:
\fbox{$\Gamma \sem \tau' = u' : u$}
\\[1em]
\prf{$\Gamma \vdash \tau' \whnf u' : u$ \hfill by $\Gamma \Vdash u' = \tau' : u$}
\prf{$\Gamma \vdash \tau' = \tau : u$ \hfill since $\Gamma \vdash \tau \whnf u' : u$ and $u' < u$  (by assumption), and $\Gamma \vdash \tau' \whnf u' : u$ }
\\[1em]
Transitivity for Types (Prop. \ref{it:transtyp}): To Show:
\fbox{ If\/ $\Gamma \Vdash \tau' = \tau'' : u$ and $\Gamma \Vdash \tau'' : u$ then
$\Gamma \Vdash u' = \tau'' : u$.}
\\[1em]
\prf{$\Gamma \vdash \tau' \whnf u' : u$ \hfill $\Gamma \Vdash u' =  \tau' : u$ }
\prf{$\Gamma \vdash \tau'' \whnf u' : u$ \hfill by $\Gamma \Vdash \tau' = \tau'' : u$ }
\prf{$\Gamma \Vdash u' = \tau'' : u$ \hfill using sem. equ. def. and the assumption $u' < u$}
\\[1em]
Conversion (Prop. \ref{it:conv}): To Show: \fbox{$\Gamma \Vdash t_1 = t_2 : \tau'$.}
\\[1em]
\prf{$\Gamma \Vdash t_1 = t_2 : \tau$ and
     $\Gamma \Vdash \tau : u$ where $\Gamma \vdash \tau \whnf u' : u$ and $u' < u$\hfill by assumption}
\prf{$\Gamma \Vdash u' = \tau' : u$ and $u' < u$ \hfill by assumption}
\prf{$\Gamma \vdash \tau' \whnf u' : u$ \hfill by $\Gamma \Vdash u' = \tau' : u$ }
\prf{$\Gamma \Vdash t_1 = t_2 : \tau'$ \hfill since $\Gamma \Vdash \tau' : u$}
\\
\pcase{\fbox{\fbox{$\tau = x~\vec{t}$ and $\Gamma \vdash \tau \whnf x~\vec{t} : u$ and $\neut (x~\vec{t})$}}}
\\
Symmetry for Terms (Prop. \ref{it:sym}): To Show: \fbox{$\Gamma \Vdash t_2 = t_1 : x~\vec{s}$.}
\\
\prf{$\Gamma \Vdash t_1 = t_2 : x~\vec{s}$ \hfill by assumption}
\prf{$\Gamma \vdash t_1 \whnf n_1 : x ~\vec{s}$,~~$\Gamma \vdash t_2 \whnf n_2 : x~\vec{s}$,~~
$\Gamma \vdash n_1 \equiv n_2 : x~\vec{s}$,~~$\neut n_1, n_2$ \hfill by $\Gamma \Vdash t_1 = t_2 : x~\vec{s}$}
\prf{$\Gamma \vdash n_2 \equiv n_1 : x~\vec{s}$ \hfill by symmetry of $\equiv$}
\prf{$\Gamma \Vdash t_2 = t_1 : x~\vec{s}$ \hfill by sem. equ. definition}
\\[0.5em]
Transitivity for Terms (Prop. \ref{it:trans}):
To Show \fbox{If\/ $\Gamma \Vdash t_2 = t_3 : x~\vec{s}$ then  $\Gamma \Vdash  t_1 = t_3 : x~\vec{s}$.}
\\
\prf{$\Gamma \Vdash t_1 = t_2 :  x~\vec{s}$ \hfill by assumption}
\prf{$\Gamma \vdash t_1 \whnf n_1 : x ~\vec{s}$,~~$\Gamma \vdash t_2 \whnf n_2 : x~\vec{s}$,~~$\Gamma \vdash n_1 \equiv n_2 : x~\vec{s}$,~~$\neut n_1, n_2$ \hfill by $\Gamma \Vdash t_1 = t_2 : x~\vec{s}$}
\prf{$\Gamma \vdash t_3 \whnf n_3 : x ~\vec{s}$,
     $\Gamma \vdash n_2 \equiv n_3 : x~\vec{s}$,~~$\neut n_3$ \hfill by $\Gamma \Vdash t_2 = t_3 : x~\vec{s}$}
\prf{$\Gamma \vdash n_1 \equiv n_3 : x~\vec{s}$ \hfill by transitivity of $\equiv$}
\prf{$\Gamma \Vdash t_1 = t_3 : x~\vec{s}$ \hfill by sem. equ. definition}
\\[0.5em]
Symmetry for Types (Prop. \ref{it:symtyp}): To Show: \fbox{$\Gamma \sem \tau' = x~\vec{s} : u$}
\\
 \prf{$\Gamma \Vdash x~\vec{s} = \tau' : u$ where $\Gamma \vdash \tau \whnf x~\vec{s} : u$ \hfill by assumption}
 \prf{$\Gamma \vdash \tau' \whnf x~\vec{t} : u$ and $\Gamma \vdash x~\vec{s} \equiv x~\vec{t} : u$}
 \prf{$\Gamma \vdash x~\vec{t} \equiv x~\vec{s} : u$ \hfill by symmetry of $\equiv$}
 \prf{$\Gamma \Vdash \tau' = \tau : u$ \hfill by sem. equ. definition}
\\[0.5em]
Transitivity for Types (Prop. \ref{it:transtyp}): To Show: \fbox{If\/ $\Gamma \Vdash \tau' = \tau'' : u$ and $\Gamma \Vdash \tau'' : u$ then
 $\Gamma \Vdash x~\vec{s} = \tau'' : u$.}
\\
\prf{$\Gamma \Vdash x~\vec{s} = \tau' : u$ \hfill by assumption}
\prf{$\Gamma \vdash \tau' \whnf x~\vec{s'} : u$ and $\Gamma \vdash x~\vec{s} \equiv x~\vec{s'} : u$ \hfill by $\Gamma \Vdash x~\vec{s} = \tau' : u$ }
\prf{$\Gamma \Vdash \tau' = \tau'' : u$ \hfill by assumption}
\prf{$\Gamma \vdash \tau'' \whnf x~\vec{s''}:u$ and $\Gamma \vdash x~\vec{s'} \equiv x~\vec{s''}:u$ \hfill by $\Gamma \Vdash \tau' = \tau'' : u$}
\prf{$\Gamma \vdash x~\vec{s} = x~\vec{s''} : u$ \hfill by transitivity of $\equiv$}
\prf{$\Gamma \vdash \tau = \tau'' : u$ \hfill by sem. equ. def.}

Conversion (Prop. \ref{it:conv}): \fbox{$\Gamma \Vdash t_1 = t_2 : \tau'$.}
\\
\prf{$\Gamma \Vdash t_1 = t_2 : x~\vec{s} $ \hfill by assumption}
\prf{$\Gamma \vdash t_1 \whnf n_1 : x~\vec{s}$ and $\Gamma \vdash t_2 \whnf n_2 : x~\vec{s}$ and $\Gamma \vdash n_1 \equiv n_2 : x~\vec{s}$ \hfill by $\Gamma \Vdash t_1 = t_2 : x~\vec{s} $}
\prf{$\Gamma \Vdash x~\vec{s} = \tau' : u$ \hfill by assumption}
\prf{$\Gamma \vdash \tau' \whnf x~\vec{s'} : u$ and $\Gamma \vdash x~\vec{s} \equiv x~\vec{s'} : u$ \hfill by $\Gamma \Vdash x~\vec{s} = \tau' : u$ }
\prf{$\Gamma \vdash n_1 \equiv n_2 : x~\vec{s'} $ \hfill using type conversion}
\prf{$\Gamma \vdash t_1 \whnf n_1 : x~\vec{s'}$ and $\Gamma \vdash t_2 \whnf n_2 : x~\vec{s'}$ \hfill using type conversion}
\prf{$\Gamma \Vdash t_1 = t_2 : \tau'$ \hfill by sem. equ. def.}
}
\end{proof}

Finally we establish various elementary properties about our semantic definition that play a key role in the fundamental lemma which we prove later.

 \begin{lemma}[Neutral Soundness]\label{lem:NeutSound}$\quad$$\quad$\\
If\/ $\Gamma \Vdash \ann\tau : u$ and $\Gamma \vdash t : \ann\tau$ and $\Gamma \vdash t' : \ann\tau$ and $\Gamma \vdash t \equiv t' : \ann\tau$ and $\neut t, t'$ then $\Gamma \Vdash t = t' : \ann\tau$.

 \end{lemma}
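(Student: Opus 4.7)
The plan is to induct on the semantic kinding derivation $\Gamma \Vdash \ann\tau : u$, case-splitting on its last rule and mirroring the recursive definition of $\Gamma \Vdash t = t' : \ann\tau$. The case $\ann\tau = \tmctx$ does not arise, since neutral computations cannot inhabit $\tmctx$. When $\tau \whnf x~\vec s$ is a neutral type, the neutral clause of semantic equality is immediate: since $t, t'$ are themselves neutral, they weak-head-reduce to themselves, so the obligation collapses to the given $\Gamma \vdash t \equiv t' : \tau$. When $\tau \whnf u'$ with $u' < u$, semantic equality at $\tau$ unfolds to semantic type equality at the smaller universe; I again invoke the neutral clause of the type equality, with $t, t'$ playing the role of neutral inhabitants of $u'$, discharged by the hypothesis $\Gamma \vdash t \equiv t' : u'$.

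For $\tau \whnf \cbox{\Psi \vdash \tm}$, I plan to reduce to the LF logical relation by producing $\Gamma; \Psi \Vdash \unbox{t}{\id} = \unbox{t'}{\id} : \tm$ through the neutral-unbox clause. The ingredients are: $\unbox{t}{\id} \lfwhnf \unbox{t}{\id}$ because $t$ is neutral (and likewise for $t'$); Lemma~\ref{lm:typeof} supplies $\typeof(\Gamma \vdash t) = \cbox{\Phi \vdash \tm}$ and $\typeof(\Gamma \vdash t') = \cbox{\Phi' \vdash \tm}$ whose contexts are convertible to $\Psi$; the declarative equality $\Gamma \vdash t \equiv t' : \cbox{\Phi \vdash \tm}$ comes from the hypothesis by type conversion; and the required $\Gamma; \Psi \Vdash \id = \id : \Phi$ follows from Lemma~\ref{lm:semlfwk} together with LF-context conversion on the codomain.

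The main obstacle is the dependent-function case $\tau \whnf (y{:}\ann\tau_1) \arrow \tau_2$. Given an arbitrary $\Gamma' \leq_\rho \Gamma$ and $\Gamma' \Vdash s = s' : \{\rho\}\ann\tau_1$, I must produce $\Gamma' \Vdash \{\rho\}t~s = \{\rho\}t'~s' : \{\rho, s/y\}\tau_2$. Since weakening and application both preserve neutrality, the two sides are still neutral, and the case closes by the induction hypothesis on the strictly smaller semantic-kinding derivation for $\{\rho, s/y\}\tau_2$. That recursive appeal requires two ingredients: the semantic kinding $\Gamma' \Vdash \{\rho, s/y\}\tau_2 : u_2$, extracted from the premises of $\Gamma \Vdash \tau : u$ after using reflexivity (Lemma~\ref{lem:semsym}) to obtain $\Gamma' \Vdash s = s : \{\rho\}\ann\tau_1$; and the declarative equality $\Gamma' \vdash \{\rho\}t~s \equiv \{\rho\}t'~s' : \{\rho, s/y\}\tau_2$, built by congruence from the substituted hypothesis $\Gamma' \vdash \{\rho\}t \equiv \{\rho\}t' : \{\rho\}\tau$ and from $\Gamma' \vdash s \equiv s' : \{\rho\}\ann\tau_1$ (well-formedness of semantic equality, Lemma~\ref{lm:semwf}). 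The delicate point is keeping the induction measure on the semantic-kinding derivation rather than on $\tau$ itself, so that the recursive call on $\{\rho, s/y\}\tau_2$ is well-founded.
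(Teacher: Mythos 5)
Your proposal matches the paper's proof essentially step for step: the same induction on the semantic kinding derivation $\Gamma \Vdash \ann\tau : u$, the same case analysis (neutral type, universe, contextual type via the neutral-unbox clause with $\typeof$ and the semantic identity substitution, and the function case closed by the induction hypothesis on the smaller kinding derivation for $\{\rho,s/y\}\tau_2$ after observing that weakening and application preserve neutrality). The ingredients you cite — reflexivity of $s$ to instantiate the kinding premise, well-formedness of semantic equality to extract $\Gamma' \vdash s \equiv s' : \{\rho\}\ann\tau_1$, and the measure on the kinding derivation rather than on the syntax of $\tau$ — are exactly those used in the paper.
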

 \begin{proof}
By induction on $\Gamma \Vdash \tau : u$.
%
\LONGVERSIONCHECKED{
\\
\noindent
\pcase{$\tau = u$}
\prf{$\neut t$ and $\neut t'$ \hfill by assumption}
\prf{$t = x~\vec{s}$ and $t' = x~\vec{s'}$ \hfill since $\neut t$ and $\neut t$, $\Gamma \vdash t : u$, $\Gamma \vdash t':u$ and $\Gamma \vdash t \equiv t' : u$}
\prf{$\Gamma \vdash x~\vec{s} \equiv x~\vec{s'} : u$ \hfill by assumption $\Gamma \vdash t \equiv t' : u$}
\prf{$\Gamma \vdash t' \whnf  x~\vec{s'}$ and $\Gamma \vdash t \whnf  x~\vec{s}$ \hfill since $\neut t$ and $\neut t'$}
\prf{$\Gamma \vdash x~\vec{s} = t' : u$ \hfill by sem. def.}
%
%
\\
\pcase{$\tau =\cbox{T}$ where $\Psi \vdash \tm$}
\prf{$\Gamma \vdash t : \cbox{T}$ and $\Gamma \vdash t' : \cbox{T}$\hfill by assumption}
\prf{$\neut t$ and $\neut t'$\hfill by assumption}
\prf{$\norm t$ and $\norm t'$\hfill by def. of $\norm / \neut$}
\prf{$t \whnf t$ and $t' \whnf t'$ \hfill by def. of $\whnf$}
\prf{$\Gamma \vdash t \equiv t' : \cbox{T}$ \hfill by assumption}
\prf{$\Gamma \vdash t \whnf t : \cbox{T}$  and $\Gamma \vdash t' \whnf t' : \cbox{T}$ \hfill by Def.~\ref{def:typedwhnf}}
\prf{$\Gamma ; \Psi \vdash \unbox t \id \lfwhnf \unbox t \id : A$ \hfill since $\neut t$}
\prf{$\Gamma ; \Psi \vdash \unbox {t'} \id \lfwhnf \unbox {t'} \id : A$ \hfill since $\neut t'$}
\prf{$\Gamma ; \Psi \vdash \typeof (\Gamma \vdash t) = \cbox{\Phi \vdash \tm}$ and $\Gamma \vdash \Psi \equiv \Phi : \ctx$ \hfill by Lemma \ref{lm:typeof} }
\prf{$\Gamma ; \Psi \vdash \typeof (\Gamma \vdash t') = \cbox{\Phi' \vdash \tm}$ and $\Gamma \vdash \Psi \equiv \Phi' : \ctx$ \hfill by Lemma \ref{lm:typeof} }
\prf{$\Gamma \vdash \Phi \equiv \Phi'  : \ctx$ \hfill by symmetry and transitivity of $\equiv$}
\prf{$\Gamma ; \Psi \Vdash \id = \id : \Psi$ \hfill by Lemma \ref{lm:semlfwk}}
\prf{$\Gamma ; \Psi \Vdash \unbox t \id =  \unbox {t'}\id : A$ \hfill by sem. def.}
\prf{$\Gamma \Vdash t = t' : \cbox{T}$ \hfill by semantic def.}
\\
\pcase{$\tau = (y:\ann\tau_1) \arrow \tau_2$}
\prf{$\Gamma \vdash t : (y:\ann\tau_1) \arrow \tau_2$ and $\Gamma \vdash t' : (y:\ann\tau_1) \arrow \tau_2$ \hfill by assumption}
\prf{$\neut t$ and $\neut t'$ \hfill by assumption}
\prf{$\norm t$ and $\norm t'$ \hfill by def. of $\norm / \neut$}
\prf{$t \whnf t$ and $t' \whnf t'$ \hfill by def. of $\whnf$}
\prf{$\Gamma \vdash t \equiv t' : (y:\ann\tau_1)\arrow \tau_2$ \hfill by assumption}
\prf{$\Gamma \vdash t \whnf t : (y:\ann\tau_1) \arrow \tau_2$ \hfill by Def.~\ref{def:typedwhnf}}
\prf{$\Gamma \vdash t' \whnf t' : (y:\ann\tau_1) \arrow \tau_2$ \hfill by Def.~\ref{def:typedwhnf}}
\prf{Assume $\forall \Gamma' \leq_\rho \Gamma.~\Gamma' \Vdash s = s' :  \{\rho\}\ann\tau_1$}
\prf{\mbox{$\quad$}$\Gamma' \vdash \{\rho\}t \equiv \{\rho\}t': \{\rho\}((y:\ann\tau_1) \arrow \tau_2)$ \hfill by Weakening Lemma \ref{lem:weakcomp}}
\prf{\mbox{$\quad$}$\Gamma' \vdash \{\rho\}t \equiv \{\rho\}t': (y:\{\rho\}\ann\tau_1) \arrow \{\rho, y/y\}\tau_2$ \hfill by subst. def.}
\prf{\mbox{$\quad$}$\Gamma' \vdash s \equiv s' : \{\rho\}\ann\tau_1$ \hfill by  Well-formedness Lemma~\ref{lm:semwf} }
\prf{\mbox{$\quad$}$\Gamma' \vdash \{\rho\}t~s \equiv \{\rho\}t'~s' : \{\rho, s/y\}\tau_2$\hfill by rule}
\prf{\mbox{$\quad$}$\neut \{\rho\}t$ and $\neut \{\rho\}t'$ \hfill by Lemma \ref{lem:weaknorm}}
\prf{\mbox{$\quad$}$\neut \{\rho\}t~s$ and $\neut \{\rho\}t'~s'$ \hfill by def. of $\norm / \neut$}
\prf{\mbox{$\quad$}\smallerderiv{$\Gamma' \Vdash \{\rho,s/y\}\tau_2 : u_2$} \hfill by $\Gamma \Vdash \tau : u$}
\prf{\mbox{$\quad$}$\Gamma' \Vdash \{\rho\}t~s = \{\rho\}t~s' : \{\rho, s/y\}\tau_2$\hfill by IH}
\prf{$\Gamma \Vdash t = t' : (y:\ann\tau_1) \arrow \tau_2$ \hfill by semantic def.}

\pcase{$\tau = x~\vec{s}$}
\prf{$\Gamma \vdash \tau \whnf x~\vec{s} : u$ and $\neut(x~\vec{s})$ \hfill by $\Gamma \Vdash \tau : u$ }
\prf{$\Gamma \vdash t \equiv t' : x~\vec{s}$ and $\neut t, t'$ \hfill by assumption}
\prf{$\Gamma \vdash t \whnf t : x~\vec{s}$ \hfill since $\neut t$}
\prf{$\Gamma \vdash t' \whnf t' : x~\vec{s}$ \hfill since $\neut t'$}
\prf{$\Gamma \Vdash t = t' : x~\vec{s}$ \hfill by sem. equ. def.}

}
 \end{proof}


\begin{lemma}[Backwards Closure for Computations]\label{lem:bclosed}\quad
\\
If $\Gamma \sem t_1 = t_2: \ann\tau$ and $\Gamma \vdash t_1 \whnf w : \ann\tau$ and $\Gamma \vdash t_1' \whnf w : \ann\tau$
then $\Gamma \sem t_1' = t_2: \ann\tau$.
\end{lemma}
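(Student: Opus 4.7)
The plan is to proceed by case analysis on the derivation of $\Gamma \Vdash t_1 = t_2 : \ann\tau$, which itself is defined by recursion on $\Gamma \Vdash \ann\tau : u$. The case $\ann\tau = \tmctx$ is immediate, since then $t_1$ and $t_1'$ both denote LF contexts that are themselves in whnf, so $t_1' = t_1$ by determinacy of weak head reduction (Lemma~\ref{lem:detwhnf}), and the semantic equality $\Gamma \vdash \Psi \equiv \Psi' : \tmctx$ carries over verbatim.

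For each of the remaining cases ($\tau \whnf \cbox T$, $\tau \whnf (y{:}\ann\tau_1) \arrow \tau_2$, $\tau \whnf u'$, or $\tau \whnf x\,\vec s$ with $x\,\vec s$ neutral), unfolding $\Gamma \Vdash t_1 = t_2 : \tau$ yields a whnf $w_1$ of $t_1$, a whnf $w_2$ of $t_2$, together with an inner semantic condition relating $w_1$ and $w_2$ (semantic LF equality of $\unbox{w_1}{\id}$ and $\unbox{w_2}{\id}$ at the contextual type; pointwise semantic equality of applications at the function type; definitional equality $\Gamma \vdash w_1 \equiv w_2 : \tau$ at the universe and neutral cases). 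Applying determinacy (Lemma~\ref{lem:detwhnf}) to $\Gamma \vdash t_1 \whnf w_1 : \tau$ and the hypothesis $\Gamma \vdash t_1 \whnf w : \tau$ gives $w_1 = w$, so the inner semantic condition holds between $w$ and $w_2$.

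It remains to assemble $\Gamma \Vdash t_1' = t_2 : \tau$. The well-typed whnf $\Gamma \vdash t_1' \whnf w : \tau$ is a direct hypothesis, so the typing $\Gamma \vdash t_1' : \tau$ and $\Gamma \vdash w : \tau$ are available via Def.~\ref{def:typedwhnf}. The missing piece $\Gamma \vdash t_1' \equiv t_2 : \tau$ is obtained by combining $\Gamma \vdash t_1' \equiv w : \tau$ and $\Gamma \vdash t_1 \equiv w : \tau$ (both from Def.~\ref{def:typedwhnf}) with $\Gamma \vdash t_1 \equiv t_2 : \tau$ (from Lemma~\ref{lm:semwf}), using symmetry and transitivity of $\equiv$. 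Reinserting this data and the inner semantic condition (which mentions only $w = w_1$ and $w_2$, not $t_1$ itself) into the defining clause of semantic equality yields $\Gamma \Vdash t_1' = t_2 : \tau$.

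The proof is essentially bookkeeping; no induction is required, and the argument parallels Lemma~\ref{lem:lfbclosed} for LF terms. The only potentially delicate case is the function type, but since the extensional clause refers only to the whnf $w_1 = w$ of $t_1$ and not to $t_1$ itself, it transports to $t_1'$ without change. The main obstacle, if any, is purely organizational: making sure the typing side-conditions required by Def.~\ref{def:typedwhnf} are discharged in each case, which is where the well-formedness lemma for semantic typing (Lemma~\ref{lm:semwf}) enters.
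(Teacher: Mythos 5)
Your proof is correct and follows essentially the same route as the paper, which likewise argues by case analysis on $\Gamma \Vdash t_1 = t_2 : \ann\tau$ (structured by the cases of $\Gamma \Vdash \ann\tau : u$), using determinacy of weak head reduction to identify the whnf of $t_1$ with $w$ and observing that each defining clause depends only on the whnfs. The paper states this in one line; your write-up just fills in the bookkeeping.
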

\begin{proof}
By case analysis of $\Gamma \sem t_1 = t_2: \ann\tau$ considering different cases of $\Gamma \sem \ann\tau : u$.
\end{proof}

\begin{lemma}[Typed Whnf Is Backwards Closed]\label{lem:typeclosed}
$\;$\\ If \mbox{$\Gamma \vdash t \whnf w : (y:\ann\tau_1) \arrow \tau_2$}
    and  $\Gamma \vdash s : \ann\tau_1$
    \\and  $\Gamma \vdash w~s \whnf v : \{s/y\}\tau_2$
    then $\Gamma \vdash t~s \whnf v : \{s / y \}\tau_2$.
\end{lemma}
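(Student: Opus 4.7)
The plan is to unfold Definition~\ref{def:typedwhnf} (well-typed whnf) on each hypothesis, separately reassemble the four obligations (typing of $t~s$, typing of $v$, definitional equality $t~s \equiv v$, and the underlying untyped reduction $t~s \whnf v$), and close by repackaging them through Definition~\ref{def:typedwhnf}.

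The typing and equality parts are routine. From $\Gamma \vdash t \whnf w : (y{:}\ann\tau_1) \arrow \tau_2$ I get $\Gamma \vdash t : (y{:}\ann\tau_1) \arrow \tau_2$, $\Gamma \vdash w : (y{:}\ann\tau_1) \arrow \tau_2$, and $\Gamma \vdash t \equiv w : (y{:}\ann\tau_1) \arrow \tau_2$. Combined with $\Gamma \vdash s : \ann\tau_1$, the application typing rule gives $\Gamma \vdash t~s : \{s/y\}\tau_2$, and the congruence rule for application gives $\Gamma \vdash t~s \equiv w~s : \{s/y\}\tau_2$. From $\Gamma \vdash w~s \whnf v : \{s/y\}\tau_2$ I get $\Gamma \vdash v : \{s/y\}\tau_2$ together with $\Gamma \vdash w~s \equiv v : \{s/y\}\tau_2$. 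Transitivity of $\equiv$ then yields $\Gamma \vdash t~s \equiv v : \{s/y\}\tau_2$.

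The real content is the untyped reduction $t~s \whnf v$. I proceed by case analysis on the whnf $w$. Since $w$ is in $\norm$ at type $(y{:}\ann\tau_1) \arrow \tau_2$, inspection of the whnf grammar leaves only two possibilities that are type-consistent: either $w = \tmfn y t'$, or $w$ is neutral (universes and boxed objects are ruled out by their types). In the first case, the second premise unfolds via the reduction rule for application as $w~s = (\tmfn y t')~s \whnf v$ iff $\{s/y\}t' \whnf v$; combining $t \whnf \tmfn y t'$ with $\{s/y\}t' \whnf v$, the first whnf rule for application gives $t~s \whnf v$. In the second case, $w~s$ is itself neutral, hence $w~s \whnf w~s$, so $v = w~s$; the neutral-application whnf rule then applies to $t \whnf w$ with $\neut w$ to give $t~s \whnf w~s = v$.

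I do not foresee a genuine obstacle: the lemma is essentially a congruence-under-application for typed whnf, and the only subtlety is ensuring that the two relevant whnf rules for application exhaust the possibilities. That exhaustion is a short argument from the shape of $w$ at function type together with Lemma~\ref{lem:detwhnf} (determinacy of whnf), which guarantees that $w~s \whnf v$ has a unique predecessor derivation in each case. Finally, the four pieces---$\Gamma \vdash t~s : \{s/y\}\tau_2$, $\Gamma \vdash v : \{s/y\}\tau_2$, $\Gamma \vdash t~s \equiv v : \{s/y\}\tau_2$, and $t~s \whnf v$---assemble through Definition~\ref{def:typedwhnf} into the conclusion $\Gamma \vdash t~s \whnf v : \{s/y\}\tau_2$.
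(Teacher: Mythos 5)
Your proposal is correct and follows essentially the same route as the paper's proof: unfold the typed-whnf definition to discharge the typing and $\equiv$ obligations via congruence and transitivity, then split on whether the whnf $w$ is a $\mathsf{fn}$-abstraction or neutral and reassemble the untyped reduction $t~s \whnf v$ from the corresponding application rule in each case. The only cosmetic difference is that you explicitly justify the exhaustiveness of the two subcases by the type of $w$ and cite determinacy for the inversion of $w~s \whnf v$, which the paper leaves implicit.
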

\begin{proof}
By unfolding the definitions and considering different cases for $w$.
\LONGVERSIONCHECKED
{
 \quad\\
\prf{$\Gamma \vdash t : (y:\ann\tau_1) \arrow \tau_2$ \hfill by def. of $\Gamma \vdash t \whnf w : (y:\ann\tau_1) \arrow \tau_2$}
\prf{$\Gamma \vdash s : \ann\tau_1$ \hfill by assumption}
\prf{$\Gamma \vdash t~s : \{s/y\}\tau_2$ \hfill by typing rule}
\prf{$\Gamma \vdash s \equiv s : \ann\tau_1$ \hfill by reflexivity of $\equiv$}
\prf{$\Gamma \vdash t \equiv w : (y:\ann\tau_1) \arrow \tau_2$ \hfill by def. of $\Gamma \vdash t \whnf w : (y:\ann\tau_1) \arrow \tau_2$}
\prf{$\Gamma \vdash t~s \equiv w~s : \{s/y\}\tau_2$ \hfill by congruence rules of $\equiv$}
\prf{$\Gamma \vdash w~s \equiv v : \{s/y\}\tau_2$ \hfill by def. of $\Gamma \vdash w~s \whnf v : \{s/y\}\tau_2$ }
\prf{$\Gamma \vdash t~s \equiv v : \{s/y\}\tau_2$ \hfill by symmetry and transitivity of $\equiv$}
\prf{$t \whnf w$ \hfill by  def. of $\Gamma \vdash t \whnf w : (y:\ann\tau_1) \arrow \tau_2$}
\prf{$w~s \whnf v$ \hfill by def. of $\Gamma \vdash w~s \whnf v : \{s/y\}\tau_2$ }
\prf{$\Gamma \vdash v : \{s/y\}\tau_2$ \hfill by def. of $\Gamma \vdash w~ s \whnf v : \{s/y\}\tau_2$ }
\prf{$\norm w$ \hfill by definition of $\whnf$}
\\
\prf{\emph{Sub-case}: $t \whnf \tmfn x {t'}$ and $w = \tmfn x {t'}$}
\prf{$(\tmfn x t')~s \whnf v$ where $\{s/x\}t' \whnf v$ \hfill by $\Gamma \vdash w~s\whnf v: \{s/y\}\tau_2$}
\prf{$t~s \whnf v$ \hfill since $t \whnf \tmfn x t'$}
\prf{$\Gamma \vdash t~s \whnf v : \{s / y \}\tau_2$ \hfill by def. }
\\
\prf{\emph{Sub-case}: $t \whnf w$ where $\neut w$ 
                       }
\prf{$w~s\whnf w~s$ \hfill  since $\neut (w~s)$}
\prf{$t~s \whnf w~s$ \hfill by rule}
\prf{$\Gamma \vdash t~s \whnf v : \{s / y \}\tau_2$ \hfill by def. }
}
\end{proof}

\begin{lemma}[Semantic Application]\label{lem:SemTypeApp}$\;$\\
If $\Gamma \Vdash t = t': (y:\ann\tau_1) \arrow \tau_2$ and $\Gamma \Vdash s = s': \ann\tau_1$ then
$\Gamma \Vdash t~s = t'~s': \{s/y\}\tau_2$.
 \end{lemma}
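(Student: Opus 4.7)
The plan is to unfold the definition of semantic equality at function type, instantiate it with the identity renaming, and then transport the resulting equality of applied weak head normal forms back through the original terms using backwards closure.

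First, I would invert $\Gamma \Vdash t = t' : (y:\ann\tau_1) \arrow \tau_2$ to obtain whnfs $w, w'$ with $\Gamma \vdash t \whnf w : (y:\ann\tau_1) \arrow \tau_2$ and $\Gamma \vdash t' \whnf w' : (y:\ann\tau_1) \arrow \tau_2$, together with the implication
\[
  \forall \Gamma' \leq_\rho \Gamma.\ \Gamma' \Vdash s = s' : \{\rho\}\ann\tau_1 \Longrightarrow \Gamma' \Vdash \{\rho\}w~s = \{\rho\}w'~s' : \{\rho, s/y\}\tau_2.
\]
Instantiating this with $\rho = \id$ and the hypothesis $\Gamma \Vdash s = s' : \ann\tau_1$ directly yields $\Gamma \Vdash w~s = w'~s' : \{s/y\}\tau_2$. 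Well-formedness of semantic typing (Lemma~\ref{lm:semwf}) and of semantic equality ensure that $\Gamma \vdash w~s : \{s/y\}\tau_2$ and $\Gamma \vdash w'~s' : \{s/y\}\tau_2$, and by the determinacy of weak head reduction they reduce to unique whnfs $v$ and $v'$, respectively.

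Next, I would apply Lemma~\ref{lem:typeclosed} (typed whnf backwards closure for application): from $\Gamma \vdash t \whnf w : (y:\ann\tau_1)\arrow\tau_2$, $\Gamma \vdash s : \ann\tau_1$, and $\Gamma \vdash w~s \whnf v : \{s/y\}\tau_2$ I obtain $\Gamma \vdash t~s \whnf v : \{s/y\}\tau_2$; analogously $\Gamma \vdash t'~s' \whnf v' : \{s/y\}\tau_2$. Now Lemma~\ref{lem:bclosed} (backwards closure for computations) lets me swap $w~s$ for $t~s$ inside $\Gamma \Vdash w~s = w'~s' : \{s/y\}\tau_2$, producing $\Gamma \Vdash t~s = w'~s' : \{s/y\}\tau_2$. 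Applying symmetry (Lemma~\ref{lem:semsym}(\ref{it:sym})), another invocation of Lemma~\ref{lem:bclosed} to replace $w'~s'$ with $t'~s'$, and a final symmetry yields $\Gamma \Vdash t~s = t'~s' : \{s/y\}\tau_2$.

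No part of this argument is genuinely hard; the only subtlety is tracking the typing annotations of weak head reductions: we must know that $\Gamma \Vdash \{s/y\}\tau_2 : u_2$ holds in order to apply the preservation, symmetry, and backwards-closure lemmas in that world, but this follows from the semantic kinding premise packaged inside $\Gamma \Vdash t = t' : (y{:}\ann\tau_1)\arrow\tau_2$ together with the hypothesis $\Gamma \Vdash s = s' : \ann\tau_1$ (using reflexivity of semantic equality on $s$). The main conceptual ingredient is simply that semantic equality at a function type is defined Kripke-style to already give us exactly the universally quantified implication we need.
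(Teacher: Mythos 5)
Your proposal is correct and follows essentially the same route as the paper's proof: unfold the Kripke-style definition of semantic equality at function type, instantiate with the identity renaming to get $\Gamma \Vdash w~s = w'~s' : \{s/y\}\tau_2$, then transfer this back to $t~s$ and $t'~s'$ via well-formedness, Lemma~\ref{lem:typeclosed}, Lemma~\ref{lem:bclosed}, and symmetry. No gaps.
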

 \begin{proof}
Using well-formedness of semantic equality, Backwards closed properties (Lemma \ref{lem:typeclosed} and \ref{lem:bclosed}), and Symmetry of semantic equality (Lemma Prop. \ref{it:sym}).
\LONGVERSIONCHECKED{
$\;$\\ [1em]
\prf{$\Gamma \vdash t \whnf w   : (y:\ann\tau_1) \arrow \tau_2$ ~\mbox{and}~\\
$\Gamma \vdash t' \whnf w' : (y:\ann\tau_1) \arrow \tau_2$ ~\mbox{and}~
\\
$\forall \Gamma' \leq_\rho \Gamma.~\Gamma' \Vdash s = s' : \{\rho\}\ann\tau_1
\Longrightarrow \Gamma' \Vdash (\{\rho\}w)~s = \{\rho\}w'~s' : \{\rho, s/y\}\tau_2$
\hfill by sem. def.}
\prf{$\Gamma \Vdash w~s = w'~s': \{s/y\}\tau_2$ \hfill choosing $\Gamma$ for $\Gamma'$, $\rho$ to be the identity substitution }
\prf{$\Gamma \vdash w~s \whnf v : \{s/y\}\tau_2$ \hfill by def. of $\Gamma \Vdash w~s = w'~s': \{s/y\}\tau_2$}
\prf{$\Gamma \vdash s : \tau_1$ and $\Gamma \vdash s' \vdash \tau_1$ \hfill by Well-formedness Lemma \ref{lm:semwf}}
\prf{$\Gamma \vdash t~s \whnf v : \{s/y\}\tau_2$ \hfill by Whnf Backwards closed (Lemma \ref{lem:typeclosed})}
\prf{$\Gamma \vdash w'~s' \whnf v: \{s/y\}\tau_2$ \hfill by def. of $\Gamma \Vdash w~s = w'~s': \{s/y\}\tau_2$}
\prf{$\Gamma \vdash t'~s' \whnf v : \{s/y\}\tau_2$ \hfill by Whnf Backwards closed (Lemma \ref{lem:typeclosed})}
\prf{$\Gamma \Vdash t~s = t'~s': \{s/y\}\tau_2$ \hfill by Semantic Backwards
  Closure for Computations (Lemma \ref{lem:bclosed}) \\
\mbox{\quad} \hfill and Symmetry (Lemma Prop. \ref{it:sym})}
}
 \end{proof}


\section{Validity in the Model}\label{sec:validity}
For normalization, we need to establish that well-typed terms are
logically related. In other words, we show that syntactically
well-typed terms are also semantically well-typed. However, as we
traverse syntactically well-typed terms, they do not remain
closed. Hence, we need to prove a generalization where we
show that every syntactically well-typed term in a context $\Gamma$ is
semantically well-typed in an extension of $\Gamma$.
As is customary, we extend our logical relation
to substitutions defining semantic substitutions which allow us to
move between $\Gamma$ and $\Gamma'$.

\[
  \begin{array}{l}
\fbox{$\Gamma' \Vdash \theta  = \theta': \Gamma$}\\
\infer{\Gamma' \Vdash \cdot = \cdot : \cdot}{\vdash \Gamma' }
\qquad
\infer{\Gamma' \Vdash \theta, t/x = \theta', t'/x : \Gamma, x{:}\ann\tau}
{
    \begin{array}{@{}ll@{}}
\Gamma' \Vdash \theta = \theta' : \Gamma &  \Gamma' \Vdash \{\theta\}\ann\tau = \{\theta'\}\ann\tau : u \\
\Gamma' \Vdash \{\theta\}\ann\tau : u & \Gamma' \Vdash t = t' : \{\theta\}\ann\tau
    \end{array}
}
  \end{array}
\]

Semantic substitutions are well-formed (i.e. they imply that substitutions are well-typed), stable under weakening and preserve equivalences. They are also reflexive, symmetric, and transitive. Further, given a valid context where each of the declarations is valid, we can always generate $\Gamma \Vdash \id(\Gamma) = \id(\Gamma) : \Gamma$, where $\id$ is the identity substitution.

\LONGVERSION{
\begin{lemma}[Context Satisfiability]\label{lem:ctxsat}
If $\models \Gamma$ then $\vdash \Gamma$ and $\Gamma \Vdash \id(\Gamma) = \id(\Gamma) : \Gamma$ where
\[
\begin{array}{lcl}
  \id(\cdot) & = & \cdot\\
  \id(\Gamma, x{:}\tau) & = & \id(\Gamma), x/x
\end{array}
\]
\end{lemma}
\begin{proof}
By induction on $\Gamma$ using  Neutral Soundness (Lemma \ref{lem:NeutSound}) and Semantic Weakening (Lemma \ref{lem:weakcsub}).
\LONGVERSIONCHECKED{By induction on $\Gamma$.\\[0.5em]
\pcase{$\Gamma = \cdot$}
\prf{$\vdash \cdot $ \hfill by rules}
\prf{$\id(\cdot) = \cdot$ \hfill by def. of $\id$}
\prf{$\Gamma' \Vdash \cdot = \cdot : \cdot$ \hfill by semantic def.}

\pcase{$\Gamma = \Gamma_0, x{:}\ann\tau$}
\prf{$\models \Gamma_0, x{:}\ann\tau$ \hfill by assumption}
\prf{$\models \Gamma_0$ and $\Gamma_0 \models \ann\tau : u$ \hfill by def. validity}
\prf{$\vdash \Gamma_0$ and $\Gamma_0 \Vdash \id(\Gamma_0) = \id(\Gamma_0) : \Gamma_0$ \hfill by IH}
\prf{$\forall \Gamma',~\theta,~\theta'.~\Gamma' \Vdash \theta = \theta' : \Gamma
\Longrightarrow \Gamma' \Vdash \{\theta\}\ann\tau  =  \{\theta\}\ann\tau : \{\theta\}u$  \hfill by def. validity}
\prf{$\Gamma_0 \Vdash \{\id(\Gamma_0)\}\ann\tau = \{\id(\Gamma_0)\}\ann\tau : \{\id(\Gamma_0)\}u$ \hfill by previous lines}
\prf{$\Gamma_0 \Vdash \ann\tau = \ann\tau : u$ \hfill by subst. def.}
\prf{$\Gamma_0 \vdash \ann\tau : u$ \hfill by semantic typing}
\prf{$\vdash \Gamma_0, x{:}\ann\tau$ \hfill by context def.}
\prf{$\neut x$ \hfill by def.}
\prf{$\Gamma_0, x{:}\ann\tau \vdash x : \ann\tau$ \hfill by typing rule}
\prf{$\Gamma_0, x{:}\ann\tau \vdash x \equiv x : \ann\tau$ \hfill by typing rule}
\prf{$\Gamma_0, x{:}\ann\tau \Vdash x = x : \ann\tau$ \hfill by Neutral Soundness Lemma \ref{lem:NeutSound}}
\prf{$\Gamma_0, x{:}\ann\tau \Vdash \id({\Gamma_0}) = \id({\Gamma_0}) : \Gamma_0$ \hfill by Sem. Weakening Lemma \ref{lem:weakcsub} }
\prf{$\Gamma_0, x{:}\ann\tau \Vdash \id({\Gamma_0}), x/x = \id({\Gamma_0}), x/x : \Gamma_0, x{:}\ann\tau$ \hfill by semantic def.}
\prf{$\Gamma_0, x{:}\ann\tau \Vdash \id({\Gamma_0, x{:}\tau}) = \id({\Gamma_0, x{:}\tau}) : \Gamma_0, x{:}\ann\tau$  \hfill by def. of $\id$}
}
\end{proof}
}

\LONGVERSION{
\begin{lemma}[Well-formedness of Semantic Substitutions]\label{lem:wfsemsub}
If $\Gamma' \Vdash \theta = \theta' : \Gamma$ then $\Gamma' \vdash \theta : \Gamma$ and $\Gamma' \vdash \theta' : \Gamma$ and $\Gamma' \vdash \theta \equiv \theta' : \Gamma$.
\end{lemma}
\begin{proof}
By induction on   $\Gamma' \Vdash \theta = \theta' : \Gamma$.
\end{proof}

\begin{lemma}[Semantic Weakening of Computation-level Substitutions]\label{lem:weakcsub}
If $\Gamma' \Vdash \theta = \theta' : \Gamma$ and $\Gamma'' \leq_\rho \Gamma'$
then $\Gamma'' \Vdash \{\rho\}\theta = \{\rho\}\theta' : \Gamma$.
\end{lemma}
\begin{proof}
By induction on $\Gamma' \Vdash \theta = \theta' : \Gamma$ and using semantic weakening lemma \ref{lem:semweak}.
\end{proof}

 \begin{lemma}[Semantic Substitution Preserves Equivalence]\label{lem:semsubst}
  Let $\Gamma' \sem \theta = \theta' : \Gamma$;
    \begin{enumerate}
    \item If\/$\Gamma ; \Psi \vdash M \equiv M : A$ then
      $\Gamma;  \{\theta\}\Psi \vdash \{\theta\}M \equiv \{\theta'\}M : \{\theta\}A$.
    \item If\/$\Gamma ; \Psi \vdash \sigma \equiv \sigma : \Phi$ then
      $\Gamma;  \{\theta\}\Psi \vdash \{\theta\}\sigma \equiv \{\theta'\}\sigma : \{\theta\}\Phi$.
    \item If\/ $\Gamma \vdash t \equiv t : \ann\tau$
        then $\Gamma' \vdash \{\theta\}t \equiv \{\theta'\}t : \{\theta\}\ann\tau$.
    \end{enumerate}
  \end{lemma}
  \begin{proof}
 By induction on $M$, $\sigma$ $\tau$, and $t$. The proof is mostly straightforward; in the case where $t = x$ we know by assumption that $t_x/x \in \theta$ and $t'_x/x \in \theta'$ where $\Gamma' \sem t_x = t'_x : \{\theta\}\tau_x$. From Well-formedness of semantic typing (Lemma \ref{lm:semwf}), we know that $\Gamma' \vdash t_x \equiv t'_x : \{\theta\}\tau_x$.
  \end{proof}

\begin{lemma}[Symmetry and Transitivity of Semantic Substitutions]\quad
  \label{lm:symsemsub}
  Assume $\models \Gamma$.
  \begin{enumerate}
  \item If $\Gamma' \Vdash \theta_1 = \theta_2 : \Gamma$ then
    $\Gamma'\Vdash \theta_2 = \theta_1: \Gamma$.
  \item If $\Gamma' \Vdash \theta_1 = \theta_2 : \Gamma$
    and $\Gamma' \Vdash \theta_2 = \theta_3 : \Gamma$
    then $\Gamma' \Vdash \theta_1 = \theta_3 : \Gamma$.
  \end{enumerate}
\end{lemma}
\begin{proof}
We prove symmetry by induction on the derivation and transitivity by induction on both derivations using Symmetry, Transitivity, and Conversion for semantic equality (Lemma \ref{lem:semsym});
; reflexivity follows from symmetry and transitivity.
\LONGVERSIONCHECKED{\\[1em]Symmetry:  By induction on derivation.
\\[1em]
\pcase{\ianc{}{\Gamma' \Vdash \cdot = \cdot : \cdot}{}}
\prf{$\Gamma' \Vdash \cdot = \cdot : \cdot$ \hfill by def.}
\\
\pcase{$\D = \ibnc{\Gamma' \Vdash \theta = \theta' : \Gamma}
            {\Gamma' \Vdash t = t' : \{\theta\}\ann\tau}
            {\Gamma' \Vdash \theta, t/x = \theta', t'/x : \Gamma, x{:}\ann\tau}{}$}
\prf{$\Gamma' \Vdash \theta' = \theta : \Gamma$ \hfill by IH}
\prf{$\Gamma' \Vdash t' = t : \{\theta\}\ann\tau$\hfill by Lemma \ref{lem:semsym} (Symmetry)}
\prf{$\Gamma' \Vdash \theta', t'/x = \theta, t/x : \Gamma, x{:}\ann\tau$ \hfill by rule}
\\
Transitivity: By induction on both derivations.\\[0.5em]
\pcase{$\theta_1= \cdot$, $\theta_2 = \cdot$ and $\theta_3 = \cdot$}
\prf{$\Gamma' \Vdash \cdot = \cdot : \cdot$ \hfill by def.}
\\
\pcase{$\theta_1= \theta_1', t_1/x$, $\theta_2 = \theta_2', t_2/x$ and $\theta_3 =\theta_3', t_3/x$
   and $\Gamma = \Gamma_0, x{:}\ann\tau$}
\prf{$\Gamma' \Vdash \theta_1' = \theta_2' : \Gamma_0$ \hfill by inversion}
\prf{$\Gamma' \Vdash \theta_2' = \theta_3' : \Gamma_0$ \hfill by inversion}
\prf{$\Gamma' \Vdash \theta_1' = \theta_3' : \Gamma_0$ \hfill by IH}
\prf{$\Gamma' \Vdash t_1 = t_2 : \{\theta'_1\}\ann\tau$ \hfill by inversion}
\prf{$\Gamma' \Vdash t_2 = t_3: \{\theta'_2\}\ann\tau$ \hfill by inversion}
\prf{$\Gamma' \Vdash \{\theta'_1\}\ann\tau = \{\theta'_2\}\ann\tau : u$\hfill by inversion}
\prf{$\Gamma' \Vdash \{\theta'_2\}\ann\tau = \{\theta'_3\}\ann\tau : u$\hfill by inversion}
\prf{$\Gamma' \Vdash \{\theta'_2\}\ann\tau = \{\theta'_1\}\ann\tau : u$\hfill by Lemma \ref{lem:semsym} (Symmetry)}
\prf{$\Gamma' \Vdash t_2 = t_3 : \{\theta'_1\}\ann\tau$ \hfill by Lemma \ref{lem:semsym} (Conversion)}
\prf{$\Gamma' \Vdash t_1 = t_3: \{\theta'_1\}\ann\tau$ \hfill by Lemma \ref{lem:semsym} (Transitivity)}
\prf{$\Gamma' \Vdash \{\theta'_1\}\ann\tau = \{\theta'_3\}\ann\tau$ \hfill by Lemma \ref{lem:semsym} (Transitivity)}
\prf{$\Gamma' \Vdash \theta_1 = \theta_3 : \Gamma$ \hfill by rule}
}
\end{proof}

}
Last, we define validity of LF objects, types, and terms (Fig.~\ref{fig:valid}). Our notion
of validity generalizes our definition of semantic typing and
equality. Intuitively, we say that a term $t$ is valid, if for any
semantic substitution $\theta$, $\{\theta\}t$  is semantically well-typed.
 This allows us to define compactly the fundamental lemma which now
 states that well typed terms correspond to valid terms in our model.

Note that we do not work directly with semantically well-typed terms. Instead
we say that a term is semantically well-typed, if it is semantically
equal to itself. Our definition of validity is built on the same
idea. Concretely, we say that two terms $t$ and $t'$ are equal in our
model, i.e.~$\Gamma \models t = t' : \ann\tau$,
if for all semantically equal substitutions $\theta$ and $\theta'$,
we have that $\{\theta\}t$ and
$\{\theta'\}t$ are semantically equal. Our definition of validity is
symmetric and transitive.

\begin{figure}[htb]
  \centering\small
\[
  \begin{array}{l}
\LONGVERSION{
\mbox{Validity of context}: \fbox{$\models \Gamma$}\quad
\raisebox{-1.5ex}{
\infer{\models \cdot}{}
\quad
\infer{\models \Gamma, x:\ann\tau}{\models \Gamma & \Gamma \models \ann\tau : u}}
\\[1.25em]}
\mbox{Validity of LF objects}: \fbox{$\Gamma ; \Psi \models M = N : A$} ~\mbox{where}~\models \Gamma\\[1em]
\infer{\Gamma ; \Psi \models M = N : A}
{
                    \begin{array}{@{}r@{}l@{}}
 \forall \Gamma',\theta,\theta'.& \Gamma' \Vdash \theta = \theta' : \Gamma \\
& \Longrightarrow \Gamma' ; \{\theta\}\Psi \Vdash \{\theta\}M  = \{\theta'\}N : \{\theta\}A
                    \end{array}}
\\[1em]
\mbox{Validity of LF substitutions}: \fbox{$\Gamma ; \Psi \models \sigma = \sigma' : \Phi$} ~\mbox{where}~\models \Gamma \\[1em]
\infer{\Gamma ; \Psi \models \sigma = \sigma' : \Phi}
{
                    \begin{array}{@{}r@{}l@{}}
 \forall \Gamma',~\theta,~\theta'.&\Gamma' \Vdash \theta = \theta' :\Gamma \\
 & \Longrightarrow \Gamma' ; \{\theta\}\Psi \Vdash \{\theta\}\sigma_1  = \{\theta'\}\sigma' : \{\theta\}\Phi
                    \end{array}
}
\\[1em]
\mbox{Validity of types}: \fbox{$\Gamma \models \ann\tau = \ann\tau':u$} \quad\mbox{and}\quad \fbox{$\Gamma \models \ann\tau : u$}
\\[1em]
\infer{\Gamma \models \ann\tau = \ann\tau' : u}
      {
    \begin{array}{@{}r@{}l@{}}
       \forall \Gamma',~\theta,~\theta'. & \Gamma' \Vdash \theta = \theta' : \Gamma \\
       & \Longrightarrow \Gamma' \Vdash \{\theta\}\ann\tau = \{\theta'\}\ann\tau' : u
    \end{array}
}
\qquad
\infer{\Gamma \models \ann\tau : u}{\Gamma \models \ann\tau = \ann\tau : u}
\\[1em]
\mbox{Validity of terms}: \fbox{$\Gamma \models t = t' : \ann\tau$} \quad\mbox{and}\quad \fbox{$\Gamma \models t : \ann\tau$}
\\[1em]
\infer{\Gamma \models t = t' : \ann\tau}
      {
     \begin{array}{@{}l@{}}
\models \Gamma \quad \forall \Gamma',\theta,\theta'. \Gamma' \Vdash \theta = \theta' : \Gamma
\\
\Gamma \models \ann\tau : u\quad\quad
\Longrightarrow \Gamma' \Vdash \{\theta\}t  =\{\theta'\}t' : \{\theta\}\ann\tau
   \end{array}
}
\quad
\infer{\Gamma \models t : \ann\tau}
{\Gamma \models t = t : \ann\tau}
  \end{array}
\]

  \caption{Validity Definition}\label{fig:valid}
\end{figure}

\LONGVERSION{
\begin{lemma}[Symmetry and Transitivity of Validity]\quad
  \begin{enumerate}
  \item If $\Gamma \models t = t' : \ann\tau$ then $\Gamma \models t' = t : \ann\tau$.
  \item If $\Gamma \models t_1 = t_2 : \ann\tau$ and $\Gamma \models t_2 =  t_3 : \ann\tau$
    then $\Gamma \models t_1 = t_3 : \ann\tau$.
  \end{enumerate}
\end{lemma}
\begin{proof}
Using Lemma \ref{lem:ctxsat} (Context Satisfiability), Lemma \ref{lm:symsemsub} (Symmetry and Transitivity for Substitutions),  Lemma \ref{lem:semsym} (Symmetry and Transitivity for Terms),  and Lemma \ref{lem:semsym} (Conversion).
\LONGVERSIONCHECKED{
\\[1em]
    \pcase{$\ibnc{\models \Gamma}
                {\Gamma \models \ann\tau : u \qquad
                 \forall \Gamma',~\theta,~\theta'.~\Gamma' \Vdash \theta = \theta' : \Gamma
         \Longrightarrow \Gamma' \Vdash \{\theta\}t  = \{\theta'\}t' : \{\theta\}\ann\tau}{\Gamma \models t = t' : \ann\tau}{}$}
\\
\prf{Assume $\Gamma' \Vdash \theta' = \theta : \Gamma$}
     \prf{$\Gamma' \Vdash \theta = \theta' : \Gamma$ \hfill by Lemma \ref{lm:symsemsub} (Symmetry)}
    \prf{$\Gamma' \Vdash \{\theta\}t  = \{\theta'\}t' : \{\theta\}\ann\tau$ \hfill by assumption $\Gamma \Vdash t = t' : \ann\tau$}
    \prf{$\Gamma' \Vdash \{\theta\}t'  = \{\theta'\}t : \{\theta\}\ann\tau$ \hfill by Lemma \ref{lem:semsym} (Symmetry)}
    \prf{$\models \Gamma$ \hfill by assumption}
    \prf{$\Gamma \Vdash \id(\Gamma) = \id(\Gamma) : \Gamma$ \hfill by Lemma \ref{lem:ctxsat}}
    \prf{$\Gamma \models \ann\tau : u$ \hfill by assumption}
    \prf{$\Gamma \models \ann\tau = \ann\tau : u$ \hfill by def.}
    \prf{$\Gamma' \Vdash \{\theta'\}\ann\tau = \{\theta\}\ann\tau : u$\hfill by $\Gamma \models \ann\tau = \ann\tau : u$}
    \prf{$\Gamma' \Vdash \{\theta'\}t'  = \{\theta\}t : \{\theta'\}\ann\tau$ \hfill by Lemma \ref{lem:semsym} (Conversion)}
    \prf{$\Gamma \models t' = t : \ann\tau$ \hfill by rule}
\\
\pcase{$\ibnc{\models \Gamma}{
             \Gamma \models \ann\tau : u \quad
             \forall \Gamma',~\theta,~\theta'.~\Gamma' \Vdash \theta = \theta' : \Gamma
\Longrightarrow \Gamma' \Vdash \{\theta\}t_1  =  \{\theta'\}t_2 : \{\theta\}\ann\tau}
           {\Gamma \models t_1 = t_2 : \ann\tau}{}$ and}
\pcase{$\ibnc{\models \Gamma}{
             \Gamma \models \ann\tau : u \quad
             \forall \Gamma',~\theta,~\theta'.~\Gamma' \Vdash \theta = \theta' : \Gamma
\Longrightarrow \Gamma' \Vdash \{\theta\}t_2  =   \{\theta'\}t_3 : \{\theta\}\ann\tau}{\Gamma \models t_2 = t_3 : \ann\tau}{}$}
\prf{Assume $\Gamma' \Vdash \theta = \theta' : \Gamma$}
\prf{$\Gamma' \Vdash \theta = \theta : \Gamma$ \hfill by symmetry and transitivity of substitution (Lemma \ref{lm:symsemsub})}
\prf{$\Gamma' \Vdash \{\theta\}t_1  = \{\theta\}t_2 : \{\theta\}\ann\tau$ \hfill by $\Gamma \models t_1 = t_2 : \ann\tau$}
\prf{$\Gamma' \Vdash \{\theta\}t_2  = \{\theta'\}t_3 : \{\theta\}\ann\tau$ \hfill by $\Gamma \models t_2 = t_3 : \ann\tau$}
\prf{$\Gamma' \Vdash \{\theta\}t_1  = \{\theta'\}t_3 : \{\theta\}\ann\tau$ \hfill by Lemma \ref{lem:semsym} (Transitivity)}
\prf{$\Gamma \models t_1 = t_3 : \ann\tau$ \hfill by rule}
}
\end{proof}}

\begin{lemma}[Function Type Injectivity Is Valid]
If\\ $\Gamma \models {(y:\ann\tau_1) \arrow \tau_2} = {(y:\ann\tau_1') \arrow \tau'_2} : u_3$, then
{$\Gamma\! \models \ann\tau_1 = \ann\tau_1' \!:\! u_1$} and
$\Gamma, y{:}\ann\tau_1 \models \tau_2 = \tau'_2 : u_2$ and \mbox{$(u_1,~u_2,~u_3) \in \Ru$}.
\end{lemma}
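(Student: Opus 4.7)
The plan is to unfold the definition of validity, instantiate the assumed semantic equality of function types at arbitrary semantic substitutions, and then read off the two desired validity statements together with the PTS-rule membership $(u_1,u_2,u_3) \in \Ru$.

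First, I would derive $(u_1,u_2,u_3)\in\Ru$ and set things up. Validity requires $\models \Gamma$, so by Context Satisfiability (Lemma~\ref{lem:ctxsat}) we have $\Gamma \Vdash \id(\Gamma) = \id(\Gamma) : \Gamma$. Instantiating the hypothesis with this identity semantic substitution yields
\[
\Gamma \Vdash (y{:}\ann\tau_1)\arrow\tau_2 = (y{:}\ann\tau_1')\arrow\tau_2' : u_3.
\]
Inspecting the clause for function types in the definition of semantic equality (Fig.~\ref{fig:semcomp}) gives immediately $(u_1,u_2,u_3) \in \Ru$, along with the quantified domain and codomain clauses for any renaming $\rho$.

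Next, to prove $\Gamma \models \ann\tau_1 = \ann\tau_1' : u_1$, assume an arbitrary $\Gamma' \Vdash \theta = \theta' : \Gamma$. By the hypothesis, $\Gamma' \Vdash \{\theta\}((y{:}\ann\tau_1)\arrow\tau_2) = \{\theta'\}((y{:}\ann\tau_1')\arrow\tau_2') : u_3$, and by the substitution definition this equals $\Gamma' \Vdash (y{:}\{\theta\}\ann\tau_1)\arrow\{\theta\}\tau_2 = (y{:}\{\theta'\}\ann\tau_1')\arrow\{\theta'\}\tau_2' : u_3$. Unfolding the function-type clause with $\rho = \id$ at $\Gamma'$ yields $\Gamma' \Vdash \{\theta\}\ann\tau_1 = \{\theta'\}\ann\tau_1' : u_1$, which is precisely what validity of $\Gamma \models \ann\tau_1 = \ann\tau_1' : u_1$ requires.

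For the codomain statement $\Gamma, y{:}\ann\tau_1 \models \tau_2 = \tau_2' : u_2$, I first note $\models \Gamma, y{:}\ann\tau_1$ follows from the domain validity just established. Now take any $\Gamma' \Vdash \theta_0 = \theta_0' : \Gamma, y{:}\ann\tau_1$; by definition of semantic substitution this decomposes as $\theta_0 = \theta, s/y$ and $\theta_0' = \theta', s'/y$ with $\Gamma' \Vdash \theta = \theta' : \Gamma$ and $\Gamma' \Vdash s = s' : \{\theta\}\ann\tau_1$. Applying the hypothesis at $\theta,\theta'$ and then the codomain clause of the function-type equality with $\rho = \id$ and these $s,s'$ gives $\Gamma' \Vdash \{\theta,s/y\}\tau_2 = \{\theta',s'/y\}\tau_2' : u_2$, i.e.\ $\Gamma' \Vdash \{\theta_0\}\tau_2 = \{\theta_0'\}\tau_2' : u_2$, as required. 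The main bookkeeping obstacle is simply matching the quantifier structure of validity (universally quantified over $\theta,\theta'$) with the Kripke quantifier structure in the semantic-equality clause (universally quantified over $\rho$ and $s,s'$); the only nontrivial use of earlier lemmas is Context Satisfiability to license the instantiation that yields $(u_1,u_2,u_3)\in \Ru$.
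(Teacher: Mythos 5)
Your proposal is correct and follows essentially the same route as the paper's proof: unfold the validity definition, instantiate the hypothesis at an arbitrary semantic substitution $\Gamma' \Vdash \theta = \theta' : \Gamma$ (resp.\ its extension by $s/y$, $s'/y$), and then specialize the Kripke quantifier in the function-type clause of semantic equality to $\rho = \id$ to read off the domain and codomain equalities together with $(u_1,u_2,u_3) \in \Ru$. The only cosmetic difference is that you invoke Context Satisfiability explicitly to license the initial instantiation, which the paper leaves implicit.
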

\begin{proof}
Proof by unfolding the semantic definitions.
\LONGVERSIONCHECKED{
\quad\\[0.5em]
\prf{$\Gamma \models (y:\ann\tau_1) \arrow \tau_2 = (y:\ann\tau_1') \arrow \tau'_2 : u_3$ \hfill by assumption}
\prf{$\models \Gamma$  \hfill}
\prf{$ \forall \Gamma',~\theta,~\theta'.~\Gamma' \Vdash \theta = \theta' : \Gamma
     \Longrightarrow \Gamma' \Vdash \{\theta\}(y:\ann\tau_1) \arrow \tau_2  =
                                     \{\theta'\}(y:\ann\tau_1') \arrow \tau'_2  : \{\theta\}u_3$ \hfill by def. of validity}
\\[0.15em]
\prf{To prove: $\Gamma \models \ann\tau_1 = \ann\tau_1' : u_1$ }
\prf{Assume an arbitrary $\Gamma' \Vdash \theta = \theta' : \Gamma$.}
\prf{$\Gamma' \Vdash \{\theta\}(y:\ann\tau_1) \arrow \tau_2  =
                    \{\theta'\}(y:\ann\tau_1') \arrow \tau'_2  : u_3$ \hfill by previous lines and $\{\theta\}u_3 = u_3$}
\prf{$\Gamma' \Vdash (y: \{\theta\}\ann\tau_1) \arrow \{\theta,~y/y\}\tau_2  =
                     (y: \{\theta'\}\ann\tau_1') \arrow \{\theta', y/y\}\tau'_2  : u_3$ \hfill by subst. def.}
\prf{$(y: \{\theta'\}\ann\tau_1') \arrow \{\theta',~y/y\}\tau'_2  \whnf
  (y: \{\theta'\}\ann\tau_1') \arrow \{\theta',~y/y\}\tau'_2$ \hfill by
  unfolding semantic def. and $\whnf$\\
\mbox{\quad}\hfill since $\norm ((y: \{\theta'\}\ann\tau_1') \arrow \{\theta',~y/y\}\tau'_2 )$}
\\[0.15em]
\prf{$\forall \Gamma_0 \leq_\rho \Gamma'.~
     \Gamma_0 \Vdash  \{\rho\}\{\theta\}\ann\tau_1 = \{\rho\}\{\theta\}\ann\tau_1' : u_1$ \hfill by sem. def.}
\prf{$\Gamma' \Vdash \{\theta\}\ann\tau_1 = \{\theta\}\ann\tau_1':u_1$
   \hfill choosing $\Gamma_0 = \Gamma'$ and $\rho = \id(\Gamma')$ and the fact that $\{\id(\Gamma')\}\theta = \theta$}
\prf{$\forall \Gamma', \theta, \theta'.~\Gamma' \Vdash \theta = \theta' : \Gamma \Longrightarrow \Gamma' \Vdash \{\theta\}\ann\tau_1 = \{\theta'\}\ann\tau_1' : u_1 $ \hfill previous lines }
\prf{$\Gamma \models \ann\tau_1 = \ann\tau_1' : u_1$ \hfill def. of validity ($\models$), since $\Gamma', \theta, \theta'$ arbitrary}
\\[0.15em]
\prf{To prove: $\Gamma, y{:}\ann\tau_1 \models \tau_2 = \tau'_2 : u_2$ }
\prf{Assume an arbitrary $\Gamma' \Vdash \theta_2 = \theta'_2 : \Gamma, y{:}\ann\tau_1$.}
\prf{$\theta_2 = \theta, s/y$ and $\theta'_2 = \theta', s'/y$ \\
 $\Gamma' \Vdash \theta = \theta' : \Gamma$ and
 $\Gamma' \Vdash s = s' : \{\theta\}\ann\tau_1$ \hfill by inversion on  $\Gamma' \Vdash \theta_2 = \theta'_2 : \Gamma, y{:}\ann\tau_1$}
\prf{$\forall \Gamma_0 \leq_\rho \Gamma'.
~\Gamma_0 \Vdash s = s' : \{\rho\}\{\theta\}\ann\tau_1  \Longrightarrow
 \Gamma' \Vdash \{\rho, s/y\}\{\theta, y/y\}\tau_2 = \{\rho,s'/y\}\{\theta',y/y\}\tau'_2$ \hfill by sem. def. }
\prf{$\Gamma' \Vdash s = s' : \{\theta\}\ann\tau_1 \Longrightarrow
     \Gamma' \Vdash \{s/y\}\{\theta,y/y\}\tau_2 = \{s'/y\}\{\theta',y/y\}\tau'_2$
  \hfill by choosing $ \rho = \id(\Gamma')$}
\prf{$\Gamma' \Vdash s = s' : \{\theta\}\ann\tau_1 \Longrightarrow
     \Gamma' \Vdash \{\theta,s/y\}\tau_2 = \{\theta',s'/y\}\tau'_2$ \hfill by subst. def.}
\prf{$\Gamma' \Vdash \{\theta, s/y\}\tau_2 = \{\theta', s'/y\}\tau'_2$ \hfill by previous line}
\prf{$\Gamma' \Vdash \{\theta_2\}\tau_2 = \{\theta'_2\}\tau'_2 $ \hfill by previous lines}
\prf{$\Gamma , y{:}\ann\tau_1 \models \tau_2 = \tau'_2 : u_2$ \hfill by def. of validity, since $\Gamma', \theta, \theta'$ arbitrary}
}
\end{proof}

\LONGVERSION{
\begin{theorem}[Completeness of Validity]\label{lem:SemComplete}
If $\Gamma \models t = t' : \ann\tau$ then $\Gamma \vdash t : \ann\tau$ and
$\Gamma \vdash t' : \ann\tau$ and
$\Gamma \vdash t \equiv t' : \ann\tau$ and
$\Gamma \vdash \ann\tau : u$.
\end{theorem}
\begin{proof}
Unfolding of validity definition relying on context satisfiability
(Lemma \ref{lem:ctxsat}) and Well-Formedness of Semantic Typing (Lemma \ref{lm:semwf}).
\LONGVERSIONCHECKED{\\[0.5em]\quad
\prf{$\Gamma \models t = t' : \ann\tau$ \hfill by assumption}
\prf{$\models \Gamma$ \hfill by validity definition    }
\prf{$\vdash \Gamma$ and $\Gamma \Vdash \id(\Gamma) = \id(\Gamma) : \Gamma$ \hfill by Lemma \ref{lem:ctxsat}}
\prf{$\forall \Gamma',~\theta,~\theta'.~\Gamma' \Vdash \theta = \theta' :
                  \Gamma \Longrightarrow \Gamma' \Vdash \{\theta\}t  =
                  \{\theta\}t' : \{\theta\}\ann\tau$ \hfill by validity definition    }
\prf{$\Gamma \Vdash \{\id(\Gamma)\}t = \{\id(\Gamma)\}t' : \{\id(\Gamma)\}\ann\tau$ \hfill by previous lines }
\prf{$\Gamma \Vdash t = t' : \ann\tau$ \hfill by subst. def. }
\prf{$\Gamma \vdash t' : \ann\tau$,~ $\Gamma \vdash t: \ann\tau$,~
$\Gamma \vdash t \equiv t' : \ann\tau$ and $\Gamma \vdash \ann\tau : u$ \hfill
    by Well-Formedness of Seman. Typ. (Lemma \ref{lm:semwf})}
}
\end{proof}
}


The fundamental lemma (Lemma \ref{lm:fundtheo}) states that well-typed
terms are valid. The proof proceeds by mutual induction on the typing derivation for LF-objects and computations. It relies on the validity of type conversion, computation-level functions, applications, and recursion. To establish these properties, we require symmetry, transitivity of semantic equality, and semantic type conversion (Lemma \ref{lem:semsym}).

\LONGVERSION{
\begin{lemma}[Validity of Type Conversion]\label{lem:validtypconv}
If $\Gamma \models \ann\tau = \ann\tau' : u$ and $\Gamma \models t : \ann\tau$ then $\Gamma \models t : \ann\tau'$.
\end{lemma}
\begin{proof}
By definition relying on semantic type conversion lemma (Lemma \ref{lem:semsym} (\ref{it:conv})).
 \LONGVERSIONCHECKED{
\quad\\[0.5em]
\prf{$\Gamma \models t : \ann\tau$ \hfill by assumption }
\prf{$\Gamma \models t = t : \ann\tau$ \hfill by validity def.}
\prf{Assume $\Gamma' \Vdash \theta = \theta' : \Gamma$}
\prf{$\Gamma' \Vdash \{\theta\}t = \{\theta'\}t:\{\theta\}\ann\tau$ \hfill by validity def. $\Gamma \models t = t : \ann\tau$}
\prf{$\Gamma \Vdash \{\theta\}\ann\tau = \{\theta'\}\ann\tau' : u$ \hfill by validity def. $\Gamma \models \ann\tau = \ann\tau' : u$}
\prf{$\Gamma \Vdash \{\theta\}t = \{\theta'\} : \{\theta\}\ann\tau'$ \hfill by Lemma \ref{lem:semsym} (Conversion)}
\prf{$\Gamma \models t = t : \ann\tau'$ \hfill since $\Gamma', \theta,\theta'$ arbitrary}
}
\end{proof}

\begin{lemma}[Validity of Functions]\label{lem:ValidFun}
If  $\Gamma, y{:}\ann\tau_1 \models t : \tau_2$ then $\Gamma \models \tmfn y t : (y:\ann\tau_1) \arrow \tau_2$.
\end{lemma}
\begin{proof}
We unfold the validity definitions, relying on 
completeness of validity (Lemma \ref{lem:SemComplete}),
semantic weakening of computation-level substitutions (Lemma \ref{lem:weakcsub}),
Well-formedness Lemma \ref{lm:semwf},  Backwards Closure Lemma \ref{lem:bclosed},  Symmetry property of semantic equality (Lemma \ref{lem:semsym}).
\LONGVERSIONCHECKED{
\quad\\[0.5em]
\prf{$\Gamma, y{:}\ann\tau_1 \models t : \tau_2$ \hfill by assumption}
\prf{$\Gamma, y:\ann\tau_1 \models t = t : \tau_2$ \hfill by def. validity}
\prf{$\forall \Gamma',~\theta,~\theta'.
~\Gamma' \Vdash \theta = \theta' :\Gamma, y{:}\ann\tau_1
\Longrightarrow
\Gamma' \Vdash \{\theta\}t  = \{\theta'\}t : \{\theta\}\tau_2$ \hfill by def. of validity}
\prf{
$\Gamma, y:\ann\tau_1 \models \tau_2 : u$ \hfill by def. of validity\\
$\Gamma, y:\ann\tau_1 \models \tau_2 = \tau_2 : u$ \hfill by inversion on validity\\
$\models \Gamma, y:\ann\tau_1$ \hfill by inversion on validity\\
$\models \Gamma$ \hfill by inversion on validity\\
$\Gamma \models \ann\tau_1 : u$ \hfill by inversion on validity \\
$\Gamma \models \ann\tau_1 = \ann\tau_1 : u$ \hfill by inversion on validity
\\[-0.5em]
}
TO SHOW:
  \begin{enumerate}
  \item $\models \Gamma$
  \item $\Gamma \models (y:\ann\tau_1) \arrow \tau_2 : u$, i.e. \\
$\forall \Gamma', ~\theta,~\theta'. ~\Gamma' \Vdash \theta = \theta' :\Gamma \Longrightarrow \Gamma' \Vdash \{\theta\}((y:\ann\tau_1) \arrow \tau_2) = \{\theta'\}((y:\ann\tau_1) \arrow \tau_2) : u$
  \item $\forall \Gamma', ~\theta,~\theta'. ~
    \Gamma' \Vdash \theta = \theta' :\Gamma
    \Longrightarrow
    \Gamma' \Vdash \{\theta\}(\tmfn y t) = \{\theta'\}(\tmfn y t) : \{\theta\}((y:\ann\tau_1) \arrow \tau_2)$
  \end{enumerate}
\prf{~\\[0.5em](1) SHOW: $\models \Gamma$}
\prf{$\models \Gamma, y{:}\ann\tau_1$ \hfill by assumption $\Gamma, y:\ann\tau_1 \models t = t : \tau_2$}
\prf{$\models \Gamma$ \hfill by inversion on $\models \Gamma, y{:}\ann\tau_1$\\[-0.5em]}
\prf{(2) SHOW: $\forall \Gamma' \Vdash \theta = \theta' :\Gamma \Longrightarrow \Gamma' \Vdash \{\theta\}((y:\ann\tau_1) \arrow \tau_2) = \{\theta'\}((y:\ann\tau_1) \arrow \tau_2) : u$}
\prf{Assume $\Gamma' \Vdash \theta = \theta' :\Gamma$ \\[-0.75em]}
\prf{(2.a) SHOW: $\forall \Gamma'' \leq_\rho \Gamma'. \Gamma'' \Vdash s = s' : \{\rho\}\{\theta\}\ann\tau_1 \Longrightarrow
 \Gamma'' \Vdash \{\{\rho\}\theta, s/y\}\tau_2 = \{\{\rho\}\theta', s'/y\}\tau_2 : u$}
\prf{\mbox{$\quad$}Assume $\Gamma'' \leq_\rho \Gamma'. \Gamma'' \Vdash s = s' : \{\rho\}\{\theta\}\ann\tau_1$}
\prf{\mbox{$\quad$}$\Gamma'' \Vdash \{\rho\}\theta = \{\rho\}\theta' : \Gamma$ \hfill by weakening sem. subst. (Lemma \ref{lem:weakcsub})}
\prf{\mbox{$\quad$}$\Gamma'' \Vdash \{\rho\}\theta, s/y = \{\rho\}\theta', s'/y : \Gamma, y{:}\ann\tau_1$ \hfill by sem. def.}
\prf{\mbox{$\quad$}$\Gamma'' \Vdash \{\{\rho\}\theta, s/y\}\tau_2 = \{\{\rho\}\theta', s'/y\}\tau_2 : u$ \hfill by def. of $\Gamma, y:\ann\tau_1 \models \tau_2 = \tau_2 : u$ \\[-0.5em]}
%
\prf{(2.b) SHOW: $\forall \Gamma'' \leq_\rho \Gamma'. \Gamma'' \Vdash \{\rho\}\{\theta\}\ann\tau_1 = \{\rho\}\{\theta'\}\ann\tau_1 : u$\\}
\prf{\mbox{$\quad$}$\Gamma'' \Vdash \{\rho\}\theta = \{\rho\}\theta' : \Gamma$ \hfill by weakening sem. subst. (Lemma \ref{lem:weakcsub})}
\prf{\mbox{$\quad$}$\Gamma'' \Vdash \{\rho\}\{\theta\}\ann\tau_1 = \{\rho\}\{\theta'\}\ann\tau_1 : u$ \hfill by def. of $\Gamma \models \ann\tau_1 = \ann\tau_1 : u$\\[-0.5em] }
\prf{(2.c) SHOW: $\Gamma' \vdash \{\theta'\}((y:\ann\tau_1) \arrow \tau_2) \whnf \{\theta'\}((y:\ann\tau_1) \arrow \tau_2) :u$}
\prf{\mbox{$\quad$}$\Gamma \vdash (y:\ann\tau_1) \arrow \tau_2 : u$ \hfill by Completeness of Validity (Lemma \ref{lem:SemComplete})\\
\mbox{\hspace{1cm}}\hfill (using validity of functions which we show under (3))}
\prf{\mbox{$\quad$}$\Gamma' \vdash \theta : \Gamma$ and $\Gamma' \vdash \theta' : \Gamma$ \hfill by Well-formedness of semantic subst. (Lemma \ref{lem:wfsemsub})}
\prf{\mbox{$\quad$}$\Gamma' \vdash \{\theta'\}((y:\ann\tau_1) \arrow \tau_2) : u$ \hfill by computation-level subst. lemma (Lemma \ref{lm:compsub} )}
\prf{\mbox{$\quad$}$\norm ((y:\{\theta'\}\ann\tau_1) \arrow \{\theta', y/y\}\tau_2)$ \hfill by def. and subst. prop.\\}
\prf{(3) SHOW: $\forall \Gamma', ~\theta,~\theta'. ~
    \Gamma' \Vdash \theta = \theta' :\Gamma
    \Longrightarrow
    \Gamma' \Vdash \{\theta\}(\tmfn y t) = \{\theta'\}(\tmfn y t) : \{\theta\}((y:\ann\tau_1) \arrow \tau_2)$}
\prf{Assume $\Gamma' \Vdash \theta = \theta' :\Gamma$; Showing: $\Gamma' \Vdash \{\theta\}(\tmfn y t) = \{\theta'\}(\tmfn y t) : \{\theta\}((y:\ann\tau_1) \arrow \tau_2)$ \\[-0.75em]}
\prf{(3.a) SHOW: $\Gamma' \vdash \{\theta\}(\tmfn y t) \whnf w : \{\theta\}((y:\ann\tau_1) \arrow \tau_2)$ \\\mbox{$\quad\qquad$}and
                 $\Gamma' \vdash \{\theta'\}(\tmfn y t) \whnf w' : \{\theta\}((y:\ann\tau_1) \arrow \tau_2)$ }
\prf{\mbox{$\quad$}$\tmfn y \{\theta,y/y\} t \whnf w$ \hfill where $w = \tmfn y \{\theta,y/y\}t$ since $\norm (\tmfn y \{\theta,y/y\}t)$}
\prf{\mbox{$\quad$}$\tmfn y \{\theta',y/y\} t \whnf w'$ \hfill where $w' = \tmfn y \{\theta',y/y\}t$ since $\norm (\tmfn y \{\theta',y/y\}t)$}
\prf{\mbox{$\quad$}$\Gamma, y{:}\ann\tau_1 \vdash t : \tau_2$ \hfill by Well-formedness of semantic typing (Lemma \ref{lm:semwf})}
\prf{\mbox{$\quad$}$\Gamma' \vdash \theta : \Gamma$ and $\Gamma' \vdash \theta' : \Gamma$ \hfill by Well-formedness of semantic subst. (Lemma \ref{lem:wfsemsub})}
\prf{\mbox{$\quad$}$\Gamma', y{:}\{\theta\}\ann\tau_1 \vdash \theta, y/y : \Gamma, y{:}\ann\tau_1$
     and $\Gamma', y{:}\{\theta'\}\ann\tau_1 \vdash \theta', y/y : \Gamma, y{:}\ann\tau_1'$  \hfill by comp. subst.}
\prf{\mbox{$\quad$}$\Gamma', y{:}\{\theta\}\tau_1 \vdash \{\theta, y/y\} t : \{\theta, y/y\}\tau_2$ and \\
     \mbox{$\quad$}$\Gamma', y{:}\{\theta'\}\tau_1 \vdash \{\theta', y/y\} t : \{\theta', y/y\}\tau_2$ \hfill by computation-level subst. lemma (Lemma \ref{lm:compsub})}
\prf{\mbox{$\quad$}$\Gamma' \vdash \{\theta\}(\tmfn y t) : \{\theta\}((y:\ann\tau_1) \arrow \tau_2)$ and\\
     \mbox{$\quad$}$\Gamma' \vdash \{\theta'\}(\tmfn y t) : \{\theta'\}((y:\ann\tau_1) \arrow \tau_2)$ \hfill by typing rule}
\prf{\mbox{$\quad$}$\Gamma' \vdash \{\theta\}((y:\ann\tau_1) \arrow \tau_2) = \{\theta'\}((y:\ann\tau_1) \arrow \tau_2) : u$ \hfill by (2)}
\prf{\mbox{$\quad$}$\Gamma' \vdash \{\theta'\}(\tmfn y t) : \{\theta\}((y:\ann\tau_1) \arrow \tau_2)$ \hfill Conversion (Lemma \ref{lem:semsym}(\ref{it:conv}))\\[0.5em]}
\prf{(3.b) SHOW: $\forall \Gamma'' \leq_\rho \Gamma'. \Gamma'' \Vdash s = s' : \{\rho\}\{\theta\}\ann\tau_1 \Longrightarrow
  \Gamma'' \Vdash \{\rho\}w~s = \{\rho\}w'~s' : \{\rho, s/y\}\{\theta, y/y\}\tau_2$}
\prf{\mbox{$\quad$}Assume $ \Gamma'' \leq_\rho \Gamma'. \Gamma'' \Vdash s = s' : \{\rho\}\{\theta\}\ann\tau_1$}
\prf{\mbox{$\quad$}$\Gamma'' \Vdash \{\rho\}\theta = \{\rho\}\theta' : \Gamma$ \hfill by weakening sem. subst. (Lemma \ref{lem:weakcsub})}
\prf{\mbox{$\quad$}$\Gamma'' \Vdash \{\rho\}\theta, s/y = \{\rho\}\theta', s'/y : \Gamma, y{:}\ann\tau_1$ \hfill by sem. def.}
\prf{\mbox{$\quad$}$\Gamma'' \Vdash \{\{\rho\}\theta, s/y\}t = \{\{\rho\}\theta', s'/y\}t :  \{\{\rho\}\theta, s/y\}\tau_2$ \hfill using $\Gamma, y:\ann\tau_1 \models t = t : \tau_2$}
\prf{\mbox{$\quad$}$\Gamma'' \Vdash  \{\{\rho\}\theta, s/y\}t \whnf v :  \{\{\rho\}\theta, s/y\}\tau_2$ \\
\mbox{$\quad$} $\Gamma''\Vdash \{\{\rho\}\theta', s'/y\}t \whnf v':  \{\{\rho\}\theta, s/y\}\tau_2$
\hfill sem. definition $\Vdash$}
\prf{\mbox{$\quad$}$\Gamma'' \Vdash \{\rho\}\{\theta\}\tmfn y t \whnf \{\rho\}\{\theta\}\tmfn y t : \{\rho\}\{\theta\}((y:\ann\tau_1) \arrow \tau_2)$ \hfill  since $\norm (\{\rho\}\{\theta\}\tmfn y t)$ \\
\mbox{$\quad$}\hfill and $\Gamma'' \vdash  \tmfn y \{\{\rho\}\theta, y/y\}t : \{\rho\}\{\theta\}((y:\ann\tau_1) \arrow \tau_2)$}
\prf{\mbox{$\quad$}$\Gamma'' \Vdash (\tmfn y  \{\{\rho\}\theta,y/y\}t)~s \whnf v : \{\{\rho\}\theta, s/y\}\tau_2$
\hfill by rules (typing and $\whnf$)}
\prf{\mbox{$\quad$}$\Gamma'' \Vdash (\tmfn y \{\{\rho\}\theta,y/y\}t) ~s =
       \{\{\rho\}\theta', s'/y\}t : \{\{\rho\}\theta, s/y\}\tau_2$ \hfill by Backwards Closed (Lemma \ref{lem:bclosed})
}
\prf{\mbox{$\quad$} $\Gamma'' \Vdash \{\{\rho\}\theta, s/y\}\tau_2 = \{\{\rho\}\theta', s'/y\}\tau_2 : u$
\hfill by $\Gamma, y:\ann\tau_1 \models \tau_2 = \tau_2 : u$ }
\prf{\mbox{$\quad$} $\Gamma'' \Vdash  \{\{\rho\}\theta', s'/y\}t = (\tmfn y \{\{\rho\}\theta,y/y\}t) ~s : \{\{\rho\}\theta, s/y\}\tau_2$
\hfill Symmetry (Lemma \ref{lem:semsym}(\ref{it:sym}))}
\prf{\mbox{$\quad$}$\Gamma'' \Vdash \{\rho\}\{\theta'\}\tmfn y t \whnf \{\rho\}\{\theta'\}\tmfn y t : \{\rho\}\{\theta\}((y:\ann\tau_1) \arrow \tau_2)$ \hfill  since $\norm (\{\rho\}\{\theta'\}\tmfn y t)$ \\
\mbox{$\quad$}\hfill and $\Gamma'' \vdash  \tmfn y \{\{\rho\}\theta', y/y\}t : \{\rho\}\{\theta\}((y:\ann\tau_1) \arrow \tau_2)$}
\prf{\mbox{$\quad$}$\Gamma'' \Vdash (\tmfn y  \{\{\rho\}\theta',y/y\}t)~s' \whnf v' : \{\{\rho\}\theta, s/y\}\tau_2$
\hfill by rules (typing and $\whnf$)}
\prf{\mbox{$\quad$}$\Gamma'' \Vdash (\tmfn y  \{\{\rho\}\theta',y/y\}t)~s' = (\tmfn y  \{\{\rho\}\theta,y/y\}t)~s : \{\rho\{\theta\}, s/y\}\tau_2$ \hfill  \\
\mbox{$\quad$}\hfill by Backwards Closed (Lemma \ref{lem:bclosed})}
\prf{\mbox{$\quad$}$\Gamma'' \Vdash (\tmfn y  \{\{\rho\}\theta,y/y\}t)~s = (\tmfn y  \{\{\rho\}\theta',y/y\}t)~s': \{\rho\{\theta\}, s/y\}\tau_2$   \\
\mbox{$\quad$}\hfill by  Symmetry (Lemma \ref{lem:semsym}(\ref{it:sym}))}
\prf{\mbox{$\quad$}$\Gamma'' \Vdash (\{\rho\}\tmfn y  \{\theta,y/y\}t)~s = (\{\rho\}\tmfn y  \{\theta',y/y\}t)~s': \{\rho, s/y\}\{\theta, y/y\}\tau_2$ \hfill by comp. subst.}
\prf{\mbox{$\quad$}$\Gamma'' \Vdash \{\rho\}w~s = \{\rho\}w'~s' : \{\rho, s/y\}\{\theta, y/y\}\tau_2$ \hfill since $w = \tmfn y \{\theta,y/y\}t$  \\\mbox{$\quad$}\hfill and $w' = \tmfn y \{\theta',y/y\}t$}
}
\end{proof}

\begin{lemma}[Validity of Recursion] \label{lem:ValidRec}
\[
  \begin{array}{ll@{~}c@{~}l}
\multicolumn{4}{p{14cm}}{Let
$\Gamma \vdash \tmrec {\IH} {b_v} {b_{\mathsf{app}}} {b_{\tlam}} \rappto \Psi~t: \{\Psi/\psi, t/m\}\tau$
and $\IH = (\psi:\tmctx) \arrow (m:\cbox{\unboxc{\psi} \vdash \tm}) \arrow \tau$.}
\\[0.25em]
\mbox{If} & \Gamma \models \IH : u ~\mbox{and}~
 \Gamma \models t : \cbox{\Psi \vdash \tm}~\mbox{and}\\
 & \Gamma, \psi:{\tmctx}, p:\cbox{\unboxc{\psi} \vdash_\# \tm} & \models &  t_v : \{p / y\}\tau ~\mbox{and}\\
& \Gamma,  \psi:{\tmctx}, m:\cbox{\unboxc{\psi} \vdash \tm}, n:\cbox{\unboxc{\psi} \vdash \tm},\\
& \qquad        f_m: \{m/y\}\tau, f_n: \{n/y\}\tau & \models & t_{\mathsf{app}} : \{\cbox{\unboxc{\psi} \vdash\tapp~\unbox{m}{\id}~\unbox{n}{\id}}/y\}\tau ~\mbox{and}\\
& \Gamma, \psi:{\tmctx},  m:\cbox{\unboxc{\psi}, x:\tm \vdash \tm}, & & \\
&  \qquad        f_m:\{\cbox{\unboxc{\psi}, x:\tm}/\psi, m /y \} \tau & \models &  t_{\tlam} : \{\cbox{\unboxc{\psi} \vdash \tlam~\lambda x.\unbox{m}{\id}~} / y\}\tau
\\[0.25em]
\multicolumn{4}{p{13cm}}{then $\Gamma \models \tmrec {\IH} {b_v} {b_{\mathsf{app}}} {b_{\tlam}} \rappto \Psi~t:   \{\Psi/\psi, t/m\}\tau$.}
  \end{array}
\]
\end{lemma}
\begin{proof}
We assume $\Gamma' \Vdash \theta = \theta' : \Gamma$, and show
$\Gamma' \Vdash \{\theta\}(\trec{\R}{\tm}\IH \rappto \Psi~t )
    = \{\theta'\}(\trec{\R}{\tm}\IH \rappto \Psi~t): \{\theta\}\{\Psi/\psi,~t/m\}\tau$ by considering different cases for
$\Gamma' \Vdash \{\theta\}t = \{\theta'\}t : \cbox{\{\theta\}\Psi \vdash \tm}$. Let $\Psi' = \{\theta\}\Psi$. In the case where $\Gamma' \vdash \{\theta\}t \whnf \cbox{\hatctx\Psi' \vdash M} : \cbox{\Psi' \vdash \tm}$ and $\Gamma' \vdash \{\theta'\}t \whnf \cbox {\Psi' \vdash N} : \cbox{\{\theta\}\Psi \vdash \tm}$ and $\Gamma' ; \Psi' \Vdash M = N : \tm$, we proceed by an inner induction on $\Gamma' ; \Psi' \Vdash M = N : \tm$ exploiting on Back. Closed Lemma (\ref{lem:bclosed}).
\LONGVERSIONCHECKED{
$\quad$\\[1em]
We now give the full proof.
\\[1em]
Let $\trec{\R}{\tm}\IH = \tmrec {\IH} {b_v} {b_{\mathsf{app}}} {b_{\tlam}}$
\\[1em]
\prf{Assume $\Gamma' \Vdash \theta = \theta' : \Gamma$; \\
     TO SHOW: $\quad\Gamma' \Vdash \{\theta\}(\trec{\R}{\tm}\IH \rappto \Psi~t )
    = \{\theta'\}(\trec{\R}{\tm}\IH \rappto \Psi~t): \{\theta\}\{\Psi/\psi,~t/m\}\tau$
}
\\
\prf{$\Gamma' \Vdash \{\theta\}t = \{\theta'\}t : \cbox{\{\theta\}\Psi \vdash \tm}$ \hfill by validity of $\Gamma \models t : \cbox{\Psi \vdash \tm}$\\[-0.5em]}
 \prf{Let $\Psi' = \{\theta\}\Psi$. We now proceed to prove:
 \\[0.25em]
\pcase{$\Gamma' \vdash \{\theta\}t \whnf w : \cbox{\Psi' \vdash \tm}$  \\
     \mbox{\hspace{0.2cm}}and $\Gamma' \vdash \{\theta'\}t \whnf w' : \cbox{\Psi' \vdash \tm}$ \\
     \mbox{\hspace{0.2cm}}and $\Gamma' ; \Psi' \Vdash \unbox w \id = \unbox {w'}\id : \tm$
}
\prf{We write $M$ for $\unbox w \id$ and $N$ for $\unbox {w'}\id$ below.}
 \begin{center}
 \begin{tabular}{p{13cm}}
 If $\Gamma' ; \Psi' \Vdash M = N : \tm$ \\
     then $\Gamma' \Vdash \{\theta\}(\trec{\R}{\tm}{\IH}~\rappto \cbox{\Psi})~\cbox{\hatctx{\Psi} \vdash M}
              = \{\theta'\}(\trec{\R}{\tm}{\IH}~\rappto \cbox{\Psi})~ \cbox{\hatctx{\Psi} \vdash N} :
      \{\theta,~\Psi'/\psi, \cbox{\hatctx{\Psi} \vdash M}/m\}\tau$   \\[0.25em] $\quad$
 \end{tabular}
 \end{center}
 by induction on $M$, i.e. we may appeal to the induction hypothesis if the term $M$ has made progress and has stepped using $\lfwhnf$ and hence is ``smaller'' \\[0.5em]}
 \prf{\emph{Sub-case}:
   $\Gamma'; \Psi' \vdash M \lfwhnf \tapp~M_1~M_2 : \tm$ and $\Gamma' ; \Psi' \vdash N \lfwhnf \tapp~N_1~N_2 : \tm$ \\
 \mbox{\qquad\qquad}$\Gamma'; \Psi' \Vdash M_1 = N_1 : \tm$ and $\Gamma' ; \Psi' \Vdash M_2 = N_2 : \tm$\\[-0.5em]}
 \prf{$\Gamma' \Vdash \{\theta\}(\trec{\R}{\tm}{\IH}~\rappto \cbox{\Psi})~\cbox{\hatctx{\Psi} \vdash M_1}
              = \{\theta'\}(\trec{\R}{\tm}{\IH}~\rappto \cbox{\Psi})~\cbox{\hatctx{\Psi} \vdash N_1} :
      \{\theta,~\Psi'/\psi, \cbox{\hatctx{\Psi} \vdash M_1/m}\}\tau$
      \hfill by IH(i)}
 \prf{$\Gamma' \Vdash \{\theta\}(\trec{\R}{\tm}{\IH}~\rappto \cbox{\Psi})~\cbox{\hatctx{\Psi'} \vdash M_2}
                    = \{\theta'\}(\trec{\R}{\tm}{\IH}~\rappto \cbox{\Psi'})~\cbox{\hatctx{\Psi'} \vdash N_2} :
      \{\theta,~\Psi'/\psi, \cbox{\hatctx{\Psi} \vdash M_2}/m\}\tau $
      \hfill by IH(i)}
 \prf{$\Gamma' \vdash \{\theta\}(\trec{\R}{\tm}{\IH}~\rappto \Psi)~\cbox{\hatctx \Psi \vdash M_i} :  \{\theta,~\Psi'/\psi, \cbox{\hatctx\Psi \vdash M_i}/m\}\tau$ \hfill by Well-form. of Sem. Def.}
 \prf{$\Gamma' \vdash \{\theta\}(\trec{\R}{\tm}{\IH}~\rappto \Psi)~\cbox{\hatctx \Psi \vdash N_i} :  \{\theta',~\Psi'/\psi, \cbox{\hatctx\Psi \vdash N_i}/m\}\tau$ \hfill by Well-form. of Sem. Def. and Type Conv.}
 \prf{let $\theta_\tapp = \theta, \Psi'/\psi, \cbox{\hatctx{\Psi} \vdash M_1} / m,~\cbox{\hatctx{\Psi} \vdash M_2} / n\\
 \mbox{\qquad\quad$\;$\quad} \{\theta\}(\trec{\R}{\tm}{\IH}~\rappto \cbox{\Psi})~\cbox{\hatctx{\Psi} \vdash M_1} / f_m,
  \{\theta\}(\trec{\R}{\tm}{\IH}~\rappto \Psi)~\cbox{\hatctx{\Psi} \vdash M_2} / f_n
          $
 }
 \prf{let $\theta'_\tapp = \theta', \Psi'/\psi, \cbox{\hatctx{\Psi} \vdash  N_1} / m,~\cbox{\hatctx{\Psi} \vdash  N_2} / n\\
 \mbox{\qquad\quad$\;$\quad} \{\theta'\}(\trec{\R}{\tm}{\IH} \rappto {\Psi})~\cbox{\hatctx{\Psi} \vdash N_1} / f_m,
  \{\theta'\}(\trec{\R}{\tm}{\IH} \rappto {\Psi})~\cbox{\hatctx{\Psi} \vdash N_2} / f_n
          $
 }
 \prf{let $\Gamma_\tapp =  \Gamma,  \psi:\cbox{\tmctx}, m:\cbox{\unboxc{\psi} \vdash \tm}, n:\cbox{\unboxc{\psi} \vdash \tm},  f_m: \{m/y\}\tau, f_n: \{n/y\}\tau$}
 \prf{$\Gamma' \Vdash \theta_\tapp = \theta'_\tapp : \Gamma_\tapp$}
 \prf{$\Gamma' \Vdash \{\theta_\tapp\}t_\tapp = \{\theta'_\tapp\}t_\tapp :
       \{\theta_\tapp\} \{\cbox{\psi \vdash\tapp~\unbox{m}{\id}~\unbox{n}{\id}}/y\}\tau$
    \hfill by sem. def. of $t_\tapp$}
 \prf{$\Gamma' \Vdash  \{\theta_\tapp\}t_\tapp = \{\theta'_\tapp\}t_\tapp :
      \{\theta, \Psi'/\psi,~\cbox{\hatctx \Psi \vdash\tapp~M_1~M_2}/m\}\tau$
    \hfill by subst. def.}
  \prf{$\Gamma' \Vdash C = C : (\{\theta\}\Psi \vdash \tm)$ \hfill by reflexivity (Lemma \ref{lem:semsymlf})
 where $C = \hatctx\Psi \vdash \tapp~M_1~M_2$ }
 \prf{$\Gamma' \Vdash \{\theta\}t = \cbox{\hatctx{\Psi} \vdash \tapp~M_1~M_2} : \Psi' \vdash \tm$
    \hfill since $\Gamma' \vdash \{\theta\}t \whnf \cbox{C} : (\{\theta\}\Psi \vdash \tm)$ }
 \prf{$\Gamma' \Vdash \{\theta, \Psi'/\psi,~\{\theta\}t/m\}\tau =  \{ \theta,~\Psi'/\psi, \cbox{\hatctx\Psi \vdash \tapp~M_1~M_2}/m\}\tau : u$
        \hfill by sem. def. of $\Gamma \Vdash \IH : u$}
 \prf{$\Gamma' \Vdash  \{\theta_\tapp\}t_\tapp = \{\theta'_\tapp\}t_\tapp :
         \{\theta, \Psi'/\psi,~\{\theta\}t/m\}\tau$ \hfill by type conversion}
 \prf{$\Gamma' \vdash \{\theta_\tapp\}t_\tapp \whnf v :  \{\theta, \Psi'/\psi,~\{\theta\}t/m\}\tau$ \hfill by previous sem. def. }
 \prf{$\Gamma' ; \Psi' \vdash  \{\theta\}(\trec{\R}{\tm}{\IH}~@~\Psi~t) \whnf v :   \{\theta, \Psi'/\psi,~\{\theta\}t/m\}\tau$ \\
 \mbox{\qquad} \hfill since $\Gamma' \vdash \{\theta\}t \whnf \cbox{\hatctx \Psi \vdash M}: \tm$ and $\Gamma' ; \Psi' \vdash M \lfwhnf \tapp~M_1~M_2 : \tm$}
 \prf{$\Gamma' \Vdash \{\theta\}(\trec{\R}{\tm}{\IH}~@~\Psi~t) =  \{\theta'_\tapp\}t_\tapp :
         \{\theta, \Psi'/\psi,~\{\theta\}t/m\}\tau$ \hfill by Back. Closed (Lemma \ref{lem:bclosed})}
 \prf{$\Gamma' \vdash \{\theta'_\tapp\}t_\tapp \whnf v' :  \{\theta, \Psi'/\psi,~\{\theta\}t/m\}\tau$ \hfill by previous sem. def. }
 \prf{$\Gamma' ; \Psi' \vdash  \{\theta'\}(\trec{\R}{\tm}{\IH}~@~\Psi~t) \whnf v' :   \{\theta, \Psi'/\psi,~\{\theta\}t/m\}\tau$ \hfill (using type conversion)\\
 \mbox{\qquad} \hfill since $\Gamma' \vdash \{\theta'\}t \whnf \cbox{\hatctx \Psi \vdash N} : \tm$ and $\Gamma' ; \Psi' \vdash N \lfwhnf \tapp~N_1~N_2 : \tm$}
 \prf{$\Gamma' \Vdash \{\theta\}(\trec{\R}{\tm}{\IH}~@~\Psi~t) = \{\theta'\}(\trec{\R}{\tm}{\IH}~@~\Psi~t) :
         \{\theta, \Psi'/\psi,~\{\theta\}t/m\}\tau$ \hfill by Back. Closed (Lemma \ref{lem:bclosed})}
 \prf{$\Gamma \models \trec{\R}{\tm}{\IH}\rappto \Psi~t : \IH$ \hfill since $\Gamma', \theta, \theta'$ were arbitrary\\[0.5em]}
 \prf{\emph{Sub-Case.}  Other Sub-Cases where $\Gamma' ; \Psi' \vdash \{\theta\}M \lfwhnf \tlam~\lambda x.M : \tm$ and
  $\Gamma' ; \Psi' \vdash \{\theta\}M \lfwhnf x : \tm$ where $x:\tm \in \Psi'$ are similar.\\[0.5em]}
 \prf{\emph{Sub-Case.} $\Gamma' ; \Psi' \vdash \{\theta\}M \lfwhnf r_1  : \tm$ where $r_1 = \unbox{t_1}{\sigma_1}$ and $\neut t_1$\\
 \mbox{\hspace{0.8cm}}and $\Gamma' ; \Psi' \vdash \{\theta'\}M \lfwhnf r_2 : \tm$ where $r_2 = \unbox{t_2}{\sigma_2}$ and $\neut t_2$ }
\prf{$\Gamma' ; \Psi' \Vdash \sigma_1 = \sigma_2 : \Phi$ \hfill since $\Gamma' ; \Psi' \Vdash_{LF} \{\theta\}M = \{\theta'\}M : \tm$}
\prf{$\Gamma' ; \Psi' \vdash \sigma_1 \equiv \sigma_2 : \Phi$ \hfill by Well-formedness Lemma}
\prf{$\Gamma' \vdash t_1 \equiv t_2 : \cbox{\Phi \vdash \tm}$ \hfill since $\Gamma' ; \Psi' \Vdash_{LF} \{\theta\}M = \{\theta'\}M : \tm$}
\prf{$\Gamma' ; \Psi' \vdash r_1 \equiv r_2 : \tm$}
\prf{$\Gamma' \vdash \{\theta\}\Psi \equiv \{\theta'\}\Psi : \tmctx$ \hfill Sem. Subst. Preserve Equiv. (Lemma \ref{lem:semsubst})\\
\mbox{\hspace{1cm}}\hfill since $\Gamma' \vdash \theta \equiv \theta' : \Gamma$}
\prf{$\Gamma' \vdash \{\theta\}(\trec{\R}{\tm}{\IH}\rappto \Psi)~\cbox{\hatctx\Psi \vdash r_1}
\equiv \{\theta'\}(\trec{\R}{\tm}{\IH}\rappto \Psi)~\cbox{\hatctx\Psi \vdash r_2} : \{\theta, \Psi'/\psi,~\{\theta\}t/m\}\tau$ \hfill by $\equiv$}
\prf{$\Gamma' \vdash \{\theta\}(\trec{\R}{\tm}{\IH}\rappto \Psi)~\{\theta\}t \whnf \{\theta\}(\trec{\R}{\tm}{\IH}\rappto \Psi)~\cbox{\hatctx\Psi \vdash r_1}  : \{\theta, \Psi'/\psi,~\{\theta\}t/m\}\tau$  \hfill by $\whnf$}
\prf{$\Gamma' \vdash \{\theta'\}(\trec{\R}{\tm}{\IH}\rappto \Psi)~\{\theta'\}t \whnf \{\theta'\}(\trec{\R}{\tm}{\IH}\rappto \Psi)~\cbox{\hatctx\Psi \vdash r_2} :  \{\theta, \Psi'/\psi,~\{\theta\}t/m\}\tau$ \hfill by $\whnf$ and type conversion}
\prf {$\Gamma' \Vdash \{\theta\}(\trec{\R}{\tm}{\IH}~@~\Psi~t) = \{\theta'\}(\trec{\R}{\tm}{\IH}~@~\Psi~t) :
         \{\theta, \Psi'/\psi,~\{\theta\}t/m\}\tau$ \hfill Back. Closed (Lemma \ref{lem:bclosed})\\
\mbox{\hspace{1cm}}\hfill and Symmetry}
 %
 \pcase{$\neut w$ and $\neut w'$ and $\Gamma' \vdash w \equiv w' : \cbox{\Psi' \vdash \tm}$}
\prf{$\neut \{\theta\}(\trec{\R}{\tm}{\IH}~@~\Psi)~w$ and $\neut \{\theta'\}(\trec{\R}{\tm}{\IH}~@~\Psi)~w'$ \hfill by def. of $\neut$}
\prf{$\Gamma' \vdash  \{\theta\}\trec{\R}{\tm}{\IH} \equiv  \{\theta'\}\trec{\R}{\tm}{\IH} $ (short for all the branches remain equivalent) \hfill Sem. Subst. Preserve Equiv. (Lemma \ref{lem:semsubst})}
\prf{$\Gamma' \vdash \{\theta\}\Psi \equiv \{\theta'\}\Psi : \tmctx$ \hfill Sem. Subst. Preserve Equiv. (Lemma \ref{lem:semsubst})}
\prf{$\Gamma' \Vdash \{\theta\}(\trec{\R}{\tm}{\IH}~@~\Psi)~w \equiv \{\theta'\}(\trec{\R}{\tm}{\IH}~@~\Psi)~w' :
         \{\theta, \Psi'/\psi,~\{\theta\}t/m\}\tau$ \hfill by $\equiv$ }
%
 \prf{$\Gamma' \vdash  \{\theta\}(\trec{\R}{\tm}{\IH} \rappto \Psi)~t \whnf  \{\theta\}(\trec{\R}{\tm}{\IH}\rappto \Psi)~w
           :  \{\theta, \Psi/\psi,~\{\theta\} t/m\}\tau $
     \hfill by type conversion and $\whnf$ rule}
 \prf{$\Gamma' \vdash  \{\theta\}(\trec{\R}{\tm}{\IH} \rappto \Psi)~t \whnf  \{\theta\}(\trec{\R}{\tm}{\IH}~\rappto \Psi)~w'
           :  \{\theta, \Psi/\psi,~\{\theta\} t/m\}\tau $
     \hfill by type conversion and $\whnf$ rule}
  \prf{$\Gamma' \Vdash \{\theta\}(\trec{\R}{\tm}{\IH}~\rappto\Psi~t) = \{\theta'\}(\trec{\R}{\tm}{\IH}~\rappto \Psi ~t):
             \{\theta, \Psi/\psi,~\{\theta\}t/m\}\tau$
    \hfill by Back. Closed Lemma (twice) (\ref{lem:bclosed}) \\ \mbox{\qquad}\hfill and Symmetry }
 \prf{$\Gamma \models \trec{\R}{\tm}{\IH}\rappto \Psi~t : \IH$ \hfill since $\Gamma', \theta, \theta'$ were arbitrary\\[0.5em]}
}
\end{proof}

\begin{lemma}[Validity of Application]\label{lem:ValidTypeApp}
  If $\Gamma \models t : (y:\ann\tau_1) \arrow \tau_2$ and $\Gamma \models s : \ann\tau_1$
then $\Gamma \models t~s : \{s/y\}\tau_2$.
\end{lemma}
\begin{proof}
By definition relying on semantic type application lemma (Lemma \ref{lem:SemTypeApp}).
\end{proof}
}

 \begin{theorem}[Fundamental Theorem]\quad
   \label{lm:fundtheo}
  \begin{enumerate}
  \item If $\vdash \Gamma$ then $\models \Gamma$.
  \item If $\Gamma ; \Psi \vdash M : A$ then $\Gamma ; \Psi \models M = M : A$.
  \item If $\Gamma ; \Psi \vdash \sigma  : \Phi$ then $\Gamma ; \Psi \models \sigma = \sigma : \Phi$.
  \item If $\Gamma ; \Psi \vdash M \equiv N : A$ then $\Gamma ; \Psi \models M = N : A$.
  \item If $\Gamma ; \Psi \vdash \sigma \equiv \sigma' : \Phi$ then $\Gamma ; \Psi \models \sigma = \sigma' :  \Phi$.
  \item If $\Gamma \vdash t : \tau$ then $\Gamma \models t : \tau$.
  \item If $\Gamma \vdash t \equiv t' : \tau$ then $\Gamma \models t = t' : \tau$.
  \end{enumerate}
\end{theorem}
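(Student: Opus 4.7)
The plan is to prove all seven statements simultaneously by a lexicographic mutual induction, primarily on the structure of the given typing or equivalence derivation. For (1), induct on $\Gamma$: the empty case is immediate, and in the extension $\Gamma, y{:}\ann\tau$ the IH gives $\models \Gamma$ and part (6) applied to the well-formedness premise yields $\Gamma \models \ann\tau : u$, so $\models \Gamma, y{:}\ann\tau$ by definition. For the typing statements we first assume an arbitrary $\Gamma' \Vdash \theta = \theta' : \Gamma$ (using Context Satisfiability when we need the identity substitution) and unfold validity so that it suffices to establish the relevant semantic judgment after applying $\theta$ and $\theta'$.

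For the LF-level parts (2)--(5), I will proceed by induction on the LF derivation, pushing $\{\theta\}$ and $\{\theta'\}$ through using the computation-level substitution lemma. Variables and constants reduce to showing that the substituted term is semantically equal to itself at the declared type; this follows from reflexivity of semantic LF equality (Lemma~\ref{lem:semsymlf}, \ref{it:reflclf}). LF application, abstraction and the closure case $\unbox{t}{\sigma}$ are handled by unfolding the semantic definition at type $A$: the abstraction case reduces to extending the LF context and appealing to the IH (using semantic LF context conversion where needed); the closure case uses part (6) on $t$ together with the compatibility of semantic LF equality with LF substitution (Lemma~\ref{lem:semlfeqsub}). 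Substitutions are handled analogously, with the weakening substitution case discharged by Lemma~\ref{lm:semlfwk}. For the equivalence judgments (4) and (5), the congruence rules are straightforward, the $\beta$ and $\eta$ reduction rules are discharged by backwards closure for LF terms (Lemma~\ref{lem:lfbclosed}) and the unboxing reduction uses Lemma~\ref{lm:lfwhnfsub} together with Lemma~\ref{lem:semlfeqsub}; symmetry, transitivity and conversion are inherited from Lemma~\ref{lem:semsymlf}.

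For the computation-level parts (6) and (7), the rules for variables, sorts and boxed contextual objects unfold directly (using part (2) inside the box). The hard cases have been pre-packaged: dependent function formation and introduction go through Validity of Functions, application through Validity of Application, the conversion rule through Validity of Type Conversion, and the recursor case through Validity of Recursion. For equivalence at the computation level (7), congruence rules are again routine by IH; the $\beta$ reduction for functions and the $\iota$ reductions for the three recursor branches reduce, after applying $\theta$ and $\theta'$, to producing a weak head reduction on one side and then closing backwards via Lemma~\ref{lem:bclosed} combined with Lemma~\ref{lem:typeclosed}; the $\eta$ rule at $\cbox{\Psi \vdash A}$ is handled by direct computation with the semantic definition at boxed type, using reflexivity and the fact that $\unbox{\cbox{\hatctx\Psi \vdash \unbox{t}{\wk{\hatctx\Psi}}}}{\id}$ weak head reduces back to $\unbox{t}{\id}$; symmetry, transitivity and conversion of $\equiv$ use Lemma~\ref{lem:semsym}.

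The main obstacle is threading the mutual induction cleanly across the LF/computation boundary. Because LF terms may contain embedded computations via $\unbox{t}{\sigma}$, the LF cases need to invoke part (6) on strictly smaller computation subterms, and symmetrically the rule for $\cbox{C}$ invokes part (2). The induction measure must therefore be the size of the derivation (not of the term), and I must be careful that the semantic LF substitution $\theta$ is interpreted uniformly: when pushing it inside $\cbox{\hatctx\Psi \vdash M}$, the resulting LF substitution acting on $M$ is still semantically valid because the IH delivers $\Gamma' \Vdash \{\theta\}t = \{\theta'\}t$ for each unboxed subterm. The recursor case is the most delicate because both branch selection and the semantic definition of equality at $\tm$ are interleaved; fortunately the inner induction on the semantic equality of the scrutinee is already absorbed into Validity of Recursion, so here it suffices to verify that each branch body is semantically valid under the extended context, which follows directly from the IH on the branch premises.
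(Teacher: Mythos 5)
Your proposal matches the paper's proof: both proceed by mutual induction on the given derivation, dispatch the hard computation-level cases to the pre-established validity lemmas (functions, application, type conversion, recursion), handle the reduction/expansion equations via backwards closure and well-typed weak head reduction, and thread the LF and computation levels together through the semantic substitution $\theta$ exactly as you describe. The structure, key lemmas, and treatment of the delicate cases (the recursor, the $\eta$-rule at boxed type, and the $\unbox{t}{\sigma}$ closure) all coincide with the paper's argument.
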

\begin{proof}
By induction on the first derivation using validity of application,
functions, recursion, and type conversion, Backwards Closed
(\ref{lem:bclosed}), Well-formedness of Semantic
Typing\LONGVERSION{\ (Lemma~\ref{lm:semwf})}, 
Semantic Weakening.
\LONGVERSIONCHECKED{
\\[0.5em]
\fbox{ If $\Gamma \vdash t : \tau$ then $\Gamma \models t : \tau$.}
\\[0.5em]
\paragraph{Case} $\D = \ianc{\Gamma \vdash t : \cbox{\Phi \vdash A} \quad \Gamma ; \Psi \vdash \sigma : \Phi}
                            {\Gamma ; \Psi \vdash \unbox t \sigma : [\sigma]A}{}$\\[0.5em]
\prf{$\Gamma \models t : \cbox{\Phi \vdash A}$ \hfill by IH}
\prf{$\Gamma ; \Psi \models \sigma = \sigma: \Phi$ \hfill by IH}\\[-0.75em]
\prf{Assume $\Gamma' \Vdash \theta = \theta' : \Gamma$}
\prf{$\Gamma' \Vdash \{\theta\}t = \{\theta'\}t : \{\theta\}\cbox{\Phi \vdash A}$ \hfill by def. $\Gamma \models t : \cbox{\Phi \vdash A}$}
\\[-0.75em]
\prf{\emph{Sub-case}: $\Gamma' \vdash \{\theta\}t \whnf \cbox C : \{\theta\}\cbox{\Phi \vdash A}$ and $\Gamma' \vdash \{\theta'\}t \whnf \cbox {C'} : \{\theta\}\cbox{\Phi \vdash A}$ \\
\mbox{\qquad} and $\Gamma' \Vdash C = C' : \{\theta\}(\Phi \vdash A)$}
\prf{$C = \hatctx{\Phi} \vdash M$ and $C' = \hatctx{\Phi} \vdash N$
   \hfill by inversion on $\Gamma' \Vdash C = C' : \{\theta\}(\Phi \vdash A)$}
\prf{$\Gamma' ; \{\theta\}\Psi \Vdash \{\theta\} \sigma = \{\theta'\}\sigma: \{\theta\}\Phi$
   \hfill by $\Gamma ; \Psi \models \sigma : \Phi$  }
\prf{$\Gamma' ; \{\theta\}\Phi \Vdash M = N: \{\theta\}A$
   \hfill by def. of $\Gamma' \Vdash C = C' : \{\theta\}(\Phi \vdash A)$}
\prf{$\Gamma' ; \{\theta\}\Psi \Vdash [\{\theta\}\sigma] M = [\{\theta'\}\sigma]N: \{\theta\}[\sigma]A$
   \hfill by Lemma \ref{lem:semlfeqsub} }
\prf{$\Gamma' ; \{\theta\}\Psi \vdash [\{\theta\}\sigma] M \lfwhnf W : \{\theta\}[\sigma]A$ \hfill by previous line}
\prf{$\Gamma' \vdash \{\theta\}t \whnf \cbox C : \{\theta\}\cbox{\Phi \vdash A}$ \hfill by restating the condition of the case we are in}
\prf{$\Gamma' ; \{\theta\}\Psi \vdash \{\theta\}(\unbox t \sigma) \lfwhnf W : \{\theta\}[\sigma]A$ \hfill by whnf rules}
\prf{$\Gamma' ; \{\theta\}\Psi \vdash [\{\theta'\}\sigma]N \lfwhnf W' : \{\theta\}[\sigma]A$ \hfill by previous line}
\prf{$\Gamma' \vdash \{\theta'\}t \whnf \cbox {C'} : \{\theta\}\cbox{\Phi \vdash A}$ \hfill by restating the condition of the case we are in}
\prf{$\Gamma' ; \{\theta\}\Psi \vdash \{\theta'\}(\unbox t \sigma) \lfwhnf W'  : \{\theta\}[\sigma]A$ \hfill by whnf rules}
\prf{$\Gamma' ; ; \{\theta\}\Psi  \Vdash \{\theta\}(\unbox t \sigma) = \{\theta'\}(\unbox t \sigma) : \{\theta\}[\sigma]A$
   \hfill by Backwards Closed Lemma \ref{lem:bclosed} (twice) \\
\mbox{$\quad$}\hfill and symmetry.}
\prf{$\Gamma ; \Psi \models \unbox t \sigma : [\sigma]A$
  \hfill by abstraction, since $\Gamma', \theta, \theta'$ were arbitrary}
\\
\prf{\emph{Sub-case}: $\Gamma' \vdash \{\theta\}t \whnf w : \{\theta\}\cbox{\Phi \vdash A}$ and
                      $\Gamma' \vdash \{\theta'\}t \whnf w': \{\theta\}\cbox{\Phi \vdash A}$ \\
\mbox{\qquad} and $\neut w$ and $\neut w'$ and $\Gamma \vdash w \equiv w' : \{\theta\}\cbox{\Phi \vdash A}$}
\\[-0.75em]
\prf{$\Gamma' \vdash \theta(\unbox t \sigma) \whnf \unbox w {\{\theta\}\sigma}$ \hfill by whnf rules}
\prf{$\Gamma' \vdash \theta'(\unbox t \sigma) \whnf \unbox {w'} {\{\theta\}\sigma}$ \hfill by whnf rules}
\prf{$\Gamma' \{\theta\}\Psi \Vdash \{\theta\}\sigma = \{\theta'\}\sigma : \{\theta\}\Phi$ \hfill by def. $\Gamma ; \Psi \models \sigma = \sigma : \Phi$}
\prf{$\Gamma' ; \{\theta\}\Psi \vdash \{\theta\}\sigma \equiv \{\theta'\}\sigma : \{\theta\}\Phi$ \hfill by Well-formedness Lemma~\ref{lm:semwf}}
\prf{$\Gamma' ; \{\theta\}\Psi \vdash \{\theta\}(\unbox t \sigma) \equiv \{\theta'\}(\unbox t \sigma) : \{\theta\}[\sigma]A$ \hfill by $\equiv$ rules}
\prf{$\Gamma' \Vdash \{\theta\}\Phi = \{\theta\}\Phi : \ctx$ \hfill Reflexivity}
\prf{$\typeof (\Gamma' \vdash w) = \cbox{\{\theta\}\Phi \vdash \tm}$ \hfill since $\Gamma' \vdash w : \{\theta\}\cbox{\Phi \vdash A}$ }
\prf{$\typeof (\Gamma' \vdash w') = \cbox{\{\theta\}\Phi \vdash \tm}$ \hfill since $\Gamma' \vdash w' : \{\theta\}\cbox{\Phi \vdash A}$ }
\prf{$\Gamma' ; \{\theta\}\Psi \Vdash \{\theta\}(\unbox t \sigma) = \{\theta'\}(\unbox t \sigma) : \{\theta\}[\sigma]A$
    \hfill by semantic def. }
\prf{$\Gamma ; \Psi \models \unbox t \sigma : [\sigma]A$
   \hfill by abstraction, since $\Gamma', \theta, \theta'$ were
   arbitrary}
}
\LONGVERSIONCHECKED{\\
\pcase{$\D = \ianc {y:\ann\tau \in \Gamma}{\Gamma \vdash y : \ann\tau}{}$}
\prf{$\Ca: ~~\vdash \Gamma$ and $\Ca < \D$ \hfill by Lemma \ref{lm:ctxwf}}
\prf{Assume $\Gamma', \theta, \theta'.~\Gamma' \Vdash \theta = \theta' : \Gamma$}
\prf{$\Gamma' \Vdash t = s : \{\theta_i\}\ann\tau$ \hfill by sem. def. of $\Gamma' \Vdash \theta = \theta' : \Gamma$}
\prf{$\Gamma' \Vdash \{\theta\}y = \{\theta'\}y : \{\theta\}\ann\tau$ \hfill subst. def. where $\{\theta\}y = t$ and $\{\theta'\}y = s$}
\prf{$\Gamma \models y = y : \ann\tau$ \hfill by def. of validity}
\\
\pcase{ $\D = \ibnc
{\Gamma \vdash t : \tau'}{\Gamma \vdash \tau' \equiv \tau : u}
{\Gamma \vdash t : \tau}{}$}
\prf{$\Gamma \models t : \tau'$ \hfill by IH }
\prf{$\Gamma \models \tau = \tau' : u$ \hfill by IH }
\prf{$\Gamma \models t : \tau$ \hfill by Lemma \ref{lem:validtypconv}}
\\
\pcase{$\D = \ianc{\Gamma \vdash C : T}
                           {\Gamma \vdash \cbox C : \cbox T}{}$
}
\prf{$\Ca: ~~\vdash \Gamma$ and $\Ca < \D$ \hfill by Lemma \ref{lm:ctxwf}}
\prf{$\models \Gamma$ \hfill  by IH}
\prf{Let $C = \hatctx{\Psi} \vdash M$ and $T = \Psi \vdash A$.}
\prf{$\Gamma ; \Psi \vdash M : A$ \hfill by inversion on $\Gamma \vdash C : T$}
\prf{$\Gamma ; \Psi \models M : A$ \hfill by IH}
\prf{Assume $\Gamma',~\theta,~\theta'.~ \Gamma' \Vdash \theta = \theta'$}
\prf{$\Gamma ; \{\theta\}\Psi \Vdash \{\theta\}M = \{\theta'\}M : \{\theta\}A$ \hfill using $\Gamma ; \Psi \models M : A$}
\prf{$\Gamma \Vdash \{\theta\}(\hatctx{\Psi} \vdash M) = \{\theta'\}(\hatctx{\Psi} \vdash M) : \{\theta\}T$ \hfill by sem. def.}
\prf{$\Gamma \Vdash \{\theta\}\cbox C = \{\theta'\}\cbox C: \{\theta\}\cbox T$ \hfill by previous line}
\prf{$\Gamma \models \cbox C = \cbox C: \cbox T$ \hfill by abstraction, since $\Gamma', ~\theta,~\theta'$ were arbitrary}
\prf{$\Gamma \models \cbox C : \cbox T$ \hfill by def. of validity}
\\
\pcase{$\D = \ibnc
{\Gamma \vdash t : (y:\ann\tau_1) \arrow \tau_2}{ \Gamma \vdash s : \ann\tau_1}
{\Gamma \vdash t~s : \{s/y\}\tau_2}{}
$}
\prf{$\Gamma \models s : \ann\tau_1$ \hfill by IH }
\prf{$\Gamma \models t : (y:\ann\tau_1) \arrow \tau_2$ \hfill by IH}
\prf{$\Gamma \models t~s : \{s/y\}\tau_2$ \hfill by Lemma \ref{lem:ValidTypeApp}}
\pcase{$\D = \ianc{\Gamma, y:\ann\tau_1 \vdash t : \tau_2 }
                            {\Gamma \vdash \tmfn y t : (y:\ann\tau_1) \arrow \tau_2}{}$}
\\[0.2em]
\prf{$\Ca: ~~\vdash \Gamma, y{:}\ann\tau_1$ and $\Ca < \D$ \hfill by Lemma \ref{lm:ctxwf}}
\prf{$\models \Gamma, y{:}\ann\tau_1$ \hfill  by IH }
\prf{$\models \Gamma$ \hfill by def. of validity }
\prf{$\Gamma, y:\ann\tau_1 \models t : \tau_2$ \hfill by IH}
\prf{$\Gamma \models  (\tmfn y t) : (y:\ann\tau_1) \arrow \tau_2$ \hfill Lemma \ref{lem:ValidFun}}

\pcase{$\D = \ianc{\vdash \Gamma}{\Gamma \vdash u_1: u_2}{(u_1, u_2) \in \Ax}$}
\prf{$\Gamma \vdash u_1 \whnf u_1 : u_2$ \hfill by rules and typing assumption}
\prf{$u_1 \leq u_2$ \hfill by $(u_1, u_2) \in \Ax$}
\\
\pcase{$\D = \ibnc {\Gamma \vdash \ann\tau_1 : u_1}
       {\Gamma, y{:}\ann\tau_1 \vdash \tau_2 : u_2}
       {\Gamma \vdash (y:\ann\tau_1) \arrow \tau_2 : u_3}{(u_1,~u_2,~u_3) \in \Ru}$}
\prf{$\Gamma \models \ann\tau_1 : u_1$ \hfill by IH}
\prf{$\forall \Gamma'.~\Gamma' \Vdash \theta = \theta' : \Gamma \Longrightarrow \Gamma' \Vdash \{\theta\}\ann\tau_1 = \{\theta'\}\ann\tau_1 : u_1$ \hfill by sem. def.}
\prf{$\Gamma, y{:}\ann\tau_1 \models \tau_2 : u_2$ \hfill by IH}
\prf{$\forall ~\Gamma' \Vdash \theta = \theta' : \Gamma, y{:}\ann\tau_1 \Longrightarrow \Gamma' \Vdash  \{\theta\}\tau_2 = \{\theta'\}\tau_2 : u_1$ \hfill by sem. def.}
\prf{Assume $\Gamma' \Vdash \theta = \theta' : \Gamma$}
\prf{$\Gamma' \Vdash \{\theta\}\ann\tau_1 = \{\theta'\}\ann\tau_1 : u_1$ \hfill by previous lines}
\prf{Assume $\Gamma'' \leq_\rho \Gamma'$}
\prf{$\Gamma'' \Vdash \{\rho\}\{\theta\}\ann\tau_1 = \{\rho\}\{\theta'\}\ann\tau_1 : u_1$ \hfill sem. weakening for computations lemma \ref{lem:compsemweak}}
\prf{$\forall \Gamma''.~\Gamma'' \leq_\rho \Gamma'. \Gamma'' \Vdash \{\rho\}\{\theta\}\ann\tau_1 = \{\rho\}\{\theta'\}\ann\tau_1 : u_1$\hfill by previous lines}
\prf{Assume $\Gamma'' \leq_\rho \Gamma'$ and $\Gamma'' \Vdash t = t' : \{\{\rho\}\theta\}\ann\tau_1$}
\prf{$\Gamma'' \Vdash \{\rho\}\theta = \{\rho\}\theta' : \Gamma$ \hfill by sem. weakening lemma \ref{lem:weakcsub}}
\prf{$\Gamma '' \Vdash \{\rho\}\theta, t/y = \rho\{\theta'\}, t'/y : \Gamma, y{:}\ann\tau_1$ \hfill by sem. def.}
\prf{$\Gamma'' \Vdash \{\{\rho\}\theta, t/y\}\tau_2 = \{\rho\{\theta'\}, t'/y\}\tau_2 : u_2$ \hfill by $\Gamma, y{:}\ann\tau_1 \models \tau_2 : u_2$ \\
\mbox{$\quad$}\hfill choosing $\theta =  \{\rho\}\theta, t/y$ and $\theta' = \rho\{\theta'\}, t'/y$ }
\prf{$\Gamma'' \Vdash \{\rho, t/y\}\{\theta, y/y\}\tau_2 = \{\rho, t'/y\}\{\theta', y/y\}\tau_2 : u_2$\hfill by subst. def.}
\prf{$\forall \Gamma''.~\Gamma'' \leq_\rho \Gamma'. ~\Gamma'' \Vdash t = t' : \{\{\rho\}\theta\}\ann\tau_1 \Longrightarrow \Gamma'' \Vdash \{\rho, t/y\}\{\theta, y/y\}\tau_2 = \{\rho, t'/y\}\{\theta', y/y\}\tau_2 : u_2$ \hfill \\
\mbox{$\quad$}\hfill by previous lines}
\prf{$\Gamma' \Vdash \{\theta\}((y:\ann\tau_1) \arrow \tau_2 : u_3) = \{\theta'\}((y:\ann\tau_1) \arrow \tau_2 : u_2)$ \hfill by sem. def.}
\prf{$\Gamma \models ((y:\ann\tau_1) \arrow \tau_2 : u_3) = ((y:\ann\tau_1) \arrow \tau_2 : u_3)$ \hfill by sem. def.}
\prf{$\Gamma \models (y:\ann\tau_1) \arrow \tau_2 : u_3$ \hfill by def. of validity }
\\
\pcase{$\D = \ianc{\Gamma \vdash T}{\Gamma \vdash \cbox{T} : u}{}$}
\prf{$\models \Gamma$ \hfill }
\prf{Assume $\Gamma' \Vdash \theta = \theta' : \Gamma $}
\prf{$\Gamma \models_\LF T = T$ \hfill by IH}
\prf{$\Gamma' \Vdash_\LF \{\theta\} T = \{\theta'\}T$ \hfill since $\Gamma \models_\LF T = T$ }
\prf{$\Gamma' \vdash \cbox{\{\theta\}T}\whnf \cbox{\{\theta\}T} : u$ \hfill since $\norm \cbox{\{\theta\}T}$}
\prf{$\Gamma' \Vdash \{\theta\}\cbox{T} = \{\theta'\}\cbox{T} : u$ \hfill by sem. def.}
\prf{$\Gamma \models \cbox{T}  = \cbox{T} : \univ k$ \hfill since $\Gamma', \theta, \theta'$ are arbitrary}
\prf{$\Gamma \models \cbox{T} : \univ k$ \hfill by def. of validity}
}
\LONGVERSIONCHECKED{
\\[1em]
\pcase{$\D =
\ianc
{\Gamma \vdash t : \cbox{\Psi \vdash \tm} \quad
 \Gamma \vdash \IH : u \quad
 \Gamma \vdash b_v : \IH \quad
 \Gamma \vdash b_{\mathsf{app}} : \IH \quad
 \Gamma \vdash b_{\mathsf{lam}} : \IH}
{\Gamma \vdash \tmrec {\IH} {b_v} {b_{\mathsf{app}}} {b_{\tlam}} \rappto \Psi~t : \{{\Psi}/\psi,~t/y\}\tau}{}
$\\[0.25em]
\mbox{where}~$\IH = (\psi : {\tmctx}) \arrow (y:\cbox{\psi \vdash \tm}) \arrow \tau$}
\\[-1em]
\prf{$\Gamma \models \IH : u $ \hfill by IH}
\prf{$\Gamma \models t : \cbox{\Psi \vdash \tm}$ \hfill by IH}
\prf{$ \Gamma, \psi:\tmctx, p:\cbox{~\psi \vdash_\# \tm}  \vdash  t_v : \{p / y\}\tau$ \hfill by typing inversion}
\prf{$\Gamma,  \psi:\tmctx, m:\cbox{~\psi \vdash \tm}, n:\cbox{~\psi \vdash \tm}
         f_m: \{m/y\}\tau, f_n: \{n/y\}\tau$}
\prf{\mbox{$\qquad$}\hfill$\qquad \vdash  t_{\mathsf{app}} : \{\cbox{~\psi \vdash\tapp~\unbox{m}{\id}~\unbox{n}{\id}}/y\}\tau$$\qquad$ by typing inversion}
\prf{$\Gamma, \phi:{\tmctx},  m:\cbox{~\unboxc{\phi}, x:\tm \vdash \tm},
          f_m:\{\cbox{~\unboxc{\phi}, x:\tm}/\psi, m /y \} \tau$}
\prf{\mbox{$\qquad$}$\quad$\hfill $\vdash t_{\tlam} : \{\phi/\psi, \cbox{~\unboxc{\phi} \vdash \tlam~\lambda x.\unbox{m}{\id}~} / y\}\tau$ $\qquad$ by typing inversion}
\prf{$\Ca: ~~\vdash \Gamma, \psi:\tmctx, p:\cbox{~\unboxc{\psi} \vdash_\# \tm}$ and $\Ca < \D$ \hfill by Lemma \ref{lm:ctxwf}}
\prf{$\models \Gamma$ \hfill by def. of validity }
\prf{$ \Gamma, \psi:\tmctx, p:\cbox{~\unboxc{\psi} \vdash_\# \tm}  \models  t_v : \{p / y\}\tau$ \hfill by IH}
\prf{$\Gamma,  \psi:\tmctx, m:\cbox{~\unboxc{\psi} \vdash \tm}, n:\cbox{~\unboxc{\psi} \vdash \tm}
         f_m: \{m/y\}\tau, f_n: \{n/y\}\tau$}
\prf{\mbox{$\qquad$}\hfill$\qquad \models  t_{\mathsf{app}} : \{\cbox{~\unboxc{\psi} \vdash\tapp~\unbox{m}{\id}~\unbox{n}{\id}}/y\}\tau\qquad$ by IH}
\prf{$\Gamma, \phi:\tmctx,  m:\cbox{~\unboxc{\phi}, x:\tm \vdash \tm},
          f_m:\{\cbox{~\unboxc{\phi}, x:\tm}/\psi, m /y \} \tau$}
\prf{\mbox{$\quad$}\hfill $\models t_{\tlam} : \{\phi/\psi, \cbox{~\unboxc{\phi} \vdash \tlam~\lambda x.\unbox{m}{\id}~} / y\}\tau\qquad$  by IH}
\prf{$\Gamma \models \tmrec {\IH} {b_v} {b_{\mathsf{app}}} {b_{\tlam}} \rappto \Psi~t :   \IH$
 \hfill by Validity of Recursion Lemma \ref{lem:ValidRec}}
\\
\fbox{ If $\Gamma \vdash t \equiv t' : \tau$ then $\Gamma \models t =  t' : \tau$.}
\\[1em]
\pcase{
$\D = \ianc {\Gamma \vdash \tmfn y t : (y{:}\ann\tau_1) \arrow \tau_2
     \qquad \Gamma \vdash s : \ann\tau_1}
           {\Gamma  \vdash (\tmfn y t)~s \equiv \{s/y\}t : \{s/y\}\tau_2}{}$
                }
\prf{To Show: $\Gamma \models (\tmfn y t)~s = \{s/y\}t : \{s/y\}\tau_2
  $}
\prf{$\Ca: ~~\vdash \Gamma$ and $\Ca < \D$ \hfill by Lemma \ref{lm:ctxwf}}
\prf{$\models \Gamma$ \hfill by IH}
\prf{Assume $\Gamma' \Vdash \theta = \theta' : \Gamma$}
\prf{$\Gamma \models \tmfn y t : (y : \ann\tau_1) \arrow \tau_2$ \hfill by IH}
\prf{$\Gamma \models s : \ann\tau_1$ \hfill by IH}
\prf{$\Gamma' \Vdash \{\theta\}s = \{\theta'\}s : \{\theta\}\ann\tau_1$ \hfill
   by $\Gamma \models s : \ann\tau_1$}
\prf{$\Gamma \Vdash \{\theta\}(\tmfn y t) = \{\theta'\}(\tmfn y t) : \{\theta\}((y : \ann\tau_1) \arrow \tau_2)$ \hfill by previous lines (def. of $\models$)}
\prf{$\Gamma \vdash \{\theta\}(\tmfn y t) \whnf \{\theta\}(\tmfn y t) : \{\theta\}((y : \ann\tau_1) \arrow \tau_2)$ \hfill by sem. equ. def.}
\prf{$\Gamma \vdash \{\theta'\}(\tmfn y t) \whnf \{\theta'\}(\tmfn y t) : \{\theta\}((y : \ann\tau_1) \arrow \tau_2)$ \hfill by sem. equ. def.}
\prf{$\Gamma \Vdash \{\theta\}(\tmfn y t)~\{\theta\}s = \{\theta'\}(\tmfn y t)~\{\theta'\}s' : \{\theta, \{\theta\}s/y\}\tau_2$ \hfill by sem. equ. def.}
\prf{$\Gamma \vdash \{\theta\}((\tmfn y t)~s) \whnf w :  \{\theta, s/y\}\tau_2$ \hfill by sem. equ. def}
\prf{$\Gamma \vdash \{\theta, \{\theta\}s/y\}t \whnf w :  \{\theta, s/y\}\tau_2$ \hfill by inversion on $\whnf$}
\prf{$\Gamma \vdash \{\theta'\}((\tmfn y t)~s) \whnf w' :  \{\theta, s/y\}\tau_2$ \hfill by sem. equ. def}
\prf{$\Gamma \vdash \{\theta', \{\theta'\}s/y\}t \whnf w' :  \{\theta, s/y\}\tau_2$ \hfill by inversion on $\whnf$}
\prf{$\Gamma \Vdash (\tmfn y t)~s = \{s/y\}t : \{s/y\}\tau_2$ \hfill by Backwards Closure (Lemma \ref{lem:bclosed})}
\\[0.5em]
\pcase{
$\D =  \ianc{\Gamma \vdash t : \cbox{\Psi \vdash A}}
            {\Gamma \vdash \cbox{\hatctx \Psi \vdash \unbox{t}{\wk{\hatctx\Psi}}} \equiv t : \cbox{\Psi \vdash A}}{}$
}
\prf{To Show: $\Gamma \models \cbox{\hatctx \Psi \vdash \unbox{t}{\wk{\hatctx\Psi}}} = t : \cbox{\Psi \vdash A}$}
\prf{Assume $\Gamma' \Vdash \theta = \theta' : \Gamma$}
\prf{$\Gamma \models t : \cbox{\Psi \vdash A}$ \hfill by IH}
\prf{$\Gamma' \Vdash \{\theta\} t = \{\theta'\}t : \{\theta\} \cbox{\Psi \vdash A}$ \hfill by $\Gamma \models t : \cbox{\Psi \vdash A}$}
\prf{$\Gamma' \vdash \{\theta\} t \whnf w : \{\theta\} \cbox{\Psi \vdash A}$ \hfill by sem. def.}
\prf{$\Gamma' \vdash \{\theta'\} t \whnf w' : \{\theta\} \cbox{\Psi \vdash A}$ \hfill by sem. def.}
\prf{$\Gamma' ; \{\theta\}\Psi \Vdash \unbox w \id = \unbox{w'}\id : \{\theta\}A$ \hfill by sem. def.}
\prf{We note that $w$ is either $n$ where $\neut n$ or $\cbox{\hatctx\Psi \vdash M}$ }
\prf{\emph{Sub-case} $w$ is neutral, i.e. $\neut w$}
\prf{$\Gamma' ; \{\theta\}\Psi \vdash \unbox {\{\theta\}t} {\wk{\hatctx\Psi}} \lfwhnf \unbox w {\id} : \{\theta\}A$
  \hfill since $\Gamma' \vdash \{\theta\} t \whnf w : \{\theta\} \cbox{\Psi \vdash A}$ }
\prf{$\Gamma' ; \{\theta\}\Psi \vdash \unbox{\{\theta\}\cbox{\hatctx\Psi \vdash   \unbox{t}{\wk{\hatctx\Psi}} }}\id \lfwhnf \unbox w {\id} : \{\theta\}A$  \hfill by $\lfwhnf$ }
\prf{$\Gamma ; \{\theta\}\Psi \Vdash \unbox{\{\theta\}\cbox{\hatctx\Psi \vdash   \unbox{t}{\wk{\hatctx\Psi}} }}\id = \unbox{w'}\id : \{\theta\}A$
\hfill \\\mbox{\hspace{2cm}}\hfill using $\Gamma' ; \{\theta\}\Psi \Vdash \unbox w \id = \unbox{w'}\id : \{\theta\}A$ and Backwards Closure (Lemma \ref{lem:bclosed})}
\prf{$\Gamma ; \{\theta\}\Psi \vdash \{\theta\}\cbox{\hatctx \Psi \vdash \unbox{t}{\wk{\hatctx\Psi}}} \whnf \{\theta\}\cbox{\hatctx \Psi \vdash \unbox{t}{\wk{\hatctx\Psi}}} : \{\theta\}\cbox{\Psi \vdash A}$\hfill since $\norm \{\theta\}\cbox{\hatctx \Psi \vdash \unbox{t}{\wk{\hatctx\Psi}}}$}
\prf{$\Gamma' \Vdash \{\theta\}\cbox{\hatctx \Psi \vdash \unbox{t}{\wk{\hatctx\Psi}}} = \{\theta'\} t : \{\theta\}\cbox{\Psi \vdash A}$ \hfill by sem. def.}
\\
\prf{\emph{Sub-case} $w = \cbox{\hatctx \Psi \vdash M}$}
\prf{$\Gamma' ; \{\theta\}\Psi \vdash \unbox{\cbox{\hatctx\Psi \vdash M}}{\id} \lfwhnf N : \{\theta\}A$ \hfill
 by $\Gamma' ; \{\theta\}\Psi \Vdash \unbox w \id = \unbox{w'}\id : \{\theta\}A$ }
\prf{$\Gamma'; \{\theta\}\Psi \vdash  M \lfwhnf N : \{\theta\}A$ \hfill by $\lfwhnf$}
\prf{$\Gamma' ; \{\theta\}\Psi \vdash \unbox{\{\theta\}\cbox{\hatctx\Psi \vdash   \unbox{t}{\wk{\hatctx\Psi}} }}\id \lfwhnf N : \{\theta\}A $}
\prf{$\Gamma' ; \{\theta\}\Psi \Vdash N = \unbox{w'}\id : \{\theta\}A$ \hfill by sem. equ. } 
\prf{$\Gamma' ; \{\theta\}\Psi \Vdash \unbox{\{\theta\}\cbox{\hatctx\Psi \vdash   \unbox{t}{\wk{\hatctx\Psi}} }}\id = \unbox{w'}\id: \{\theta\}A$ \hfill by Backwards Closure (Lemma  \ref{lem:bclosed})}
\prf{$\Gamma' \Vdash \{\theta\}\cbox{\hatctx \Psi \vdash \unbox{t}{\wk{\hatctx\Psi}}} = \{\theta'\} t : \{\theta\}\cbox{\Psi \vdash A}$ \hfill by sem. def.}
}
\end{proof}


\begin{theorem}[Normalization and Subject Reduction]\label{lm:norm}
If $\Gamma \vdash t : \tau$ then $t \whnf w$ and $\Gamma \vdash t \equiv w : \tau$
\end{theorem}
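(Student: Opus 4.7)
The plan is to derive normalization from the Fundamental Theorem (Lemma \ref{lm:fundtheo}) by instantiating validity with the identity substitution and then unfolding the semantic equality definition, which by construction exposes a weak head normal form together with a definitional equality to it.

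First, from $\Gamma \vdash t : \tau$ I would use well-formedness of typing to obtain $\vdash \Gamma$ and $\Gamma \vdash \tau : u$ for some sort $u$. Applying the Fundamental Theorem part (1) yields $\models \Gamma$, and then Context Satisfiability (Lemma~\ref{lem:ctxsat}) produces the identity semantic substitution $\Gamma \Vdash \id(\Gamma) = \id(\Gamma) : \Gamma$. Applying the Fundamental Theorem part (6) to the original typing derivation gives $\Gamma \models t : \tau$, which unfolds to $\models \Gamma$, $\Gamma \models \tau : u$, and the implication that whenever $\Gamma' \Vdash \theta = \theta' : \Gamma$, we have $\Gamma' \Vdash \{\theta\}t = \{\theta'\}t : \{\theta\}\tau$.

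Instantiating the implication with $\Gamma' = \Gamma$ and $\theta = \theta' = \id(\Gamma)$, and using that $\{\id(\Gamma)\}t = t$ and $\{\id(\Gamma)\}\tau = \tau$, gives $\Gamma \Vdash t = t : \tau$. It remains to extract a whnf from this semantic equality. Here I would proceed by case analysis on $\Gamma \Vdash \tau : u$ (which is available from $\Gamma \models \tau : u$ instantiated at the identity). In each case -- $\tau \whnf \cbox{T}$, $\tau \whnf (y{:}\ann\tau_1) \arrow \tau_2$, $\tau \whnf u'$, or $\tau \whnf x~\vec{s}$ -- the definition of $\Gamma \Vdash t = t : \tau$ (Fig.~\ref{fig:semcomp}) directly supplies a well-typed whnf judgment $\Gamma \vdash t \whnf w : \tau$ for some $w$. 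By the definition of the well-typed whnf judgment (Fig.~\ref{fig:wtwhnf}), this in particular yields $\Gamma \vdash t \equiv w : \tau$, which is exactly the second conclusion; the first conclusion $t \whnf w$ is a component of the same judgment.

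The main obstacle is essentially bookkeeping: one needs to be careful that $\Gamma \Vdash \tau : u$ is indeed available at the identity substitution (it follows from $\Gamma \models \tau : u$ by choosing $\theta = \theta' = \id(\Gamma)$ in the validity of types) and that the case analysis on the shape of $\tau$'s whnf is exhaustive. Since the definition of $\Gamma \Vdash t = t : \tau$ is formulated uniformly by recursion on $\Gamma \Vdash \tau : u$, every branch delivers both the reduction $t \whnf w$ and the declarative equality $\Gamma \vdash t \equiv w : \tau$, so no further calculation is required beyond invoking the prior semantic well-formedness and backwards closure lemmas.
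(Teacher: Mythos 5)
Your proposal is correct and follows essentially the same route as the paper: apply the Fundamental Theorem at the identity substitution to obtain $\Gamma \Vdash t = t : \tau$, and then extract the weak head normal form that the semantic equality definition carries by construction. The only (immaterial) difference is in the last step, where you read $\Gamma \vdash t \equiv w : \tau$ directly off the well-typed whnf judgment embedded in the semantic definition, whereas the paper first forms $\Gamma \Vdash t = w : \tau$ and then invokes well-formedness of semantic typing; your route is, if anything, slightly more direct.
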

\begin{proof}
By the Fundamental theorem (Lemma \ref{lm:fundtheo}), we have $\Gamma \Vdash t = t : \tau$ (choosing the identity substitution for $\theta$ and $\theta'$).

This includes a definition $t \whnf w$. Since $w$ is in whnf
(i.e. $\norm w$), we have $w \whnf w$. Therefore, we can easily show
that also $\Gamma \Vdash t = w : \tau$.  By
well-formedness\LONGVERSION{\ (Lemma \ref{lm:semwf})}, we also have that
$\Gamma \vdash t \equiv w : \tau$ and more specifically,
$\Gamma \vdash w : \tau$.
\end{proof}

Using the fundamental lemma, we can also show function type
injectivity, which is the basis for implementing a type checker.
\LONGVERSION{
Further, type checking relies on the following inversion lemmas:
\begin{lemma}[Inversion]
  \begin{enumerate}
  \item If $\Gamma \vdash x : \ann\tau$ then $x:\ann\tau' \in \Gamma$ for some $\ann\tau'$ and $\Gamma \vdash \ann\tau \equiv \ann\tau' : u$.
\item If $\Gamma \vdash \tmfn y t : \tau$ then $\Gamma, y:\ann\tau_1 \vdash t : \tau_2$ for some $\ann\tau_1$, $\tau_2$
  and If $\Gamma \vdash \tau \equiv (y : \ann\tau_1) \arrow \tau_2 : u$.
\item If $\Gamma \vdash t~s : \tau$ then
     $\Gamma \vdash t : (y : \ann\tau_1) \arrow \tau_2$ and
     $\Gamma \vdash s : \ann\tau_1$ for some $\ann\tau_1$ and $\tau_2$ and
     $\Gamma \vdash \tau \equiv \{s/y\}\tau_2 : u$.
\item If $\Gamma \vdash \cbox C : \tau$ then
     $\Gamma \vdash \cbox C : \cbox T$ for some contextual type $T$ and
     $\Gamma \vdash \tau \equiv \cbox T : u$.
\item If $\Gamma \vdash \titer {(b_v \mid b_\tapp \mid b_\tlam)}{}\IH~\Psi~t : \tau$ where
      $\IH = (\psi : \tmctx) \arrow (y : \cbox{\Psi \vdash \tm}) \arrow \tau$
     then
   $\Gamma \vdash t : \cbox{\Psi \vdash \tm}$ and
   $\Gamma \vdash \IH : u$ and
   $\Gamma \vdash b_v : \IH$ and $\Gamma \vdash b_\tapp : \IH$ and $\Gamma \vdash b_\tlam : \IH$ and
   $\Gamma \vdash \tau \equiv \{\Psi/\psi, t/y\}\tau : u$.
\item If $\Gamma \vdash u_1 : \tau$ then there some $u_2$ s.t. $(u_1, u_2) \in \Ax$ and $\Gamma \vdash \tau \equiv u_2 : u$.
\item If $\Gamma \vdash (y : \ann\tau_1) \arrow \tau_2 : \tau$ then there is some $u_1$, $u_2$, and $u_3$ s.t.
  $(u_1, u_2, u_3) \in \Ru$ and $\Gamma \vdash \ann\tau_1 : u_1$, $\Gamma, y:\ann\tau_1 \vdash \tau_2 : u_2$ and
$\Gamma \vdash u_3 \equiv \tau : u$.
  \end{enumerate}
\end{lemma}
\begin{proof}
By induction on the typing derivation.
\end{proof}
}

\begin{lemma}[Injectivity of Function Type]\label{lm:funinj}\quad\\
If $\Gamma \vdash (y : \ann\tau_1) \arrow \tau_2 \equiv (y : \ann\tau'_1) \arrow \tau'_2 :u$
then
$\Gamma\! \vdash \ann\tau_1 \equiv \ann\tau'_1 \!:\! u_1$ and
$\Gamma, y:\ann\tau_1 \vdash \tau_2 \equiv \tau_2 : u_2$ and $(u_1, u_2, u) \in \Ru$.
\end{lemma}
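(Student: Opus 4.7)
The plan is to lift the problem to the semantic world via the Fundamental Theorem, apply the already-established semantic counterpart (``Function Type Injectivity Is Valid''), and then come back down to declarative equality via Completeness of Validity (Theorem~\ref{lem:SemComplete}).

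First I would apply the Fundamental Theorem~\ref{lm:fundtheo}(7) to the hypothesis $\Gamma \vdash (y:\ann\tau_1)\arrow\tau_2 \equiv (y:\ann\tau'_1)\arrow\tau'_2 : u$ to obtain the semantic judgment $\Gamma \models (y:\ann\tau_1)\arrow\tau_2 = (y:\ann\tau'_1)\arrow\tau'_2 : u$. Then I would invoke the lemma ``Function Type Injectivity Is Valid'' (the semantic version, proved by unfolding the validity and semantic equality definitions for $\Pi$-types) to extract three facts: there exist $u_1, u_2$ with $(u_1, u_2, u) \in \Ru$ such that $\Gamma \models \ann\tau_1 = \ann\tau'_1 : u_1$ and $\Gamma, y{:}\ann\tau_1 \models \tau_2 = \tau'_2 : u_2$.

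Next I would apply Completeness of Validity (Theorem~\ref{lem:SemComplete}) to each of these two semantic equalities, yielding the declarative judgments $\Gamma \vdash \ann\tau_1 \equiv \ann\tau'_1 : u_1$ and $\Gamma, y{:}\ann\tau_1 \vdash \tau_2 \equiv \tau'_2 : u_2$, which together with $(u_1, u_2, u) \in \Ru$ gives the desired conclusion.

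The only delicate point is bookkeeping of the universe levels and the context: the semantic injectivity lemma hands us the second equation in the context $\Gamma, y{:}\ann\tau_1$ (not $\Gamma, y{:}\ann\tau'_1$), which matches the statement we want; this is because the semantic definition of equality at a $\Pi$-type quantifies over semantically equal inputs at $\{\rho\}\ann\tau_1$, and symmetry/conversion (Lemma~\ref{lem:semsym}) is what makes both context choices interchangeable. A direct syntactic proof by induction on the derivation of $\Gamma \vdash (y:\ann\tau_1)\arrow\tau_2 \equiv (y:\ann\tau'_1)\arrow\tau'_2 : u$ would be obstructed by the $\beta$/$\eta$-rules for computations and the recursor rules, which can in principle produce a $\Pi$-type via reduction; going through the model sidesteps this obstacle since normalization already tells us that whnfs of equal $\Pi$-types are themselves $\Pi$-types with equal components.
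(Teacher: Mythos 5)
Your proof is correct and takes essentially the same route as the paper: lift the hypothesis into the model via the Fundamental Theorem, read injectivity off the semantic equality of function types, and descend back to declarative equality. The only difference is packaging --- the paper instantiates validity at the identity substitution and concludes directly with well-formedness of semantic typing, whereas you route through the lemma ``Function Type Injectivity Is Valid'' and Completeness of Validity, which amounts to the same argument.
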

\begin{proof}
By the Fundamental theorem (Lemma \ref{lm:fundtheo}) we have
$\Gamma \Vdash (y : \ann\tau_1) \arrow \tau_2 \equiv (y : \ann\tau'_1) \arrow \tau'_2 :u$
(choosing the identity substitution for $\theta$ and $\theta'$).
By the sem. equality def., we have $\Gamma \Vdash \ann\tau_1 = \ann\tau'_1 : u_1$
and $\Gamma, y:\ann\tau_1 \Vdash \tau_2 = \tau'_2 : u_2$ and
$(u_1, u_2, u) \in \Ru$.
By well-formedness of semantic typing\LONGVERSION{\ (Lemma \ref{lm:semwf})}, we have
$\Gamma \vdash \ann\tau_1 \equiv \ann\tau'_1 : u_1$ and
and $\Gamma, y:\ann\tau_1 \vdash \tau_2 \equiv \tau'_2 : u_2$.
\end{proof}


Last but not least, the fundamental lemma allows us to show that not every type is inhabited and thus \cocon can be used as a logic. To establish this stronger notion of consistency, we first prove that we can discriminate type constructors.

\begin{lemma}[Type Constructor Discrimination]\label{lm:typdisc}
Neutral types, sorts, and function types can be discriminated.
\LONGVERSION{\begin{enumerate}
\item If $\Gamma \vdash u_1 \equiv u_2 : u_3$  then $u_1 = u_2$ (they are the same).
\item $\Gamma \vdash x~\vec t \neq u : u'$.
\item $\Gamma \vdash x~\vec t  \neq (y : \ann\tau_1) \arrow \tau_2 $.
\item $\Gamma \vdash x~\vec t \neq \cbox T$.
\item $\Gamma \vdash u \neq (y : \ann\tau_1) \arrow \tau_2$.
\end{enumerate} }
\end{lemma}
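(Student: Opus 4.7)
The plan is to promote each declarative equality to a semantic equality via the Fundamental Theorem (Theorem~\ref{lm:fundtheo}) and then exploit the fact that the semantic equality definition in Fig.~\ref{fig:semcomp} is organized by the weak-head normal form of the type: the rule that matches one side's whnf requires the other side to weak-head reduce to a whnf with the same head constructor. Since sorts $u$, function types $(y{:}\ann\tau_1)\arrow\tau_2$, boxed contextual types $\cbox T$, and neutrals $x\,\vec t$ are all themselves already in whnf, any mismatch can be refuted by Determinacy of Whnf Reduction (Lemma~\ref{lem:detwhnf}).

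First, to handle item (1), given $\Gamma \vdash u_1 \equiv u_2 : u_3$, I would apply the Fundamental Theorem together with Context Satisfiability (Lemma~\ref{lem:ctxsat}) instantiated at $\id(\Gamma)$ to obtain $\Gamma \Vdash u_1 = u_2 : u_3$. Unfolding the semantic-kinding rule $\Gamma \vdash u_3 \whnf u_3 : u_3^+$ with $u_3 < u_3^+$, semantic term equality at sort type reduces to semantic type equality at universe $u_3$; so the sort-rule of Fig.~\ref{fig:semcomp} gives $\Gamma \vdash u_2 \whnf u_1 : u_3$. Since $u_2$ is itself in whnf, Determinacy (Lemma~\ref{lem:detwhnf}) forces $u_1 = u_2$.

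For items (2)--(5), I would proceed by contradiction: assuming the declarative equality, the Fundamental Theorem yields a semantic equality whose defining rule, matched on the left-hand whnf, demands a matching whnf on the right. The sort rule requires a sort on both sides; the function-type rule requires a $\Pi$-head on both sides; the $\cbox{\cdot}$ rule requires a boxed contextual type; and the neutral rule requires a neutral application with the same head variable. Since $x\,\vec t$, $u$, $(y{:}\ann\tau_1)\arrow\tau_2$, and $\cbox T$ are pairwise-distinct whnfs, Determinacy of Whnf Reduction (Lemma~\ref{lem:detwhnf}) yields the contradiction in each of (2), (3), (4), and (5); symmetry of semantic equality (Lemma~\ref{lem:semsym}) can be used freely if it is more convenient to dispatch on the other side.

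The main obstacle is the bridge in item (1) between term equality at sort type and the structural type-equality rules that carry out the case analysis on whnf heads; this requires carefully unfolding the mutual recursion of $\Gamma \Vdash \ann\tau = \ann\tau' : u$ and $\Gamma \Vdash t = t' : \ann\tau$ at a sort $u_3$. Once this step is made explicit, items (2)--(5) follow uniformly by the same pattern of Fundamental-Theorem-then-whnf-inspection, so no further substantive argument is required.
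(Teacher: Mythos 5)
Your proposal is correct and follows essentially the same route as the paper: assume the declarative equality, apply the Fundamental Theorem with the identity substitution to obtain a semantic equality, and observe that the whnf-directed cases of the semantic equality definition cannot match mismatched head constructors. The paper states this very tersely ("impossible given the semantic equality definition"); your additional detail on item (1) and the explicit appeal to determinacy of whnf reduction merely spell out what the paper leaves implicit.
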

\begin{proof}
\LONGVERSION{Proof by contradiction.}
To show for example that $\Gamma \vdash x~\vec t  \neq (y : \ann\tau_1) \arrow \tau_2 $, we
assume $\Gamma \vdash x~\vec t \equiv (y : \ann\tau_1) \arrow \tau_2 : u$. By the fundamental lemma (Lemma \ref{lm:fundtheo}), we have
$\Gamma \Vdash  x~\vec t \equiv (y : \ann\tau_1) \arrow \tau_2 : u$ (choosing the identity substitution for $\theta$ and $\theta'$); but this is impossible given the semantic equality definition. 
\end{proof}

\begin{theorem}[Consistency]\label{lm:consistency}
$x:u_0 \not\vdash t : x$.
\end{theorem}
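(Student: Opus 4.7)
The plan is to derive a contradiction from the assumption $x:u_0 \vdash t : x$ using the Fundamental Theorem. First, by Lemma~\ref{lm:fundtheo}, the typing derivation lifts to validity: $x:u_0 \models t : x$. Combined with Context Satisfiability (Lemma~\ref{lem:ctxsat}), which supplies $x:u_0 \Vdash \id = \id : x{:}u_0$, this yields $x:u_0 \Vdash t = t : x$. So the work is now entirely on the semantic side.

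Next, I observe that the type $x$ is already in weak head normal form as a neutral type (it is a variable). Reading off the clause of the semantic equality for terms at a neutral type $x~\vec{s}$ (with $\vec{s}$ empty), $t$ must weak-head reduce to a neutral term $n$ with $x:u_0 \vdash n : x$ and $\neut n$. The heart of the argument is then to show that \emph{no} such neutral term exists in the context $x:u_0$. I proceed by structural analysis of neutral forms: the only variable in this context is $x$ itself, so the head of any neutral spine is $x$, whose declared type is $u_0$. An elimination by application at the head $x$ would force $u_0$ to be equivalent to a function type $(y{:}\tau_1) \arrow \tau_2$, and an elimination by the recursor would require an argument of contextual type $\cbox{\Psi \vdash \tm}$ at the head, forcing $u_0$ to be equivalent to such a contextual type; both equivalences are ruled out by Type Constructor Discrimination (Lemma~\ref{lm:typdisc}). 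Hence no elimination is possible, so $n = x$; but then $x:u_0 \vdash x : x$ together with $x:u_0 \vdash x : u_0$ and type uniqueness give $u_0 \equiv x$, again contradicting Lemma~\ref{lm:typdisc}.

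The main obstacle will be to make the structural classification of well-typed neutral forms precise. Concretely, I expect to need (or inline) the easy auxiliary fact, provable by induction on $\neut n$, that the type of a well-typed neutral $n$ is obtained from the type of its head variable by a sequence of elimination steps; this makes rigorous the intuition that a head whose declared type is a sort cannot support any further elimination and hence cannot have a type distinct from the sort itself. Once that is in place, the consistency proof reduces to the two applications of type-constructor discrimination sketched above.
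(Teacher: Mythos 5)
Your proof is correct and lands on the same two applications of Type Constructor Discrimination (Lemma~\ref{lm:typdisc}) as the paper, but it reaches the key intermediate fact by a slightly different route. The paper first invokes Normalization and Subject Reduction (Theorem~\ref{lm:norm}) to obtain a whnf $w$ with $\Gamma \vdash w : x$, and then rules out $w$ being a function, sort, or boxed object by observing that each of those forms would force the neutral type $x$ to be convertible to an arrow type, a sort, or a contextual type; only then does it conclude $w$ is neutral. You instead go straight to the semantic side: from $x{:}u_0 \Vdash t = t : x$ and the clause of semantic equality at a neutral type you read off immediately that $t$ weak-head reduces to a neutral term. This buys you a shorter path to neutrality (the classification is baked into the model), at the cost of working one level deeper in the machinery than the paper's statement-level theorems. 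On the remaining step you are actually more careful than the paper: the paper simply asserts that ``given the assumption $x{:}u_0$, the term $w$ must be $x$,'' whereas you correctly identify that this needs an induction over neutral spines showing that any elimination (application or recursor) at a head of declared type $u_0$ would force $u_0$ to be convertible to an arrow type or a contextual type, both excluded by Lemma~\ref{lm:typdisc}. That auxiliary classification is exactly the right thing to make the paper's elided step rigorous, and your final contradiction ($u_0 \equiv x$ at a sort versus a neutral type) coincides with the paper's.
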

\begin{proof}
Assume $\Gamma \vdash t : x$ where $\Gamma = (x{:}u_0)$.  By subject
reduction (Lemma \ref{lm:norm}), there is some $w$ such that $t \whnf w$
and $\Gamma \vdash t \equiv w : x$ and in particular, we must have
$\Gamma \vdash w : x$. As $x$ is neutral, it cannot be equal to $u$,
$(y : \ann\tau_1) \arrow \tau_2$, or $\cbox T$ (Lemma
\ref{lm:typdisc}). Thus $w$ can also not be a sort, function, or
contextual object. Hence, $w$ can only be neutral, i.e. given the
assumption $x:u_0$, the term $w$ must be $x$. This implies that
$\Gamma \vdash x : x$ and implies $\Gamma \vdash x \equiv u_0 : u_0$
by inversion on typing. But this is impossible by Lemma
\ref{lm:typdisc}.
\end{proof}




An extended version with the full technical development is available
at \citet{cocon:arxiv19}.

\section{Conclusion}
\cocon is a first step towards integrating LF methodology into
Martin-L{\"o}f style dependent type theories and and bridges the
longstanding gap between these two worlds. We have established
normalization and consistency. The next immediate step is
to derive an equivalence algorithm based on weak head reduction and
show its completeness. We expect that this will follow a similar
Kripke-style logical relation as the one we described. This would
allow us to justify that type checking \cocon programs is decidable.

It should be possible to implement \cocon as an extension to \beluga
-- from a syntactic point of view, it would be a small change, however
in practice this requires revisiting type reconstruction, type
checking, unification, and conversion. It also  seems possible to
extend existing implementation of Agda, however this might be more
work, as in this case one needs to implement the LF  infrastructure.

\section*{Acknowledgments}
Brigitte Pientka was supported by NSERC (Natural Science and
Engineering Research Council) Grant 206263.
David Thibodeau was supported by NSERC's Alexander Graham Bell Canada
Graduate Scholarships -- Doctoral Program (CGS D).
Andreas Abel was supported by the Swedish Research Council through VR
Grant 2014-04864 \emph{Termination Certificates for Dependently-Typed
  Programs and Proofs via Refinement Types} and the EU COST Action CA
15123 \emph{EUTYPES: Types for Programming and Verification}.
Francisco Ferreira wants to acknowledge the support received from
EPSRC grants EP/K034413/1 and EP/K011715/1.
%

%




%
\bibliographystyle{apalike}

\end{document}